\documentclass[10pt,a4paper]{amsart}

\setlength{\textwidth}{437pt}
\setlength{\textheight}{639pt}
\oddsidemargin=0.35cm
\evensidemargin=0.35cm
\addtolength{\voffset}{-5pt}

\numberwithin{equation}{section}

\usepackage{amsmath}
\usepackage{amsthm}
\usepackage{amssymb}
\usepackage{amsfonts}
\usepackage{graphicx}

\usepackage{hyperref}
\usepackage{float}
\usepackage{color}
\hyphenation{space-time}
\hyphenation{Christo-doulou}
\hyphenation{Schwarz-schild}

\def\f12{\frac 1 2}

\def\z{\zeta}
\def\a{\alpha}
\def\b{\beta}

\def\hh{\mathcal{H}^{+}}

\def\f12{\frac 1 2}

\newcommand{\nabb}{\mbox{$\nabla \mkern-13mu /$\,}}
\newcommand{\lapp}{\mbox{$\triangle \mkern-13mu /$\,}}

\newtheorem{remark}{Remark}[section]
\newtheorem{lemma}{Lemma}[subsection]
\newtheorem{theorem}{Theorem}[section]
\newtheorem{proposition}{Proposition}[subsection]

\newtheorem{corollary}{Corollary}[subsection]

\newtheorem{mytheo}{Theorem}
\begin{document}

\title[Linear Stability and Instability Of Extreme Reissner-Nordstr\"{o}m II]{Stability and Instability of Extreme Reissner-Nordstr\"{o}m Black Hole Spacetimes for Linear Scalar Perturbations II}
\author[Stefanos Aretakis]{Stefanos Aretakis$^*$}\thanks{$^*$University of Cambridge,
Department of Pure Mathematics and Mathematical Statistics,
Wilberforce Road, Cambridge, CB3 0WB, United Kingdom}
\date{February 20, 2011}

\maketitle

\begin{abstract}

This paper contains the second part of a two-part series on the stability and instability of extreme Reissner-Nordstr\"{o}m spacetimes  for  linear scalar perturbations. We  continue our study of solutions to the linear wave equation $\Box_{g}\psi=0$ on a suitable globally hyperbolic subset of such a  spacetime, arising from regular initial data prescribed on a Cauchy hypersurface $\Sigma_{0}$ crossing the future event horizon $\mathcal{H}^{+}$.  We here obtain definitive energy and pointwise decay,  non-decay  and blow-up results. Our estimates hold up to and including the horizon $\mathcal{H}^{+}$. A hierarchy of conservations laws on degenerate horizons is also derived.
\end{abstract}

\section{Introduction}
\label{sec:Introduction}

In this paper we shall attempt to  provide a complete picture of the  stability and instability of extreme Reissner-Nordstr\"{o}m backgrounds under linear scalar perturbations, extending the results of \cite{aretakis1} on the wave equation
\begin{equation}
\Box_{g}\psi=0.
\label{we}
\end{equation} Here we derive definitive energy and pointwise decay, non-decay and blow-up results for solutions $\psi$ to the wave equation and their derivatives in the domain of outer communications up to and including the event horizon $\mathcal{H}^{+}$. Note that the picture we obtain is in sharp contrast  with the non-extreme case where decay holds for all higher order derivatives of $\psi$ along $\hh$.

\subsection{Preliminaries}
\label{sec:Preliminaries}
The extreme Reissner-Nordstr\"{o}m metric in coordinates $(v,r,\theta,\phi)\in\mathbb{R}\times\mathbb{R}^{+}\times\mathbb{S}^{2}$ takes the form
\begin{equation*}
g=-Ddv^{2}+2dvdr+r^{2}g_{\mathbb{S}^{2}},
\end{equation*}
where $M$ is a positive constant, $D=\left(1-\frac{M}{r}\right)^{2}$ and $g_{\mathbb{S}^{2}}$ is the standard metric on $\mathbb{S}^{2}$. The event horizon $\mathcal{H}^{+}$ corresponds to $r=M$, the black hole region to $r\leq M$ and the domain of outer communications to $r>M$. The photon sphere is located at $r=2M$.

For the reader's convenience we recall that  the main results of \cite{aretakis1} include: 1) non-degenerate energy and pointwise uniform boundedness of solutions, up to and including  $\mathcal{H}^{+}$, 2) local integrated decay of energy, up to and including the event horizon $\mathcal{H}^{+}$, 3)  sharp second order $L^{2}$ estimates, up to and including $\mathcal{H}^{+}$, 4) non-decay along $\mathcal{H}^{+}$  of higher order translation invariant quantities for spherically symmetric solutions.

Recall also that  $L^{2}$ spacetime estimates which do not degenerate at the photon sphere require commutation with the Killing vector field $T=\partial_{v}$. This is the so-called trapping effect at the photon sphere.  Note that another characteristic feature of degenerate horizons which was exposed in \cite{aretakis1} is that obtaining $L^{2}$ spacetime estimates which do not degenerate at the horizon $\mathcal{H}^{+}$ requires commutation with the (non-Killing) vector field $\partial_{r}$ and, therefore,  loss of derivatives characteristic of  trapping  takes place on degenerate horizons in analogy to the photon sphere.

\subsection{Overview of Results and Techniques}
\label{sec:OverviewOfResultsAndTechniques}

In the present paper, we combine the previous results of \cite{aretakis1} with certain  new techniques to obtain definitive decay, non-decay and blow-up results. In particular, we present a method based on an adaptation of \cite{new} to derive degenerate and non-degenerate energy decay. This adaptation  requires the introduction of yet another vector field $P$ and is necessary in view of the degeneracy of the  surface gravity on the horizon. We also introduce a new method for obtaining sharp  pointwise decay results. The instability properties of $\psi$ (non-decay and blow-up for derivatives of $\psi$) rest upon a hierarchy of conservation laws on a specific class of degenerate horizons (which includes the extreme Reissner-Nordstr\"{o}m) presented here for the first time. As we shall see, these laws are of great analytical importance.

\subsubsection{Conservation Laws on $\mathcal{H}^{+}$}
\label{sec:ConservationLawsOnMathcalH}

Recall that in \cite{aretakis1} we derived a conservation law  for the spherical mean $\psi_{0}$ of solutions $\psi$ to the wave equation based on the degeneracy of the redshift along $\hh$. However, as we shall see, on top of the degeneracy of the redshift, the event horizon satisfies an additional property which allows us to obtain a hierarchy of such laws. Specifically, we show that \textit{a conservation law holds for every projection   $\psi_{l}$  of $\psi$ (viewed as an $L^{2}$ function on the spheres of symmetry) on the eigenspace $E^{l}$ of the spherical Laplacian  $\lapp$ (for all spherical harmonic numbers  $l\in\mathbb{N}$)}.

According to these laws a linear combination of the transversal derivatives of $\psi_{l}$ of order at most $l+1$ is conserved along the null geodesics of $\mathcal{H}^{+}$ (see Theorem \ref{t3} of Section \ref{sec:TheMainTheorems}). As we shall see, these conserved quantities  allow us to infer the instability properties of extreme black holes described in Section \ref{sec:HigherOrderPointwiseEstimates1}, and thus, understanding their structure is crucial and essential.  Of course, no such conserved quantities exist in the subextreme case.

As an aside, based on these  laws, we  also explicitly show that the Schwarzschild boundedness argument of Kay and Wald \cite{wa1} cannot be applied in the extreme case, i.e.~we show that for generic $\psi$, there does not exist a Cauchy hypersurface $\Sigma$ crossing $\mathcal{H}^{+}$ and a solution $\tilde{\psi}$ such that 
\begin{equation*}
T\tilde{\psi}=\psi
\end{equation*}
in the  causal future of $\Sigma$ (where $T=\partial_{v}$). The existence of such $\tilde{\psi}$ was key for the argument of \cite{wa1}. 

\subsubsection{Sharp Higher Order $L^{2}$ Estimates}
\label{sec:SharpHigherOrderL2Estimates}

\noindent We next establish higher order $L^{2}$ estimates of the derivatives of $\psi$ by commuting repeatedly with the vector field $\partial_{r}$; see Theorem \ref{theorem3} of Section \ref{sec:TheMainTheorems}. In view of the conservation laws one expects to derive $k'$th order ($k\geq 1)$ $L^{2}$ estimates close to $\mathcal{H}^{+}$ only if $\psi_{l}=0$ for all $l\leq k$. In fact in Section \ref{sec:HigherOrderEstimates} we show that if the above restriction on the frequency range is not satisfied then no such estimate can be derived.  By using  appropriate modifications and Hardy inequalities we obtain the sharpest possible result. See Section \ref{sharpl2}. Note that the spacetime term of such estimates degenerates with respect to the transversal derivative to $\mathcal{H}^{+}$. In order to retrieve this derivative one needs to commute once again with $\partial_{r}$,  use Hardy inequalities and thus assume that an initial quantity of even higher order is bounded. This reflects \textit{the higher order trapping effect present on $\mathcal{H}^{+}$} (recall that the case $k=1$ was treated in \cite{aretakis1}). The difficulty in deriving such $L^{2}$ estimates comes from the fact that the trapping effect is coupled with the low-frequency obstruction described in Section \ref{sec:ConservationLawsOnMathcalH}.

\subsubsection{Energy and Pointwise Decay}
\label{sec:EnergyAndPointwiseDecay}

\noindent Using an adaptation of the methods developed in the recent \cite{new}, we  obtain energy and pointwise decay for $\psi$. See Theorems \ref{t4} and \ref{t6} of Section \ref{sec:TheMainTheorems}. 

 Recall that in \cite{new}, a general framework is provided for obtaining energy decay. The ingredients necessary for applying the framework are: 1) good asymptotics of the metric towards null infinity, 2) uniform boundedness of energy and 3) integrated local energy decay (where the spacetime integral of energy should be controlled by the energy of $\psi$ and, in view of the trapping effect at the photon sphere, of  $T\psi$ too). We first verify that extreme  Reissner-Nordstr\"{o}m satisfies the first hypothesis. However, in view of the trapping and the conservation laws on the event horizon $\mathcal{H}^{+}$, it turns out that the method described in \cite{new} can not be directly used to yield decay results in the extreme case. Indeed, the third hypothesis of \cite{new} is not satisfied in extreme Reissner-Nordstr\"{o}m.  For this reason,  we introduce a new causal vector field $P$ which allows us to obtain \textit{several  hierarchies of estimates in an appropriate neighbourhood of $\mathcal{H}^{+}$}. These estimates avoid multipliers or commutators with weights in $t$, following the philosophy of \cite{new}. Our method applies to   black hole spacetimes where trapping is present on $\mathcal{H}^{+}$ (including, in particular, a wide class of extreme black holes).

Pointwise decay for $l\geq 2$ then  follows by commuting with the generators of so(3) and Sobolev estimates. Regarding the cases $l=0,1$, we present a new method which is based on the interpolation of previous estimates which hold close to $\hh$ and away from $\hh$.  Note that the low angular frequencies decay more slowly than the higher ones. See Section \ref{sec:PointwiseEstimates}.

 We finally mention that Blue and Soffer have previously proved in \cite{blu1} that a weighted $L^{6}$ norm in space decays like $t^{-\frac{1}{3}}$. However, this weight degenerates on the horizon and the initial data have to be supported away from $\hh$.

\subsubsection{Higher Order  Estimates: Energy and Pointwise Decay, Non-decay and Blow-up}
\label{sec:HigherOrderPointwiseEstimates1}

 In order to provide a complete picture of the behaviour of solutions $\psi$, it remains to derive pointwise estimates for all derivatives of $\psi$.  Let $\psi_{l}$ denote the projection of $\psi$ on the eigenspace $E^{l}$ of the spherical Laplacian  $\lapp$, as above. Then the derivatives transversal to $\mathcal{H}^{+}$  of $\psi_{l}$ decay if the order of the differentiation is at most $l$. If the order is $l+1$, then  for generic initial data this derivative converges along $\mathcal{H}^{+}$ to a non-zero number and thus \textbf{does not decay}. By generic initial data we mean data for which certain quantities do not vanish on $\mathcal{H}^{+}$. If, moreover, the order is at least $l+2$, then for generic initial data these derivatives \textbf{blow up} asymptotically along $\mathcal{H}^{+}$.  Note that these differential operators are translation invariant and do not depend on the choice of a coordinate system. \textit{The blow-up of these geometric quantities suggests that extreme black holes are dynamically unstable.}
 
If, on the other hand, we consider the wave $T^{m}\psi_{l}$ then  one needs to differentiate at least  $l+2+m$ times in the transversal direction to obtain a quantity which blows up. See Theorems \ref{theo8} and \ref{theo9} of Section \ref{sec:TheMainTheorems}. Therefore, the $T$ derivatives\footnote{It is also shown that $T\psi$ decays faster than $\psi$.}  counteract the action of the derivatives transversal to $\mathcal{H}^{+}$. 
 
 We conclude this paper by deriving similar decay and blow-up results for the higher order non-degenerate energy. In particular, we show that 
 although (an appropriate modification of) the redshift current can be used as a multiplier for all angular frequencies, the redshift vector field $N$ can only be used as a commutator for $\psi$ supported on the frequencies $l\geq 1$ and, more generally, \textit{one can commute with the redshift vector field at most $l$ times for   $\psi$ supported on the angular frequency $l$}.  See Section \ref{sec:HigherOrderEstimates}.

\subsection{Open Problems}
\label{sec:FutureWork}

An important problem is that of understanding the solutions of the wave equation on the extreme Kerr spacetime. This spacetime is not spherically symmetric and there is no globally causal Killing field in the domain of outer communications (in particular, $T$ becomes spacelike close to the event horizon). Recent results \cite{megalaa} overcome these difficulties for the whole subextreme range of Kerr. The extreme case remains open.

Another related problem is that of the wave equation coupled with the Einstein-Maxwell equations. Then decay for the scalar field was proven in the deep work of Dafermos and Rodnianski  \cite{price}. Again these results hold for non-extreme black holes. For the  extreme case even boundedness of the scalar field for this system remains open.

\section{The Main Theorems}
\label{sec:TheMainTheorems}

We consider the Cauchy problem for the wave equation on the domain of outer communications of extreme Reissner-Nordstr\"{o}m spacetimes (including $\hh$)  with initial data 
\begin{equation}
\left.\psi\right|_{\Sigma_{0}}=\psi_{0}\in H^{k}_{\operatorname{loc}}\left(\Sigma_{0}\right), \left.n_{\Sigma_{0}}\psi\right|_{\Sigma_{0}}=\psi_{1}\in H^{k-1}_{\operatorname{loc}}\left(\Sigma_{0}\right),
\label{cd}
\end{equation}
where the hypersurface $\Sigma_{0}$  crosses $\hh$ and terminates either at spacelike infinity $i^{0}$ or at null infinity $\mathcal{I}^{+}$ and $n_{\Sigma_{0}}$ denotes the future unit normal of $\Sigma_{0}$. We assume that  $k\geq 2$ and that 
\begin{equation}
\lim_{x\rightarrow i^{0}}r\psi^{2}(x)=0.
\label{condition}
\end{equation}
 For simplicity, from now on, when we say ``for all solutions $\psi$ of the wave equation" we will assume that $\psi$ satisfies the above conditions. Note that for obtaining sharp decay results we will have to consider even higher regularity for $\psi$.
 
\subsection{Notation}
\label{sec:Notation}

For the definition of the relevant notions and notation used throughout the paper we refer  to \cite{aretakis1}. For the convenience of the reader, we briefly recall the notation (and conventions) necessary for understanding the statement of the main theorems. Let $\psi_{l}$ denote  the projection of $\psi$ on the eigenspace $E^{l}$ (with corresponding eigenvalue $-l(l+1),l\in\mathbb{N}$) of the spherical Laplacian  $\lapp$. We will say that  $\psi$ is supported on the angular frequencies  $l\geq L$ if $\psi_{i}=0,i=0,...,L-1$ initially (and thus everywhere). Similarly, we will also say that $\psi$ is supported on the angular frequency  $l=L$ if $\psi\in E^{L}$. 

 Let  $N$ be a $\varphi_{\tau}^{T}-$invariant timelike vector field which coincides with $T$ away from $\hh$ (as  defined in Section 10 of \cite{aretakis1}).
The coordinate vector field $\partial_{r}$ corresponds to the system $(v,r)$ and is transversal to $\hh$. Let $T$ denote the globally causal and Killing vector field $\partial_{v}$. Let $\varphi_{\tau}^{T}$ denote the flow of $T$. We define the foliation  $\Sigma_{\tau}=\varphi_{\tau}^{T}(\Sigma_{0})$ and the region $\mathcal{R}(0,\tau)=\cup_{0\leq \tilde{\tau}\leq \tau}\Sigma_{\tilde{\tau}}$. 

Note that the energy currents $J_{\mu}^{V}[\psi],K^{V}[\psi]$ associated to the vector field $V$ are defined in Section 5 of \cite{aretakis1}. For reference, we mention that close to  $\mathcal{H}^{+}$ we have 
\begin{equation*}
J_{\mu}^{T}[\psi]n^{\mu}_{\Sigma_{\tau}}\sim \, (T\psi)^{2}+D(\partial_{r}\psi)^{2}+\left|\nabb\psi\right|^{2},
\end{equation*}
which degenerates on $\mathcal{H}^{+}$ (since $D=\left(1-\frac{M}{r}\right)^{2}$), whereas
\begin{equation*}
J_{\mu}^{N}[\psi]n^{\mu}_{\Sigma_{\tau}}\sim \, (T\psi)^{2}+(\partial_{r}\psi)^{2}+\left|\nabb\psi\right|^{2},
\end{equation*}
which does not degenerate on $\mathcal{H}^{+}$. 

For obtaining energy decay we shall make use of the $\tilde{\Sigma}_{\tau}$ foliation defined as follows: 
We fix $R_{0}>2M$ and consider the hypersurface $\tilde{\Sigma}_{0}$ which is spacelike for $M\leq r\leq R_{0}$ and crosses $\mathcal{H}^{+}$ and for $r\geq R_{0}$ is given by $u=u(p_{0})$, where the coordinate $u$ corresponds to the null system $(u,v)$ with respect to which the metric is $g=-Ddudv+r^{2}g_{\scriptsize\mathbb{S}^{2}}$, and the point  $p_{0}\in\tilde{\Sigma}_{0}$ is such that $r(p_{0})=R_{0}$.
 \begin{figure}[H]
	\centering
		\includegraphics[scale=0.1]{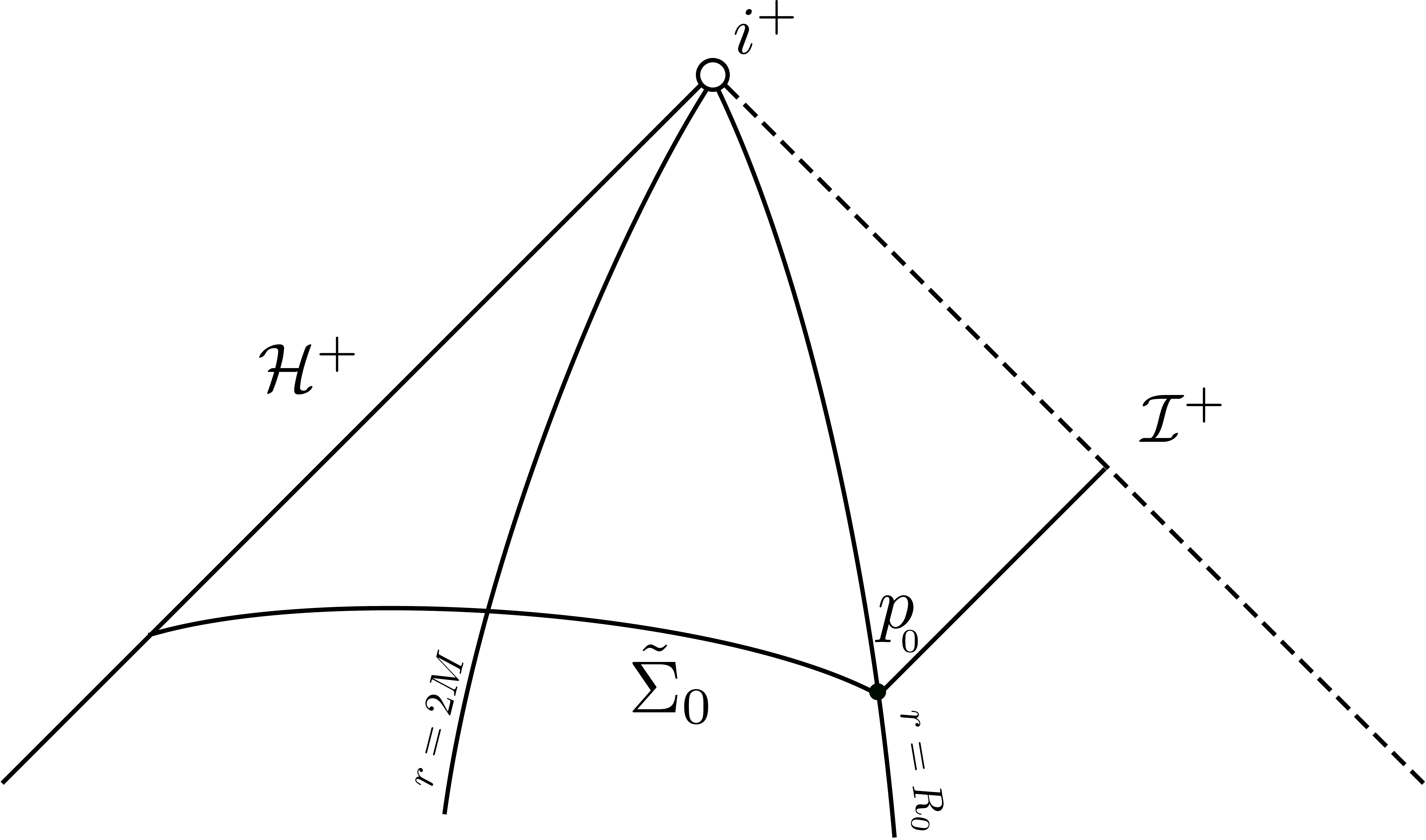}
	\label{fig:pic2ern0}
\end{figure}
We now define  $\tilde{\Sigma}_{\tau}=\varphi_{\tau}^{T}(\tilde{\Sigma}_{0})$. Then for $r$ sufficiently large we have
\begin{equation*}
J_{\mu}^{T}[\psi]n^{\mu}_{\tilde{\Sigma}_{\tau}}\sim (\partial_{v}\psi)^{2}+\left|\nabb\psi\right|^{2},
\end{equation*}
where $\partial_{v}$ corresponds here to the null coordinate system $(u,v)$. 

\subsection{Summary of Results of Part I}
\label{sec:SummaryOfResultsOfPartI}

It would be helpful to summarize several of the results of \cite{aretakis1} at this point. We have\\
 
(1)  Uniform boundedness of non-degenerate energy:
	\begin{equation*}
	\displaystyle\int_{\Sigma_{\tau}}{J_{\mu}^{N}[\psi]n^{\mu}_{\Sigma_{\tau}}}\leq C\displaystyle\int_{\Sigma_{0}}{J_{\mu}^{N}[\psi]n^{\mu}_{\Sigma_{0}}}.
\end{equation*}\\

(2)  Local integrated decay of energy:
	\begin{equation*}
	\begin{split}
\displaystyle\int_{\mathcal{R}(0,\tau)}\!\!\!{\left(\frac{(r-2M)^{2}\cdot \sqrt{D}}{r^{3+\delta}}\left((T\psi)^{2}+\left|\nabb\psi\right|^{2}+D^{2}(\partial_{r}\psi)^{2}+\frac{1}{r^{2}}\psi^{2}\right)\right)}\leq C_{\delta}\displaystyle\int_{\Sigma_{0}}{J_{\mu}^{T}[\psi]n^{\mu}_{\Sigma_{0}}}.
	\end{split}
	\end{equation*}

	Note that the above estimate degenerates on the photon sphere (where $r=2M$) and the event horizon. In order to remove the degeneracy on the photon sphere we need to commute with the vector field $T$. This is related to the so-called trapping effect present on the photon sphere. Note that the same phenomenon takes place on the `photon spheres' of a wide class of black hole spacetimes. Furthermore, as is shown in \cite{aretakis1}, the degeneracy of the above estimate  on $\mathcal{H}^{+}$ may only be removed after commuting with the (non-Killing) vector field $\partial_{r}$. This revealed that degenerate horizons exhibit phenomena characteristic of trapping. This will be of fundamental importance for the analysis of the present paper. \\
	
	(3) Sharp Second Order $L^{2}$ Estimates: There exists $r_{0}$ such that $M<r_{0}<2M$ and if $\mathcal{A}=\mathcal{R}(0,\tau)\cap\left\{M\leq r\leq r_{0}\right\}$ then  for all $\psi$   supported on the angular frequencies $l\geq 1$, the following holds:

	\begin{equation*}
\begin{split}
&\int_{\Sigma_{\tau}\cap\mathcal{A}}{\left(T\partial_{r}\psi\right)^{2}+\left(\partial_{r}\partial_{r}\psi\right)^{2}+\left|\nabb\partial_{r}\psi\right|^{2}}+\int_{\mathcal{H}^{+}}{\left(T\partial_{r}\psi\right)^{2}+\chi_{1}\left|\nabb\partial_{r}\psi\right|^{2}}\\&+\int_{\mathcal{A}}{\left(T\partial_{r}\psi\right)^{2}+\sqrt{D}\left(\partial_{r}\partial_{r}\psi\right)^{2}+\left|\nabb\partial_{r}\psi\right|^{2}}\\\leq &\, C\int_{\Sigma_{0}}{J_{\mu}^{N}[\psi]n^{\mu}_{\Sigma_{0}}}+C\int_{\Sigma_{0}}{J_{\mu}^{N}[T\psi]n^{\mu}_{\Sigma_{0}}}+C\int_{\Sigma_{0}\cap\mathcal{A}}{J_{\mu}^{N}[\partial_{r}\psi]n^{\mu}_{\Sigma_{0}}},
\end{split}
\end{equation*}
where $\chi_{1}=0$ if $\psi$ is supported on $l=1$ and $\chi_{1}=1$ if $\psi$ is supported on $l\geq 2$. \\
	
(4)  Non-decay (for generic initial data) of the higher order quantity
	\begin{equation*}
	\psi^{2}+(\partial_{r}\psi)^{2}
	\end{equation*}
along $\mathcal{H}^{+}$. \\

\subsection{The Statements of the Main Theorems}
\label{sec:TheMainTheorems0}

The main results of the present paper are:

\begin{mytheo}(\textbf{Conservation Laws along $\mathcal{H}^{+}$})
For all $l\in\mathbb{N}$ there exist constants $\b_{i},i=0,1,...,l$, which depend on $M$ and $l$ such that for all solutions $\psi$ which are supported on the (fixed) angular frequency $l$ the quantity
\begin{equation*}
H_{l}[\psi]=\partial_{r}^{l+1}\psi+\sum_{i=0}^{l}{\b_{i}\partial_{r}^{i}\psi}
\end{equation*}
is conserved along the null geodesics of $\mathcal{H}^{+}$.
\label{t3}
\end{mytheo}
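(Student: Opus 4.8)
The plan is to work in the ingoing coordinate system $(v,r,\theta,\phi)$ and to exploit the single algebraic feature of the extreme horizon that drives everything: the radial weight $r^{2}D$ has a \emph{double} zero at $r=M$, which is the analytic signature of the vanishing surface gravity. Since $\lapp$ commutes with $T=\partial_{v}$, with $\partial_{r}$, and with $\Box_{g}$, the spherical projection $\psi\mapsto\psi_{l}$ sends solutions to solutions, so it suffices to treat $\psi\in E^{l}$, on which $\lapp\psi=-l(l+1)\psi$. Writing $\Box_{g}\psi=0$ in these coordinates, multiplying by $r^{2}$, and using the identity $r^{2}D=(r-M)^{2}$ (so that $\partial_{r}(r^{2}D)=2(r-M)$), the equation becomes
\begin{equation*}
(r-M)^{2}\partial_{r}^{2}\psi+2(r-M)\partial_{r}\psi+2r^{2}\partial_{v}\partial_{r}\psi+2r\partial_{v}\psi+\lapp\psi=0 .
\end{equation*}
The null generators of $\mathcal{H}^{+}$ are the integral curves of $T=\partial_{v}$, so ``conserved along the null geodesics of $\mathcal{H}^{+}$'' means exactly $\partial_{v}H_{l}[\psi]=0$ on $\{r=M\}$.

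The core of the argument is to commute this equation with $\partial_{r}^{k}$ and restrict to $r=M$. Because the degenerate weights satisfy $\partial_{r}^{j}[(r-M)^{2}]|_{r=M}=0$ except for $j=2$ and $\partial_{r}^{j}[2(r-M)]|_{r=M}=0$ except for $j=1$, while $2r^{2}$ and $2r$ have only low-order derivatives at $r=M$, the Leibniz rule collapses to a handful of surviving terms. Carrying out the restriction yields, for every $k\geq 0$, the hierarchy of relations on $\mathcal{H}^{+}$
\begin{equation*}
M^{2}\partial_{v}\partial_{r}^{k+1}\psi+M(2k+1)\partial_{v}\partial_{r}^{k}\psi+k^{2}\partial_{v}\partial_{r}^{k-1}\psi+\frac{(k-l)(k+l+1)}{2}\,\partial_{r}^{k}\psi=0 ,
\end{equation*}
where I have already factored the undifferentiated coefficient $k(k+1)-l(l+1)=(k-l)(k+l+1)$.

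The decisive point is that this coefficient \emph{vanishes precisely at the critical order} $k=l$. At that order the relation loses its $\partial_{v}$-free term and degenerates into a pure $v$-derivative,
\begin{equation*}
\partial_{v}\Bigl(M^{2}\partial_{r}^{l+1}\psi+M(2l+1)\partial_{r}^{l}\psi+l^{2}\partial_{r}^{l-1}\psi\Bigr)=0 \qquad\text{on }\mathcal{H}^{+}.
\end{equation*}
Dividing by $M^{2}$ exhibits the conserved quantity in the asserted form, with $\beta_{l}=(2l+1)/M$, $\beta_{l-1}=l^{2}/M^{2}$, and the remaining $\beta_{i}$ taken to be zero. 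For $l=0$ this reduces to the conservation of $\partial_{r}\psi_{0}+M^{-1}\psi_{0}$ already established in \cite{aretakis1}, which is a reassuring consistency check, and for general $l$ it produces the announced translation-invariant quantity of transversal order $l+1$.

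I expect the main work to lie not in any single estimate but in the exact bookkeeping that produces the coefficient $(k-l)(k+l+1)$: one must collect every Leibniz contribution coming simultaneously from the two degenerate weights $(r-M)^{2}$, $2(r-M)$ and from the transport terms $2r^{2}$, $2r$, and check that the undifferentiated part of the $k$-th restricted equation is \emph{exactly} this quantity times $\partial_{r}^{k}\psi$. The conceptual crux behind this bookkeeping is the double zero of $r^{2}D$ at $r=M$: a simple zero, as occurs in the subextreme case, would leave a nonvanishing first-order transport coefficient in the restricted hierarchy, the resonance at $k=l$ would be destroyed, and no such conservation law would survive---consistent with the remark that no analogous conserved quantities exist away from extremality.
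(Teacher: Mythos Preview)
Your proof is correct and in fact sharper than the paper's. By first multiplying the wave equation by $r^{2}$, you turn all the radial coefficients into polynomials in $r-M$ and $r$, so that the Leibniz expansion of $\partial_{r}^{k}$ truncates exactly and the restricted hierarchy on $\mathcal{H}^{+}$ has a \emph{single} $\partial_{v}$-free term, namely $(k-l)(k+l+1)\,\partial_{r}^{k}\psi$. This yields an explicit three-term conserved quantity $H_{l}[\psi]=\partial_{r}^{l+1}\psi+\tfrac{2l+1}{M}\partial_{r}^{l}\psi+\tfrac{l^{2}}{M^{2}}\partial_{r}^{l-1}\psi$ (with the remaining $\beta_{i}=0$), which for $l=0,1$ reproduces the paper's explicit formulas.

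The paper takes a more indirect route: it commutes $\partial_{r}^{k}$ with $\Box_{g}$ in its original form, where the rational coefficients $2/r$, $R$, and $r^{-2}$ in $\lapp$ generate, upon restriction to $\mathcal{H}^{+}$, a tail of lower-order $\partial_{v}$-free terms $\partial_{r}^{j}\psi$ for $j\leq l-1$. These are then eliminated via an auxiliary inductive result (Proposition~\ref{corl2}), which expresses each $\partial_{r}^{j}\psi$ on $\mathcal{H}^{+}$ as a linear combination of $T\partial_{r}^{i}\psi$ for $i\leq j+1$. Your rescaling trick bypasses this step entirely. What the paper's approach buys, however, is that Proposition~\ref{corl2} is reused later---for instance in Lemma~\ref{lemk} and in the analysis of $T^{m}\psi$ in Proposition~\ref{tmpsi}---so it is not wasted effort in the larger scheme.
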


\begin{mytheo}(\textbf{Higher Order $L^{2}$ Estimates: Trapping on $\mathcal{H}^{+}$})
\begin{enumerate}
	\item 

There exists $r_{0}$ such that $M<r_{0}<2M$ and a  constant $C>0$ which depends on $M$, $l$ and $\Sigma_{0}$ such that if $\mathcal{A}=\left\{M\leq r\leq r_{0}\right\}\cap\mathcal{R}(0,\tau)$ and $k\leq l$ then   for all solutions $\psi$ of the wave equation which are supported on frequencies greater or equal to $l$, the following holds
\begin{equation*}
\begin{split}
&\displaystyle\int_{\Sigma_{\tau}\cap\mathcal{A}}{\left(T\partial_{r}^{k}\psi\right)^{2}+\left(\partial_{r}^{k+1}\psi\right)^{2}+\left|\nabb\partial_{r}^{k}\psi\right|^{2}}+\displaystyle\int_{\mathcal{H}^{+}}{\left(T\partial_{r}^{k}\psi\right)^{2}+\chi_{\left\{k=l\right\}}\left|\nabb\partial_{r}^{k}\psi\right|^{2}}\\
+&\displaystyle\int_{\mathcal{A}}{\left(T\partial_{r}^{k}\psi\right)^{2}+\left(1-\frac{M}{r}\right)\left(\partial_{r}^{k+1}\psi\right)^{2}+\left|\nabb\partial_{r}^{k}\psi\right|^{2}}\\
\leq  &C\sum_{i=0}^{k}\displaystyle\int_{\Sigma_{0}}{J_{\mu}^{N}\left[T^{i}\psi\right]n^{\mu}_{\Sigma_{0}}}+C\sum_{i=1}^{k}\int_{\Sigma_{0}\cap\mathcal{A}}{J_{\mu}^{N}\left[\partial^{i}_{r}\psi\right]n^{\mu}_{\Sigma_{0}}},
\end{split}
\end{equation*}
where $\chi_{\left\{k=l\right\}}=0$ if $k=l$ and    $\chi_{\left\{k=l\right\}}=1$ otherwise. 
\item If $\psi$ is replaced with $T^{m}\psi,m\geq 1,$ then  similar $L^{2}$ estimates hold for all $k\leq l+m$. Also in this case,  for all $k\leq l+m$ we do not need the factor $\chi_{\left\{k=l\right\}}$.

\end{enumerate}
\label{theorem3}
\end{mytheo}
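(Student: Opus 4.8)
The plan is to argue by induction on the order $k$, the base case $k=1$ being exactly the sharp second order $L^{2}$ estimate recorded as item (3) of the Part~I summary (note that on $r\geq M$ one has $\sqrt{D}=1-\frac{M}{r}$, and the factor $\chi_{1}$ there coincides with $\chi_{\{k=l\}}$ for $k=1$). Assuming the estimate for all orders $\leq k-1$ and writing $\psi_{k}=\partial_{r}^{k}\psi$, the first step is to commute the wave equation with $\partial_{r}^{k}$. In the $(v,r)$ system the operator reads
\begin{equation*}
\Box_{g}\psi=2\partial_{v}\partial_{r}\psi+\frac{2}{r}\partial_{v}\psi+D\,\partial_{r}^{2}\psi+\Big(\frac{2D}{r}+D'\Big)\partial_{r}\psi+\frac{1}{r^{2}}\lapp\psi ,
\end{equation*}
and a direct Leibniz computation shows that $\psi_{k}$ satisfies an equation of the same principal type,
\begin{equation*}
2\partial_{v}\partial_{r}\psi_{k}+\frac{2}{r}\partial_{v}\psi_{k}+D\,\partial_{r}^{2}\psi_{k}+\Big(\frac{2D}{r}+(k+1)D'\Big)\partial_{r}\psi_{k}+\frac{1}{r^{2}}\lapp\psi_{k}+c_{k}(r)\,\psi_{k}=F_{k},
\end{equation*}
where $F_{k}$ is a linear combination of $\partial_{v}\partial_{r}^{i}\psi$, $\partial_{r}^{i}\psi$ and $\lapp\partial_{r}^{i}\psi$ with $i\leq k-1$. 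The decisive structural features sit at $r=M$, where $D=D'=0$ and $D''=\frac{2}{M^{2}}$: the transversal coefficient $(k+1)D'$ vanishes to first order, while a short computation gives $c_{k}(M)=\frac{k(k+1)}{M^{2}}$, so that, for $\psi$ supported on the frequency $l$, the net coefficient of $\psi_{k}$ on $\hh$ is $\frac{k(k+1)-l(l+1)}{M^{2}}$.

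This last quantity is the heart of the matter. The second step is to run the energy method on the commuted equation with a suitable modification of the redshift current $K^{N}[\psi_{k}]$ near $\hh$ (producing a coercive bulk term that degenerates only like $(1-\frac{M}{r})(\partial_{r}^{k+1}\psi)^{2}$, the degenerate analogue of the redshift) together with the Killing field $T$ to absorb the photon-sphere trapping via the integrated local energy decay estimate of item~(2). In the resulting identity the horizon flux is governed precisely by the coefficient $\frac{k(k+1)-l(l+1)}{M^{2}}$: its sign is favourable exactly in the range $k\leq l$, which is the analytic origin of the frequency restriction, and it vanishes exactly at the borderline $k=l$, where the coercivity over the angular flux $|\nabb\partial_{r}^{l}\psi|^{2}$ on $\hh$ degenerates, accounting for $\chi_{\{k=l\}}=0$; for $k<l$ the sign is strict and the full flux including the angular term is recovered.

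The third step is the error analysis, which I expect to be the main obstacle. The source $F_{k}$ and the commutator terms contain $\partial_{r}^{k+1}\psi=\partial_{r}\psi_{k}$ weighted by $(k+1)D'\sim(r-M)$ near $\hh$, so only the \emph{degenerate} transversal derivative $(1-\frac{M}{r})(\partial_{r}^{k+1}\psi)^{2}$ is ever controlled in the bulk; this is the higher order trapping on $\hh$, coupled to the low-frequency obstruction through the coefficient above, and it is why retrieving the non-degenerate transversal derivative demands one further commutation and one further order of initial regularity. The lower order pieces of $F_{k}$ (those in $\partial_{r}^{i}\psi$, $i\leq k-1$) are disposed of by the inductive hypothesis and item~(2); the $T$-derivative pieces are controlled by applying the lower order estimates to the solution $T\psi$, producing the terms $\sum_{i=0}^{k}J_{\mu}^{N}[T^{i}\psi]n^{\mu}$ on the right; and the zeroth order term $c_{k}\psi_{k}$ together with the degenerate weights at $r=M$ is handled by the Hardy inequalities of Part~I. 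Arranging these so that the borderline contributions are genuinely absorbed, rather than merely bounded, is the delicate point.

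For part~(2) I would not re-run the frequency count but instead exploit the conservation law of Theorem~\ref{t3}. Since $T=\partial_{v}$ generates the null geodesics of $\hh$ and $H_{l}[\psi]$ is constant along them, one has $T\,H_{l}[\psi]=0$ on $\hh$, hence $H_{l}[T^{m}\psi]=0$ on $\hh$ for every $m\geq 1$. The generically non-vanishing quantity $H_{l}[\psi]$ is exactly what obstructed the top-order horizon flux and forced $\chi_{\{k=l\}}=0$ in part~(1); its annihilation by $T$ is what both pushes the admissible range of transversal orders from $k\leq l$ up to $k\leq l+m$ and removes the borderline loss, so that no $\chi$ factor is needed throughout $k\leq l+m$. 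Concretely I would feed this vanishing into the horizon transport equation to upgrade the coercivity at the orders $l<k\leq l+m$, the estimates otherwise being obtained as in part~(1) applied to the solution $T^{m}\psi$.
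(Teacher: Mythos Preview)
Your proposal is correct and follows essentially the same route as the paper: induction on the order, commutation with $\partial_{r}^{k}$, energy identity for a redshift-type multiplier near $\hh$, identification of the horizon coefficient $\frac{k(k+1)-l(l+1)}{M^{2}}$ as the source of the frequency restriction and of the borderline loss $\chi_{\{k=l\}}$, and the observation $H_{l}[T^{m}\psi]=0$ to upgrade part~(2).

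A few implementation points where the paper is more specific than your sketch: the multiplier is an explicitly constructed $\varphi_{\tau}^{T}$-invariant causal field $L_{k}=L_{k}^{v}T+L_{k}^{r}\partial_{r}$ with $L_{k}^{r}(M)<0$, $L_{k}^{v}(M)>0$ and $-\partial_{r}L_{k}^{r}(M)$, $\partial_{r}L_{k}^{v}(M)$ taken large, and the error analysis proceeds by systematic integration by parts of the eight families of cross terms in $\mathcal{E}^{L_{k}}[\partial_{r}^{k}\psi]$. The borderline $k=l$ is handled not simply as a sign condition but as an exact saturation of the Poincar\'e constant on $\hh$: after integration by parts the dangerous horizon boundary term is $\int_{\hh}(\partial_{r}^{k-1}\psi)(\partial_{r}^{k}\psi)$, and for the fixed-frequency piece $\psi_{l}$ the Poincar\'e inequality absorbs it only by consuming the entire flux $-\frac{L_{k}^{r}(M)}{2}\int_{\hh}|\nabb\partial_{r}^{l}\psi_{l}|^{2}$ when $k=l$. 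Accordingly the paper decomposes $\psi=\psi_{l}+\psi_{\geq l+1}$ and proves a separate lemma bounding $\int_{\hh}(\partial_{r}^{i}\psi_{l})(\partial_{r}^{l}\psi_{l})$ for $i\leq l-1$ via the conservation law of Theorem~\ref{t3} combined with integration by parts in $T$ and the second Hardy inequality; this replaces the lost Poincar\'e step. For part~(2) the paper likewise isolates the obstruction as the single horizon integral $I_{k}[T^{m}\psi]=\int_{\hh}(\partial_{r}^{k}T^{m}\psi)^{2}$ and proves its boundedness for all $k\leq l+m$ directly from $H_{l}[T\psi]=0$ and the wave equation restricted to $\hh$, after which the argument of part~(1) goes through without the $\chi$ factor.
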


\begin{remark} In view of the $\left(1-\frac{M}{r}\right)$ factor, the spacetime term of the above estimate degenerates on $\mathcal{H}^{+}$. To remove this degeneracy one needs to lose even more derivatives by commuting with the vector field $\partial_{r}$ one more time. This reflects the higher order trapping effect of $\mathcal{H}^{+}$. The commutation with $T$ is related to the trapping on the photon sphere.
\end{remark}

\begin{mytheo}(\textbf{Energy Decay})
Consider the foliation $\tilde{\Sigma}_{\tau}$ as defined above. Let
\begin{equation*}
\begin{split}
I^{T}_{\tilde{\Sigma}_{\tau}}[\psi]=&\int_{\tilde{\Sigma}_{\tau}}{J^{N}_{\mu}[\psi]n^{\mu}_{\tilde{\Sigma}_{\tau}}}+
\int_{\tilde{\Sigma}_{\tau}}{J^{T}_{\mu}[T\psi]n^{\mu}_{\tilde{\Sigma}_{\tau}}}+\int_{\tilde{\Sigma}_{\tau}}{r^{-1}\left(\partial_{v}(r\psi)\right)^{2}}
\end{split}
\end{equation*}
and
\begin{equation*}
\begin{split}
I^{N}_{\tilde{\Sigma}_{\tau}}[\psi]=&\int_{\tilde{\Sigma}_{\tau}}{J^{N}_{\mu}[\psi]n^{\mu}_{\tilde{\Sigma}_{\tau}}}+
\int_{\tilde{\Sigma}_{\tau}}{J^{N}_{\mu}[T\psi]n^{\mu}_{\tilde{\Sigma}_{\tau}}}+\int_{\mathcal{A}\cap\tilde{\Sigma}_{\tau}}{J^{N}_{\mu}[\partial_{r}\psi]n^{\mu}_{\tilde{\Sigma}_{\tau}}}+\int_{\tilde{\Sigma}_{\tau}}{r^{-1}\left(\partial_{v}(r\psi)\right)^{2}},
\end{split}
\end{equation*}
where $\mathcal{A}$ is as defined in Theorem \ref{theorem3}.  Here $\partial_{v}$ corresponds to the null system $(u,v)$ (whereas $\partial_{r}$ still corresponds to the Eddington-Finkelstein coordinate system $(v,r)$).
There exists a constant $C$ that depends  on the mass $M$  and $\tilde{\Sigma}_{0}$  such that:
\begin{itemize}
	\item For all solutions $\psi$ of the wave equation we have 
\begin{equation*}
\displaystyle\int_{\tilde{\Sigma}_{\tau}}J^{T}_{\mu}[\psi]n_{\tilde{\Sigma}_{\tau}}^{\mu}\leq CE_{1}[\psi]\frac{1}{\tau^{2}},
\end{equation*}
where
\begin{equation*}
E_{1}[\psi]=I^{T}_{\tilde{\Sigma}_{0}}[T\psi]+\displaystyle\int_{\tilde{\Sigma}_{0}}{J^{N}_{\mu}[\psi]n^{\mu}_{\tilde{\Sigma}_{0}}}+\displaystyle\int_{\tilde{\Sigma}_{0}}{\left(\partial_{v}(r\psi)\right)^{2}}.
\end{equation*}
\end{itemize}
\begin{itemize}
	\item For all solutions $\psi$ to the wave equation which are supported on the frequencies $l\geq 1$ we have
\begin{equation*}
\begin{split}
\int_{\tilde{\Sigma}_{\tau}}{J^{N}_{\mu}[\psi]n^{\mu}_{\tilde{\Sigma}_{\tau}}}\,\leq\, CE_{2}[\psi]\frac{1}{\tau},
\end{split}
\end{equation*}
where
\begin{equation*}
\begin{split}
E_{2}[\psi]=I^{N}_{\tilde{\Sigma}_{0}}[\psi].
\end{split}
\end{equation*}
\end{itemize}
\begin{itemize}
	\item  For all solutions $\psi$ to the wave equation which are supported on the frequencies $l\geq 2$ we have
\begin{equation*}
\begin{split}
\displaystyle\int_{\tilde{\Sigma}_{\tau}}{J^{N}_{\mu}[\psi]n^{\mu}_{\tilde{\Sigma}_{\tau}}}\,\leq\, CE_{3}[\psi]\frac{1}{\tau^{2}},
\end{split}
\end{equation*}
where
\begin{equation*}
\begin{split}
E_{3}[\psi]=I^{N}_{\tilde{\Sigma}_{0}}[\psi]+I^{N}_{\tilde{\Sigma}_{0}}[T\psi]+\int_{\mathcal{A}\cap\tilde{\Sigma}_{0}}{J^{N}_{\mu}[\partial_{r}\partial_{r}\psi]n^{\mu}_{\tilde{\Sigma}_{0}}}+\int_{\tilde{\Sigma}_{0}}{(\partial_{v}(r\psi))^{2}}.
\end{split}
\end{equation*}
\end{itemize}
\label{t4}
\end{mytheo}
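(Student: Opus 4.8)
The plan is to adapt the $r^{p}$-weighted energy method of \cite{new} and to compensate for the failure of its third hypothesis—integrated local energy decay up to and including $\hh$—by a second, horizon-adapted hierarchy generated by the vector field $P$. Throughout, the photon-sphere trapping forces every estimate to be run only after commuting with the Killing field $T$; this is why $I^{T}_{\tilde{\Sigma}_{0}}[T\psi]$ enters $E_{1}[\psi]$ and why the definitions of $I^{T}$ and $I^{N}$ already carry a $J[T\psi]$ term.

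I would first set up the far-away hierarchy in $r\geq R_{0}$: multiplying the wave equation by $r^{p}\pa_{v}(r\psi)$ (with $\pa_{v}$ in the $(u,v)$ system) and integrating over the slab bounded by $\tilde{\Sigma}_{\tau_{1}}$, $\tilde{\Sigma}_{\tau_{2}}$ and $\mathcal{I}^{+}$ yields, for $p\in\{1,2\}$, control of the flux $\int_{\tilde{\Sigma}_{\tau_{2}}}r^{p-1}(\pa_{v}(r\psi))^{2}$ together with a coercive bulk term weighted by $r^{p-1}$, in terms of the corresponding flux through $\tilde{\Sigma}_{\tau_{1}}$ plus lower-order errors. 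Condition \eqref{condition} supplies the decay needed at $i^{0}$ and the good null-infinity asymptotics render the errors harmless. Since $J^{T}_{\mu}[\psi]n^{\mu}$ degenerates on $\hh$, the degenerate energy sees no horizon obstruction, so the first bullet follows by gluing this far hierarchy to the local integrated decay estimate (2) of Section~\ref{sec:SummaryOfResultsOfPartI}—run after the $T$-commutation that removes the photon-sphere degeneracy—and feeding the result into the dyadic argument below.

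For the non-degenerate energy the transversal derivative $\pa_{r}\psi$ on $\hh$ must be captured, and here the classical redshift fails because the surface gravity vanishes; instead $P$ is designed to produce a coercive spacetime term in a neighbourhood of $\hh$, which, combined with the higher-order $L^{2}$ estimates of Theorem \ref{theorem3} and the conservation laws of Theorem \ref{t3}, controls the near-horizon flux. The decay itself is then extracted by a dyadic pigeonhole argument: each hierarchy estimate has the schematic form $\int_{\tau_{1}}^{\tau_{2}}\big(\int_{\tilde{\Sigma}_{\tau}}F\big)\,d\tau\les I_{0}$, so along a dyadic sequence $\tau_{n}\sim 2^{n}$ there are slices with $\int_{\tilde{\Sigma}}F\les I_{0}/\tau_{n}$; iterating the two steps of the $r^{p}$ hierarchy and interpolating with uniform boundedness turns the $p=2$ level into $\tau^{-2}$ decay of the $T$-energy, establishing the first bullet with the stated $E_{1}[\psi]$. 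The conservation law of Theorem \ref{t3} is a genuine obstruction to decay of $\pa_{r}\psi$ at $l=0$ but not at $l\geq 1$; however, obtaining integrated decay of $J^{N}$ requires one commutation with $\pa_{r}$, and the corresponding step of the $P$-hierarchy loses a single power of $\tau$, producing only the $\tau^{-1}$ rate of the second bullet from $E_{2}[\psi]=I^{N}_{\tilde{\Sigma}_{0}}[\psi]$. For $l\geq 2$ one may commute a second time with $\pa_{r}$—permitted by Theorem \ref{theorem3} since then $k=2\leq l$—which controls the source generated by the first commutation and recovers the lost power, yielding the $\tau^{-2}$ rate of the third bullet; this is why $E_{3}[\psi]$ is enlarged by $I^{N}_{\tilde{\Sigma}_{0}}[T\psi]$ and the second-order transversal energy $\int_{\mathcal{A}\cap\tilde{\Sigma}_{0}}J^{N}_{\mu}[\pa_{r}\pa_{r}\psi]n^{\mu}_{\tilde{\Sigma}_{0}}$.

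The main obstacle is the construction of $P$: its deformation tensor must yield a bulk term that is coercive in the non-degenerate energy near $\hh$ for $\psi$ supported on $l\geq 1$, while faithfully reflecting both the loss of one power of $\tau$ after a single $\pa_{r}$-commutation and its recovery after a second. Because the conservation laws obstruct decay precisely in the low-order transversal derivatives, $P$ cannot be a redshift-type field; and matching its near-horizon hierarchy to the far-away $r^{p}$ hierarchy across the intervening photon-sphere trapping, without ever dropping the $T$-derivative, is the technical heart of the proof.
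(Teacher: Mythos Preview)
Your proposal inverts the role of the vector field $P$, and this creates a genuine gap in the first bullet. You claim that because $J^{T}_{\mu}[\psi]n^{\mu}$ already degenerates on $\hh$, ``the degenerate energy sees no horizon obstruction'' and the first bullet follows from the $r^{p}$ hierarchy glued to the integrated local decay estimate (2) of Section~\ref{sec:SummaryOfResultsOfPartI}. But estimate (2), even after commuting with $T$ to remove the photon-sphere degeneracy, carries an \emph{extra} factor of $\sqrt{D}$ near $\hh$: its bulk is schematically $\sqrt{D}\big((T\psi)^{2}+\left|\nabb\psi\right|^{2}\big)+D^{5/2}(\pa_{r}\psi)^{2}$, whereas $J^{T}_{\mu}[\psi]n^{\mu}\sim (T\psi)^{2}+\left|\nabb\psi\right|^{2}+D(\pa_{r}\psi)^{2}$. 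Hence estimate (2) does \emph{not} give integrated decay of the $T$-flux in a neighbourhood of $\hh$, and the dyadic argument cannot start. This is exactly what $P$ is for. In the paper $P$ is built so that $K^{P}[\psi]\sim J^{T}_{\mu}[\psi]n^{\mu}$ and $J^{P}_{\mu}[\psi]n^{\mu}\les K^{N,\delta,-\frac{1}{2}}[\psi]$ in $\mathcal{A}$; together with uniform boundedness of the $P$-flux this yields the two-step near-horizon hierarchy $\int_{\tau_{1}}^{\tau_{2}}\!\int_{\mathcal{A}\cap\tilde{\Sigma}_{\tau}}J^{T}_{\mu}n^{\mu}\,d\tau\les\int J^{P}_{\mu}n^{\mu}$ and $\int_{\tau_{1}}^{\tau_{2}}\!\int_{\mathcal{A}\cap\tilde{\Sigma}_{\tau}}J^{P}_{\mu}n^{\mu}\,d\tau\les\int J^{N}_{\mu}n^{\mu}$, valid for \emph{all} angular frequencies including $l=0$, which matches the two steps $p=1,2$ of the $r^{p}$ hierarchy and produces the $\tau^{-2}$ decay of the degenerate energy.

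Conversely, $P$ plays no role in the second and third bullets; your statement that ``$P$ is designed to produce a coercive spacetime term in the non-degenerate energy'' is incorrect, since $K^{P}$ is only coercive in $J^{T}$, not $J^{N}$. The near-horizon integrated decay of $J^{N}$ for $l\geq 1$ comes directly from the $\pa_{r}$-commuted $L^{2}$ estimate of Theorem~\ref{theorem3} (yielding $\int_{\tau_{1}}^{\tau_{2}}\!\int_{\mathcal{A}\cap\tilde{\Sigma}_{\tau}}J^{N}_{\mu}[\psi]n^{\mu}\,d\tau\les I^{N}_{\tilde{\Sigma}_{\tau_{1}}}[\psi]$), and the additional step needed for $l\geq 2$ is a second $\pa_{r}$-commutation (again Theorem~\ref{theorem3}, now at order $k=2$) which gives integrated decay of $I^{N}$ itself. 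Your account of why one power of $\tau$ is lost at $l=1$ and recovered at $l\geq 2$ is morally right, but the mechanism is the availability of one versus two $\pa_{r}$-commutations in Theorem~\ref{theorem3}, not a $P$-hierarchy; and the conservation laws of Theorem~\ref{t3} enter only as obstructions explaining why fewer commutations are permitted at low $l$, not as positive ingredients in the estimate.
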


\begin{remark}
In view of the trapping effect on $\mathcal{H}^{+}$, to obtain decay of the non-degenerate energy we need to `lose' the $\partial_{r}$ derivative (which appears in $I^{N}_{\tilde{\Sigma}_{0}}[\psi]$). Note that  the full decay requires `losing' the higher order derivative $\partial_{r}\partial_{r}$ (which appears in $E_{3}[\psi]$).

\end{remark}

\begin{mytheo}(\textbf{Pointwise Decay})
Fix $R_{1}$ such that $M<R_{1}$ and let $\tau\geq 1$. Let $E_{1}, E_{2}, E_{3}$ be the quantities as defined in Theorem \ref{t4}. Then, there exists a constant $C$ that depends  on the mass $M$, $R_{1}$  and $\tilde{\Sigma}_{0}$  such that:
\begin{itemize}
	\item  For all solutions $\psi$ to the wave equation  we have
	\begin{equation*}
\left|\psi\right|\leq C \sqrt{E_{5}}\frac{1}{\sqrt{r}\cdot \tau},\ \  \left|\psi\right|\leq C\sqrt{E_{5}}\frac{1}{r\cdot\sqrt{\tau}}
\end{equation*}
in $\left\{R_{1}\leq r\right\}$, where
\begin{equation*}
E_{5}=\sum_{\left|k\right|\leq 2}{E_{1}\left[\Omega^{k}\psi\right]},
\end{equation*}
with $\Omega\in\left\{\Omega_{1},\Omega_{2},\Omega_{3}\right\}$ and $\Omega_{i},i=1,2,3$ are the angular momentum operators.
\end{itemize}
\begin{itemize}
	\item For all  solutions $\psi$ of the wave equation we have 
\begin{equation*}
\begin{split}
\left|\psi\right|\leq C\sqrt{E_{6}}\frac{1}{\tau^{\frac{3}{5}}}
\end{split}
\end{equation*}
in $\left\{M\leq r\leq R_{1}\right\}$, where 
\begin{equation*}
E_{6}=E_{1}+E_{4}[\psi]+E_{4}[T\psi]+E_{5}+\!\sum_{\left|k\right|\leq 2}\!\!\left(E_{2}{\left[\Omega^{k}\psi\right]}\!+\!E_{3}{\left[\Omega^{k}\psi\right]}\right)+\left\|\partial_{r}\psi\right\|_{L^{\infty}\left(\tilde{\Sigma}_{0}\right)}^{2}\!. 
\end{equation*}
\end{itemize}
\begin{itemize}
	\item For all solutions $\psi$ to the wave equation which are supported on the frequencies $l\geq 1$ we have
\begin{equation*}
\begin{split}
\left|\psi\right|\leq C\sqrt{E_{7}}\frac{1}{\tau^{\frac{3}{4}}}
\end{split}
\end{equation*}
in $\left\{M\leq r\leq R_{1}\right\}$, where 
\begin{equation*}
E_{7}=E_{5}+\sum_{\left|k\right|\leq 2}E_{2}{\left[\Omega^{k}\psi\right]}+\sum_{\left|k\right|\leq 2}E_{3}{\left[\Omega^{k}\psi\right]}.
\end{equation*}
\end{itemize}
\begin{itemize}
\item For all solutions $\psi$ to the wave equation which are supported on the frequencies $l\geq 2$ we have
\begin{equation*}
\begin{split}
\left|\psi\right|\leq C\sqrt{E_{8}}\frac{1}{\tau},
\end{split}
\end{equation*}
in $\left\{M\leq r\leq R_{1}\right\}$, where 
\begin{equation*}
\begin{split}
E_{8}=\sum_{\left|k\right|\leq 2}E_{3}{\left[\Omega^{k}\psi\right]}.
\end{split}
\end{equation*}
\end{itemize}
\label{t6}
\end{mytheo}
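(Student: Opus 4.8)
The plan is to reduce every pointwise bound to the energy-decay statements of Theorem \ref{t4} by combining Sobolev embedding on the spheres of symmetry with a radial fundamental-theorem-of-calculus (FTC) argument. The sphere Sobolev step is what forces commutation with the rotation fields $\Omega_i$, and hence the sums $\sum_{|k|\le 2}$ appearing in $E_5,E_7,E_8$: since the $\Omega_i$ generate $\mathfrak{so}(3)$, are Killing, commute with $\Box_g$ and preserve the angular-frequency support, each $\Omega^k\psi$ solves the same wave equation in the same frequency range, so all fluxes of Theorem \ref{t4} apply to it verbatim, and the embedding $H^2(\mathbb{S}^2)\hookrightarrow L^\infty(\mathbb{S}^2)$ (valid on a surface, needing at most two angular derivatives) converts the $L^2$ fluxes into pointwise bounds on each sphere.

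First the far region $\{R_1\le r\}$. Here I would run the $r^p$-weighted hierarchy of \cite{new}, already built into the currents through the terms $r^{-1}(\partial_v(r\psi))^2$: the higher-weight flux controls $r\psi$ in $L^2$ and pointwise along outgoing cones, while the degenerate energy decays like $\tau^{-2}$. Writing $\Phi=r\psi$ and integrating $\partial_v(\Phi^2)$ inward from null infinity, Cauchy--Schwarz against the two available weighted fluxes produces the complementary estimates $|\psi|\lesssim\sqrt{E_5}\,\tau^{-1}r^{-1/2}$ and $|\psi|\lesssim\sqrt{E_5}\,\tau^{-1/2}r^{-1}$, the trade-off between $r$- and $\tau$-decay being that between the $p=2$ and $p=1$ rungs of the hierarchy. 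Evaluating at $r=R_1$ gives a boundary datum $|\psi|(R_1,\tau)\lesssim\sqrt{E_5}\,\tau^{-1}$ that seeds the interior estimates.

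For the interior $\{M\le r\le R_1\}$ I would treat the frequency ranges in decreasing order of difficulty. For $\psi$ supported on $l\ge 2$ the non-degenerate flux already decays like $E_3\tau^{-2}$; commuting with $\Omega^k$, applying the sphere Sobolev embedding and a one-dimensional FTC/Sobolev estimate in $r$ down from $R_1$ immediately yields $|\psi|\lesssim\sqrt{E_8}\,\tau^{-1}$. For $\psi$ supported on $l\ge 1$ the non-degenerate flux decays only like $E_2\tau^{-1}$, so the direct argument loses; instead I would write, for an interior radius $r_*$,
\begin{equation*}
\psi^2(r_*)=\psi^2(R_1)-2\int_{r_*}^{R_1}\psi\,\partial_r\psi\,dr,
\end{equation*}
integrate over the spheres, and estimate the cross term by Cauchy--Schwarz. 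The factor $\int(\partial_r\psi)^2$ is controlled by the non-degenerate flux ($\lesssim\tau^{-1}$), while $\int\psi^2$ is controlled, via the spherical Poincar\'e inequality valid for $l\ge 1$, by $\int|\nabb\psi|^2$ and hence by the \emph{degenerate} flux $\int_{\tilde{\Sigma}_\tau}J^{T}_{\mu}[\psi]n^{\mu}_{\tilde{\Sigma}_{\tau}}$, which decays like $\tau^{-2}$ for all $\psi$. The geometric mean of $\tau^{-2}$ and $\tau^{-1}$, together with the $\tau^{-2}$ boundary term at $R_1$, gives $|\psi|^2\lesssim E_7\,\tau^{-3/2}$, i.e. the claimed $\tau^{-3/4}$.

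The genuinely hard case is the full solution in the interior, where the spherically symmetric part $\psi_0$ is present: the Poincar\'e improvement is unavailable, and, worse, the conservation law of Theorem \ref{t3} for $l=0$ obstructs decay of $\partial_r\psi_0$ along $\mathcal{H}^{+}$, so the non-degenerate flux need not decay at all. My plan is to interpolate between the two regimes by splitting the interval at a $\tau$-dependent radius $r_\tau=M+\epsilon(\tau)$. On $\{r_\tau\le r\le R_1\}$ the degenerate flux still controls $\int(\partial_r\psi_0)^2$ after paying a factor $D(r_\tau)^{-1}\sim\epsilon^{-2}$, giving decay with an $\epsilon$-dependent constant; on $\{M\le r\le r_\tau\}$ one cannot use decay, but the uniform-in-$\tau$ boundedness of $\partial_r\psi_0$ (a consequence of the conservation law together with the Part I boundedness statements, which accounts for the $\|\partial_r\psi\|_{L^\infty(\tilde{\Sigma}_0)}$ term in $E_6$) bounds the contribution by $O(\epsilon)$. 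Feeding both into the radial FTC from $R_1$, controlling $\int\psi_0^2$ by the pointwise supremum, and optimizing the split $\epsilon(\tau)$ against the competing contributions — with the faster $T$-derivative decay entering through $E_4[T\psi]$ — yields the stated (and presumably non-optimal) exponent $\tau^{-3/5}$. I expect this optimization, in particular the correct bookkeeping of the horizon degeneracy against the non-decaying transversal derivative, to be the main obstacle; the higher-frequency cases are comparatively routine once the commutation-plus-Sobolev scheme is in place.
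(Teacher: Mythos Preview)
Your overall strategy is correct and in its essential features coincides with the paper's: Sobolev on spheres after commuting with the $\Omega_i$, a radial fundamental-theorem-of-calculus argument seeded by the far-region decay, and---for the spherically symmetric piece---a split at a $\tau$-dependent radius near $\mathcal{H}^{+}$ followed by optimization. Your treatment of $l\geq 2$ is exactly the paper's, and your $l\geq 1$ argument (FTC from $r_{*}$ to $R_{1}$, Cauchy--Schwarz, Poincar\'e on the $\psi$-factor to access the $T$-flux, non-degenerate $N$-flux on the $\partial_{r}\psi$-factor) is a clean variant: the paper instead stops the FTC at $r_{0}+\tau^{-\alpha}$, invokes the pointwise estimate $\int_{\mathbb{S}^{2}}\psi^{2}(r_{1})\leq C(r_{1}-M)^{-2}E_{1}\tau^{-2}$ there (Lemma~\ref{1lemmadecay}), and then chooses $\alpha=\tfrac{1}{4}$. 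Your route avoids the optimization and uses Poincar\'e where the paper uses the first Hardy inequality; both give the same $\tau^{-3/4}$.

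Two details in the $l=0$ case need correction. First, ``controlling $\int\psi_{0}^{2}$ by the pointwise supremum'' throws away all decay in that factor; the optimization then yields only $|\psi_{0}|\lesssim\tau^{-1/4}$, not $\tau^{-3/5}$. You should instead use the first Hardy inequality $\int_{\tilde{\Sigma}_{\tau}}\rho^{-2}\psi^{2}\leq C\int_{\tilde{\Sigma}_{\tau}}J_{\mu}^{T}[\psi]n^{\mu}\lesssim E_{1}\tau^{-2}$, which on the bounded region gives $\int\psi_{0}^{2}\lesssim E_{1}\tau^{-2}$; this is exactly what the paper feeds into Lemma~\ref{2lemmadecay}. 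Second, the uniform pointwise bound on $\partial_{r}\psi_{0}$ in a \emph{neighbourhood} of $\mathcal{H}^{+}$ does not follow from the conservation law (which lives only on $\mathcal{H}^{+}$) together with Part~I boundedness; the paper obtains it by integrating the wave equation for $\psi_{0}$ along the outgoing null characteristics $u=\text{const}$, and it is this characteristic integration that brings in $\sup|\psi_{0}|$ and $\sup|T\psi_{0}|$, hence $E_{4}[\psi]$ and $E_{4}[T\psi]$ (these enter as boundedness norms, not as ``faster decay''). With these two fixes your scheme reproduces the paper's: the competing terms are $E_{1}\tau^{2\alpha-2}$ and $\sqrt{E_{1}}\,\tau^{-1}\sqrt{\tilde{E}_{6}\tau^{-\alpha}}$, balanced at $\alpha=\tfrac{2}{5}$.
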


\begin{mytheo}(\textbf{Higher Order Energy and Pointwise Estimates I: Decay Results})
Fix $R_{1}$ such that $R_{1}>M$ and let $\tau\geq 1$. Let also $k,l,m\in\mathbb{N}$. Then, there exist  constants $C$ which depend on $M,l,k, R_{1}$ and $\tilde{\Sigma}_{0}$ such that the following holds: For all solutions $\psi$ of the wave equation which are supported on the (fixed) angular frequency $l$, there exist norms $\tilde{E}_{k,l,m}$, $E_{k,l,m}$ of the initial data of $\psi$ such that
\begin{enumerate}
	\item $\displaystyle\int_{\tilde{\Sigma}_{\tau}\cap\left\{M\leq r\leq R_{1}\right\}}{J_{\mu}^{N}[\partial_{r}^{k}T^{m}\psi]n_{\tilde{\Sigma}_{\tau}}^{\mu}}\leq C\tilde{E}_{k,l,m}^{2}\frac{1}{\tau^{2}}$ for all $k\leq l+m-2$,
	\item $\displaystyle\int_{\tilde{\Sigma}_{\tau}\cap\left\{M\leq r\leq R_{1}\right\}}{J_{\mu}^{N}[\partial_{r}^{l+m-1}T^{m}\psi]n_{\tilde{\Sigma}_{\tau}}^{\mu}}\leq C\tilde{E}_{l+m-1,l,m}^{2}\frac{1}{\tau}$.
	\item $\displaystyle\left|\partial_{r}^{k}T^{m}\psi\right|\leq CE_{k,l,m}\displaystyle\frac{1}{\tau}$ in $\left\{M\leq r\leq R_{1}\right\}$ for all $k\leq l-2+m$,
	\item $\displaystyle\left|\partial_{r}^{l+m-1}T^{m}\psi\right|\leq CE_{l+m-1,l,m}\displaystyle\frac{1}{\tau^{\frac{3}{4}}}$ in $\left\{M\leq r\leq R_{1}\right\}$,
		\item $\displaystyle\left|\partial_{r}^{l+m}T^{m}\psi\right|\leq CE_{l+m,l,m}\displaystyle\frac{1}{\tau^{\frac{1}{4}}}$ in $\left\{M\leq r\leq R_{1}\right\}$,
\end{enumerate} 
\label{theo8}
\end{mytheo}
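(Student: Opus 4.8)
The plan is to prove all five estimates by induction on the number $k$ of transversal derivatives, the base case $k=0$ being supplied by the energy-decay Theorem \ref{t4} and the pointwise Theorem \ref{t6} applied to $T^m\psi$. Since $T=\partial_v$ is Killing, $T^m\psi$ again solves the wave equation and is supported on the frequency $l$, and the angular operators $\Omega_i$ commute with both $\partial_r$ and $T$ while preserving $E^l$. The three regimes $k\le l+m-2$, $k=l+m-1$, $k=l+m$ of the statement are precisely the images, under the correspondence that assigns to the order $k$ the effective frequency $l+m-k$, of the regimes $l\ge 2$, $l=1$, $l=0$ of Theorems \ref{t4} and \ref{t6}: each commutation with the non-Killing field $\partial_r$ lowers by one the frequency effectively available for decay, until at $k=l+1$ one meets the conserved quantity $H_l$ of Theorem \ref{t3}. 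At $k=0$ the five estimates are therefore exactly Theorems \ref{t4} and \ref{t6} in the three frequency regimes, and the content of the theorem is to propagate these rates under repeated commutation with $\partial_r$.

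For the inductive step I first establish the energy statements (1) and (2). The quantity $\partial_r^k T^m\psi$ satisfies an \emph{inhomogeneous} wave equation whose source involves only strictly lower $\partial_r$-order terms; by the inductive hypothesis these decay and enter as a controlled inhomogeneity. One then re-runs the $r^p$-weighted hierarchy of \cite{new} (as in the proof of Theorem \ref{t4}, with the auxiliary field $P$ handling the degenerate surface gravity), whose decisive input is an integrated local energy decay estimate for $\partial_r^k T^m\psi$, supplied by Theorem \ref{theorem3}(2) for every $k\le l+m$. The rate the hierarchy returns is inherited from Theorem \ref{t4} through the effective-frequency correspondence: for $k\le l+m-2$ (effective frequency $\ge 2$) one obtains the full rate $\tau^{-2}$, namely statement (1), while at $k=l+m-1$ (effective frequency $1$) the $\left(1-\frac{M}{r}\right)$ degeneracy of the integrated estimate can no longer be compensated — to do so would require an $L^2$ estimate beyond the top order $l+m$ admitted by Theorem \ref{theorem3}, which the conservation law obstructs — and only $\tau^{-1}$ survives, giving statement (2). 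The norms $\tilde E_{k,l,m}$ are assembled from the $J^N$-fluxes of commuted initial data on the right-hand side of Theorem \ref{theorem3}, their derivative count growing with $k$ in accordance with the trapping on $\mathcal{H}^{+}$.

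The pointwise estimates then split into the same three regimes. For $k\le l+m-2$ the energy of $\Omega^j\partial_r^k T^m\psi=\partial_r^k T^m\Omega^j\psi$ decays like $\tau^{-2}$ for every $|j|\le 2$ by (1), so Sobolev embedding on the spheres of symmetry yields $|\partial_r^k T^m\psi|^2\lesssim\tau^{-2}$ and hence statement (3) with the full rate $\tau^{-1}$. At the two borderline orders this crude passage is wasteful — from (2) it would give only $\tau^{-1/2}$ at $k=l+m-1$, and at $k=l+m$ no energy decay is available at all — so I would instead use the interpolation method introduced in Theorem \ref{t6} for the frequencies $l=0,1$: interpolate between the slow near-horizon behaviour (governed by the degenerate estimate, and at the top order directly by the conservation law of Theorem \ref{t3}) and the faster, untrapped behaviour in the region $r\ge r_0$ away from $\mathcal{H}^{+}$. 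Balancing the two contributions produces the intermediate rates $\tau^{-3/4}$ at $k=l+m-1$ (statement (4)) and $\tau^{-1/4}$ at $k=l+m$ (statement (5)).

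The main obstacle is the top order $k=l+m$. Below it the scheme ``commute once more with $\partial_r$, use a Hardy inequality, absorb the $\left(1-\frac{M}{r}\right)$ weight'' together with the inductive hypothesis closes each estimate; at $k=l+m$ this breaks down because the next quantity $\partial_r^{l+m+1}T^m\psi$ does not decay along $\mathcal{H}^{+}$. When $m=0$ this is immediate — $\partial_r^{l+1}\psi$ converges to the conserved quantity $H_l[\psi]$ of Theorem \ref{t3} — while for $m\ge 1$ the nonzero limit is generated from $H_l[\psi]$ through the transport equations along the generators of $\mathcal{H}^{+}$. Consequently no energy decay is available at order $l+m$, and the sharp pointwise rate $\tau^{-1/4}$ must be extracted directly by interpolation, carefully isolating the non-decaying horizon contribution dictated by Theorem \ref{t3} from the decaying remainder and tracking the precise powers of $\tau$. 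Establishing this sharp interpolation, and checking that the derivative count in $E_{k,l,m}$ and $\tilde E_{k,l,m}$ closes at each inductive step, is the most delicate part of the argument.
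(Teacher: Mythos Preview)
Your approach is considerably more elaborate than the paper's, and in places it misidentifies which tools are actually needed.

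For the energy statements (1)--(2), you propose to treat $\partial_r^k T^m\psi$ as the solution of an inhomogeneous wave equation and re-run the $r^p$-weighted hierarchy together with the vector field $P$. The paper does nothing of the sort. The point is that statements (1)--(2) concern only the \emph{local} energy on $\tilde\Sigma_\tau\cap\{M\le r\le R_1\}$, so no information at null infinity is required. The paper simply repeats the argument of Theorem~\ref{hoe}: away from $\mathcal{H}^+$ one commutes with $T$ and uses local elliptic estimates together with the already-known decay of $J^N[T^i\psi]$; in $\mathcal{A}$ one invokes Theorem~\ref{theorem3}(2), which for $T^m\psi$ gives a spacetime bound and a flux bound for every $k\le l+m$, extracts a dyadic sequence $\tau_n$ along which the flux is $O(\tau_n^{-1})$, interpolates to all $\tau$, and then iterates once more to reach $\tau^{-2}$ for $k\le l+m-2$. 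The entire role of $T^m$ is to enlarge the admissible range of $k$ in Theorem~\ref{theorem3} from $l$ to $l+m$. Your re-running of the $r^p$ method on the commuted quantity is unnecessary and would demand nontrivial control of the inhomogeneity near null infinity that the problem does not call for.

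For the pointwise statements (3)--(5), you invoke the $\gamma_\alpha$-interpolation of Section~\ref{sec:DecayNearMathcalH}. The paper instead repeats the much simpler argument of Theorem~\ref{hp3}: write
\[
\int_{\mathbb{S}^2}\bigl(\partial_r^k T^m\psi\bigr)^2(r_0,\omega)\,d\omega \;\le\; 2\Bigl(\int_{\tilde\Sigma_\tau\cap\{r\le R_1+1\}}(\partial_r^k(\delta T^m\psi))^2\Bigr)^{1/2}\Bigl(\int_{\tilde\Sigma_\tau\cap\{r\le R_1+1\}}(\partial_\rho\partial_r^k(\delta T^m\psi))^2\Bigr)^{1/2},
\]
and read off the rate from the two factors using (1)--(2) and the boundedness at the top order: for $k\le l+m-2$ both factors are $O(\tau^{-2})$, giving $\tau^{-1}$; for $k=l+m-1$ they are $O(\tau^{-2})$ and $O(\tau^{-1})$, giving $\tau^{-3/4}$; for $k=l+m$ they are $O(\tau^{-1})$ and $O(1)$, giving $\tau^{-1/4}$. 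In particular, at the top order $k=l+m$ there is no need to ``isolate the non-decaying horizon contribution'' or invoke $H_l[\psi]$; the uniform boundedness of the $(l+m)$-th order flux from Theorem~\ref{theorem3}(2) is all that is used. Your $\gamma_\alpha$-scheme could presumably be pushed through, but it is both harder and less sharp than this two-line Cauchy--Schwarz.
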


\begin{mytheo}(\textbf{Higher Order Energy and Pointwise Estimates II: Non-Decay and Blow-up Results})
Fix $R_{1}$ such that $R_{1}>M$. Let $k,l,m\in\mathbb{N}$ and $H_{l}[\psi]$ be the functions as defined in Theorem \ref{t3}.  Then there exist non zero constants $C,c$ (in fact $c>0$) which depend on $M,l,k, R_{1}$ such that for generic  solutions $\psi$ to the wave equation which are supported on the (fixed) angular frequency $l$ we have
\begin{enumerate}
\item
\begin{equation*}
\partial_{r}^{l+m+1}T^{m}\psi(\tau,\theta,\phi)\rightarrow CH_{l}[\psi](\theta,\phi)
\end{equation*}
as $\tau\rightarrow +\infty$ along $\hh$ and generically $H_{l}[\psi]\neq 0$ almost everywhere on $\mathbb{S}^{2}_{0}=\tilde{\Sigma}_{0}\cap\hh$ (and $C=1$ for $m=0$).
\item

\begin{equation*}
\displaystyle\left|\partial_{r}^{l+m+k}T^{m}\psi\right|(\tau,\theta,\phi)\geq c \left|H_{l}[\psi](\theta,\phi)\right|\tau^{k-1}
\end{equation*}
asymptotically on $\mathcal{H}^{+}$ for all $k\geq 2$.
\end{enumerate}
Finally, for generic solutions $\psi$ to the wave equation we have 
\begin{equation*}
\displaystyle\int_{\tilde{\Sigma}_{\tau}\cap\left\{M\leq r\leq R_{1}\right\}}{J_{\mu}^{N}[\partial_{r}^{k}T^{m}\psi]n_{\tilde{\Sigma}_{\tau}}^{\mu}}\longrightarrow +\infty 
\end{equation*}
as $\tau\rightarrow +\infty$ for all $k\geq m+1$. 
\label{theo9}
\end{mytheo}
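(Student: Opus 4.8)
The plan is to work entirely on the null generators of $\hh$, where the degeneracy of the surface gravity linearizes the problem into a transport hierarchy. First I would restrict $\Box_{g}\psi=0$ to $\hh$. Using the identity $r^{2}D=(r-M)^{2}$, the equation reads
\[
2\partial_{r}\partial_{v}\psi+\frac{2}{r}\partial_{v}\psi+\frac{1}{r^{2}}\partial_{r}\!\left((r-M)^{2}\partial_{r}\psi\right)+\lapp\psi=0,
\]
and since $(r-M)^{2}$ has a \emph{double} zero at $r=M$ the transversal term drops out on $\hh$. I would then commute with $\partial_{r}$ exactly $k$ times and evaluate at $r=M$. Writing $\psi^{(k)}:=\partial_{r}^{k}\psi|_{\hh}$ and using $\lapp\psi=-\tfrac{l(l+1)}{r^{2}}\psi$ for $\psi\in E^{l}$, the Leibniz rule yields, for each $k$, a transport equation along the generators of the schematic form
\[
2\,\partial_{v}\psi^{(k+1)}+\frac{2}{M}\,\partial_{v}\psi^{(k)}+\frac{(k-l)(k+l+1)}{M^{2}}\,\psi^{(k)}+\big(\text{terms in }\psi^{(j)},\partial_{v}\psi^{(j)},\ j<k\big)=0.
\]

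The decisive algebraic fact is that the zeroth-order coefficient $(k-l)(k+l+1)$ vanishes precisely at $k=l$, reproducing the conservation law of Theorem \ref{t3}, and is strictly positive for $k>l$, which will drive growth. For $m=0$, part (1) then follows immediately: by Theorem \ref{theo8}(3)--(5) the quantities $\psi^{(i)}|_{\hh}$ with $i\le l$ all decay in $\tau$, while $H_{l}[\psi]=\psi^{(l+1)}+\sum_{i\le l}\b_{i}\psi^{(i)}$ is conserved along the generators (and $\tau\sim v$ on $\hh$), so $\psi^{(l+1)}\to H_{l}[\psi]$ with $C=1$. For part (2) with $m=0$ I would integrate the transport equations upward from $k=l+1$: at level $k$ the source contains $\tfrac{(k-l)(k+l+1)}{M^{2}}\psi^{(k)}$, and feeding in the behaviour $\psi^{(l+j)}\sim c_{j}H_{l}[\psi]\,v^{\,j-1}$ established at the previous level produces exactly one further power of $v$, giving $\psi^{(l+j)}\sim c_{j}H_{l}[\psi]\,\tau^{\,j-1}$ with $c_{j}\neq0$ by induction.

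Since $\partial_{v}$ and $\partial_{r}$ commute and $T=\partial_{v}$ preserves the frequency support, $\partial_{r}^{\,l+m+j}T^{m}\psi=\partial_{v}^{m}\psi^{(l+m+j)}$; inserting $\psi^{(l+m+j)}\sim c_{m+j}H_{l}[\psi]\,v^{\,m+j-1}$ and differentiating $m$ times in $v$ lowers the power by $m$, converting the $v^{m}$ growth of $\psi^{(l+m+1)}$ into the nonzero limit $C\,H_{l}[\psi]$ of part (1) and the $v^{\,m+k-1}$ growth into the $\tau^{\,k-1}$ lower bound of part (2). This is where care is needed: differentiating an asymptotic expansion in $v$ is legitimate only if it holds in a $C^{m}$-in-$v$ sense, so I would carry the integration of the hierarchy with explicit remainder terms and track their $\partial_{v}$-derivatives. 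For genericity, observe that each $\psi^{(i)}|_{\hh}\in E^{l}$, hence $H_{l}[\psi]\in E^{l}$ is a degree-$l$ spherical harmonic whose value on $\mathbb{S}^{2}_{0}$ is a continuous linear functional of the Cauchy data; on the complement of its proper closed kernel $H_{l}[\psi]\not\equiv0$, and since nonzero spherical harmonics have nodal sets of measure zero, $H_{l}[\psi]\neq0$ almost everywhere.

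Finally, for the energy blow-up I would decompose a generic $\psi$ into angular frequencies and isolate the lowest one, $l=0$: part (2) with $l=0$ gives pointwise blow-up of $\partial_{r}^{\,j}T^{m}\psi_{0}$ on $\hh$ for $j\ge m+2$. Since $J_{\mu}^{N}[\partial_{r}^{k}T^{m}\psi]n^{\mu}\gtrsim(\partial_{r}^{k+1}T^{m}\psi)^{2}$ and the blow-up on $\hh$ persists, by continuity, on a thin $r$-collar near $r=M$, the integral over $\tilde{\Sigma}_{\tau}\cap\{M\le r\le R_{1}\}$ is bounded below by a quantity tending to $+\infty$ precisely when $k+1\ge m+2$, i.e.\ $k\ge m+1$, matching the stated threshold. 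I expect the \emph{main obstacle} to be the sharp integration of the transport hierarchy for $m=0$: one must show that the $H_{l}[\psi]$-driven source produces genuine polynomial growth with a nonvanishing coefficient at every level (with no cancellation against the lower-order terms), that the decay inputs of Theorem \ref{theo8} are strong enough and sufficiently uniform in $v$ to control the accumulated remainders, and that these expansions survive $m$ differentiations in $v$ — the step that actually powers the $T^{m}$ reduction.
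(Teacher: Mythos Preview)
Your treatment of the $m=0$ case is essentially the paper's: restrict to $\hh$, use the conservation law together with the decay of Theorem~\ref{theo8} to get $\psi^{(l+1)}\to H_{l}[\psi]$, then integrate the transport hierarchy upward to produce polynomial growth. This is exactly the content of Propositions~\ref{nondecay} and~\ref{rrl}, and your identification of the coefficient $(k-l)(k+l+1)$ as the driver is correct.

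For $m\ge1$ you and the paper diverge. You propose to first establish $\psi^{(l+m+j)}\sim c_{m+j}H_{l}[\psi]\,v^{m+j-1}$ and then differentiate $m$ times in $v$; you correctly flag that this requires the expansion to hold in a $C^{m}$-in-$v$ sense with tracked remainders. The paper sidesteps this entirely. It proves an \emph{algebraic} identity on $\hh$ (Proposition~\ref{tmpsi}): by iterating~\eqref{kcom} and using $H_{l}[T^{j}\psi]=0$ for $j\ge1$, one obtains directly that $\partial_{r}^{l+m+1}T^{m}\psi$ equals a fixed nonzero multiple of $H_{l}[\psi]$ plus a finite linear combination of $\partial_{r}^{i}T^{j}\psi$ with $i\le l$, $j\le m$, each of which decays by Theorem~\ref{theo8}. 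Part~(1) is then immediate, and part~(2) follows by integrating~\eqref{kcom} at level $l+m+1$ applied to $T^{m-1}\psi$, exactly as in the $m=0$ argument. Your route could be made to work but is strictly harder; the algebraic identity of Proposition~\ref{tmpsi} is the cleaner engine, and you should use it.

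The energy blow-up is where your sketch has a genuine gap. The assertion that ``the blow-up on $\hh$ persists, by continuity, on a thin $r$-collar'' is not an argument: the collar width may shrink with $\tau$, and nothing you have said gives a lower bound on the spatial integral. The paper proceeds by contradiction (Theorem~\ref{hoeblowup}): first show, via local elliptic estimates and decay of the $T$-energy, that $\left|\partial_{r}^{k}T^{m}\psi\right|(\tau,r_{0})\to0$ for every \emph{fixed} $r_{0}>M$; then, assuming the energy on $\tilde{\Sigma}_{\tau}\cap\{M\le r\le R_{1}\}$ stays bounded, integrate from $r=M$ to $r=r_{0}$ (with $r_{0}-M$ chosen small depending on $\epsilon$) to force $\int_{\mathbb{S}^{2}}\left|\partial_{r}^{k}T^{m}\psi\right|(\tau,M)\to0$, contradicting parts~(1)--(2). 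Your reduction to the $l=0$ component and the threshold computation $k\ge m+1$ are fine, but this contradiction step is the missing ingredient.
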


\section{Conservation Laws on Degenerate Event Horizons}
\label{sec:ConservationLawsOnDegenerateEventHorizons}

We will prove that the lack of redshift gives rise to conservation laws along $\mathcal{H}^{+}$ for translation invariant derivatives. These laws govern the evolution of the low angular frequencies and play a fundamental role in understanding the evolution of generic solutions to the wave equation. 

We  use the regular coordinate system $(v,r)$. Let $T=\partial_{v}$, where $\partial_{v}$ denotes the coordinate vector field corresponding to the system $(v,r)$. As we shall see our results can be applied to a  general class of (spherically symmetric) degenerate black hole spacetimes. We conclude this section by showing that the argument of Kay and Wald (see \cite{wa1}) could not have been applied in our case even for obtaining uniform boundedness of the solutions to the wave equation.

Let us first consider spherically symmetric solutions.
\begin{proposition}
For all  spherically symmetric solutions $\psi$ to the wave equation the quantity 
\begin{equation}
H_{0}[\psi]=\partial_{r}\psi +\frac{1}{M}\psi
\label{0l}
\end{equation}
is conserved along $\mathcal{H}^{+}$. 
\label{ndl0}
\end{proposition}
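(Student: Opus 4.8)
The plan is to write the wave equation $\Box_{g}\psi=0$ explicitly in the regular coordinates $(v,r)$ for a spherically symmetric $\psi$ and then restrict the resulting identity to $\hh=\left\{r=M\right\}$, exploiting the fact that in the extreme case $D=\left(1-\frac{M}{r}\right)^{2}$ vanishes \emph{to second order} there. First I would invert the metric in the $(v,r)$ block: since that block has determinant $-1$ one finds $g^{vv}=0$, $g^{vr}=1$, $g^{rr}=D$, and $\sqrt{-g}=r^{2}\sin\theta$. Using $\Box_{g}\psi=\frac{1}{\sqrt{-g}}\partial_{\mu}\left(\sqrt{-g}\,g^{\mu\nu}\partial_{\nu}\psi\right)$ and dropping all angular derivatives (by spherical symmetry), a direct computation gives, after multiplying by $r^{2}$,
\begin{equation*}
(r-M)^{2}\partial_{r}^{2}\psi+2(r-M)\partial_{r}\psi+2r^{2}\partial_{v}\partial_{r}\psi+2r\partial_{v}\psi=0,
\end{equation*}
where I have used $r^{2}D=(r-M)^{2}$ and hence $\partial_{r}(r^{2}D)=2(r-M)$.

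The key step is to evaluate this identity on $\hh$. At $r=M$ both the coefficient $(r-M)^{2}$ of $\partial_{r}^{2}\psi$ and the coefficient $2(r-M)$ of the purely transversal first-order term $\partial_{r}\psi$ vanish, leaving only
\begin{equation*}
2M^{2}\partial_{v}\partial_{r}\psi+2M\partial_{v}\psi=0 \quad \text{on } \hh.
\end{equation*}
Dividing by $2M$ and recognising the left-hand side as a total $v$-derivative, this is precisely
\begin{equation*}
\partial_{v}\left(\partial_{r}\psi+\frac{1}{M}\psi\right)=0 \quad \text{on } \hh,
\end{equation*}
i.e.~$T H_{0}[\psi]=0$ along $\hh$. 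Since the null generators of $\hh$ are exactly the integral curves of $T=\partial_{v}$, this shows that $H_{0}[\psi]=\partial_{r}\psi+\frac{1}{M}\psi$ is constant along each such geodesic, which is the assertion.

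The computation is entirely routine; the one point that deserves emphasis --- and which I expect to be the conceptual crux rather than a technical obstacle --- is that the conservation law is a direct manifestation of the \emph{double} zero of $D$ at $r=M$. Had $D$ vanished only to first order, as in the subextreme case, the term $2(r-M)\partial_{r}\psi$ would contribute a non-degenerate (redshift) multiple of $\partial_{r}\psi$ on the horizon, destroying the total-derivative structure and hence the conservation law. Thus the proof also makes transparent why no analogue survives away from extremality, and the same calculation should extend to any spherically symmetric metric of this form whose horizon function has a double root, in agreement with the generality claimed in the surrounding discussion.
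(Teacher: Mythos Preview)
Your proof is correct and follows essentially the same approach as the paper: write the wave equation in $(v,r)$ coordinates, use spherical symmetry to drop the angular terms, and restrict to $r=M$ to obtain $T\partial_{r}\psi+\frac{1}{M}T\psi=0$ on $\hh$. The paper's version is a one-line statement of this, while you have spelled out the inversion of the metric and the explicit form of $r^{2}\Box_{g}\psi$; your added remark about the double zero of $D$ being the mechanism behind the conservation law is exactly the point the paper emphasises as the ``degeneracy of the redshift''.
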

\begin{proof}
Since $\psi$ solves $\Box_{g}\psi =0$ and since $\lapp\psi =0$ we have $T\partial_{r}\psi+\frac{1}{M}T\psi=0$ and, since $T$ is tangential to $\mathcal{H}^{+}$, this implies that $\partial_{r}\psi +\frac{1}{M}\psi$ remains constant along $\mathcal{H}^{+}$.
\end{proof}

\begin{proposition}
For all  solutions $\psi$ to the wave equation  that are supported on the angular frequency $l=1$ the quantity
\begin{equation}
H_{1}[\psi]=\partial_{r}\partial_{r}\psi+\frac{3}{M}\partial_{r}\psi+\frac{1}{M^{2}}\psi
\label{02}
\end{equation} 
is conserved along the null geodesics of $\mathcal{H}^{+}$.
\label{ndl1}
\end{proposition}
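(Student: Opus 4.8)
The plan is to imitate the proof of Proposition \ref{ndl0}, but now I must commute the wave equation once with $\partial_r$ before restricting to $\hh$, and then combine the two resulting relations. First I would write $\Box_{g}\psi=0$ explicitly in the regular coordinates $(v,r)$. Using $g^{vv}=0$, $g^{vr}=1$, $g^{rr}=D$ and $\sqrt{-g}=r^{2}\sin\theta$, the wave operator takes the schematic form
\[
2\,\partial_{r}T\psi+\frac{2}{r}T\psi+\frac{1}{r^{2}}\partial_{r}\!\left(r^{2}D\,\partial_{r}\psi\right)+\frac{1}{r^{2}}\lapp\psi=0 .
\]
For $\psi$ supported on $l=1$ I substitute $\lapp\psi=-2\psi$ and expand the radial term, obtaining an equation $(\star)$ whose coefficients involve $D$, $D'=\tfrac{2M(r-M)}{r^{3}}$, and, after one $r$-differentiation, $D''$.

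The key observation is that on $\hh$ the function $D=(1-M/r)^{2}$ is doubly degenerate: $D(M)=0$ and $D'(M)=0$ (this is precisely the vanishing of the surface gravity), while $D''(M)=2/M^{2}$. Hence restricting $(\star)$ to $r=M$ kills every term carrying a factor $D$ or $D'$ and leaves the first-order relation
\[
2\,\partial_{r}T\psi+\frac{2}{M}T\psi-\frac{2}{M^{2}}\psi=0 \qquad (A).
\]
Next I commute $(\star)$ with $\partial_{r}$ and again restrict to $r=M$. The crucial point is that the dangerous higher transversal derivatives $\partial_{r}^{2}\psi$ and $\partial_{r}^{3}\psi$ enter only through the factors $D$, $D'$ and $\tfrac{2D}{r}+D'$, all of which vanish at $r=M$, so they drop out and only $D''(M)=2/M^{2}$ survives. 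I expect this to yield the second-order relation
\[
2\,\partial_{r}^{2}T\psi+\frac{2}{M}\partial_{r}T\psi-\frac{2}{M^{2}}T\psi+\frac{4}{M^{3}}\psi=0 \qquad (B).
\]

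Finally I would eliminate the undifferentiated term $\psi$, which is the only term in $(A)$ and $(B)$ not already under a $T$-derivative. Using $(A)$ to write $\tfrac{4}{M^{3}}\psi=\tfrac{4}{M}\partial_{r}T\psi+\tfrac{4}{M^{2}}T\psi$ on $\hh$ and inserting this into $(B)$ should give
\[
\partial_{r}^{2}T\psi+\frac{3}{M}\partial_{r}T\psi+\frac{1}{M^{2}}T\psi=0,
\]
that is, $T\,H_{1}[\psi]=0$ on $\hh$. Since $T=\partial_{v}$ is tangent to $\hh$ and generates its null geodesics, $H_{1}[\psi]$ is constant along them, which is the claim. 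The main obstacle — really the heart of the matter — is the elimination of the bare $\psi$ term: it succeeds precisely because of the double degeneracy $D(M)=D'(M)=0$, which both produces the \emph{closed} relation $(B)$ (no leftover $\partial_{r}^{2}\psi,\partial_{r}^{3}\psi$) and makes $(A)$ available to trade $\psi$ for $T$-derivatives. In the subextreme case $D'(r_{+})\neq 0$, so these transversal derivatives would not drop out and no such conservation law could exist.
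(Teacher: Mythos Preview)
Your approach mirrors the paper's exactly: restrict $\Box_g\psi=0$ and $\partial_r(\Box_g\psi)=0$ to $\mathcal{H}^{+}$ using $D(M)=D'(M)=0$, then use $(A)$ to trade the bare $\psi$ in $(B)$ for $T$-derivatives. The one point you skip is why $(B)$ has no $\partial_r\psi$ term: the surviving contribution $D''(M)\,\partial_r\psi=\tfrac{2}{M^2}\partial_r\psi$ that you mention is exactly cancelled by the $-\tfrac{2}{M^2}\partial_r\psi$ coming from $\partial_r\bigl(-\tfrac{2}{r^2}\psi\bigr)$; the paper writes this coefficient as $-\tfrac{2}{M^2}+R'(M)$ with $R=D'+\tfrac{2D}{r}$ and checks $R'(M)=\tfrac{2}{M^2}$. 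This cancellation is specific to the pairing $k=1$, $l=1$ and is the seed of the general identity \eqref{idconservation}, so it is worth making explicit rather than absorbing into ``I expect this to yield $(B)$''.
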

\begin{proof}
Since $\lapp\psi=-\frac{2}{r^{2}}\psi$, the wave equation on $\mathcal{H}^{+}$ gives us
\begin{equation}
2T\partial_{r}\psi+\frac{2}{M}T\psi=\frac{2}{M^{2}}\psi.
\label{psi1l}
\end{equation}
Moreover if $R=D'+\frac{2D}{r}$ then
\begin{equation*}
\begin{split}
\partial_{r}\left(\Box_{g}\psi\right)=&D\partial_{r}\partial_{r}\partial_{r}\psi+2T\partial_{r}\partial_{r}\psi+\frac{2}{r}\partial_{r}T\psi+R\partial_{r}\partial_{r}\psi+\partial_{r}\lapp\psi+D'\partial_{r}\partial_{r}\psi-\frac{2}{r^{2}}T\psi+R'\partial_{r}\psi\\
\end{split}
\end{equation*}
and thus by restricting this identity on $\mathcal{H}^{+}$ we take
\begin{equation}
\begin{split}
2\partial_{r}^{2}T\psi+\frac{2}{M}\partial_{r}T\psi-\frac{2}{M^{2}}T\psi+\frac{4}{M^{3}}\psi+\left(-\frac{2}{M^{2}}+R'(M)\right)\partial_{r}\psi=0.
\end{split}
\label{rpsi1}
\end{equation}
However, $R'(M)=\frac{2}{M^{2}}$ and in view of \eqref{psi1l} we have  
\begin{equation*}
\begin{split}
2\partial_{r}^{2}T\psi+\frac{2}{M}\partial_{r}T\psi-\frac{2}{M^{2}}T\psi+\frac{2}{M}\left(2\partial_{r}T\psi+\frac{2}{M}T\psi\right)=0
\end{split}
\end{equation*}
which means that   \eqref{02} is constant along the  integral curves of $T$ on $\mathcal{H}^{+}$.
\end{proof}

\begin{proposition}
There exist constants $\a_{i}^{j},j=0,1,...,l-1,\, i=0,1,...,j+1$, which depend on $M$ and $l$ such that for all  solutions $\psi$ of the wave equation which are supported on the  (fixed) frequency $l$ we have
\begin{equation*}
\partial_{r}^{j}\psi=\sum_{i=0}^{j+1}{\a_{i}^{j}T\partial_{r}^{i}\psi},
\end{equation*}
on $\mathcal{H}^{+}$. 
\label{corl2}
\end{proposition}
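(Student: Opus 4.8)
The plan is to argue by induction on $j$, extracting at each step one scalar relation by differentiating the wave equation $j$ times in $r$ and restricting to $\mathcal{H}^{+}$. Writing the wave operator in the coordinates $(v,r)$ as
\begin{equation*}
\Box_{g}\psi = D\partial_{r}^{2}\psi + 2T\partial_{r}\psi + \frac{2}{r}T\psi + R\partial_{r}\psi + \lapp\psi, \qquad R = D' + \frac{2D}{r},
\end{equation*}
I would first record the degeneracies on $\mathcal{H}^{+}$ that drive everything: since $D=(1-M/r)^{2}$ has a double zero at $r=M$ one has $D(M)=D'(M)=0$ and $D''(M)=\frac{2}{M^{2}}$, whereas $R$ has only a simple zero there, $R(M)=0$ and $R'(M)=\frac{2}{M^{2}}$. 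For $\psi$ supported on the frequency $l$ we also use $\lapp\psi=-\frac{l(l+1)}{r^{2}}\psi$.

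Next I would apply $\partial_{r}^{j}$ to $\Box_{g}\psi=0$, restrict to $\mathcal{H}^{+}$, and expand by Leibniz, tracking the coefficient of the top-order radial derivative $\partial_{r}^{j}\psi$. Because $D$ and $D'$ vanish at $r=M$, the highest radial derivative surviving from $\partial_{r}^{j}(D\partial_{r}^{2}\psi)$ is $\partial_{r}^{j}\psi$ itself (from the $\binom{j}{2}D''$ term), not $\partial_{r}^{j+2}\psi$; likewise the vanishing of $R$ lowers the top radial order in $\partial_{r}^{j}(R\partial_{r}\psi)$ to $\partial_{r}^{j}\psi$ (from the $\binom{j}{1}R'$ term). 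Collecting these with the $\lapp\psi$ contribution, the coefficient of $\partial_{r}^{j}\psi$ is
\begin{equation*}
c_{j}=\frac{1}{M^{2}}\Big(j(j-1)+2j-l(l+1)\Big)=\frac{(j-l)(j+l+1)}{M^{2}}.
\end{equation*}
Meanwhile $2T\partial_{r}\psi$ contributes $2T\partial_{r}^{j+1}\psi$, the highest $T$-derivative present, and $\frac{2}{r}T\psi$ contributes only $T\partial_{r}^{i}\psi$ with $i\le j$. Thus the restricted equation takes the form
\begin{equation*}
c_{j}\,\partial_{r}^{j}\psi+\sum_{i=0}^{j-1}d_{i}^{(j)}\partial_{r}^{i}\psi=\sum_{i=0}^{j+1}e_{i}^{(j)}T\partial_{r}^{i}\psi,
\end{equation*}
with constants $d_{i}^{(j)},e_{i}^{(j)}$ depending only on $M,l$, and $e_{j+1}^{(j)}=2$.

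The decisive observation is that for $0\le j\le l-1$ we have $j-l<0$ and $j+l+1>0$, hence $c_{j}\neq 0$; this is exactly the resonance that fails at $j=l$ and which produces the conservation law of order $l+1$ in Theorem \ref{t3}. Dividing by $c_{j}$ solves for $\partial_{r}^{j}\psi$ in terms of the $T\partial_{r}^{i}\psi$ ($i\le j+1$) and the lower radial derivatives $\partial_{r}^{i}\psi$ ($i<j$). I would then close the induction: assuming $\partial_{r}^{i}\psi=\sum_{p=0}^{i+1}\a_{p}^{i}T\partial_{r}^{p}\psi$ on $\mathcal{H}^{+}$ for every $i<j$, substitution removes all pure radial derivatives from the right-hand side, and since each substituted term introduces only $T\partial_{r}^{p}\psi$ with $p\le i+1\le j$, the largest index stays at $j+1$, giving $\partial_{r}^{j}\psi=\sum_{i=0}^{j+1}\a_{i}^{j}T\partial_{r}^{i}\psi$. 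The base case $j=0$ is just the restricted wave equation $2T\partial_{r}\psi+\frac{2}{M}T\psi=\frac{l(l+1)}{M^{2}}\psi$, for which $c_{0}=-\frac{l(l+1)}{M^{2}}\neq 0$.

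The main obstacle is the coefficient computation: one must check that the degeneracies of $D$ and $R$ at $\mathcal{H}^{+}$ lower the radial order just enough that $\partial_{r}^{j}\psi$ is genuinely the top radial term, and then that its coefficient $c_{j}=(j-l)(j+l+1)/M^{2}$ is nonzero throughout $0\le j\le l-1$. Once this is in hand, the remaining work — verifying that the inductive substitutions never push the $T\partial_{r}$-index above $j+1$ — is routine bookkeeping.
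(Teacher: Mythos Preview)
Your proof is correct and follows essentially the same approach as the paper: induction on $j$, differentiating the wave equation $j$ times in $r$, restricting to $\mathcal{H}^{+}$, and observing that the degeneracies $D(M)=D'(M)=R(M)=0$ leave $\partial_{r}^{j}\psi$ as the top radial term with coefficient $\frac{j(j+1)-l(l+1)}{M^{2}}$, nonzero for $j\le l-1$. Your factorization $c_{j}=\frac{(j-l)(j+l+1)}{M^{2}}$ makes the nonvanishing and the resonance at $j=l$ more transparent than the paper's presentation, but the argument is the same.
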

\begin{proof}
For $j=0,1$ we just have to revisit the proof  of Proposition \ref{ndl1} and use the fact that for all $l\geq 2$ we have $-\frac{l(l+1)}{M^{2}}+R'(M)\neq 0$. We next proceed by induction on $j$ for fixed $l$.  We suppose that the result holds for $j=0,1,...,k-1$ and we will prove that it holds for $j=k$ provided $k\leq l-1$. Clearly, 
\begin{equation}
\begin{split}
\partial_{r}^{k}\left(\Box_{g}\psi\right)=&D\left(\partial_{r}^{k+2}\psi\right)+2\partial_{r}^{k+1}T\psi+\frac{2}{r}\partial_{r}^{k}T\psi+R\partial_{r}^{k+1}\psi+\partial_{r}^{k}\lapp\psi+\\
&+\sum_{i=1}^{k}{
\binom{k}{i}\partial_{r}^{i}D\cdot \partial_{r}^{k-i+2}\psi}+\sum_{i=1}^{k}{\binom{k}{i}\partial_{r}^{i}\frac{2}{r}\cdot\partial_{r}^{k-i}T\psi}+\sum_{i=1}^{k}{\binom{k}{i}\partial_{r}^{i}R\cdot\partial_{r}^{k-i+1}\psi}.
\end{split}
\label{kcom}
\end{equation}
We observe that the coef{}ficients of $\partial_{r}^{k+2}\psi$ and $\partial_{r}^{k+1}\psi$ vanish on $\mathcal{H}^{+}$. Since $\lapp\psi=-\frac{l(l+1)}{r^{2}}\psi$, the coef{}ficient of $\partial_{r}^{k}\psi$ on $\mathcal{H}^{+}$ is equal to 
\begin{equation}
\begin{split}
\binom{k}{2}D''+\binom{k}{1}R'-\frac{l(l+1)}{M^{2}}=\frac{k(k+1)}{2}\frac{2}{M^{2}}-\frac{l(l+1)}{M^{2}},
\end{split}
\label{idconservation}
\end{equation}
which is non-zero if and only if $l\neq k$. Therefore, for all $k\leq l-1$ we can solve with respect to $\partial_{r}^{k}\psi$ and use the inductive hypothesis completing thus the proof of the proposition.
\end{proof}

\begin{proof}[Proof of Theorem \ref{t3} of Section \ref{sec:TheMainTheorems}]
We apply \eqref{kcom} for $k=l$. Then, according to our previous calculation, the coef{}ficients of  $\partial_{r}^{l+2}\psi$,  $\partial_{r}^{l+1}\psi$ and $\partial_{r}^{l}\psi$ vanish on $\mathcal{H}^{+}$. Therefore, we end up with the terms $\partial_{r}^{k}T\psi, k=0,1,...,l+1$ and $\partial_{r}^{j}\psi,j=0,1,...,l-1$. Thus, from Proposition \ref{corl2} there exist 
 constants $\b_{i},i=0,1,...,l$ which depend on $M$ and $l$ such that
\begin{equation*}
T\partial_{r}^{l+1}\psi+\sum_{i=0}^{l}{\b_{i}T\partial_{r}^{i}\psi}=0
\end{equation*}
on $\mathcal{H}^{+}$, which implies that the quantity
\begin{equation*}
H_{l}[\psi]=\partial_{r}^{l+1}\psi+\sum_{i=0}^{l}{\b_{i}\partial_{r}^{i}\psi}
\end{equation*}
is conserved along the integral curves of $T$ on $\mathcal{H}^{+}$. 
\end{proof}
Note that the above theorem holds for more general extreme black hole spacetimes. Indeed, let the metric with respect to the coordinate system $(v,r,\theta,\phi)$  take the form
\begin{equation*}
g=-Ddv^{2}+2dvdr+r^{2}g_{\mathbb{S}^{2}},
\end{equation*}
for a general $D=D(r)$. If this spacetime admits a black hole whose event horizon is located at $r=r_{\mathcal{H}^{+}}$ where $D(r_{\mathcal{H}^{+}})=0$, then the above theorem (and proof) still holds if
\begin{equation}
D'(r_{\mathcal{H}^{+}})=0,
\label{ex1}
\end{equation}
\begin{equation}
D''(r_{\mathcal{H}^{+}})=\frac{2}{r_{\mathcal{H}^{+}}^{2}}.
\label{ex2}
\end{equation}
The equation \eqref{ex1} expresses the extremality of the black hole whereas the additional \eqref{ex2} is necessary so \eqref{idconservation} holds. Note here that \eqref{idconservation} trivially holds for the frequency $l=0$ and thus \eqref{ex2} is not required for spherically symmetric solutions of the wave equation. In \cite{aretakis3}, we provide even more general assumptions under which we have conservation laws for spherically symmetric self-gravitating scalar fields on extreme black holes. 

\subsection{Applications}
\label{sec:Applications}
Note that although we show in Section \ref{sec:HigherOrderEstimates} that $H_{l}[\psi]$ is non-zero almost everywhere on $\mathcal{H}^{+}\cap\Sigma_{0}$ for generic initial data, we have $H_{l}[T^{m}\psi]=0$ for all $\psi$ and $m\geq 1$. For the waves of the form $T^{m}\psi$ we have the following
\begin{proposition}
There exist  constants $C\neq 0$ and $\lambda_{ij}$ which depend on $M,l,m$ and such that for all solutions of the wave equation which are supported on the  frequency $l$ we have
\begin{equation*}
\begin{split}
\partial_{r}^{l+m+1}T^{m}\psi+\sum_{j=0}^{m}\sum_{i=0}^{l}\lambda_{ij}\partial_{r}^{i}T^{j}\psi=C\cdot H_{l}[\psi]
\end{split}
\end{equation*}
on $\hh$.
\label{tmpsi}
\end{proposition}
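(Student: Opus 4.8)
The plan is to work entirely on $\hh$ with the transversal jets $\partial_r^a T^b\psi$, exploiting that $T=\partial_v$ is tangent to $\hh$ and commutes with $\partial_r$, so any relation valid on $\hh$ may be freely differentiated by $T$. The starting point is the restricted commuted wave equation: evaluating \eqref{kcom} on $\hh$ (where $D=D'=R=0$) and recalling, as in the proof of Proposition \ref{corl2}, that the coefficients of $\partial_r^{k+2}\psi$ and $\partial_r^{k+1}\psi$ vanish while the coefficient of $\partial_r^k\psi$ is $\frac{k(k+1)-l(l+1)}{M^2}$ by \eqref{idconservation}, one solves for the $\partial_r^{k+1}T\psi$ term, whose coefficient is $2$. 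This gives, for every $k\ge 0$, a relation on $\hh$ expressing $\partial_r^{k+1}T\psi$ as a linear combination of $\partial_r^i\psi$ and $\partial_r^i T\psi$ with $i\le k$. Applying $T^{b-1}$ for $b\ge 1$ then yields the basic reduction
\begin{equation*}
\partial_r^{k+1}T^{b}\psi=\sum_{i=0}^{k}p_i\,\partial_r^{i}T^{b}\psi+\sum_{i=0}^{k}q_i\,\partial_r^{i}T^{b-1}\psi\quad\text{on }\hh,
\end{equation*}
which lowers the transversal order at the cost of producing $T$-orders $b$ and $b-1$.

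The next step is a reduction lemma, which I would prove by a double induction, outer on the $T$-order $b$ and inner on the transversal order $a$: for all $a\le l+b$, the quantity $\partial_r^a T^b\psi$ equals, on $\hh$, a linear combination of the \emph{low} terms $\partial_r^i T^j\psi$ with $i\le l$ and $j\le b$, and \emph{no} multiple of $H_l[\psi]$ occurs. Indeed, if $a\le l$ the term is already low; if $l+1\le a\le l+b$ one applies the basic reduction with $k=a-1$, producing terms $\partial_r^i T^b\psi$ with $i\le a-1\le l+b-1$ (handled by the inner induction) and $\partial_r^i T^{b-1}\psi$ with $i\le a-1\le l+(b-1)$ (handled by the outer induction). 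The key bookkeeping point is that within the range $a\le l+b$ one never reaches $\partial_r^{l+1}\psi$ at $T$-order $0$: a $T$-order-$0$ term is born only when reducing some $\partial_r^{a'}T\psi$ with $a'\le l+1$, which produces transversal orders $\le l$ only. Since $H_l[\psi]$ can be introduced solely through the conservation identity $\partial_r^{l+1}\psi=H_l[\psi]-\sum_{i=0}^l\beta_i\partial_r^i\psi$ of Theorem \ref{t3} (that is, at transversal order exactly $l+1$ and $T$-order $0$), this confirms that no $H_l[\psi]$ enters the lemma.

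Finally, I would prove the proposition by induction on $m$. For $m=0$ the statement is exactly the conservation-law identity above, with $C=1$. For the inductive step, apply $T^{m-1}$ to the basic reduction with $k=l+m$ (legitimate since $l+m>l$): this writes $\partial_r^{l+m+1}T^m\psi$ as $-\tfrac12\frac{(l+m)(l+m+1)-l(l+1)}{M^2}\,\partial_r^{l+m}T^{m-1}\psi$ plus terms $\partial_r^j T^{m-1}\psi$ with $j\le l+m-1$ and $\partial_r^i T^m\psi$ with $i\le l+m$, all of which satisfy $a\le l+b$. The leading term is precisely the proposition at level $m-1$, so by the inductive hypothesis it equals $C_{m-1}H_l[\psi]$ plus low terms, while every remaining term is disposed of by the reduction lemma and contributes no $H_l[\psi]$. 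Collecting, $\partial_r^{l+m+1}T^m\psi=C_m H_l[\psi]+(\text{low terms})$ with $C_m=-\tfrac12\frac{(l+m)(l+m+1)-l(l+1)}{M^2}\,C_{m-1}$, and the low terms are exactly of the stated form $\sum_{j=0}^m\sum_{i=0}^l\lambda_{ij}\partial_r^i T^j\psi$.

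The main obstacle is exactly verifying that $C\neq 0$, i.e.\ that the several reductions do not conspire to cancel the coefficient of $H_l[\psi]$. This is settled by the no-$H_l$ property of the reduction lemma: all of $H_l[\psi]$ originates from the single inductive term $\partial_r^{l+m}T^{m-1}\psi$, whose coefficient $\tfrac12\frac{(l+m)(l+m+1)-l(l+1)}{M^2}$ is nonzero because $l+m\neq l$, while $C_{m-1}\neq 0$ by induction; hence $C_m\neq 0$.
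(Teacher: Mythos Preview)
Your proof is correct and follows essentially the same inductive strategy as the paper: both use the commuted equation \eqref{kcom} at $k=l+m$ (applied, after a $T^{m-1}$, to $\psi$) to express $\partial_r^{l+m+1}T^m\psi$ in terms of $\partial_r^{l+m}T^{m-1}\psi$ and subcritical terms, then invoke induction and the nonvanishing of the coefficient \eqref{idconservation} at $k=l+m\neq l$. The paper disposes of the subcritical terms by appealing to the inductive hypothesis applied to $T\psi$ together with $H_l[T^i\psi]=0$; your separate ``reduction lemma'' (for $a\le l+b$, no $H_l[\psi]$ is ever produced) is an equivalent but more explicit bookkeeping device that makes the $C\neq 0$ verification cleaner, since it isolates $\partial_r^{l+m}T^{m-1}\psi$ as the \emph{unique} carrier of $H_l[\psi]$.
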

\begin{proof}
Consider \eqref{kcom} for $k=l+1$. Then
\begin{equation*}
\begin{split}
0=&\ 2\partial_{r}^{l+2}T\psi+\frac{2}{M}\partial_{r}^{l+1}T\psi+\partial_{r}^{l+1}\left(-\frac{l(l+1)}{r^{2}}\psi\right)+\\
&+\sum_{i=2}^{l+1}{
\binom{l+1}{i}\partial_{r}^{i}D\cdot \partial_{r}^{l-i+3}\psi}+\sum_{i=1}^{l+1}{\binom{l+1}{i}\partial_{r}^{i}\frac{2}{r}\cdot\partial_{r}^{l+1-i}T\psi}+\sum_{i=1}^{l+1}{\binom{l+1}{i}\partial_{r}^{i}R\cdot\partial_{r}^{l-i+2}\psi}.
\end{split}
\end{equation*}
Since $H_{l}[T\psi]=0$, the term $\partial_{r}^{l+1}T\psi$ can be expressed in terms of $T\psi, \partial_{r}T\psi,..., \partial_{r}^{l}T\psi$. Note also  $\partial_{r}^{l+1}\psi$, whose coefficient on the right hand side is non-zero, can be replaced by a linear expression of $H_{l}[\psi], \psi, \partial_{r}\psi,...,\partial_{r}^{l}\psi$. This proves the proposition for $m=1$. The general case can be proved inductively by using \eqref{kcom} for $k=l+m$ and $\psi$ replaced with $T^{m-1}\psi$. Indeed, we obtain that $\partial_{r}^{l+m+1}T^{m}\psi$ can be expressed on $\hh$ as a linear combination of the terms $\partial_{r}^{l+m}T^{m}\psi,\partial_{r}^{l+m}T^{m-1}\psi$ and $\partial_{r}^{k}T^{m}\psi$ and $\partial_{r}^{k}T^{m-1}\psi$ for $k\leq l+m-1$. For the terms $\partial_{r}^{l+m}T^{m}\psi,\partial_{r}^{k}T^{m}\psi,\partial_{r}^{k}T^{m-1}\psi$ with $k\leq l+m-1$ we use the inductive hypothesis and that $H_{l}[T^{i}\psi]=0$ for all $i\geq 1$. Note finally that the coefficient of $\partial_{r}^{l+m}T^{m-1}\psi$ is non-zero and, therefore, this term can be replaced by a linear combination of $H_{l}[\psi]$ and $\partial_{r}^{i}T^{j}\psi$ for $i\leq l, j\leq m$.

\end{proof}

We conclude this section with the following important application of  Theorem \ref{t3}.
\begin{proposition}
For generic initial data  there is no Cauchy hypersurface $\Sigma$ crossing $\mathcal{H}^{+}$ and a solution $\tilde{\psi}$ of the wave equation such that 
\begin{equation*}
T\tilde{\psi}=\psi
\end{equation*}
in the future of $\Sigma$.
\label{wald}
\end{proposition}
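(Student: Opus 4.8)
The plan is to argue by contradiction, converting the hypothetical existence of $\tilde\psi$ into the vanishing of a quantity that is generically non-zero on the horizon. Suppose that for some $\psi$ there exist a Cauchy hypersurface $\Sigma$ crossing $\hh$ and a regular solution $\tilde\psi$ of the wave equation with $T\tilde\psi=\psi$ in the future of $\Sigma$. Since $T=\partial_{v}$ commutes with $\lapp$, projecting this relation onto the eigenspace $E^{l}$ yields $T\tilde\psi_{l}=\psi_{l}$ for every $l$, so it suffices to work frequency by frequency.

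The key observation is that the conserved quantity $H_{l}$ of Theorem \ref{t3} annihilates every wave lying in the image of $T$. Indeed, in the coordinate system $(v,r)$ the coordinate fields $\partial_{v}$ and $\partial_{r}$ commute, so $T$ passes through each $\partial_{r}^{i}$ and $H_{l}[T\phi]=T\,H_{l}[\phi]$ for any regular solution $\phi$ supported on frequency $l$. But Theorem \ref{t3} states precisely that $H_{l}[\phi]$ is constant along the integral curves of $T$ on $\hh$, i.e. $T\,H_{l}[\phi]=0$ there; hence $H_{l}[T\phi]\equiv 0$ on $\hh$. Applying this with $\phi=\tilde\psi_{l}$ gives $H_{l}[\psi_{l}]=H_{l}[T\tilde\psi_{l}]=0$ on the portion of $\hh$ lying in the future of $\Sigma$.

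Next I would propagate this vanishing down to the initial slice. The null generators of $\hh$ are exactly the integral curves of $T$, and along each of them $H_{l}[\psi_{l}]$ is conserved by Theorem \ref{t3}. Since it vanishes on the future sub-arc contained in the future of $\Sigma$, it must vanish on the entire generator, and in particular on $\Sigma_{0}\cap\hh$. Specializing to $l=0$ (via Proposition \ref{ndl0}), this forces $H_{0}[\psi_{0}]=\partial_{r}\psi_{0}+\tfrac{1}{M}\psi_{0}=0$ everywhere on $\Sigma_{0}\cap\hh$.

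Finally, this contradicts genericity: as shown in Section \ref{sec:HigherOrderEstimates}, for generic initial data $H_{l}[\psi]$ is non-zero almost everywhere on $\Sigma_{0}\cap\hh$, so $H_{0}[\psi_{0}]$ cannot vanish identically there. Hence no such $\tilde\psi$ exists for generic $\psi$. The only genuinely subtle point---and the crux of the argument---is the identity $H_{l}[T\phi]\equiv 0$, which is exactly what makes the extreme case qualitatively different from Schwarzschild: any $\tilde\psi$ with $T\tilde\psi=\psi$ is forced to have $H_{l}[\psi]=0$, whereas the Kay--Wald construction in the non-extreme case is unobstructed precisely because no such conserved quantity exists. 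Care must also be taken to apply the conservation law only to solutions regular up to and including $\hh$, which is implicit in the class of $\tilde\psi$ under consideration.
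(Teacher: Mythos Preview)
Your proof is correct and takes essentially the same approach as the paper: both reduce to showing that if $\psi=T\tilde\psi$ then the conserved quantity $H_{0}[\psi_{0}]=\partial_{r}\psi_{0}+\tfrac{1}{M}\psi_{0}$ must vanish on $\hh$, contradicting genericity. The paper works directly with the $l=0$ component and invokes the wave equation on $\hh$ rather than the $H_{l}$ notation, whereas you phrase the key step as $H_{l}[T\phi]=T\,H_{l}[\phi]=0$ for general $l$ and make the propagation down to $\Sigma_{0}\cap\hh$ explicit, but the logic is identical.
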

\begin{proof}
Suppose that there exists a wave $\tilde{\psi}$ such that $T\tilde{\psi}=\psi$. Then we can decompose $\tilde{\psi}=\tilde{\psi}_{0}+\tilde{\psi}_{\geq 1}$
 and take
\begin{equation*}
\begin{split}
T\tilde{\psi}=&T\tilde{\psi}_{0}+T\tilde{\psi}_{\geq 1}=(T\tilde{\psi})_{0}+(T\tilde{\psi})_{\geq 1}
\end{split}
\end{equation*}
since $T$ is an endomorphism of the eigenspaces of $\lapp$. But $T\tilde{\psi}=\psi$ and thus $\psi_{0}=(T\tilde{\psi})_{0}=T\tilde{\psi}_{0}$. Since $\tilde{\psi}_{0}$ is a spherically symmetric wave we have $\partial_{r}T\tilde{\psi}_{0}+\frac{1}{M}T\tilde{\psi}_{0}=0$ which yields $\partial_{r}\psi_{0}+\frac{1}{M}\psi_{0}=0.$
However,the  quantity $\partial_{r}\psi_{0}+\frac{1}{M}\psi_{0}$ is completely determined by the initial data and thus  for generic initial data it is non-zero.
\end{proof}

This shows that we can not adapt the argument of Kay and Wald for the extreme case (see \cite{wald} and \cite{wa1}) even for proving the uniform boundedness of solutions to the wave equation. Indeed, using this argument one could prove that in Schwarzschild that  for any solution of the wave equation $\psi$ there is another solution $\tilde{\psi}$ such that $T\tilde{\psi}=\psi$ in the future of a Cauchy hypersurface $\Sigma$ of the region $J^{+}(\Sigma)\cap \mathcal{D}$, where $\mathcal{D}$ denotes the domain of outer communications of Schwarzschild.

\section{Sharp Higher Order $L^{2}$ Estimates}
\label{sharpl2}

We commute the wave equation with $\partial_{r}^{k}$ where $k\in\mathbb{N}$ and $k\geq 2$ aiming at controlling all higher derivatives of $\psi$ (on the spacelike hypersurfaces and the spacetime region up to and including the horizon $\mathcal{H}^{+}$). In view of Theorem \ref{t3} of Section \ref{sec:TheMainTheorems} the weakest condition on $\psi$ would be such that it is supported on the frequencies $l\geq k$.

\subsection{The Commutator $\left[\Box_{g}, \partial_{r}^{k} \right]$}
\label{sec:TheCommutatorLeftBoxGPsiPartialRKRight}

First note that if $R=D'+\frac{D}{2r}, D'=\frac{dD}{dr}$ then
\begin{equation*}
\begin{split}
\partial_{r}^{k}\left(\Box_{g}\psi\right)=&D\left(\partial_{r}^{k+2}\psi\right)+2T\partial_{r}^{k+1}\psi+\frac{2}{r}T\partial_{r}^{k}\psi+R\partial_{r}^{k+1}\psi+\partial_{r}^{k}\lapp\psi+\\
&+\sum_{i=1}^{k}{
\binom{k}{i}\partial_{r}^{i}D\cdot \partial_{r}^{k-i+2}\psi}+\sum_{i=1}^{k}{\binom{k}{i}\partial_{r}^{i}\frac{2}{r}\cdot T\partial_{r}^{k-i}\psi}+\sum_{i=1}^{k}{\binom{k}{i}\partial_{r}^{i}R\cdot\partial_{r}^{k-i+1}\psi}.
\end{split}
\end{equation*}
Let us compute the commutator $\left[\lapp,\partial_{r}^{k}\right]$. If we denote $\lapp_{1}$ the Laplacian on the unit sphere then
\begin{equation*}
\begin{split}
\partial_{r}^{k}\lapp\psi &=\partial_{r}^{k}\left(\frac{1}{r^{2}}\lapp_{1}\right)=\sum_{i=0}^{k}{\binom{k}{i}\partial_{r}^{i}\frac{1}{r^{2}}\cdot\partial_{r}^{k-i}\lapp_{1}}=\sum_{i=0}^{k}{\binom{k}{i}r^{2}\partial_{r}^{i}\frac{1}{r^{2}}\cdot\lapp\partial_{r}^{k-i}\psi}\\&=\lapp\partial_{r}^{k}\psi+\sum_{i=1}^{k}{\binom{k}{i}r^{2}\partial_{r}^{i}r^{-2}\cdot\lapp\partial_{r}^{k-i}\psi}.\\
\end{split}
\end{equation*}
Therefore,
\begin{equation}
\left[\lapp,\partial_{r}^{k}\right]\psi=-\sum_{i=1}^{k}{\binom{k}{i}r^{2}\partial_{r}^{i}r^{-2}\cdot\lapp\partial_{r}^{k-i}\psi}
\label{sh1com1ttat}
\end{equation}
and so 
\begin{equation}
\begin{split}
\left[\Box_{g},\partial_{r}^{k}\right]\psi=&-\sum_{i=1}^{k}{
\binom{k}{i}\partial_{r}^{i}D\cdot \partial_{r}^{k-i+2}\psi}-\sum_{i=1}^{k}{\binom{k}{i}\partial_{r}^{i}\frac{2}{r}\cdot T\partial_{r}^{k-i}\psi}\\
&-\sum_{i=1}^{k}{\binom{k}{i}\partial_{r}^{i}R\cdot\partial_{r}^{k-i+1}\psi}-\sum_{i=1}^{k}{\binom{k}{i}r^{2}\partial_{r}^{i}r^{-2}\cdot\lapp\partial_{r}^{k-i}\psi}.
\label{comhigherorder}
\end{split}
\end{equation}

\subsection{Induction on $l$}
\label{sec:TheMultiplierLKAndTheEnergyIdentity}

For any solution $\psi$ of the wave equation we  control the  higher order derivatives in the spacetime region away from  $\mathcal{H}^{+}$ and the photon sphere:
\begin{equation}
\left\|\partial_{\a}\psi\right\|^{2}_{L^{2}\left(\mathcal{R}\left(0,\tau\right)\cap\left\{M<r_{0}\leq r\leq r_{1}<2M\right\}\right)} \leq C\int_{\Sigma_{0}}{\left(\sum_{i=0}^{k-1}{J^{T}_{\mu}\left[T^{i}\psi\right]n^{\mu}_{\Sigma_{0}}}\right)},
\label{hospacaaH}
\end{equation}
where $C$ depends on $M$, $r_{0}$, $r_{1}$ and $\Sigma_{0}$ and $\left|a\right|=k$. This can be proved by commuting the wave equation  with $T^{i},i=1,...,k-1$ and using the degenerate $X$ estimate of Theorem 1 of \cite{aretakis1} and local elliptic estimates. We next derive estimates controlling the $k^{\text{th}}$-order derivatives of $\psi$ in a neighbourhood of $\mathcal{H}^{+}$. 
\begin{proof}[Proof of Theorem \ref{theorem3} of Section \ref{sec:TheMainTheorems}]
 For simplicity we write $\mathcal{R}$ instead of $\mathcal{R}(0,\tau)$ and  $\hh$ instead of $\hh\cap\mathcal{R}(0,\tau)$. We follow an inductive process.  We suppose that the theorem holds for all $m$ such that $0\leq m\leq l-1,$ i.e.~ that there exists a neighbourhood $\mathcal{A}$ of $\hh$ and a uniform positive constant $C$ that depends on $M$, $\Sigma_{0}$ and $k$ such that 
\begin{equation}
\begin{split}
&\int_{\Sigma_{\tau}\cap\mathcal{A}}{\left(T\partial_{r}^{k}\psi\right)^{2}+\left(\partial_{r}^{k+1}\psi\right)^{2}+\left|\nabb\partial_{r}^{k}\psi\right|^{2}}+\int_{\mathcal{H}^{+}}{\left(T\partial_{r}^{k}\psi\right)^{2}+\chi_{\left\{k=m\right\}}\left|\nabb\partial_{r}^{k}\psi\right|^{2}}\\
+&\int_{\mathcal{A}}{\left(T\partial_{r}^{k}\psi\right)^{2}+\sqrt{D}\left(\partial_{r}^{k+1}\psi\right)^{2}+\left|\nabb\partial_{r}^{k}\psi\right|^{2}}\\
\leq  &C\sum_{i=0}^{k}\int_{\Sigma_{0}}{J_{\mu}^{N}\left[T^{i}\psi\right]n^{\mu}_{\Sigma_{0}}}+C\sum_{i=1}^{k}\int_{\Sigma_{0}\cap\mathcal{A}}{J_{\mu}^{N}\left[\partial^{i}_{r}\psi\right]n^{\mu}_{\Sigma_{0}}}
\label{hok-1}
\end{split}
\end{equation}
for all $k\leq m$ and  solutions supported on frequencies greater or equal to $m$.  Note that for $m=0$ the above estimate is the result of Section 11 and for $m=1$ of Section 12 of \cite{aretakis1}.

In order to show the above estimate for $m=l$ we will construct an appropriate multiplier $L_{k}=L^{v}_{k}T+L^{r}_{k}\partial_{r}$ which is a future directed causal $\varphi_{\tau}^{T}$-invariant vector field such that  $L_{k}=\textbf{0}$ in $r\geq r_{1}$ and timelike in the region $M<r_{1}$. The precise form of $L_{k}^{v}, L_{k}^{r}$ will be chosen  only in the end.  Again, we will be interested in the region $\left\{M\leq r\leq r_{0}<r_{1}\right\}$, where $r_{0},r_{1}$ are to be determined later. The control for the higher order derivatives will be derived from the energy identity of the current $J_{\mu}^{L_{k}}\left[\partial_{r}^{k}\psi\right]$
\begin{equation}
\int_{\Sigma_{\tau}}{J^{L_{k}}_{\mu}\!\!\left[\partial_{r}^{k}\psi\right]n_{\Sigma_{\tau}}^{\mu}}+\int_{\mathcal{R}}{\nabla^{\mu}J^{L_{k}}_{\mu}\!\!\left[\partial_{r}^{k}\psi\right]}+\int_{\mathcal{H}^{+}}{J^{L_{k}}_{\mu}\!\!\left[\partial_{r}^{k}\psi\right]n_{\mathcal{H}^{+}}^{\mu}}=\int_{\Sigma_{0}}{J^{L_{k}}_{\mu}\!\!\left[\partial_{r}^{k}\psi\right]n_{\Sigma_{0}}^{\mu}}.
\label{eideho}
\end{equation}
By assumption, the right hand side is bounded. Also, since $L_{k}$ is timelike in a spatially compact region, we have 
\begin{equation}
J_{\mu}^{L_{k}}\!\!\left[\partial_{r}^{k}\psi\right]n^{\mu}_{\Sigma_{\tau}}\sim \left(T\partial_{r}^{k}\psi\right)^{2}+\left(\partial_{r}^{k+1}\psi\right)^{2}+\left|\nabb\partial_{r}^{k}\psi\right|^{2}
\label{hoLst}
\end{equation}
and on the horizon
\begin{equation}
J_{\mu}^{L_{k}}\!\!\left[\partial_{r}^{k}\psi\right]n^{\mu}_{\mathcal{H}^{+}}= L^{v}_{k}(M)\left(T\partial_{r}^{k}\psi\right)^{2}-\frac{L^{r}_{k}(M)}{2}\left|\nabb\partial_{r}^{k}\psi\right|^{2}.
\label{hoLH}
\end{equation}
Note that $L_{k}^{r}<0$ on the horizon. 
It suffices to estimate the bulk integral. We have
\begin{equation*}
\begin{split}
\nabla^{\mu}J_{\mu}^{L_{k}}\!\!\left[\partial_{r}^{k}\psi\right]&=K^{L_{k}}\!\left[\partial_{r}^{k}\psi\right]
+\mathcal{E}^{L_{k}}\!\left[\partial_{r}^{k}\psi\right].\\
\end{split}
\end{equation*}
Recall that 
\begin{equation*}
K^{L_{k}}\!\left[\partial_{r}^{k}\psi\right]=F_{vv}(T\partial^{k}\psi)^{2}+F_{rr}(\partial^{k+1}\psi)^{2}+F_{\scriptsize\nabb}\left|\nabb\partial^{k}\psi\right|^{2}+F_{vr}(T\partial^{k}\psi)(\partial^{k+1}\psi),
\end{equation*}
where the coefficients $F_{ab}$ are given by
\begin{equation*}
\begin{split}
F_{vv}=\left(\partial_{r}L_{k}^{v}\right),  F_{rr}=D\left[\frac{\left(\partial_{r}L_{k}^{r}\right)}{2}-\frac{L_{k}^{r}}{r}\right]-\frac{L_{k}^{r}D'}{2}, F_{\scriptsize\nabb}=-\frac{1}{2}\left(\partial_{r}L_{k}^{r}\right), F_{vr}=D\left(\partial_{r}L_{k}^{v}\right)-\frac{2L_{k}^{r}}{r}.
\label{list00}
\end{split}
\end{equation*}
Equation \eqref{comhigherorder} gives us
\begin{equation*}
\begin{split}
&\mathcal{E}^{L_{k}}\!\left[\partial_{r}^{k}\psi\right]=\left(\Box_{g}\partial_{r}^{k}\psi\right)L_{k}\left[\partial_{r}^{k}\psi\right]\\
=&\left[-\sum_{i=1}^{k}{
\binom{k}{i}\partial_{r}^{i}D\cdot \partial_{r}^{k-i+2}\psi}-\sum_{i=1}^{k}{\binom{k}{i}\partial_{r}^{i}\frac{2}{r}\cdot T\partial_{r}^{k-i}\psi}\right.\\
&\left.-\sum_{i=1}^{k}{\binom{k}{i}\partial_{r}^{i}R\cdot\partial_{r}^{k-i+1}\psi}-\sum_{i=1}^{k}{\binom{k}{i}r^{2}\partial_{r}^{i}r^{-2}\cdot\lapp\partial_{r}^{k-i}\psi}\right]L_{k}\!\left[\partial_{r}^{k}\psi\right]\\
=&-\sum_{i=1}^{k}{\binom{k}{i}L^{v}_{k}\cdot\partial_{r}^{i}D\left(\partial_{r}^{k-i+2}\psi\right)\left(T\partial_{r}^{k}\psi\right)}-\sum_{i=1}^{k}{\binom{k}{i}L^{v}_{k}\cdot\partial_{r}^{i}\frac{2}{r}\left(T\partial_{r}^{k-i}\psi\right)\left(T\partial_{r}^{k}\psi\right)}\\
&-\sum_{i=1}^{k}{\binom{k}{i}L^{v}_{k}\cdot\partial_{r}^{i}R\left(\partial_{r}^{k-i+1}\psi\right)\left(T\partial_{r}^{k}\psi\right)}-\sum_{i=1}^{k}{\binom{k}{i}L^{v}_{k}\cdot r^{2}\partial_{r}^{i}r^{-2}\left(\lapp\partial_{r}^{k-i}\psi\right)\left(T\partial_{r}^{k}\psi\right)}\\
&-\sum_{i=1}^{k}{\binom{k}{i}L^{r}_{k}\cdot\partial_{r}^{i}D\left(\partial_{r}^{k-i+2}\psi\right)\left(\partial_{r}^{k+1}\psi\right)}-\sum_{i=1}^{k}{\binom{k}{i}L^{r}_{k}\cdot\partial_{r}^{i}\frac{2}{r}\left(T\partial_{r}^{k-i}\psi\right)\left(\partial_{r}^{k+1}\psi\right)}\\
&-\sum_{i=1}^{k}{\binom{k}{i}L^{r}_{k}\cdot\partial_{r}^{i}R\left(\partial_{r}^{k-i+1}\psi\right)\left(\partial_{r}^{k+1}\psi\right)}-\sum_{i=1}^{k}{\binom{k}{i}L^{r}_{k}\cdot r^{2}\partial_{r}^{i}r^{-2}\left(\lapp\partial_{r}^{k-i}\psi\right)\left(\partial_{r}^{k+1}\psi\right)}.
\end{split}
\end{equation*}
In order to obtain the sharp result we  need the following lemma
\begin{lemma}
Suppose $\psi$ is a solution to the wave equation which is supported on the (fixed) frequency $l$. Then for all $0\leq i\leq l-1$ and any positive number $\epsilon$ we have
\begin{equation*}
\begin{split}
\left|\int_{\mathcal{H}^{+}}{(\partial_{r}^{i}\psi)(\partial_{r}^{l}\psi)}\right|\leq & C_{\epsilon}\sum_{i=0}^{l-1}\int_{\Sigma_{0}}{J_{\mu}^{N}[T^{i}\psi]n^{\mu}_{\Sigma_{0}}}+C_{\epsilon}\sum_{i=0}^{l}\int_{\Sigma_{0}}{J_{\mu}^{N}[\partial_{r}^{i}\psi]n^{\mu}_{\Sigma_{0}}}\\ &+\epsilon\int_{\Sigma_{\tau}\cap\mathcal{A}}{J_{\mu}^{L_{l}}[\partial_{r}^{l}\psi]n^{\mu}_{\Sigma_{\tau}}}+\epsilon\int_{\mathcal{H}^{+}}{(T\partial_{r}^{l}\psi)^{2}},
\end{split}
\end{equation*}
where $C_{\epsilon}$ depends on $M$, $l$ and $\Sigma_{0}$.
\label{lemk}
\end{lemma}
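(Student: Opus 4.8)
The plan is to trade the transversal derivatives sitting in the first factor for tangential derivatives along $\mathcal{H}^+$ and then integrate by parts in $v$. Since $\psi$ is supported on the single frequency $l$ and $i\le l-1$, Proposition \ref{corl2} applies on $\mathcal{H}^+$ and gives $\partial_r^i\psi=\sum_{n=0}^{i+1}\alpha_n^i\,T\partial_r^n\psi=T\Phi_i$, where $\Phi_i=\sum_{n=0}^{i+1}\alpha_n^i\partial_r^n\psi$ is a fixed ($M,l$-dependent) linear combination of transversal derivatives of order at most $i+1\le l$. The structural point I would exploit is precisely this: by the conservation-law hierarchy, every transversal derivative of order $\le l-1$ is a \emph{total} $\partial_v$-derivative along the null generators of $\mathcal{H}^+$, and $T=\partial_v$ is tangent to those generators.

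First I would integrate by parts in $v$,
\begin{equation*}
\int_{\mathcal{H}^+}(\partial_r^i\psi)(\partial_r^l\psi)=\int_{\mathcal{H}^+}(T\Phi_i)(\partial_r^l\psi)=\Big[\int_{\mathbb{S}^2_\tau}-\int_{\mathbb{S}^2_0}\Big]\Phi_i\,\partial_r^l\psi-\int_{\mathcal{H}^+}\Phi_i\,(T\partial_r^l\psi),
\end{equation*}
where $\mathbb{S}^2_\tau=\Sigma_\tau\cap\mathcal{H}^+$ and $\mathbb{S}^2_0=\Sigma_0\cap\mathcal{H}^+$. The boundary integral on $\mathbb{S}^2_0$ is a pure data quantity and goes into the $C_\epsilon$ terms. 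For the integral on $\mathbb{S}^2_\tau$ I would use a one-dimensional trace (Hardy) inequality across the short interval $M\le r\le r_0$ of $\mathcal{A}$, of the form $\int_{\mathbb{S}^2_\tau}(\partial_r^{j}\psi)^2\lesssim \epsilon\int_{\Sigma_\tau\cap\mathcal{A}}(\partial_r^{j+1}\psi)^2+C_\epsilon\int_{\Sigma_\tau\cap\mathcal{A}}(\partial_r^{j}\psi)^2$: with $j=l$ the first term is absorbed into $\epsilon\int_{\Sigma_\tau\cap\mathcal{A}}J^{L_l}_\mu[\partial_r^l\psi]n^\mu_{\Sigma_\tau}$, while the lower-order slice quantities are controlled by the inductive hypothesis \eqref{hok-1} at $k=l-1$ (whose left-hand side already contains $\int_{\Sigma_\tau\cap\mathcal{A}}(\partial_r^l\psi)^2$, bounded by data).

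It then remains to estimate the bulk horizon term $\int_{\mathcal{H}^+}\Phi_i(T\partial_r^l\psi)$. The summands of $\Phi_i$ of transversal order $\le l-2$ are each, by Proposition \ref{corl2}, total $\partial_v$-derivatives; a Cauchy--Schwarz step together with the bound $\int_{\mathcal{H}^+}(\partial_r^n\psi)^2\lesssim\sum_{q\le l-1}\int_{\mathcal{H}^+}(T\partial_r^q\psi)^2$ (valid for $n\le l-2$) reduces their contribution to the horizon fluxes of order $\le l-1$ supplied by the induction \eqref{hok-1}, plus boundary terms handled as above. The order-$l$ summand, present when $i=l-1$, contributes $\int_{\mathcal{H}^+}(\partial_r^l\psi)(T\partial_r^l\psi)=\tfrac12\int_{\mathcal{H}^+}T[(\partial_r^l\psi)^2]$, a total $v$-derivative and hence pure boundary data. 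The genuinely delicate summand is the order-$(l-1)$ one (present for $i\in\{l-2,l-1\}$): applying Proposition \ref{corl2} once more to $\partial_r^{l-1}\psi$ surfaces the uncontrolled horizon flux $\int_{\mathcal{H}^+}(T\partial_r^l\psi)^2$ of order $k=l$.

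This last quantity is the main obstacle, since it is precisely the order-$l$ flux that the induction does not yet provide. The whole reason for routing the estimate through the conservation-law identities and integrating by parts in $v$ is to peel it off as boundary data wherever possible and to retain it otherwise only through a weighted Cauchy--Schwarz step, so that it reappears in the form $\epsilon\int_{\mathcal{H}^+}(T\partial_r^l\psi)^2$. That $\epsilon$-term is harmless for the scheme because, back in the energy identity \eqref{eideho}, the horizon flux $\int_{\mathcal{H}^+}J^{L_l}_\mu[\partial_r^l\psi]n^\mu_{\mathcal{H}^+}=L^v_l(M)\int_{\mathcal{H}^+}(T\partial_r^l\psi)^2+\dots$ enters with a fixed positive coefficient $L^v_l(M)>0$ and absorbs it. I expect the bookkeeping of these top-order contributions—keeping every occurrence of $\int_{\mathcal{H}^+}(T\partial_r^l\psi)^2$ attached to an adjustable small weight rather than a fixed constant—to be the crux of the argument.
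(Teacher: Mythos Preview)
Your architecture matches the paper's exactly: use Proposition~\ref{corl2} to write $\partial_r^i\psi=T\Phi_i$ on $\mathcal{H}^+$, integrate by parts in $v$, handle the sphere boundary terms via a Hardy/trace inequality (the paper calls this the ``second Hardy inequality'' from Part~I), and treat the bulk term $\int_{\mathcal{H}^+}\Phi_i\,(T\partial_r^l\psi)$ by weighted Cauchy--Schwarz so that the top-order flux appears only as $\epsilon\int_{\mathcal{H}^+}(T\partial_r^l\psi)^2$.

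The gap is in how you close the bulk estimate. After Cauchy--Schwarz you owe a bound on $\int_{\mathcal{H}^+}(\partial_r^{j}\psi)^2$ for $j\le l-1$. You route this through Proposition~\ref{corl2} again, which is fine for $j\le l-2$ but, as you yourself notice, fails at $j=l-1$: the identity $\partial_r^{l-1}\psi=\sum_{q=0}^{l}\alpha_q^{l-1}T\partial_r^q\psi$ surfaces $(T\partial_r^l\psi)^2$ with the \emph{fixed} coefficient $(\alpha_l^{l-1})^2$, and no Cauchy--Schwarz step will turn that into an $\epsilon$-multiple. So as written the order-$(l-1)$ summand is not closed.

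The paper's remedy is a single line that makes your entire case split unnecessary: since $\psi$ is supported on the fixed frequency $l\ge 1$, the Poincar\'e inequality on the spheres gives
\[
\int_{\mathcal{H}^+}(\partial_r^{j}\psi)^2\ \le\ \frac{M^2}{l(l+1)}\int_{\mathcal{H}^+}\bigl|\nabb\partial_r^{j}\psi\bigr|^2
\]
for every $j$, and for $j\le l-1$ the right-hand side is already controlled by the inductive hypothesis~\eqref{hok-1}. So for every $j\le l-1$ one simply writes
\[
\int_{\mathcal{H}^+}(\partial_r^{j}\psi)(T\partial_r^{l}\psi)\ \le\ \frac{1}{\epsilon}\int_{\mathcal{H}^+}\bigl|\nabb\partial_r^{j}\psi\bigr|^2+\epsilon\int_{\mathcal{H}^+}(T\partial_r^{l}\psi)^2,
\]
and there is nothing delicate about $j=l-1$. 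Replace your repeated appeal to Proposition~\ref{corl2} in the bulk step by this Poincar\'e argument and your proof goes through.
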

\begin{proof}
Using Theorem \ref{t3} of Section \ref{sec:TheMainTheorems} it suf{}fices to estimate the integrals
\begin{equation*}
\int_{\mathcal{H}^{+}}{(T\partial_{r}^{j}\psi)(\partial_{r}^{l}\psi)},
\end{equation*}
where $0\leq j\leq l$. For $0\leq j\leq l-1$ we have
\begin{equation*}
\begin{split}
\int_{\mathcal{H}^{+}}{(T\partial_{r}^{j}\psi)(\partial_{r}^{l}\psi)}=&\int_{\mathcal{H}^{+}\cap\Sigma_{\tau}}{(\partial_{r}^{j}\psi)(\partial_{r}^{l}\psi)}-\int_{\mathcal{H}^{+}\cap\Sigma_{0}}{(\partial_{r}^{j}\psi)(\partial_{r}^{l}\psi)}-\int_{\mathcal{H}^{+}}{(\partial_{r}^{j}\psi)(T\partial_{r}^{l}\psi)}.
\end{split}
\end{equation*}
The two boundary integrals can be estimated using the second Hardy inequality of Section 6 of \cite{aretakis1}. Regarding the last integral on the right hand side, the Cauchy-Schwarz and Poincar\'e inequality imply
\begin{equation*}
\begin{split}
\int_{\mathcal{H}^{+}}{(\partial_{r}^{j}\psi)(T\partial_{r}^{l}\psi)}\leq \int_{\mathcal{H}^{+}}{\frac{1}{\epsilon}\left|\nabb\partial_{r}^{j}\psi\right|^{2}+\epsilon(T\partial_{r}^{l}\psi)^{2}}.
\end{split}
\end{equation*}
For $j=l$, we use that $(T\partial_{r}^{l}\psi)(\partial_{r}^{l}\psi)=\frac{1}{2}T(\partial_{r}^{l}\psi)^{2}$ and the second Hardy inequality.
\end{proof}
We are now in position to estimate the bulk integrals. We decompose $\psi=\psi_{l}+\psi_{\geq l+1}$. This is needed in view of the factor $\chi_{\left\{k=l\right\}}$ in the statement of the theorem. Note that \textbf{everytime we say that an integral can be estimated we mean that can be estimated by terms appearing on the right hand side of \eqref{hok-1} and $\epsilon$'s of the spacetime terms appearing on the left.}
\begin{center}
\large{\textbf{Estimate for} $\displaystyle\int_{\mathcal{R}}{H_{i}^{1}\left(\partial_{r}^{k-i+1}\psi\right)\left(\partial_{r}^{k+1}\psi\right)}, i\geq 0$}\\
\end{center}
For $i=0$ we have $H_{0}^{1}=-kL^{r}_{k}D'\geq 0$
and so this coefficient has the ``right'' sign in  \eqref{eideho}.

For $i\geq 1$  we use integration by parts 
\begin{equation*}
\begin{split}
&\int_{\mathcal{R}}{H_{i}^{1}\left(\partial_{r}^{k-i+1}\psi\right)\left(\partial_{r}^{k+1}\psi\right)}+\int_{\mathcal{R}}{\left(\partial_{r}H_{i}^{1}+\frac{2}{r}H_{i}^{1}\right)\left(\partial_{r}^{k-i+1}\psi\right)\left(\partial_{r}^{k}\psi\right)}+\int_{\mathcal{R}}{H_{i}^{1}\left(\partial_{r}^{k-i+2}\psi\right)\left(\partial_{r}^{k}\psi\right)}\\
=&\int_{\Sigma_{0}}{H_{i}^{1}\left(\partial_{r}^{k-i+1}\psi\right)\left(\partial_{r}^{k}\psi\right)\partial_{r}\cdot n_{\Sigma_{0}}}-\int_{\Sigma_{\tau}}{H_{i}^{1}\left(\partial_{r}^{k-i+1}\psi\right)\left(\partial_{r}^{k}\psi\right)\partial_{r}\cdot n_{\Sigma_{\tau}}}-\int_{\mathcal{H}^{+}}{H_{i}^{1}\left(\partial_{r}^{k-i+1}\psi\right)\left(\partial_{r}^{k}\psi\right)\partial_{r}\cdot n_{\mathcal{H}^{+}}}.
\end{split}
\end{equation*}

From Proposition 12.4.1. of \cite{aretakis1} applied (and generalised) for $\partial_{r}^{k}\psi$ and the inductive hypothesis all the  integrals over $\mathcal{A}$ and $\Sigma_{\tau}$ can be estimated. It only remains to estimate the integral over $\mathcal{H}^{+}$ when $i=1$. In this case, if we apply the Poincar\'e inequality for the $\psi_{l}$ component we notice that we need to absorb a good term  in the divergence identity for $L_{k}$ whenever $k=l$. Indeed,
\begin{equation}
\begin{split}
\frac{1}{2}\frac{M^{2}}{l\left(l+1\right)}H_{1}^{1}\left(M\right)&= -\frac{L^{r}_{k}\left(M\right)}{2}\Leftrightarrow  k= l
\end{split}
\label{crucial}
\end{equation}
This implies that we cannot use the Poincar\'e inequality on $\mathcal{H}^{+}$ in order to estimate $(\partial_{r}^{l}\psi_{l})^{2}$ anymore. That is why we proved Lemma \ref{lemk} which we will use for the following integrals. Clearly, for the component $\psi_{\geq l+1}$ we need a fraction of $\left|\nabb\partial_{r}^{l}\psi\right|^{2}$ along $\mathcal{H}^{+}$ and thus we can take small (epsilon) portions of this term later on. 

\begin{center}
\large{\textbf{Estimate for} $\displaystyle\int_{\mathcal{R}}{H_{i}^{2}\left(T\partial_{r}^{k-i+1}\psi\right)\left(\partial_{r}^{k+1}\psi\right)}, i \geq 1$}\\
\end{center}
For $i\geq 2$ we use Stokes' theorem
\begin{equation*}
\begin{split}
&\int_{\mathcal{R}}{H_{i}^{2}\left(T\partial_{r}^{k-i+1}\psi\right)\left(\partial_{r}^{k+1}\psi\right)}+\int_{\mathcal{R}}{\left(\partial_{r}H_{i}^{2}+\frac{2}{r}H_{i}^{2}\right)\left(T\partial_{r}^{k-i+1}\psi\right)\left(\partial_{r}^{k}\psi\right)}+\int_{\mathcal{R}}{H_{i}^{2}\left(T\partial_{r}^{k-i+2}\psi\right)\left(\partial_{r}^{k}\psi\right)}\\
=&\int_{\Sigma_{0}}{H_{i}^{2}\left(T\partial_{r}^{k-i+1}\psi\right)\left(\partial_{r}^{k}\psi\right)\partial_{r}\cdot n_{\Sigma_{0}}}-\int_{\Sigma_{\tau}}{H_{i}^{2}\left(T\partial_{r}^{k-i+1}\psi\right)\left(\partial_{r}^{k}\psi\right)\partial_{r}\cdot n_{\Sigma_{\tau}}}-\int_{\mathcal{H}^{+}}{H_{i}^{2}\left(T\partial_{r}^{k-i+1}\psi\right)\left(\partial_{r}^{k}\psi\right)\partial_{r}\cdot n_{\mathcal{H}^{+}}}.
\end{split}
\end{equation*}
Again from Proposition 12.4.1.~of \cite{aretakis1} and the inductive hypothesis we can estimate all the above integrals. Note that in order to estimate the integrals along $\mathcal{H}^{+}$ for the component $\psi_{l}$ we follow the same argument as in the proof of Lemma \ref{lemk}.
For $i=1$ we use the wave equation and thus
\begin{equation*}
\begin{split}
H_{1}^{2}\left(T\partial_{r}^{k}\psi\right)\!\left(\partial_{r}^{k+1}\psi\right)=&H_{1}^{2}\left[\partial_{r}^{k-1}\left(-D\partial_{r}^{2}\psi-\frac{2}{r}T\psi-R\partial_{r}\psi-\lapp\psi\right)\right]\!\left(\partial_{r}^{k+1}\psi\right)\\
=&-\sum_{j=0}^{k-1}{\binom{k-1}{j}H_{1}^{2}\partial_{r}^{j}D\left(\partial_{r}^{k-j+1}\psi\right)\left(\partial_{r}^{k+1}\psi\right)}\\
&-\sum_{j=0}^{k-1}{\binom{k-1}{j}H_{1}^{2}\partial_{r}^{j}\frac{2}{r}\left(T\partial_{r}^{k-j-1}\psi\right)\left(\partial_{r}^{k+1}\psi\right)}\\
&-\sum_{j=0}^{k-1}{\binom{k-1}{j}H_{1}^{2}\partial_{r}^{j}R\left(\partial_{r}^{k-j}\psi\right)\left(\partial_{r}^{k+1}\psi\right)}\\
&-H_{1}^{2}\left(\partial_{r}^{k-1}\lapp\psi\right)\left(\partial_{r}^{k+1}\psi\right).\\
\end{split}
\end{equation*}
The integrals of the first sum can be estimated for $j=0,1$ since their coef{}ficients vanish on the horizon and the case $j\geq 2$ was investigated above. 

The integrals of the second sum were also estimated before. 

For $j=0$ the integral of the third sum can be estimated since its coefficient vanishes on $\mathcal{H}^{+}$. If $j\geq 1$ then again these integrals have been estimated.
It remains to estimate the integral of the last term. Integration by parts gives 
\begin{equation*}
\begin{split}
&\int_{\mathcal{R}}{H_{1}^{2}\left(\partial_{r}^{k-1}\lapp\psi\right)\left(\partial_{r}^{k+1}\psi\right)}+\int_{\mathcal{R}}{\left(\partial_{r}H_{1}^{2}+\frac{2}{r}H_{1}^{2}\right)\left(\partial_{r}^{k-1}\lapp\psi\right)\left(\partial_{r}^{k}\psi\right)}+\int_{\mathcal{R}}{H_{1}^{2}\left(\partial_{r}^{k}\lapp\psi\right)\left(\partial_{r}^{k}\psi\right)}\\
=&\int_{\Sigma_{0}}{H_{1}^{2}\left(\partial_{r}^{k-1}\lapp\psi\right)\left(\partial_{r}^{k}\psi\right)\partial_{r}\cdot n_{\Sigma_{0}}}-\int_{\Sigma_{\tau}}{H_{1}^{2}\left(\partial_{r}^{k-1}\lapp\psi\right)\left(\partial_{r}^{k}\psi\right)\partial_{r}\cdot n_{\Sigma_{\tau}}}-\int_{\mathcal{H}^{+}}{H_{1}^{2}\left(\partial_{r}^{k-1}\lapp\psi\right)\left(\partial_{r}^{k}\psi\right)\partial_{r}\cdot n_{\mathcal{H}^{+}}}.
\end{split}
\end{equation*}
If we set $Q=\partial_{r}H_{1}^{2}+\frac{2}{r}H_{1}^{2}$ then 
\begin{equation*}
\begin{split}
&\int_{\mathcal{R}}{Q\left(\partial_{r}^{k-1}\lapp\psi\right)\left(\partial_{r}^{k}\psi\right)}=\int_{\mathcal{R}}{Q\left(\lapp\partial_{r}^{k-1}\psi-\left[\lapp,\partial_{r}^{k-1}\right]\psi\right)\left(\partial_{r}^{k}\psi\right)}\\
=&\int_{\mathcal{R}}{Q\left(\lapp\partial_{r}^{k-1}\psi+\sum_{n=1}^{k-1}{\binom{k-1}{n}r^{2}\partial_{r}^{n}r^{-2}\cdot \lapp\partial_{r}^{k-1-n}\psi}\right)\left(\partial_{r}^{k}\psi\right)}.\\
\end{split}
\end{equation*}
We estimate this integral by applying Stokes' theorem on $\mathbb{S}^{2}$ and Cauchy-Schwarz and using the inductive hypothesis. Regarding the last bulk integral we have
\begin{equation*}
\begin{split}
&\int_{\mathcal{R}}{H_{1}^{2}\left(\partial_{r}^{k}\lapp\psi\right)\left(\partial_{r}^{k}\psi\right)}=\int_{\mathcal{R}}{H_{1}^{2}\left(\lapp\partial_{r}^{k}\psi-\left[\lapp,\partial_{r}^{k}\right]\psi\right)\left(\partial_{r}^{k}\psi\right)}\\=&\int_{\mathcal{R}}{H_{1}^{2}\left(\lapp\partial_{r}^{k}\psi+\sum_{n=1}^{k}{\binom{k}{n}r^{2}\partial_{r}^{n}r^{-2}\lapp\partial_{r}^{k-n}\psi}\right)\left(\partial_{r}^{k}\psi\right)}.\\
\end{split}
\end{equation*}
Therefore, by applying Stokes' theorem on $\mathbb{S}^{2}$ we see that this integral can be estimated provided we have $
H_{1}^{2}<-\frac{1}{2}\partial_{r}L^{r}_{k}\left(M\right).$ Note that $H_{1}^{2}$ does not depend on $\partial_{r}L^{r}_{k}$.  Furthermore, the boundary integral over $\mathcal{H}^{+}$ can be estimated as follows: For the component $\psi_{l}$ we have $\lapp\psi_{l}=-\frac{l(l+1)}{r^{2}}\psi_{l}$  and thus $\partial_{r}^{k-1}\lapp\psi_{l}$ depends on $\partial_{r}^{i}\psi_{l}$ for $0\leq i\leq k-1$ and thus we  use Lemma \ref{lemk}. For the component $\psi_{\geq l+1}$ we commute $\partial_{r}^{k-1}$ and $\lapp$, we use Stokes' theorem on $\mathbb{S}^{2}$ and  Cauchy-Schwarz as above.

\begin{center}
\large{\textbf{Estimate for} $\displaystyle\int_{\mathcal{R}}{H_{i}^{3}\left(\partial_{r}^{k-i+2}\psi\right)\left(T\partial_{r}^{k}\psi\right)}, i\geq 1$}\\
\end{center}
The case $i=1$ was investigated above. 

For $i\geq 2$ we use Proposition 12.4.1. of \cite{aretakis1} and Cauchy-Schwarz.

\begin{center}
\large{\textbf{Estimate for} $\displaystyle\int_{\mathcal{R}}{H_{i}^{4}\left(T\partial_{r}^{k-i}\psi\right)\left(T\partial_{r}^{k}\psi\right)}, i\geq 1$}\\
\end{center}
We use Cauchy-Schwarz and the inductive hypothesis.

\begin{center}
\large{\textbf{Estimate for} $\displaystyle\int_{\mathcal{R}}{H_{i}^{5}\left(\lapp\partial_{r}^{i}\psi\right)\left(T\partial_{r}^{k}\psi\right)}, i\leq k-1$}\\
\end{center}
For $i=0$ we solve with respect to $\lapp\psi$ in the wave equation and then use Cauchy-Schwarz and the inductive hypothesis. We proceed by induction on $i$. Assuming that $(\lapp\partial_{r}^{j}\psi)(T\partial_{r}^{k}\psi)$ is estimated for all $0\leq j\leq i-1$ we will prove that the integral $(\lapp\partial_{r}^{i}\psi)(T\partial_{r}^{k}\psi)$  can also be estimated. We have
\begin{equation*}
\begin{split}
\lapp\partial_{r}^{i}\psi\left(T\partial_{r}^{k}\psi\right)=\partial_{r}^{i}\lapp\psi\left(T\partial_{r}^{k}\psi\right)+[\lapp,\partial_{r}^{i}]\psi\left(T\partial_{r}^{k}\psi\right)
\end{split}
\end{equation*}
The first term on the right hand side can be estimated by solving with respect to $\lapp\psi$ in the wave equation and using Cauchy-Schwarz. The second term can be estimated by our inductive hypothesis.

\begin{center}
\large{\textbf{Estimate for} $\displaystyle\int_{\mathcal{R}}{H_{i}^{6}\left(\lapp\partial_{r}^{k-i}\psi\right)\left(\partial_{r}^{k+1}\psi\right)}, i\geq 1$}\\
\end{center}
Integration by parts yields
\begin{equation*}
\begin{split}
&\int_{\mathcal{R}}{H_{i}^{6}\left(\lapp\partial_{r}^{k-i}\psi\right)\left(\partial_{r}^{k+1}\psi\right)}+\int_{\mathcal{R}}{\left(\partial_{r}H_{i}^{6}+\frac{2}{r}H_{i}^{6}\right)\left(\lapp\partial_{r}^{k-i}\psi\right)\left(\partial_{r}^{k}\psi\right)}+\int_{\mathcal{R}}{H_{i}^{6}\left(\partial_{r}\lapp\partial_{r}^{k-i}\psi\right)\left(\partial_{r}^{k}\psi\right)}\\
=&\int_{\Sigma_{0}}{H_{i}^{6}\left(\lapp\partial_{r}^{k-i}\psi\right)\left(\partial_{r}^{k}\psi\right)\partial_{r}\cdot n_{\Sigma_{0}}}-\int_{\Sigma_{\tau}}{H_{i}^{6}\left(\lapp\partial_{r}^{k-i}\psi\right)\left(\partial_{r}^{k}\psi\right)\partial_{r}\cdot n_{\Sigma_{\tau}}}\\&-\int_{\mathcal{H}^{+}}{H_{i}^{6}\left(\lapp\partial_{r}^{k-i}\psi\right)\left(\partial_{r}^{k}\psi\right)\partial_{r}\cdot n_{\mathcal{H}^{+}}}.\\
\end{split}
\end{equation*}
The second bulk integral and the boundary integrals over $\Sigma$ are estimated  using Stokes' theorem on $\mathbb{S}^{2}$ and the inductive hypothesis. The last bulk integral is estimated by commuting the spherical Laplacian with $\partial_{r}$ and applying again Stokes' theorem on $\mathbb{S}^{2}$ and Cauchy-Schwarz. Thus for $i\geq 2$ we use the inductive hypothesis and for $i=1$ it suf{}fices to have $H_{1}^{6}<-\frac{1}{2}\partial_{r}L^{r}_{k}\left(M\right).$
Regarding the integral over $\mathcal{H}^{+}$ we have the following: For the component $\psi_{l}$ we have $\lapp\psi_{l}=-\frac{l(l+1)}{r^{2}}\psi_{l}$  and thus $\lapp\partial_{r}^{k-i}\psi_{l}=-\frac{l(l+1)}{r^{2}}\partial_{r}^{k-i}\psi_{l}$  and so we apply Lemma \ref{lemk} to estimate it.  For the component $\psi_{\geq l+1}$ we use Stokes' theorem on $\mathbb{S}^{2}$, Cauchy-Schwarz and the inductive hypothesis.

The construction of $L_{k}$ is now clear for all $k\in\mathbb{N}$. It suf{}fices to take $L^{r}_{k}\left(M\right)<0$, $L^{v}_{k}\left(M\right)>0$ and $-\partial_{r}L_{k}^{r}$ and $\partial_{r}L_{k}^{v}\left(M\right)$ suf{}ficiently large.  

Note that  no commutation with the generators of the Lie algebra so(3) is required.

\end{proof}

\subsection{Improved $L^{2}$ Estimates for $T^{m}\psi,m\geq 1$}
\label{sec:Applications}

We next show how to apply the results of Section \ref{sec:ConservationLawsOnDegenerateEventHorizons} to obtain improved results for $T^{m}\psi,m\geq 1$. We show the following
\begin{proof}[Proof of Theorem \ref{theorem3}, statement (2)]

Recall that the only reason we had to restrict $k$ to be such that $k\leq l$ is for estimating the integral $\int_{\mathcal{R}}{H^{1}_{1}(\partial_{r}^{k}\psi)(\partial_{r}^{k+1}\psi)}$. Specifically, we saw that this integral can be controlled by other ``good'' terms in the energy indentity \eqref{eideho} only if $k\leq l$ since in this case we can estimate the integral
\begin{equation*} I_{k}[\psi]=\int_{\mathcal{H}^{+}}{(\partial_{r}^{k}\psi)^{2}}
\end{equation*}
 using the Poincar\'e inequality. Clearly, if we could show that the integral $I$ is bounded, then no use of the Poincar\'e inequality would be required and thus by working as in Section \ref{sec:TheMultiplierLKAndTheEnergyIdentity} we could derive  $L^{2}$ estimates for even higher derivatives of $\psi$. We first prove the following

\begin{lemma}
There exist constants $C$ which depend on $M,k,l,m$ such that for all solutions $\psi$ of the wave equation which are supported on the frequency $l$ we have
\begin{enumerate}
	\item 
	\begin{equation*}
	I_{k}[T\psi]\leq C\sum_{i=0}^{k-1}\int_{\Sigma_{0}}J_{\mu}^{N}[T^{i}\psi]n^{\mu}_{\Sigma_{0}}+C\sum_{i=0}^{k-1}\int_{\Sigma_{0}\cap\mathcal{A}}J_{\mu}^{N}[\partial_{r}^{i}\psi]n^{\mu}_{\Sigma_{0}},
	\end{equation*}for all  $1\leq k\leq l$.

	\item 	\begin{equation*}
	I_{l+1}[T\psi]\leq C\sum_{i=0}^{l-1}\int_{\Sigma_{0}}J_{\mu}^{N}[T^{i}\psi]n^{\mu}_{\Sigma_{0}}+C\sum_{i=0}^{l-1}\int_{\Sigma_{0}\cap\mathcal{A}}J_{\mu}^{N}[\partial_{r}^{i}\psi]n^{\mu}_{\Sigma_{0}}.
	\end{equation*}
	
		\item 	\begin{equation*}
	I_{l+m}[T^{m}\psi]\leq C\sum_{i=0}^{l+m-2}\int_{\Sigma_{0}}J_{\mu}^{N}[T^{i}\psi]n^{\mu}_{\Sigma_{0}}+C\sum_{i=0}^{l-1}\sum_{j=0}^{m-1}\int_{\Sigma_{0}\cap\mathcal{A}}J_{\mu}^{N}[\partial_{r}^{i}T^{j}\psi]n^{\mu}_{\Sigma_{0}}.
	\end{equation*}
	
\end{enumerate}

\label{lemmat}
\end{lemma}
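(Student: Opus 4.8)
The plan is to regard all three statements as bounds on the horizon fluxes $I_{k}[T^{m}\psi]=\int_{\mathcal{H}^{+}}(\partial_{r}^{k}T^{m}\psi)^{2}$ and to drive everything by two engines: the already established Theorem \ref{theorem3}(1), which controls $\int_{\mathcal{H}^{+}}(T\partial_{r}^{k}\psi)^{2}$ for every solution supported on frequencies $\geq l$ as long as $k\leq l$; and the conservation law of Theorem \ref{t3} together with the commuted wave equation \eqref{kcom} restricted to $\mathcal{H}^{+}$, which lets me climb past the order $k=l$. The crucial algebraic input, used repeatedly, is that $H_{l}[T^{m}\psi]=0$ for every $m\geq 1$, so that each application of $T$ removes the low-frequency obstruction that forces the use of the Poincar\'e inequality in the proof of Theorem \ref{theorem3}(1).

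For statement (1) I would note that $I_{k}[T\psi]=\int_{\mathcal{H}^{+}}(T\partial_{r}^{k}\psi)^{2}$ is exactly the horizon boundary term on the left-hand side of Theorem \ref{theorem3}(1); since $1\leq k\leq l$, that theorem already bounds it by the initial data, and the slightly sharper index range is recovered by the $T$-commutation together with integration by parts in $v$ along the generators of $\mathcal{H}^{+}$ and Proposition \ref{corl2}, which trades the top-order $\partial_{r}$-term for $T$-derivatives. For statement (2) I would use $H_{l}[T\psi]=0$, which on $\mathcal{H}^{+}$ reads $\partial_{r}^{l+1}T\psi=-\sum_{i=0}^{l}\b_{i}T\partial_{r}^{i}\psi$; squaring and integrating gives $I_{l+1}[T\psi]\leq C\sum_{i=0}^{l}I_{i}[T\psi]$, and each term on the right is controlled by statement (1) (and, for $i=0$, by the non-degenerate energy bound). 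Summing the statement-(1) bounds produces exactly the claimed right-hand side with derivatives of order at most $l-1$.

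Statement (3) is the heart of the lemma and I would prove it by induction on $m$, the base case $m=1$ being statement (2). For the inductive step I strengthen the hypothesis to control the whole hierarchy $I_{k}[T^{m-1}\psi]$ for all $k\leq l+m-1$, and run an inner induction on the $\partial_{r}$-order $k$ from $l$ to $l+m$. The engine is \eqref{kcom} applied to the solution $T^{m-1}\psi$ at level $k$ and restricted to $\mathcal{H}^{+}$: there the coefficients of the two highest $\partial_{r}$-orders vanish by extremality ($D(M)=D'(M)=0$), leaving $2\partial_{r}^{k+1}T^{m}\psi$ expressed through the single top non-$T$ term $\frac{k(k+1)-l(l+1)}{M^{2}}\partial_{r}^{k}T^{m-1}\psi$, the term $\frac{2}{M}\partial_{r}^{k}T^{m}\psi$, and strictly lower-order contributions. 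When $k=l$ the displayed coefficient vanishes — this is precisely $H_{l}[T^{m}\psi]=0$ — and $\partial_{r}^{l+1}T^{m}\psi$ is a combination of $T$-derivatives of order $\leq l$; when $l<k\leq l+m-1$ the coefficient is nonzero but the flux of $\partial_{r}^{k}T^{m-1}\psi$ still lies within the outer inductive hypothesis. In either case the flux of $\partial_{r}^{k}T^{m}\psi$ is supplied by the previous inner step (with base $k\leq l$ furnished by Theorem \ref{theorem3}(1) applied to $T^{m-1}\psi$), and the remaining $\partial_{r}$- and angular derivatives, the latter produced by $[\lapp,\partial_{r}^{k}]$, are reduced exactly as in Lemma \ref{lemk}, using Proposition \ref{corl2}, Stokes' theorem on $\mathbb{S}^{2}$ and the second Hardy inequality. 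Climbing from $k=l$ to $k=l+m$ then yields $I_{l+m}[T^{m}\psi]$ with the stated initial-data norm.

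The main obstacle is the bookkeeping rather than any single estimate: I must verify that no flux of order higher than $l+m-1$ of $T^{m-1}\psi$, and no flux $I_{j}[\psi]$ of the undifferentiated wave (which would require the forbidden Poincar\'e inequality), ever appears on the right. This is precisely where $H_{l}[T^{m}\psi]=0$ is indispensable: it is the reason the base case $m=1$ must be treated through the conservation law of statement (2) rather than through the recursion, since only after one $T$-derivative are all the descending terms themselves $T$-derivatives and hence controllable by statement (1). Matching the exact index ranges $\sum_{i=0}^{l+m-2}$ and $\sum_{i=0}^{l-1}\sum_{j=0}^{m-1}$ in the final norm is then a matter of carefully collecting the contributions accumulated along the two nested inductions.
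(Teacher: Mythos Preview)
Your proposal is correct and follows the same architecture as the paper: Theorem \ref{theorem3}(1) to control horizon fluxes up to order $l$, the vanishing $H_{l}[T\psi]=0$ to pass the barrier $k=l+1$, and the commuted wave equation \eqref{kcom} restricted to $\mathcal{H}^{+}$ to climb by induction on $m$.

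The only discrepancy worth flagging is in Part (1). Your direct observation that $I_{k}[T\psi]=\int_{\mathcal{H}^{+}}(T\partial_{r}^{k}\psi)^{2}$ is literally the horizon term in Theorem \ref{theorem3}(1) is valid but yields the right-hand side with summations up to $k$, not $k-1$; your remark that the sharper range is ``recovered by integration by parts in $v$ and Proposition \ref{corl2}'' is not convincing as written, since Proposition \ref{corl2} expresses $\partial_{r}^{j}\psi$ through $T\partial_{r}^{i}\psi$ with $i\leq j+1$ and so does not lower the top order. The paper instead runs the recursion you use in Part (3) already at $m=1$: from $\partial_{r}^{k-1}(\Box_{g}\psi)=0$ on $\mathcal{H}^{+}$ one gets $I_{k}[T\psi]\leq C\sum_{i=0}^{k-1}\int_{\mathcal{H}^{+}}(\partial_{r}^{i}\psi)^{2}+C\sum_{i=0}^{k-1}I_{i}[T\psi]$, and since $i\leq k-1\leq l-1$ each $\int_{\mathcal{H}^{+}}(\partial_{r}^{i}\psi)^{2}$ is controlled through the $\left|\nabb\partial_{r}^{i}\psi\right|^{2}$ term of Theorem \ref{theorem3}(1) and Poincar\'e. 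This is what delivers the stated $k-1$ range. The difference is purely bookkeeping and does not affect Parts (2) and (3), where your argument coincides with the paper's.
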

\begin{proof}
First note that $I_{0}[T\psi]\leq \int_{\Sigma_{0}}J_{\mu}^{N}[\psi]n^{\mu}_{\Sigma_{0}}$. For $k=1$ (and so $l\geq 1$) we use $\Box_{g}\psi=0$ and that the zeroth order term can be bounded using the Poincar\'e inequality to obtain $I_{1}[T\psi]\leq C\int_{\Sigma_{0}}J_{\mu}^{N}[\psi]n^{\mu}_{\Sigma_{0}}$. For $k=2$ (and so $l\geq 2$) using $\partial_{r}(\Box_{g}\psi)=0$ we have that $\partial_{r}^{2}T\psi$ can be expressed as a linear combination of $\partial_{r}T\psi, \partial_{r}\psi, T\psi,\psi$ on $\hh$. Note that (the integral of) $\partial_{r}T\psi,  T\psi,\psi$ can be estimated using previous results. The term $\partial_{r}\psi$ can be estimated using the first statement of Theorem \ref{theorem3}.   Inductively, using $\partial_{r}^{k-1}(\Box_{g}\psi)=0$ (see \eqref{kcom}) we obtain
\begin{equation*}
I_{k}[T\psi]\leq C\sum_{i=0}^{k-1}\int_{\hh}(\partial_{r}^{i}\psi)^{2}+C\sum_{i=0}^{k-1}I_{i}[T\psi].
	\end{equation*}
The first part follows from the inductive hypothesis and the first statement of Theorem \ref{theorem3} (as long as $k\leq l$).

For the critical case of the second part note that for  $l=0$ we have   $\partial_{r}T\psi=-\frac{1}{M}T\psi$ on $\hh$ and so  $I_{1}[T\psi]\leq \int_{\Sigma_{0}}J_{\mu}^{N}[\psi]n^{\mu}_{\Sigma_{0}}$. On the other hand, for $l\geq 1$, since  $H_{l}[T\psi]=0$, we have that $\partial_{r}^{l+1}T\psi$ can be written as a linear combination of $\partial_{r}^{i}T\psi,i=0,1,...,l$ on $\hh$. Therefore, the first part and the Cauchy-Schwarz  inequality finish the proof of the second part.

The third part can be proved by induction on $m$. For $m=1$ the result has been proved. By considering now $\eqref{kcom}$ for $k=l+m-1$ and $\psi$ replaced with $T^{m-1}\psi$ we take that $\partial_{r}^{l+m}T^{m}\psi$ can be written (on $\hh$) as a linear combination of $\partial_{r}^{l+m-1}T^{m}\psi$, $\partial_{r}^{l+m-1}T^{m-1}\psi$ and other lower order terms. The integral of $\partial_{r}^{l+m-1}T^{m-1}\psi$ can be estimated inductively. Finally, for the term $\partial_{r}^{l+m-1}T^{m}\psi$ we observe that $I_{l+m-1}[T^{m}\psi]=I_{l+m-1}[T^{m-1}(T\psi)]$, which can also be estimated inductively. Note that in view of Proposition \ref{tmpsi}  no such estimate holds for $I_{k}[T^{m}\psi]$ for $k\geq l+m+1$.
\end{proof}
We can now show that $I_{k}[T^{m}\psi]$ is bounded whenever $m\geq 1$ and $k\leq l+m$. Indeed, if $k\leq l$ then $I_{k}[T^{m}\psi]=I_{k}[T(T^{m-1}\psi)]$ and the result follows from the first part of Lemma \ref{lemmat}. If $l\leq k\leq l+m$ then 
\begin{equation*}
I_{k}[T^{m}\psi]=I_{l+m-i}[T^{m-i}(T^{i}\psi)]=I_{l+m'}[T^{m'}(T^{i}\psi)], 
	\end{equation*}
which can also be estimated using Lemma \ref{lemmat} since $m'=m-i\geq 1$.

Finally, we need to show that $I_{k}[T^{m}\psi],k\leq l+m$, is bounded for all $\psi_{\geq l}$ which are supported on frequencies greater or equal to $l$. Indeed, if $k\leq l$ then we simply use the first statement of Theorem $\ref{theorem3}$. If $k>l$ then 
\begin{equation*}
I_{k}[T^{m}\psi_{\geq l}]=I_{k}[T^{m}\psi_{l}]+I_{k}[T^{m}\psi_{l+1}]+\cdots+I_{k}[T^{m}\psi_{k}]+I_{k}[T^{m}\psi_{\geq k+1}]
\end{equation*}
The last term on the right hand side can be bounded using Theorem \ref{theorem3} again. The remaining terms can be estimated using the above results. Therefore, no use of the flux of $L_{k}$ along $\mathcal{H}^{+}$ is needed whenever $m\geq 1$.

\end{proof}

\section{Energy Decay}
\label{sec:EnergyDecay}

In this section we derive the decay for the non-degenerate energy flux of $N$ through an appropriate foliation. The first step is to obtain non degenerate estimates on regions which connect $\mathcal{H}^{+}$ and $\mathcal{I}^{+}$ (without containing $i^{0}$; this has to do with the fact that energy is radiated away through null infinity). Such estimates were first derived in the recent \cite{new} along with a new robust method for obtaining decay results. Here we establish several estimates which will allow us to adapt the methods of \cite{new} in the extreme case. These new estimates are closely related with the trapping properties of $\mathcal{H}^{+}$.

Recall the  $\tilde{\Sigma}_{\tau}$ foliation defined in Section \ref{sec:TheMainTheorems}. For arbitrary $\tau_{1}<\tau_{2}$ we define 
\begin{equation*}
\begin{split}
&\tilde{\mathcal{R}}_{\tau_{1}}^{\tau_{2}}=\cup_{\tau\in\left[\tau_{1},\tau_{2}\right]}{\tilde{\Sigma}_{\tau}},\  \tilde{\mathcal{D}}_{\tau_{1}}^{\tau_{2}}=\tilde{\mathcal{R}}_{\tau_{1}}^{\tau_{2}}\cap\left\{r\geq R_{0}\right\}, \  \tilde{N}_{\tau}=\tilde{\Sigma}_{\tau}\cap\left\{r\geq R_{0}\right\},\  \Delta_{\tau_{1}}^{\tau_{2}}=\tilde{\mathcal{R}}_{\tau_{1}}^{\tau_{2}}\cap\left\{r= R_{0}\right\}.
\end{split}
\end{equation*}
 \begin{figure}[H]
	\centering
		\includegraphics[scale=0.10]{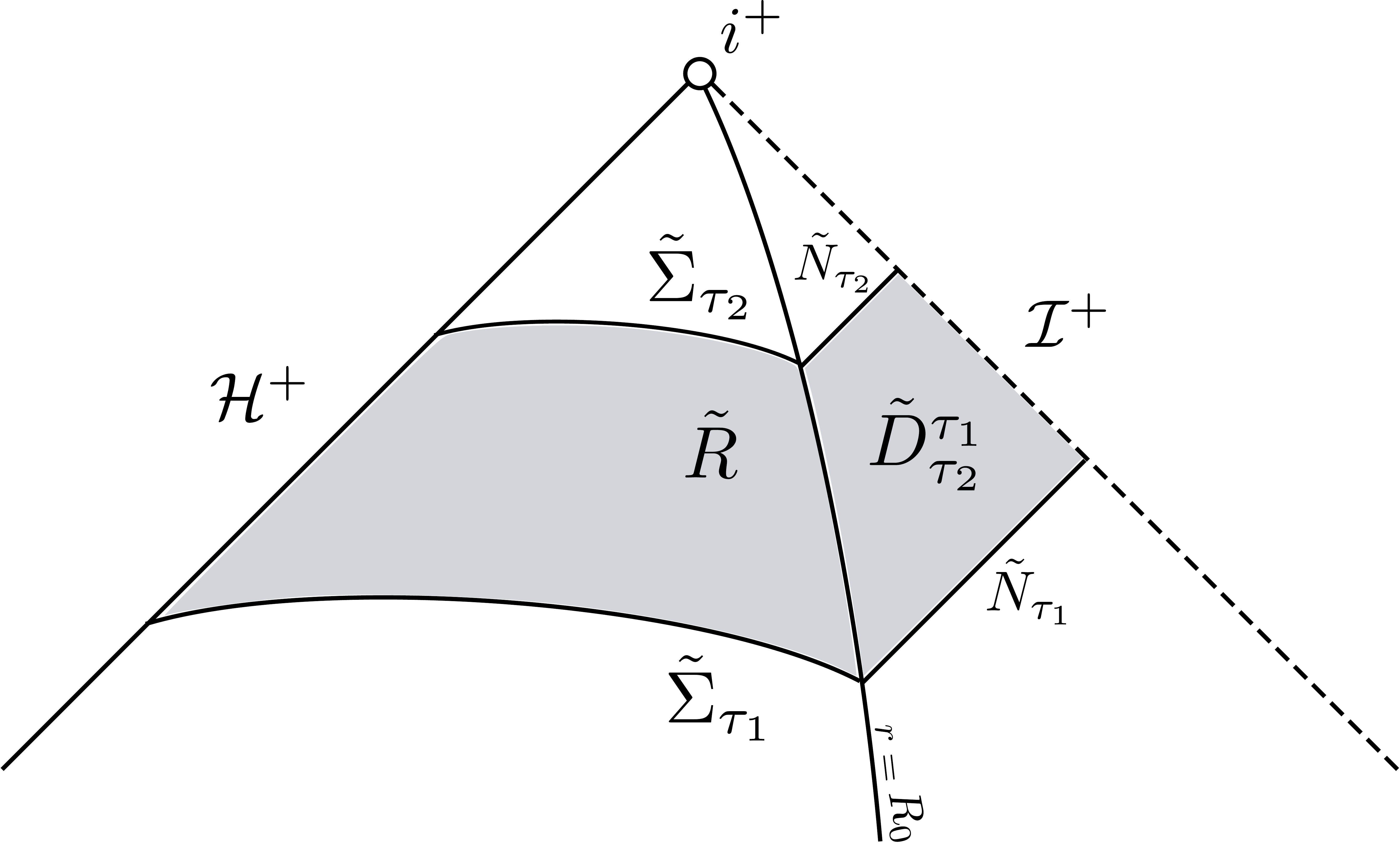}
	\label{fig:pic1ern}
\end{figure}

\subsection{$r$-Weighted Energy Estimates  in a Neighbourhood of $\mathcal{I}^{+}$}
\label{sec:RWeightedEnergyEstimates}
The main idea is to derive a non-degenerate $L^{2}$ spacetime estimate and then derive similar estimates for its boundary terms.
From now on we work with the null $(u,v)$ coordinates  unless otherwise stated.  
\begin{proposition}
Suppose $p<3$. There exists a constant $C$ that depends  on $M$ and $\tilde{\Sigma}_{0}$ such that if $\psi$ satisfies the wave equation and $\phi=r\psi$ then
\begin{equation}
\begin{split}
&\int_{\tilde{N}_{\tau_{2}}}{r^{p}\frac{\left(\partial_{v}\phi\right)^{2}}{r^{2}}}+\int_{\tilde{\mathcal{D}}_{\tau_{1}}^{\tau_{2}}}{r^{p-1}\left(p+2\right)\frac{\left(\partial_{v}\phi\right)^{2}}{r^{2}}}+\int_{\tilde{\mathcal{D}}_{\tau_{1}}^{\tau_{2}}}{\frac{r^{p-1}}{4}\left(-pD-rD'\right)\left|\nabb\psi\right|^{2}}\\
&\ \ \ \ \ \ \ \ \ \leq C\int_{\tilde{\Sigma}_{\tau_{1}}}{J_{\mu}^{T}\left[\psi\right]n^{\mu}_{\tilde{\Sigma}_{\tau_{1}}}}+\int_{\tilde{N}_{\tau_{1}}}{r^{p}\frac{\left(\partial_{v}\phi\right)^{2}}{r^{2}}}
\end{split}
\label{rwest}
\end{equation}
\label{rweigestprop}
\end{proposition}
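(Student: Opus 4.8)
The plan is to run the $r^{p}$-weighted energy method of \cite{new} in the double null gauge $(u,v)$, the only genuinely new point being the bookkeeping of the potential and angular coefficients generated by the extreme metric. The starting point is the equation satisfied by the radiation field $\phi=r\psi$. Writing $g=-D\,du\,dv+r^{2}g_{\mathbb{S}^{2}}$ with $\partial_{u}r=-\tfrac{D}{2}$ and $\partial_{v}r=\tfrac{D}{2}$, a direct computation from $\Box_{g}\psi=0$ gives
\begin{equation*}
\partial_{u}\partial_{v}\phi=\frac{D}{4r^{2}}\lapp\phi-\frac{DD'}{4r}\phi .
\end{equation*}
This identity is what everything rests on: the first order terms in $\psi$ have been absorbed into $\phi$, leaving only an angular term and a zeroth order potential.

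Next I would contract this equation with the multiplier $r^{p}\,\partial_{v}\phi$ and integrate the resulting identity over $\tilde{\mathcal{D}}_{\tau_{1}}^{\tau_{2}}$ (equivalently, use the energy identity of the current associated with the $r^{p}$-weighted outgoing null field $r^{p}\partial_{v}$). I expect three groups of terms. First, the principal term $r^{p}(\partial_{v}\phi)\,\partial_{u}\partial_{v}\phi=\tfrac12 r^{p}\partial_{u}(\partial_{v}\phi)^{2}$: integrating by parts in $u$, using $\partial_{u}r=-D/2$ and the $r^{2}$ factor in the spacetime volume element, produces the boundary fluxes $\int_{\tilde{N}_{\tau_{2}}}r^{p}(\partial_{v}\phi)^{2}/r^{2}$ and $-\int_{\tilde{N}_{\tau_{1}}}r^{p}(\partial_{v}\phi)^{2}/r^{2}$, together with the positive bulk term whose coefficient $(p+2)$ is precisely the radial divergence $r^{-2}\partial_{r}(r^{2}\!\cdot r^{p})$ of the weight displayed in \eqref{rwest}. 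Second, the angular term, after integration by parts on $\mathbb{S}^{2}$ and then in $v$, produces the bulk term $\tfrac{r^{p-1}}{4}(-pD-rD')|\nabb\psi|^{2}$ (its coefficient being $-\partial_{r}(r^{p}D)$) together with an angular flux on $\mathcal{I}^{+}$. Third, the potential term $-\tfrac{DD'}{4r}\phi$ contracted with $r^{p}\partial_{v}\phi$ is a total $v$-derivative of $\phi^{2}$ up to a lower order bulk term, both of which are controlled using the Hardy inequalities of Section 6 of \cite{aretakis1}.

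It then remains to turn this exact identity into the stated inequality. By the divergence theorem the sum of the future flux $\int_{\tilde{N}_{\tau_{2}}}$, the two bulk terms, and the flux through $\mathcal{I}^{+}$ equals the past flux $\int_{\tilde{N}_{\tau_{1}}}$ plus the flux through the timelike inner boundary $\Delta_{\tau_{1}}^{\tau_{2}}=\{r=R_{0}\}$. The flux at null infinity has a definite (outgoing) sign and can simply be discarded. Crucially, the \emph{angular bulk term carries the same (negative) coefficient on the left of \eqref{rwest} as it does in the identity}, so it rides along harmlessly: no coercivity is claimed for it and no absorption is needed. The $\tilde{N}_{\tau_{1}}$ flux is exactly the second term on the right of \eqref{rwest}. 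Finally, on $\Delta_{\tau_{1}}^{\tau_{2}}$ the weight $r^{p}$ is comparable to the constant $R_{0}^{p}$, so the flux there is bounded by the unweighted $T$-flux through $\{r=R_{0}\}$; since $T$ is Killing, $J^{T}$ is divergence free, and the divergence theorem in $\{r\le R_{0}\}$ together with the energy boundedness of \cite{aretakis1} bounds it by $\int_{\tilde{\Sigma}_{\tau_{1}}}J_{\mu}^{T}[\psi]n^{\mu}$, which is the first term on the right of \eqref{rwest}. The hypothesis $p<3$ enters to guarantee finiteness of the $r^{p}$-weighted fluxes at $\mathcal{I}^{+}$ (it is the borderline power, consistent with the decay assumption \eqref{condition}) and the good sign of the radiated term.

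I expect the main obstacle to be the sign and coefficient bookkeeping for the two bulk terms in the full extreme Reissner--Nordstr\"om metric over the whole range $M\le R_{0}\le r$ --- in particular checking that the weight derivatives really produce the exact coefficients $(p+2)$ and $-pD-rD'$ appearing in \eqref{rwest}, and that the restriction $p<3$ is what keeps the boundary fluxes finite. The clean matching of the $\{r=R_{0}\}$ boundary flux to the conserved $T$-energy is the other place requiring care. The remaining manipulations (the spherical integration by parts, the Hardy estimates for the $\phi^{2}$ terms, and the Cauchy--Schwarz absorptions) are routine given Part I.
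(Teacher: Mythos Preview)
Your overall strategy---multiply the reduced equation for $\phi=r\psi$ by $r^{p}\partial_{v}\phi$ and integrate over $\tilde{\mathcal{D}}_{\tau_{1}}^{\tau_{2}}$---is exactly what the paper does, and your identification of the principal and angular bulk coefficients is correct. There are, however, two genuine problems.

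\textbf{The inner boundary $\Delta_{\tau_{1}}^{\tau_{2}}$.} Your claim that the $\Delta$-flux is bounded by the $T$-flux on $\tilde{\Sigma}_{\tau_{1}}$ via the divergence theorem in $\{r\le R_{0}\}$ does not work. The hypersurface $\Delta=\{r=R_{0}\}$ is \emph{timelike}, so $J^{T}_{\mu}n^{\mu}_{\Delta}$ has no sign, and the divergence identity in $\{r\le R_{0}\}$ merely expresses $\int_{\Delta}J^{T}_{\mu}n^{\mu}_{\Delta}$ as a difference of nonnegative fluxes---it does not bound its absolute value. Nor can the integrated local energy decay of Part~I help directly: that is a spacetime (volume) bound, not a trace estimate on a fixed $r$-level set. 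The paper handles this differently: it multiplies not by $r^{p}\partial_{v}\phi$ but by $r^{p}\partial_{v}(\zeta\phi)$, where $\zeta$ is a cut-off vanishing in $\{R_{0}\le r\le R_{0}+\tfrac12\}$ and equal to $1$ on $\{r\ge R_{0}+1\}$. All $\Delta$ boundary terms then vanish identically, and the price is a collection of \emph{spacetime} error terms supported in the compact-$r$ shell $\{R_{0}\le r\le R_{0}+1\}$; these are quadratic in the $1$-jet of $\psi$ and are bounded by $\int_{\tilde{\Sigma}_{\tau_{1}}}J^{T}_{\mu}[\psi]n^{\mu}$ via the degenerate Morawetz estimate of Part~I.

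\textbf{The role of $p<3$.} The hypothesis is not about finiteness of fluxes at $\mathcal{I}^{+}$. In the paper it governs the sign of the zeroth-order bulk term produced by the potential $\frac{D'}{r}\phi$. After integrating $D'r^{q-1}(\zeta\phi)\partial_{v}(\zeta\phi)$ by parts in $v$ (with $q=p-2$), one obtains a bulk term with coefficient proportional to $\sqrt{D}(1-q)-\tfrac{3M}{r}=\sqrt{D}(3-p)-\tfrac{3M}{r}$, together with boundary terms on $\Delta$ and $\mathcal{I}^{+}$ that also carry good signs. This bulk coefficient is positive for large $r$ precisely when $p<3$, so the potential contribution can be dropped rather than estimated. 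Your proposed route through the Hardy inequalities of Part~I might also close, but you would have to verify that the Hardy weight matches the $r^{p-3}$ decay needed here; the paper does not use Hardy at this step.
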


\begin{proof}
We first consider the cut-off function $\left.\zeta:\left[R_{0},\!\!\right.\left.+\infty\right)\right.\rightarrow [0,1]$ such that 
\begin{equation*}
\begin{split}
&\zeta(r)=0\text{   for all } r\in\left[R_{0},R_{0}+1/2\right],\ \zeta(r)=1\text{   for all } r\in\left[\left.\!R_{0}+1,+\infty \right)\right. .\\
\end{split}
\end{equation*}
Let $q=p-2$. We consider the vector field
\begin{equation*}
V=r^{q}\partial_{v}
\end{equation*}
which we apply as multiplier acting on the function $\zeta\phi$ in the region $\tilde{\mathcal{D}}_{\tau_{1}}^{\tau_{2}}$. Then
\begin{equation*}
\int_{\tilde{\mathcal{D}}_{\tau_{1}}^{\tau_{2}}}{K^{V}[\zeta\phi]+\mathcal{E}^{V}[\zeta\phi]}=\int_{\partial\tilde{\mathcal{D}}_{\tau_{1}}^{\tau_{2}}}{J_{\mu}^{V}[\zeta\phi]n^{\mu}}.
\end{equation*}
Note that for $r\geq R_{0}+1$ we have $K^{V}[\zeta\phi]=K^{V}[\phi]$ and $\mathcal{E}^{V}[\zeta\phi]=\mathcal{E}^{V}[\phi]$. Then,
\begin{equation*}
\begin{split}
K^{V}(\phi)&= \textbf{T}_{\mu\nu}\left(\nabla^{\mu}\left(r^{q}\partial_{v}\right)\right)^{\nu}=\textbf{T}_{\mu\nu}\left(\left(\nabla^{\mu}r^{q}\right)\partial_{v}\right)^{\nu}+\textbf{T}_{\mu\nu}r^{q}\left(\nabla^{\mu}\partial_{v}\right)^{\nu}\\
&=2r^{q-1}(\partial_{u}\phi)(\partial_{v}\phi)+qr^{q-1}(\partial_{v}\phi)^{2}+\frac{r^{q-1}}{4}\left(-qD-rD'\right)\left|\nabb\phi\right|^{2}.
\end{split}
\end{equation*}
Note that since $\psi$ solves the wave equation $\phi$ satisfies $\frac{4}{D}\partial_{u}\partial_{v}\phi+\frac{D'}{r}\phi-\lapp\phi=0$
and so 
\begin{equation*}
\begin{split}
\Box_{g}\phi=-\frac{2}{r}(\partial_{u}\phi-\partial_{v}\phi)+\frac{D'}{r}\phi,
\end{split}
\end{equation*}
which, as expected, depends only on the 1-jet of $\phi$. Therefore,
\begin{equation*}
\begin{split}
\mathcal{E}^{V}[\phi]&=r^{q}(\partial_{v}\phi)(\Box_{g}\phi)=-2r^{q-1}(\partial_{u}\phi)(\partial_{v}\phi)+2r^{q-1}(\partial_{v}\phi)^{2}+D'r^{q-1}\phi(\partial_{v}\phi).
\end{split}
\end{equation*}
Thus
\begin{equation*}
\begin{split}
K^{V}[\phi]+\mathcal{E}^{V}[\phi]=&(q+2)r^{q-1}(\partial_{v}\phi)^{2}+\frac{r^{q-1}}{4}\left(-qD-rD'\right)\left|\nabb\phi\right|^{2}+D'r^{q-1}\phi(\partial_{v}\phi).
\end{split}
\end{equation*}
However,
\begin{equation*}
\begin{split}
\int_{\tilde{\mathcal{D}}_{\tau_{1}}^{\tau_{2}}}{D'r^{q-1}\z\phi\left(\partial_{v}\z\phi\right)}=& \int_{\tilde{\mathcal{D}}_{\tau_{1}}^{\tau_{2}}}{r^{q-4}\frac{M}{2}D\left[\sqrt{D}(1-q)-\frac{3M}{r}\right]\left(\z\phi\right)^{2}}\\&-\int_{\Delta_{\tau_{1}}^{\tau_{2}}}{\frac{r^{q-1}}{4}D'\sqrt{D}\left(\z\phi\right)^{2}}+\int_{\mathcal{I}^{+}}{\frac{D'D}{4}r^{q-1}\left(\z\phi\right)^{2}}.
\end{split}
\end{equation*}
Note that in Minkowski spacetime we would have no zeroth order term in the wave equation. In our case we do have, in such a way, however, such that the terms on the right hand side of the above identity have the right sign for $p<3$ and sufficiently large\footnote{Clearly we need to take $R_{0}>2M$.} $R_{0}$. 

In view of the cut-off function $\zeta$ all the integrals over $\Delta_{\tau_{1}}^{\tau_{2}}$ vanish. Clearly, all error terms that arise in the region\footnote{The weights in $r$ play no role in this region.} $\mathcal{W}=\overline{\operatorname{supp} (\z-1)}=\left\{R_{0}\leq r\leq R_{0}+1\right\}$  are quadratic forms of the 1-jet of $\psi$ and, therefore, these integrals are bounded by $\displaystyle\int_{\tilde{\Sigma}_{\tau_{1}}}{\!\! J_{\mu}^{T}\left[\psi\right]n^{\mu}_{\tilde{\Sigma}_{\tau_{1}}}}$. Note that only a degenerate Morawetz (see \cite{aretakis1}) is needed for the considerations near null unfinity.  Also
\begin{equation*}
\begin{split}
\int_{\partial\tilde{\mathcal{D}}_{\tau_{1}}^{\tau_{2}}}{J_{\mu}^{V}[\zeta\phi]n^{\mu}}=\int_{\tilde{N}_{\tau_{1}}}{r^{q}\left(\partial_{v}\zeta\phi\right)^{2}}-\int_{\tilde{N}_{\tau_{2}}}{r^{q}\left(\partial_{v}\zeta\phi\right)^{2}}-\int_{\mathcal{I}^{+}}{\frac{D}{4}\left|\nabb\phi\right|^{2}}.
\end{split}
\end{equation*}
The last two integrals on the right hand side appear with the right sign. Finally, in view of the first Hardy inequality of Section 6 of \cite{aretakis1}, the error terms produced by the cut-off $\zeta$ in the region $\mathcal{W}$ are controlled by  the flux of $T$ through $\tilde{\Sigma}_{\tau_{1}}$. 

\end{proof}
The reason we introduced the function $\phi$ is because the weight $r$ that it contains makes it non-degenerate ($\psi=0$ on $\mathcal{I}^{+}$ but $\phi$ does not vanish there in general). The reason we have divided by $r^{2}$ in \eqref{rwest} is because we want to emphasise the weight that corresponds to $\psi$ and not to $\phi$.

A first application of the above $r$-weighted energy estimate is the following
\begin{proposition}
There exists a constant $C$ that depends  on $M$ and $\tilde{\Sigma}_{0}$  such that if $\psi$ satisfies the wave equation and $\tilde{\mathcal{D}}_{\tau_{1}}^{\tau_{2}}$  as defined above with $R_{0}$ sufficiently large, then 
\begin{equation*}
\int_{\tau_{1}}^{\tau_{2}}{\left(\int_{\tilde{N}_{\tau}}{J^{T}_{\mu}[\psi]n^{\mu}_{\tilde{N}_{\tau}}}\right)d\tau} \,\leq\, C\int_{\tilde{\Sigma}_{\tau_{1}}}{J_{\mu}^{T}\left[\psi\right]n^{\mu}_{\tilde{\Sigma}_{\tau_{1}}}}+C\int_{\tilde{N}_{\tau_{1}}}{r^{-1}\left(\partial_{v}\phi\right)^{2}}.
\end{equation*}
\label{nondegx}
\end{proposition}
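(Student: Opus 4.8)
The plan is to specialise the $r$-weighted estimate \eqref{rwest} to the endpoint exponent $p=1$ and to extract the integrated energy bound from its spacetime (bulk) term. With $p=1$ the boundary datum on the right-hand side of \eqref{rwest} is exactly $\int_{\tilde{N}_{\tau_{1}}}r^{-1}(\partial_{v}\phi)^{2}$, which is precisely the second term on the right of the proposition, while $\int_{\tilde{\Sigma}_{\tau_{1}}}J^{T}_{\mu}[\psi]n^{\mu}_{\tilde{\Sigma}_{\tau_{1}}}$ supplies the first. It therefore suffices to bound the left-hand side of the proposition by the bulk term of \eqref{rwest}, which for $p=1$ controls $\int_{\tilde{\mathcal{D}}_{\tau_{1}}^{\tau_{2}}}\frac{(\partial_{v}\phi)^{2}}{r^{2}}$ together with a favourably signed multiple of $\int_{\tilde{\mathcal{D}}_{\tau_{1}}^{\tau_{2}}}\left|\nabb\psi\right|^{2}$, once $R_{0}$ is taken large enough that the angular coefficient is positive in $\{r\geq R_{0}\}$.

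Next I would convert the iterated integral on the left into a spacetime integral. Since in the $(u,v)$ system $T$ translates $u$ and $v$ simultaneously, the cones $\tilde{N}_{\tau}=\tilde{\Sigma}_{\tau}\cap\{r\geq R_{0}\}$, which there are the level sets $\{u=\mathrm{const}\}$, foliate $\tilde{\mathcal{D}}_{\tau_{1}}^{\tau_{2}}$ as $\tau$ ranges over $[\tau_{1},\tau_{2}]$ with $d\tau=du$; hence $\int_{\tau_{1}}^{\tau_{2}}\!\big(\int_{\tilde{N}_{\tau}}f\big)\,d\tau$ equals, up to a fixed factor, the integral of $f$ over $\tilde{\mathcal{D}}_{\tau_{1}}^{\tau_{2}}$. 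On each cone one has $J^{T}_{\mu}[\psi]n^{\mu}_{\tilde{N}_{\tau}}\sim(\partial_{v}\psi)^{2}+\left|\nabb\psi\right|^{2}$; the angular part is already controlled by the bulk term. For the radial part I would pass to $\phi=r\psi$: since $\partial_{v}r=\tfrac{D}{2}$ one has $\partial_{v}\psi=\frac{\partial_{v}\phi}{r}-\frac{D}{2r}\psi$, so that $(\partial_{v}\psi)^{2}\lesssim\frac{(\partial_{v}\phi)^{2}}{r^{2}}+\frac{\psi^{2}}{r^{2}}$, the first term being exactly the $p=1$ bulk.

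This leaves the zeroth-order term $\int_{\tilde{\mathcal{D}}_{\tau_{1}}^{\tau_{2}}}\frac{\psi^{2}}{r^{2}}$, which is the one genuinely new quantity to estimate. I would control it by the first Hardy inequality of Section 6 of \cite{aretakis1}, trading $\int\frac{\psi^{2}}{r^{2}}$ for a constant multiple of $\int\frac{(\partial_{v}\phi)^{2}}{r^{2}}$ (already bounded) plus a boundary contribution on the timelike tube $\Delta_{\tau_{1}}^{\tau_{2}}=\tilde{\mathcal{D}}_{\tau_{1}}^{\tau_{2}}\cap\{r=R_{0}\}$. Because $R_{0}>2M$ is fixed and lies strictly outside both the photon sphere and $\hh$, the weight of the integrated local energy decay estimate of Part I is bounded below there, so this boundary integral is controlled by $\int_{\tilde{\Sigma}_{\tau_{1}}}J^{T}_{\mu}[\psi]n^{\mu}_{\tilde{\Sigma}_{\tau_{1}}}$, using also uniform boundedness of the $T$-energy. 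Collecting the bounds with constants depending only on $M$ and $\tilde{\Sigma}_{0}$ then gives the claim.

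The main obstacle is precisely this zeroth-order term: it is invisible to the $r^{p}$ current itself and can only be recovered through a Hardy inequality, which unavoidably generates a boundary flux at $r=R_{0}$. The crux is to absorb that flux into the initial $T$-energy, and it is here that the choice $R_{0}>2M$ is essential, since it keeps the relevant region away from the trapping of the photon sphere, so that only the degenerate Morawetz estimate — and no loss of derivatives — is required. Matching the measure coming from the null foliation with the spacetime volume element in \eqref{rwest} is routine but should be done carefully to keep the constant uniform in $\tau_{1}$ and $\tau_{2}$.
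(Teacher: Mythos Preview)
Your argument is correct and follows the same opening move as the paper --- specialising \eqref{rwest} to $p=1$ and reading off control of $\int_{\tilde{\mathcal{D}}_{\tau_{1}}^{\tau_{2}}}r^{-2}(\partial_{v}\phi)^{2}$ and of $\int_{\tilde{\mathcal{D}}_{\tau_{1}}^{\tau_{2}}}\left|\nabb\psi\right|^{2}$ --- but diverges from the paper in how you recover $(\partial_{v}\psi)^{2}$ from $(\partial_{v}\phi)^{2}$. You use the crude bound $(\partial_{v}\psi)^{2}\lesssim r^{-2}(\partial_{v}\phi)^{2}+r^{-2}\psi^{2}$ and then dispose of the zeroth-order spacetime term by a Hardy inequality along each cone $\tilde{N}_{\tau}$, paying a boundary contribution on $\Delta_{\tau_{1}}^{\tau_{2}}$ which you close with the degenerate Morawetz estimate of Part~I near $r=R_{0}$. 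The paper instead keeps the exact algebraic identity
\[
\frac{1}{2D^{2}r^{2}}(\partial_{v}\phi)^{2}=\frac{1}{2D^{2}}(\partial_{v}\psi)^{2}+\frac{1}{4Dr^{2}}\,\partial_{v}\!\bigl(r\psi^{2}\bigr),
\]
so that the cross term is a total $v$-derivative; after inserting the cut-off $\zeta$ this integrates to a boundary term at $\mathcal{I}^{+}$ with a favourable sign, and only cut-off errors localised to $\{R_{0}\leq r\leq R_{0}+1\}$ remain, which are absorbed by Morawetz. Your route is perfectly valid and has the virtue of being modular (standard Hardy plus the already-proved integrated decay), but note two small points: the Hardy inequality you need is really the null-cone version $\int_{R_{0}}^{\infty}r^{-2}\phi^{2}\,dr\lesssim R_{0}^{-1}\phi(R_{0})^{2}+\int_{R_{0}}^{\infty}(\partial_{r}\phi)^{2}\,dr$ rather than the slice version cited, and the boundary term lives on the $3$-dimensional tube $\Delta_{\tau_{1}}^{\tau_{2}}$, so to invoke the $4$-dimensional Morawetz estimate you should either thicken $\Delta$ by a trace/averaging argument or, equivalently, run Hardy with a smooth cut-off so the boundary contribution becomes a bulk term in $\{R_{0}\leq r\leq R_{0}+1\}$. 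The paper's identity-based argument avoids this detour and makes transparent why no genuine zeroth-order loss occurs at infinity.
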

\begin{proof}
Applying Proposition \ref{rweigestprop} for $p=1$ and using the fact that for $r\geq R_{0}$ and $R_{0}$ large enough $D-rD'>\frac{1}{2},$
we have that there exists a constant $C$ that depends  on $M$ and $\tilde{\Sigma}_{0}$  such that
\begin{equation}
\int_{\tilde{\mathcal{D}}_{\tau_{1}}^{\tau_{2}}}{\frac{1}{r^{2}}\left(\partial_{v}\phi\right)^{2}+\frac{1}{r^{2}}\left|\nabb\phi\right|^{2}}\leq C\int_{\tilde{\Sigma}_{\tau_{1}}}{J_{\mu}^{T}\left[\psi\right]n^{\mu}_{\tilde{\Sigma}_{\tau_{1}}}}+C\int_{\tilde{N}_{\tau_{1}}}{r^{-1}\left(\partial_{v}\phi\right)^{2}}.
\label{corp-1}
\end{equation}
Note now that since $\left|\nabb\phi\right|^{2}=r^{2}\left|\nabb\psi\right|^{2}$, \eqref{corp-1} yields
\begin{equation*}
\int_{\tilde{\mathcal{D}}_{\tau_{1}}^{\tau_{2}}}{\left|\nabb\psi\right|^{2}}\ \,\leq C\int_{\tilde{\Sigma}_{\tau_{1}}}{J_{\mu}^{T}\left[\psi\right]n^{\mu}_{\tilde{\Sigma}_{\tau_{1}}}}+C\int_{\tilde{N}_{\tau_{1}}}{r^{-1}\left(\partial_{v}\phi\right)^{2}}.
\end{equation*}
Furthermore, for sufficiently large $R_{0}$ we have
\begin{equation*}
\begin{split}
\int_{\tilde{\mathcal{D}}_{\tau_{1}}^{\tau_{2}}}{\frac{1}{r^{2}}\left(\partial_{v}\phi\right)^{2}}\,\geq & \int_{\tilde{\mathcal{D}}_{\tau_{1}}^{\tau_{2}}}{\frac{1}{2D^{2}r^{2}}\left(\partial_{v}\phi\right)^{2}}\,=\int_{\tilde{\mathcal{D}}_{\tau_{1}}^{\tau_{2}}}{\frac{1}{2D^{2}}\left(\partial_{v}\psi\right)^{2}}+\int_{\tilde{\mathcal{D}}_{\tau_{1}}^{\tau_{2}}}{\frac{1}{4Dr^{2}}\partial_{v}(r\psi^{2})}.
\end{split}
\end{equation*}
However, if $\zeta$ is the cut-off function introduced in the proof of Proposition \ref{rweigestprop}, then
\begin{equation*}
\begin{split}
\int_{\tilde{\mathcal{D}}_{\tau_{1}}^{\tau_{2}}}{\frac{1}{4Dr^{2}}\partial_{v}\left(r(\zeta\psi)^{2}\right)}=
\int_{\mathcal{I}^{+}}{\frac{1}{8r}(\zeta\psi)^{2}}.
\end{split}
\end{equation*}
Therefore, the above integral is of the right sign modulo some error terms in the region $\mathcal{W}$ coming from the cut-off $\zeta$. These terms are quadratic in the 1-jet of $\psi$ and so can be controlled by the T-flux. Finally, since $n^{\mu}_{\tilde{N}_{\tau}}$ is null we have $J^{T}_{\mu}[\psi]n^{\mu}_{\tilde{N}_{\tau}}\sim \left(\partial_{v}\psi\right)^{2}+\left|\nabb\psi\right|^{2}$ and thus by \eqref{corp-1} and the coarea formula we have the required result.
\end{proof}

This is a spacetime estimate which does \textbf{not} degenerate at infinity. Note the importance of the fact that the region $\tilde{\mathcal{D}}_{\tau_{1}}^{\tau_{2}}$ does not contain $i^{0}$! If we are to obtain the full decay for the energy, then we need to prove decay for the boundary terms in Proposition \ref{nondegx}. The first step is to derive a spacetime estimate of the $r$-weighted quantity $r^{-1}(\partial_{v}\phi)^{2}$.
\begin{proposition}
There exists a constant $C$ which depends  on $M$  and $\tilde{\Sigma}_{0}$  such that
\begin{equation*}
\begin{split}
\int_{\tau_{1}}^{\tau_{2}}{\left(\int_{\tilde{N}_{\tau}}{r^{-1}(\partial_{v}\phi)^{2}}\right)d\tau}\ \leq C\int_{\tilde{\Sigma}_{\tau_{1}}}{J_{\mu}^{T}\left[\psi\right]n^{\mu}_{\tilde{\Sigma}_{\tau_{1}}}}+C\int_{\tilde{N}_{\tau_{1}}}{\left(\partial_{v}\phi\right)^{2}}.
\end{split}
\end{equation*}
\label{rwe1}
\end{proposition}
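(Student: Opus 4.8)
The plan is to apply the $r$-weighted energy estimate of Proposition \ref{rweigestprop} at the borderline exponent $p=2$ (admissible since $2<3$) and to close it using the lower-order estimate already available for $p=1$. With $p=2$ the spacetime $(\partial_{v}\phi)^{2}$ term in \eqref{rwest} carries precisely the weight $r^{p-3}=r^{-1}$, i.e.\ it is a positive multiple of $\int_{\tilde{\mathcal{D}}_{\tau_{1}}^{\tau_{2}}}r^{-1}(\partial_{v}\phi)^{2}$, which is exactly the quantity to be controlled. Discarding the nonnegative flux term $\int_{\tilde{N}_{\tau_{2}}}(\partial_{v}\phi)^{2}$ on the left, the estimate would at once bound $\int_{\tilde{\mathcal{D}}_{\tau_{1}}^{\tau_{2}}}r^{-1}(\partial_{v}\phi)^{2}$ by $C\int_{\tilde{\Sigma}_{\tau_{1}}}J_{\mu}^{T}[\psi]n^{\mu}_{\tilde{\Sigma}_{\tau_{1}}}+\int_{\tilde{N}_{\tau_{1}}}(\partial_{v}\phi)^{2}$ \emph{provided} the angular bulk term also had a favorable sign.

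The main obstacle is that it does not. The angular coefficient computed in the proof of Proposition \ref{rweigestprop} is $\tfrac{r^{p-1}}{4}\big((2-p)D-rD'\big)$, positive for $p<2$ (this is what made the $p=1$ case of Proposition \ref{nondegx} close, via $D-rD'>\tfrac12$), but at the top exponent $p=2$ it degenerates to $-\tfrac{r^{2}D'}{4}$, which is \emph{unfavorably} signed; hence the angular energy can no longer simply be thrown away. The saving observation is that this bad term has a \emph{bounded} $r$-weight: since $D'=\tfrac{2M}{r^{2}}\big(1-\tfrac{M}{r}\big)$ we have $\tfrac{r^{2}D'}{4}=\tfrac{M}{2}\big(1-\tfrac{M}{r}\big)\le\tfrac{M}{2}$, so transferring it to the right produces only $\tfrac{M}{2}\int_{\tilde{\mathcal{D}}_{\tau_{1}}^{\tau_{2}}}|\nabb\psi|^{2}$.

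This is exactly where the $p=1$ level of the hierarchy feeds back in. Estimate \eqref{corp-1}, established in the proof of Proposition \ref{nondegx}, gives $\int_{\tilde{\mathcal{D}}_{\tau_{1}}^{\tau_{2}}}|\nabb\psi|^{2}\le C\int_{\tilde{\Sigma}_{\tau_{1}}}J_{\mu}^{T}[\psi]n^{\mu}_{\tilde{\Sigma}_{\tau_{1}}}+C\int_{\tilde{N}_{\tau_{1}}}r^{-1}(\partial_{v}\phi)^{2}$, and since $r\ge R_{0}$ on $\tilde{N}_{\tau_{1}}$ the last integral is $\le C\int_{\tilde{N}_{\tau_{1}}}(\partial_{v}\phi)^{2}$. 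Substituting this absorbs the unfavorable angular contribution and yields $\int_{\tilde{\mathcal{D}}_{\tau_{1}}^{\tau_{2}}}r^{-1}(\partial_{v}\phi)^{2}\le C\int_{\tilde{\Sigma}_{\tau_{1}}}J_{\mu}^{T}[\psi]n^{\mu}_{\tilde{\Sigma}_{\tau_{1}}}+C\int_{\tilde{N}_{\tau_{1}}}(\partial_{v}\phi)^{2}$.

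Finally I would rewrite the spacetime integral over $\tilde{\mathcal{D}}_{\tau_{1}}^{\tau_{2}}$ in the iterated form of the statement. Since $\tilde{\mathcal{D}}_{\tau_{1}}^{\tau_{2}}=\cup_{\tau\in[\tau_{1},\tau_{2}]}\tilde{N}_{\tau}$ and the foliation is generated by the flow of $T$, the coarea formula (applied exactly as at the end of the proof of Proposition \ref{nondegx}) gives $\int_{\tilde{\mathcal{D}}_{\tau_{1}}^{\tau_{2}}}r^{-1}(\partial_{v}\phi)^{2}\sim\int_{\tau_{1}}^{\tau_{2}}\big(\int_{\tilde{N}_{\tau}}r^{-1}(\partial_{v}\phi)^{2}\big)\,d\tau$, which is the left-hand side of the proposition. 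The only remaining points needing care are the cut-off error terms near $r=R_{0}$ (already controlled by the $T$-flux through $\tilde{\Sigma}_{\tau_{1}}$, as in Proposition \ref{rweigestprop}) and the verification that the Jacobian of the coarea change of variables is bounded above and below, so that it is absorbed into the constant $C$.
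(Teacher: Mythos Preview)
Your proof is correct and follows essentially the same route as the paper: apply Proposition \ref{rweigestprop} with $p=2$, move the now unfavorably signed angular bulk term to the right, observe that its $r$-weight is bounded so it is dominated by $\int_{\tilde{\mathcal{D}}_{\tau_{1}}^{\tau_{2}}}|\nabb\psi|^{2}=\int_{\tilde{\mathcal{D}}_{\tau_{1}}^{\tau_{2}}}r^{-2}|\nabb\phi|^{2}$, control this via \eqref{corp-1} from the $p=1$ level, and finish with the coarea formula. Your version is in fact slightly more explicit than the paper's, since you compute the bound $\tfrac{r^{2}D'}{4}\le\tfrac{M}{2}$ directly and note that $r\ge R_{0}$ lets you absorb the $r^{-1}$-weighted flux on $\tilde{N}_{\tau_{1}}$ into the unweighted one.
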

\begin{proof}
Appying the $r$-weighted energy estimate for $p=2$ we obtain
\begin{equation*}
\begin{split}
\int_{\tilde{\mathcal{D}}_{\tau_{1}}^{\tau_{2}}}{r^{-1}(\partial_{v}\phi)^{2}}\, &\leq\!\!\int_{\tilde{\mathcal{D}}_{\tau_{1}}^{\tau_{2}}}{\frac{M\sqrt{D}}{4r^{2}}\left|\nabb\phi\right|^{2}}+C\int_{\tilde{\Sigma}_{\tau_{1}}}{J_{\mu}^{T}\left[\psi\right]n^{\mu}_{\tilde{\Sigma}_{\tau_{1}}}}\!+\! C\!\!\int_{\tilde{N}_{\tau_{1}}}{\left(\partial_{v}\phi\right)^{2}}\\
&\leq C\int_{\tilde{\mathcal{D}}_{\tau_{1}}^{\tau_{2}}}{\frac{1}{r^{2}}\left|\nabb\phi\right|^{2}}+C\int_{\tilde{\Sigma}_{\tau_{1}}}{J_{\mu}^{T}\left[\psi\right]n^{\mu}_{\tilde{\Sigma}_{\tau_{1}}}}\!+C\!\int_{\tilde{N}_{\tau_{1}}}{\left(\partial_{v}\phi\right)^{2}}.
\end{split}
\end{equation*}
The result now follows from \eqref{corp-1} and the coarea formula.
\end{proof}

\subsection{Integrated Decay of Local (Higher Order) Energy}
\label{sec:DecayOfLocalEnergy}

We have shown in \cite{aretakis1} that in order to obtain a non-degenerate spacetime estimate near $\mathcal{H}^{+}$ we need to commute the wave equation with the transversal to the horizon vector field $\partial_{r}$ and assume that the zeroth spherical harmonic vanishes. Indeed, if $\mathcal{A}$ is a spatially compact neighbourhood of $\mathcal{H}^{+}$ (which may contain the photon sphere) then we have:
\begin{proposition}
There exists a constant $C$ that depends  on $M$ and $\tilde{\Sigma}_{0}$   such that  if $\psi$ satisfies the wave equation and is supported on $l\geq 1$, then 
\begin{equation*}
\begin{split}
\int_{\tau_{1}}^{\tau_{2}}{\left(\int_{\mathcal{A}\cap\tilde{\Sigma}_{\tau}}{J^{N}_{\mu}[\psi]n^{\mu}_{\tilde{\Sigma}_{\tau}}}\right)d\tau}\ \leq  \  &C\int_{\tilde{\Sigma}_{\tau_{1}}}{J^{N}_{\mu}[\psi]n^{\mu}_{\tilde{\Sigma}_{\tau}}}+C\int_{\tilde{\Sigma}_{\tau_{1}}}{J^{N}_{\mu}[T\psi]n^{\mu}_{\tilde{\Sigma}_{\tau}}}+C\int_{\mathcal{A}\cap\tilde{\Sigma}_{\tau_{1}}}{J^{N}_{\mu}[\partial_{r}\psi]n^{\mu}_{\tilde{\Sigma}_{\tau}}}.
\end{split}
\end{equation*}
\label{inte1}
\end{proposition}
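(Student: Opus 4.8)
The plan is to build the non-degenerate local estimate near $\hh$ out of the estimates of Part I summarised in Section \ref{sec:SummaryOfResultsOfPartI}, by commuting separately with $T$ and with $\partial_{r}$ so as to remove, one at a time, the two degeneracies of the degenerate integrated local energy decay estimate (item (2) of that section). That estimate controls the bulk
\[
\int_{\mathcal{R}}\frac{(r-2M)^{2}\sqrt{D}}{r^{3+\delta}}\left((T\psi)^{2}+\left|\nabb\psi\right|^{2}+D^{2}(\partial_{r}\psi)^{2}+\tfrac{1}{r^{2}}\psi^{2}\right)
\]
by the $T$-flux through $\tilde{\Sigma}_{\tau_{1}}$, and its weight degenerates both at the photon sphere $r=2M$ (through $(r-2M)^{2}$) and at $\hh$ (through $\sqrt{D}=1-\tfrac{M}{r}$). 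Since $\mathcal{A}$ is spatially compact, $r$-weights are irrelevant and it suffices to recover the non-degenerate bulk $\int_{\tilde{\mathcal{R}}_{\tau_{1}}^{\tau_{2}}\cap\mathcal{A}}J^{N}_{\mu}[\psi]n^{\mu}\sim\int(T\psi)^{2}+(\partial_{r}\psi)^{2}+\left|\nabb\psi\right|^{2}$; the coarea formula then turns this into the time-integrated flux on the left-hand side. I would split $\mathcal{A}$ into a near-horizon piece $\{M\leq r\leq r_{0}\}$, with $r_{0}<2M$ as in item (3), and the remaining compact piece where $\sqrt{D}$ is bounded below.

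On the outer piece only the photon-sphere degeneracy is present. I would remove it in the standard way: applying the degenerate estimate both to $\psi$ and to $T\psi$, and using the wave equation to recover the missing second radial derivative at fixed $r$ near $r=2M$, one deletes the factor $(r-2M)^{2}$ at the cost of the extra flux $\int_{\tilde{\Sigma}_{\tau_{1}}}J^{N}_{\mu}[T\psi]n^{\mu}$. This is exactly the trapping on the photon sphere and produces the second term on the right-hand side. On this piece $\sqrt{D}$ is bounded below, so one obtains the full non-degenerate bulk there.

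The delicate step is the near-horizon piece, where I would commute with the transversal field $\partial_{r}$ and invoke the sharp second-order estimate (item (3) of Section \ref{sec:SummaryOfResultsOfPartI}, i.e.\ the $k=1$ case of Theorem \ref{theorem3}), valid precisely for $\psi$ supported on $l\geq 1$. Its spacetime term gives the non-degenerate control of $\int_{\mathcal{A}}(T\partial_{r}\psi)^{2}+\left|\nabb\partial_{r}\psi\right|^{2}$ (together with the degenerate $\int_{\mathcal{A}}\sqrt{D}(\partial_{r}^{2}\psi)^{2}$), bounded by the $N$-fluxes of $\psi$, $T\psi$ and, on $\mathcal{A}$, $\partial_{r}\psi$, which furnish the three terms on the right. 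From these I would reconstruct the three pieces of $J^{N}[\psi]$: the term $(\partial_{r}\psi)^{2}$ is recovered from the already non-degenerate $\left|\nabb\partial_{r}\psi\right|^{2}$ by the Poincar\'e inequality on the spheres, $\int_{\mathbb{S}^{2}}\left|\nabb\partial_{r}\psi\right|^{2}\geq l(l+1)\int_{\mathbb{S}^{2}}(\partial_{r}\psi)^{2}$, while $(T\psi)^{2}$ and $\left|\nabb\psi\right|^{2}$ are recovered from $(T\partial_{r}\psi)^{2}$ and $\left|\nabb\partial_{r}\psi\right|^{2}$ by a Hardy/fundamental-theorem-of-calculus argument in $r$, the boundary trace at $r=r_{0}>M$ being already controlled from the outer piece. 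Finally, since $\tilde{\Sigma}_{\tau}$ and the $\Sigma_{\tau}$ foliation of Part I coincide inside $\mathcal{A}$ and the uniform boundedness of the $N$-energy (item (1)) absorbs the change of slice, the estimate transfers to the $\tilde{\Sigma}_{\tau}$ foliation and the coarea formula yields the claim.

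The main obstacle is exactly this last step: it is here that the higher-order trapping of $\hh$ couples to the low-frequency obstruction. The commutation with $\partial_{r}$ is unavoidable --- the degenerate redshift cannot by itself produce a non-degenerate bulk at an extremal horizon --- and this is what forces the loss of one derivative, i.e.\ the $\partial_{r}\psi$ flux on the right. The Poincar\'e step is precisely what collapses for $l=0$, where $\nabb\partial_{r}\psi=0$ and $\partial_{r}\psi$ cannot be recovered, in accordance with the conservation of $H_{0}[\psi]$ (Proposition \ref{ndl0}), which obstructs any decay of $\partial_{r}\psi$ along $\hh$. One must also check that the error terms generated by $[\Box_{g},\partial_{r}]$ carry the correct sign or are absorbable, exactly as in the construction of the multiplier $L_{k}$ in Section \ref{sec:TheMultiplierLKAndTheEnergyIdentity}.
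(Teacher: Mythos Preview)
Your reconstruction is essentially correct and matches the paper's approach: the paper does not prove this proposition here but simply quotes it from Part~I (Sections~11--12 of \cite{aretakis1}), and the ingredients you assemble---the degenerate integrated estimate, commutation with $T$ to remove the photon-sphere degeneracy, and the $k=1$ second-order estimate (item~(3) of Section~\ref{sec:SummaryOfResultsOfPartI}) together with Poincar\'e on the spheres to recover $(\partial_{r}\psi)^{2}$ from $\left|\nabb\partial_{r}\psi\right|^{2}$ for $l\geq 1$---are exactly the right ones.

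One small detour worth flagging: your recovery of $(T\psi)^{2}$ and $\left|\nabb\psi\right|^{2}$ near $\hh$ by a Hardy/FTC argument from the $\partial_{r}$-commuted quantities is valid but unnecessary. The $k=0$ base case of the induction in Section~\ref{sec:TheMultiplierLKAndTheEnergyIdentity} (i.e.\ \eqref{hok-1} with $k=m=0$, which is the content of Section~11 of \cite{aretakis1}) already yields the spacetime control
\[
\int_{\mathcal{A}}\left((T\psi)^{2}+\sqrt{D}\,(\partial_{r}\psi)^{2}+\left|\nabb\psi\right|^{2}\right)\ \leq\ C\int_{\Sigma_{0}}J_{\mu}^{N}[\psi]n^{\mu}_{\Sigma_{0}}
\]
for \emph{all} solutions, with the only horizon degeneracy sitting on the transversal derivative. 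Thus $(T\psi)^{2}$ and $\left|\nabb\psi\right|^{2}$ are already non-degenerate from the (modified) redshift current itself; only $(\partial_{r}\psi)^{2}$ genuinely requires the commutation with $\partial_{r}$ and the Poincar\'e step, and it is precisely this step that fails for $l=0$ and forces the restriction in the hypothesis.
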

Regarding the above boundary terms we have
\begin{proposition}
There exists a constant $C$ that depends  on $M$ and $\tilde{\Sigma}_{0}$   such that  if $\psi$ satisfies the wave equation and is supported on $l\geq 2$, then 
\begin{equation*}
\int_{\tau_{1}}^{\tau_{2}}{\left(\int_{\mathcal{A}\cap\tilde{\Sigma}_{\tau}}{J^{N}_{\mu}[\partial_{r}\psi]n^{\mu}_{\tilde{\Sigma}_{\tau}}}\right)d\tau}\ \leq
C\sum_{i=0}^{2}{\int_{\tilde{\Sigma}_{\tau_{1}}}{J^{N}_{\mu}[T^{i}\psi]n^{\mu}_{\tilde{\Sigma}_{\tau}}}}+C\sum_{k=1}^{2}{\int_{\mathcal{A}\cap\tilde{\Sigma}_{\tau_{1}}}{J^{N}_{\mu}[\partial_{r}^{k}\psi]n^{\mu}_{\tilde{\Sigma}_{\tau}}}}.
\end{equation*}
\label{inte2}
\end{proposition}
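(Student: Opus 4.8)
The plan is to recognise the left-hand side as (essentially) the non-degenerate spacetime bulk of the sharp second order estimate and to read it off from Theorem \ref{theorem3}, the only genuinely new input being the removal of the horizon degeneracy on the transversal derivative. First I would localise: since $N=T$ away from $\mathcal{H}^{+}$ and the redshift plays no role there, split $\mathcal{A}$ into a near-horizon piece $\mathcal{A}_{0}=\mathcal{A}\cap\{M\le r\le r_{0}\}$ (with $r_{0}<2M$ as in Theorem \ref{theorem3}) and the remaining compact piece $\mathcal{A}_{1}$, which may contain the photon sphere but stays away from $\mathcal{H}^{+}$. On $\mathcal{A}_{1}$ one has $J^{N}_{\mu}[\partial_{r}\psi]=J^{T}_{\mu}[\partial_{r}\psi]$ and $D$ is bounded below, so no loss occurs at the horizon; the estimate there follows exactly as in Proposition \ref{inte1}, by commuting the degenerate integrated decay estimate of \cite{aretakis1} with $T$ (to defeat the photon-sphere trapping) and with $\partial_{r}$, and then using the local elliptic estimates \eqref{hospacaaH} together with the wave equation to re-express the resulting second and third order derivatives in terms of the fluxes $J^{N}_{\mu}[T^{i}\psi]$, $i\le 2$, and $J^{N}_{\mu}[\partial_{r}^{k}\psi]$, $k\le 2$, which are precisely the terms permitted on the right-hand side.

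The heart of the matter is $\mathcal{A}_{0}$. There $J^{N}_{\mu}[\partial_{r}\psi]n^{\mu}\sim (T\partial_{r}\psi)^{2}+(\partial_{r}^{2}\psi)^{2}+|\nabb\partial_{r}\psi|^{2}$. By time-translation invariance I may apply Theorem \ref{theorem3} on $\mathcal{R}(\tau_{1},\tau_{2})$ with data on $\tilde{\Sigma}_{\tau_{1}}$, the two foliations agreeing on $\mathcal{A}_{0}$ and the spacetime measures being comparable. Its spacetime term for $k=1$ (legitimate since $\psi$ is supported on $l\ge 2\ge 1$) already controls the two tangential pieces $(T\partial_{r}\psi)^{2}$ and $|\nabb\partial_{r}\psi|^{2}$ non-degenerately, and its right-hand side $J^{N}_{\mu}[T^{i}\psi]$, $i\le 1$, and $J^{N}_{\mu}[\partial_{r}\psi]$ lies within the permitted budget. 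The only quantity not delivered directly is the non-degenerate transversal term $\int(\partial_{r}^{2}\psi)^{2}$, since the $k=1$ estimate produces it only with the weight $\left(1-\frac{M}{r}\right)$. This residual degeneracy is exactly the higher-order trapping on $\mathcal{H}^{+}$ recorded in the Remark following Theorem \ref{theorem3}.

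To recover the missing term I would lose one further derivative. Applying the (second) Hardy inequality of Section 6 of \cite{aretakis1} in the $r$-direction and integrating over $(v,\omega)$ and over $[\tau_{1},\tau_{2}]$ gives a bound of the shape $\int_{M}^{r_{0}}(\partial_{r}^{2}\psi)^{2}\,dr\lesssim (\partial_{r}^{2}\psi)^{2}\big|_{r=r_{0}}+\int_{M}^{r_{0}}\left(1-\frac{M}{r}\right)(\partial_{r}^{3}\psi)^{2}\,dr$; the boundary contribution at $r=r_{0}$ is localised in $\mathcal{A}_{1}$ and is controlled by the estimate just obtained there, while the weighted bulk integral of $(\partial_{r}^{3}\psi)^{2}$ is precisely the degenerate spacetime term supplied by Theorem \ref{theorem3} with $k=2$. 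Crucially this step uses $k=2\le l$, i.e. $l\ge 2$: this is the non-critical regime in which the conservation-law obstruction $H_{l}$ of Theorem \ref{t3} is absent (for $k=l$ the identity \eqref{idconservation} vanishes and no such bound holds), which is the precise origin of the frequency restriction. Since the right-hand side of the $k=2$ estimate is $\sum_{i\le 2}\int_{\tilde{\Sigma}_{\tau_{1}}}J^{N}_{\mu}[T^{i}\psi]+\sum_{1\le k\le 2}\int_{\mathcal{A}\cap\tilde{\Sigma}_{\tau_{1}}}J^{N}_{\mu}[\partial_{r}^{k}\psi]$, which is exactly the claimed right-hand side, adding the $\mathcal{A}_{0}$ and $\mathcal{A}_{1}$ contributions completes the proof.

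The main obstacle is this last step. Because the redshift degenerates on an extreme horizon, near $\mathcal{H}^{+}$ the transversal derivative $(\partial_{r}^{2}\psi)^{2}$ cannot be controlled at its own order and must be bought with an additional $\partial_{r}$-commutation through the Hardy inequality, an operation that is admissible only away from the critical frequency $k=l$; this is what forces $l\ge 2$. Everything else, namely the tangential derivatives, the photon-sphere region, and the matching of the $\tilde{\Sigma}_{\tau}$ and $\Sigma_{\tau}$ foliations on $\mathcal{A}_{0}$, is routine once the sharp estimates of Theorem \ref{theorem3} are available.
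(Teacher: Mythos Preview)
Your approach is essentially correct but takes a noticeably longer route than the paper's one-line proof. The paper simply says: \emph{immediate from Theorem \ref{theorem3} and the coarea formula}. What is meant is that Theorem \ref{theorem3} with $k=2$ (available precisely because $l\ge 2$) already controls the spacetime integral of $\left|\nabb\partial_{r}^{2}\psi\right|^{2}$ non-degenerately, and since $\psi$ is supported on $l\ge 2$ the Poincar\'e inequality on the spheres gives $\int_{\mathbb{S}^{2}}(\partial_{r}^{2}\psi)^{2}\le C\int_{\mathbb{S}^{2}}\left|\nabb\partial_{r}^{2}\psi\right|^{2}$. Together with the $k=1$ estimate for the tangential pieces $(T\partial_{r}\psi)^{2}$ and $\left|\nabb\partial_{r}\psi\right|^{2}$, and the coarea formula, this is the whole argument. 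No radial Hardy inequality, no boundary trace at $r=r_{0}$, and no separate treatment of $\mathcal{A}_{1}$ are needed.

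Your detour through Hardy is workable but carries two minor imprecisions worth noting. First, the one-dimensional Hardy inequality in $r$ near $r=M$ produces the weight $(1-\tfrac{M}{r})^{2}$ on $(\partial_{r}^{3}\psi)^{2}$, not $(1-\tfrac{M}{r})$ as you wrote; this does not break the argument, since the stronger degeneracy is dominated by the weaker one on the compact interval $[M,r_{0}]$ and the $k=2$ spacetime term still closes. Second, the boundary contribution $(\partial_{r}^{2}\psi)^{2}\big|_{r=r_{0}}$ is a trace on a timelike hypersurface, and controlling it by the $\mathcal{A}_{1}$ estimate requires an extra step (a trace inequality or the third Hardy inequality of \cite{aretakis1}) that you gloss over. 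In short: your identification of the obstruction (the degenerate $(\partial_{r}^{2}\psi)^{2}$) and of the necessity of the $k=2$ commutation is exactly right, but you retrieve the missing term through the degenerate transversal piece $(1-\tfrac{M}{r})(\partial_{r}^{3}\psi)^{2}$, whereas the paper retrieves it through the non-degenerate angular piece $\left|\nabb\partial_{r}^{2}\psi\right|^{2}$, which sits on the same line of Theorem \ref{theorem3} and avoids all the extra bookkeeping.
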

\begin{proof}
Immediate from Theorem \ref{theorem3} of Section \ref{sec:TheMainTheorems} and the coarea formula.
\end{proof}

\subsection{Weighted Energy Estimates  in a Neighbourhood of $\mathcal{H}^{+}$}
\label{sec:RWeightedEnergyEstimatesH}
Since for $l=0$ the above estimates do not hold for generic initial data, we are left proving decay for the degenerate energy. For this we derive a hierarchy of (degenerate) energy estimates in a neighbourhood of $\mathcal{H}^{+}$, the crucial ingredient of which is the existence of the vector field $P$. This vector field is  timelike in the domain of outer communications and becomes null on the horizon  ``linearly". This linearity allows $P$ to capture  the degenerate redshift in $\mathcal{A}$  in a weaker way than $N$ but in stronger way than $T$.
 In this subsection, we use the $(v,r)$ coordinates.

\begin{proposition}
There exists a $\varphi_{\tau}^{T}$-invariant causal vector field $P$ and a constant $C$ which depends only on $M$ such that  for all $\psi$ we have 
\begin{equation*}
\begin{split}
& J_{\mu}^{T}[\psi]n^{\mu}_{\Sigma}\leq  C K^{P}[\psi],\ \  J_{\mu}^{P}[\psi]n^{\mu}_{\Sigma}\leq C K^{N,\delta,-\frac{1}{2}}[\psi]
\end{split}
\end{equation*}
in an appropriate neighbourhood $\mathcal{A}$ of $\mathcal{H}^{+}$.
\label{rstarHenergy}
\end{proposition}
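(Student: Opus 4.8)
The goal is to produce a single causal vector field $P$ that simultaneously achieves two coercivity inequalities near $\mathcal{H}^{+}$: its deformation tensor bulk term $K^{P}[\psi]$ must dominate the (degenerate) $T$-energy density $J_{\mu}^{T}[\psi]n^{\mu}_{\Sigma}$, while its own energy flux $J_{\mu}^{P}[\psi]n^{\mu}_{\Sigma}$ must be dominated by a modified redshift bulk term $K^{N,\delta,-1/2}[\psi]$. The key structural requirement is that $P$ be timelike in the interior but degenerate to null on $\mathcal{H}^{+}$ \emph{linearly} in $D^{1/2}$ (equivalently, linearly in $(r-M)$), so that it sits strictly between $T$ (tangent to $\mathcal{H}^{+}$) and $N$ (transversal, capturing the full redshift).

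**Construction and the two estimates.**
First I would posit $P=P^{v}(r)\,T+P^{r}(r)\,\partial_{r}$ as a $\varphi_{\tau}^{T}$-invariant field, with $P^{v}>0$ everywhere and $P^{r}$ vanishing on $\mathcal{H}^{+}$ to first order, so that $P^{r}(r)\sim -c\,(r-M)$ near $r=M$ for some $c>0$; causality is then checked by computing $g(P,P)=-D(P^{v})^{2}+2P^{v}P^{r}$ and verifying it is $\leq 0$ with equality only at $r=M$, which holds precisely because $D\sim (r-M)^{2}/M^{2}$ while $P^{r}\sim (r-M)$ means the cross term dominates the $D$ term near the horizon. Second, to obtain $J_{\mu}^{T}[\psi]n^{\mu}_{\Sigma}\leq C\,K^{P}[\psi]$, I would compute $K^{P}[\psi]=\,^{(P)}\pi^{\mu\nu}\mathbf{T}_{\mu\nu}$ and read off the coefficients $F_{vv},F_{rr},F_{\nabb},F_{vr}$ (as in the $L_{k}$ computation earlier in the paper), then choose $P^{v},P^{r}$ so that the quadratic form in $(T\psi,\partial_{r}\psi,\nabb\psi)$ is positive-definite of the right size. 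The linear vanishing of $P^{r}$ is exactly what makes $F_{rr}$, which carries a factor of $D$, comparable to $D\cdot(\partial_{r}\psi)^{2}$—i.e. to the degenerate $T$-density—rather than being overwhelmed by it. For the flux bound $J_{\mu}^{P}[\psi]n^{\mu}_{\Sigma}\leq C\,K^{N,\delta,-1/2}[\psi]$, I would use that the $P$-flux is a positive quadratic form whose transversal weight degenerates like $D^{1/2}$ (one power of $\sqrt{D}$ from the linear degeneration of $P^{r}$), and that the modified redshift current $K^{N,\delta,-1/2}$ is designed to control exactly such $D^{1/2}$-weighted derivative terms; this is a pointwise algebraic comparison of the two quadratic forms on the region $\mathcal{A}$.

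**The main obstacle.**
The delicate point is the \emph{simultaneous} satisfaction of both inequalities with a \emph{single} profile for $P^{r}$. The first estimate wants $-\partial_{r}P^{r}$ large (to make $F_{\nabb}$ and $F_{rr}$ coercive), while the causality and the flux-domination by the $D^{1/2}$-degenerate redshift current constrain how fast $P^{r}$ may grow away from $\mathcal{H}^{+}$ and how it degenerates on it. I expect the crux to be pinning down the precise rate $P^{r}\sim -c(r-M)$ and verifying that the resulting $K^{P}[\psi]$ stays nonnegative throughout $\mathcal{A}$ (including near the photon sphere, where the angular coefficient $-\tfrac14(pD+rD')$-type terms can change sign), which forces a careful choice of the interpolating profile and of the size of $\delta$ in the companion redshift current. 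The remaining steps—checking causality, expanding $K^{P}$, and matching the flux to $K^{N,\delta,-1/2}$—are then routine sign-tracking calculations of the sort already carried out for $L_{k}$ and for the $r$-weighted multiplier $V=r^{q}\partial_{v}$ above.
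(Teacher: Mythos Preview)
Your approach is correct and essentially identical to the paper's: the paper takes the explicit ansatz $P^{r}=-\sqrt{D}$ (which is exactly your linear vanishing $P^{r}\sim -c(r-M)$, since $\sqrt{D}=(r-M)/r$), computes $K^{P}\sim (T\psi)^{2}+D(\partial_{r}\psi)^{2}+|\nabb\psi|^{2}$ and $-g(P,P)\sim\sqrt{D}$, and reads off both inequalities pointwise. Two small clarifications: first, the ``main obstacle'' you anticipate is not really there---once $P^{r}=-\sqrt{D}$ is fixed, the only remaining freedom is in $P^{v}$, and one simply takes $\partial_{r}P^{v}$ large enough to absorb the cross term $F_{vr}$ via Cauchy--Schwarz (the paper does exactly this), so there is no genuine tension between the two estimates; second, the neighbourhood $\mathcal{A}$ is taken strictly inside $\{M\leq r\leq r_{0}\}$ with $r_{0}<2M$, so the photon sphere never enters the picture, and $P$ is extended to equal $T$ (hence $K^{P}=0$) for $r\geq r_{1}$.
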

\begin{proof}
Let our ansatz be $P=P^{v}T+P^{r}\partial_{r}$. Recall that 
\begin{equation*}
\begin{split}
K^{P}\left[\psi\right]=&F_{vv}\left(T\psi\right)^{2}+F_{rr}\left(\partial_{r}\psi\right)^{2}+F_{\scriptsize\nabb}\left|\nabb\psi\right|^{2}+F_{vr}\left(T\psi\right)\left(\partial_{r}\psi\right),\\
\end{split}
\end{equation*}
where the coef{}ficients are given by
\begin{equation*}
\begin{split}
&F_{vv}=\left(\partial_{r}P^{v}\right),\ F_{rr}=D\left[\frac{\left(\partial_{r}P^{r}\right)}{2}-\frac{P^{r}}{r}\right]-\frac{P^{r}D'}{2},  \ F_{\scriptsize\nabb}=-\frac{1}{2}\left(\partial_{r}P^{r}\right),\ F_{vr}=D\left(\partial_{r}P^{v}\right)-\frac{2P^{r}}{r}.\\
\end{split}
\end{equation*}
Let us take $P^{r}(r)=-\sqrt{D}$ for $M\leq r\leq r_{0}<2M$ with $r_{0}$ to be determined later. Then
\begin{equation*}
\begin{split}
F_{rr}&=D\left[-\frac{D'}{4\sqrt{D}}+\frac{\sqrt{D}}{r}\right]+\frac{\sqrt{D}D'}{2}=D\left[\frac{D'}{4\sqrt{D}}+\frac{\sqrt{D}}{r}\right]\sim D.
\end{split}
\end{equation*}
Since $\frac{D'}{4\sqrt{D}}=\frac{M}{2r^{2}}$, the constants in $\sim$ depend on $M$ and the choice for $r_{0}$. Also,
\begin{equation*}
\begin{split}
F_{vr}=\sqrt{D}\left[\sqrt{D}(\partial_{r}P^{v})+\frac{2}{r}\right]\leq \epsilon D+\frac{1}{\epsilon}\left[\sqrt{D}(\partial_{r}P^{v})+\frac{2}{r}\right]^{2}.
\end{split}
\end{equation*}
If we take $\epsilon$ sufficiently small  and $P^{v}$ such that  $\frac{1}{\epsilon}\left[\sqrt{D}(\partial_{r}P^{v})+\frac{2}{r}\right]^{2} < \partial_{r}P^{v}$ (note that this is always possible in view of the degeneracy of $\sqrt{D}$ at $\mathcal{H}^{+}$),
then  there exists $r_{0}>M$ such that
\begin{equation}
\begin{split}
K^{P}[\psi]\sim \left((T\psi)^{2}+D(\partial_{r}\psi)^{2}+\left|\nabb\psi\right|^{2}\right)\sim J_{\mu}^{T}[\psi]n^{\mu}_{\Sigma}
\label{p1} 
\end{split}
\end{equation}
 in $\mathcal{A}=\left\{M\leq r\leq r_{0}\right\}$. Extend now $P$ in $\mathcal{R}$ such that $P^{v}(r)=1$ and $P^{r}(r)=0$ for all $r\geq r_{1}>r_{0}$ for some $r_{1}<2M$. This proves the first part of the proposition.  In region $\mathcal{A}$ we have $-g(P,P)\sim\sqrt{D}$ and so 
\begin{equation}
\begin{split}
J_{\mu}^{P}[\psi]n^{\mu}_{\Sigma}&
\sim (T\psi)^{2}+\sqrt{D}(\partial_{r}\psi)^{2}+\left|\nabb\psi\right|^{2}\sim K^{N,\delta,-\frac{1}{2}}[\psi].
\label{p2}
\end{split}
\end{equation}
\end{proof}

\subsection{Decay of Degenerate Energy}
\label{sec:DecayOfDegenerateEnergy}

\subsubsection{Uniform Boundedness of $P$-Energy}
\label{sec:UniformBoundenessOfPEnergy}

First we need to prove that the $P$-flux is uniformly bounded.

\begin{proposition}
There exists a constant $C$ that depends  on $M$  and $\tilde{\Sigma}_{0}$  such that for all solutions $\psi$ of the wave equation we have
\begin{equation}
\int_{\tilde{\Sigma}_{\tau}}{J_{\mu}^{P}[\psi]n^{\mu}_{\tilde{\Sigma}_{\tau}}}\leq C\int_{\tilde{\Sigma}_{0}}{J_{\mu}^{P}[\psi]n^{\mu}_{\tilde{\Sigma}_{0}}}.
\label{pboun}
\end{equation}
\label{pbound}
\end{proposition}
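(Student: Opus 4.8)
The plan is to run the energy identity for the current $J^P_\mu[\psi]$ over the region $\tilde{\mathcal{R}}_0^\tau$ and to show that every term other than the flux through $\tilde{\Sigma}_0$ is either of favourable sign or controlled by the initial data. Since $\psi$ solves the wave equation, the error current vanishes, $\mathcal{E}^P[\psi]=(\Box_g\psi)\,P\psi=0$, so $\nabla^\mu J^P_\mu[\psi]=K^P[\psi]$, and the divergence theorem yields
\begin{equation*}
\int_{\tilde{\Sigma}_\tau}{J^P_\mu[\psi]n^\mu_{\tilde{\Sigma}_\tau}}+\int_{\mathcal{H}^{+}}{J^P_\mu[\psi]n^\mu_{\mathcal{H}^{+}}}+\int_{\mathcal{I}^{+}}{J^P_\mu[\psi]n^\mu_{\mathcal{I}^{+}}}+\int_{\tilde{\mathcal{R}}_0^\tau}{K^P[\psi]}=\int_{\tilde{\Sigma}_0}{J^P_\mu[\psi]n^\mu_{\tilde{\Sigma}_0}}.
\end{equation*}
First I would dispose of the two null-boundary fluxes. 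Since $P$ is future directed and causal and the generators of both $\mathcal{H}^{+}$ and $\mathcal{I}^{+}$ are future directed null, the dominant energy condition gives $J^P_\mu[\psi]n^\mu\geq 0$ on each null boundary; both integrals are therefore nonnegative and may be dropped from the left hand side. (On $\mathcal{I}^{+}$ one in fact has $P=T$, as $R_0>r_1$.)

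Next I would split the bulk term according to the three regions in which $P$ was built in Proposition \ref{rstarHenergy}. In $\mathcal{A}=\{M\leq r\leq r_0\}$ the identity \eqref{p1} shows that $K^P[\psi]\sim J^T_\mu[\psi]n^\mu_{\Sigma}\geq 0$, so this contribution carries the right sign. For $r\geq r_1$ we have $P=T$, which is Killing, hence $K^P[\psi]=K^T[\psi]=0$. The only contribution without an a priori sign comes from the transition region $\{r_0\leq r\leq r_1\}$, where $K^P[\psi]$ is a quadratic form in the $1$-jet of $\psi$ with bounded, $\varphi_\tau^T$-invariant coefficients.

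This transition term is the main obstacle, and it is precisely here that the integrated local energy decay of \cite{aretakis1} enters. The region $\{r_0\leq r\leq r_1\}$ lies strictly between the horizon $r=M$ and the photon sphere $r=2M$ (recall $r_1<2M$), which is exactly where the Morawetz weight $\tfrac{(r-2M)^2\sqrt{D}}{r^{3+\delta}}$ is bounded below by a positive constant. Hence, bounding $K^P[\psi]$ pointwise by the full $1$-jet and invoking item (2) of Section \ref{sec:SummaryOfResultsOfPartI},
\begin{equation*}
\left|\int_{\tilde{\mathcal{R}}_0^\tau\cap\{r_0\leq r\leq r_1\}}{K^P[\psi]}\right|\leq C\int_{\tilde{\mathcal{R}}_0^\tau\cap\{r_0\leq r\leq r_1\}}{\left((T\psi)^2+(\partial_r\psi)^2+\left|\nabb\psi\right|^2\right)}\leq C\int_{\tilde{\Sigma}_0}{J^T_\mu[\psi]n^\mu_{\tilde{\Sigma}_0}}.
\end{equation*}
Transferring the Part~I estimate from the $\Sigma_\tau$ to the $\tilde{\Sigma}_\tau$ foliation is harmless in this compact $r$-range, since there both foliations are spacelike and the Morawetz integrand is a nonnegative spacetime density.

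Combining the three steps I would arrive at
\begin{equation*}
\int_{\tilde{\Sigma}_\tau}{J^P_\mu[\psi]n^\mu_{\tilde{\Sigma}_\tau}}\leq \int_{\tilde{\Sigma}_0}{J^P_\mu[\psi]n^\mu_{\tilde{\Sigma}_0}}+C\int_{\tilde{\Sigma}_0}{J^T_\mu[\psi]n^\mu_{\tilde{\Sigma}_0}}.
\end{equation*}
It then remains to absorb the initial $T$-flux into the $P$-flux. In $\mathcal{A}$ the comparison \eqref{p2} together with $D\leq\sqrt{D}$ gives $J^T_\mu[\psi]n^\mu\leq C\,J^P_\mu[\psi]n^\mu$; in the transition region $P$ is timelike, so its flux controls the entire $1$-jet and hence $J^T_\mu[\psi]n^\mu$; and for $r\geq r_1$ one has $J^P=J^T$ identically. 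Therefore $\int_{\tilde{\Sigma}_0}J^T_\mu[\psi]n^\mu_{\tilde{\Sigma}_0}\leq C\int_{\tilde{\Sigma}_0}J^P_\mu[\psi]n^\mu_{\tilde{\Sigma}_0}$, and \eqref{pboun} follows. The only genuinely non-routine point is the sign control of $K^P$ in the transition region; everything else is the standard vector-field energy identity combined with the causality of $P$.
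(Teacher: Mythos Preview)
Your proposal is correct and follows essentially the same route as the paper's proof: apply the divergence identity for $J_\mu^P$, drop the null-boundary fluxes by causality of $P$, use $K^P\geq 0$ in $\mathcal{A}$ and $K^P=0$ for $r\geq r_1$, control the transition annulus via the degenerate $X$/Morawetz estimate from Part~I, and close with $J_\mu^T n^\mu\leq C J_\mu^P n^\mu$. The paper's argument is terser (it simply cites ``the degenerate $X$ estimate of Theorem~1 of \cite{aretakis1}'' for the intermediate region and states $J_\mu^T n^\mu\leq C J_\mu^P n^\mu$ without elaboration), but the logic is identical; your version just spells out the details.
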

\begin{proof}
Stokes' theorem for the current $J_{\mu}^{P}$ gives us
\begin{equation*}
\int_{\tilde{\Sigma}_{\tau}}{J_{\mu}^{P}n^{\mu}}+\int_{\mathcal{H}^{+}}{J_{\mu}^{P}n^{\mu}}+\int_{\mathcal{I}^{+}}{J_{\mu}^{P}n^{\mu}}+\int_{\tilde{\mathcal{R}}}{K^{P}}=\int_{\tilde{\Sigma}_{0}}{J_{\mu}^{P}n^{\mu}}.
\end{equation*}
Note that since $P$ is a future-directed causal vector field, the boundary integrals over $\mathcal{H}^{+}$ and $\mathcal{I}^{+}$ are non-negative. The same also holds for $K^{P}$ in region $\mathcal{A}$ whereas it vanishes away from the horizon.  In the intermediate region this spacetime integral can be bounded using the degenerate  $X$  estimate of Theorem 1 of \cite{aretakis1}. The result now follows from $J_{\mu}^{T}n^{\mu} \leq CJ_{\mu}^{P}n^{\mu}.$
\end{proof}
We are now in a position to derive local integrated decay for the $T$-energy.
\begin{proposition}
There exists a constant $C$ that depends  on $M$ and $\tilde{\Sigma}_{0}$   such that for all solutions $\psi$ of the wave equation   we have
\begin{equation*}
\int_{\tau_{1}}^{\tau_{2}}{\left(\int_{\mathcal{A}\cap\tilde{\Sigma}_{\tau}}{J_{\mu}^{T}[\psi]n^{\mu}_{\tilde{\Sigma}_{\tau}}}\right)d\tau}\leq C\int_{\tilde{\Sigma}_{\tau_{1}}}{J_{\mu}^{P}[\psi]n^{\mu}_{\tilde{\Sigma}_{\tau_{1}}}}
\end{equation*}
and
\begin{equation*}
\int_{\tau_{1}}^{\tau_{2}}{\left(\int_{\mathcal{A}\cap\tilde{\Sigma}_{\tau}}{J_{\mu}^{P}[\psi]n^{\mu}_{\tilde{\Sigma}_{\tau}}}\right)d\tau}\leq C\int_{\tilde{\Sigma}_{\tau_{1}}}{J_{\mu}^{N}[\psi]n^{\mu}_{\tilde{\Sigma}_{\tau_{1}}}}
\end{equation*}
in an appropriate $\varphi_{\tau}$-invariant neighbourhood $\mathcal{A}$ of $\mathcal{H}^{+}$.
\label{propp1}
\end{proposition}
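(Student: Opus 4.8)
The plan is to read off both inequalities from the divergence (Stokes) identities of the currents $J^{P}$ and the modified redshift current $J^{N,\delta,-\frac{1}{2}}$, combining the two coercivity relations of Proposition \ref{rstarHenergy} with the degenerate integrated local energy decay of \cite{aretakis1} to absorb the transition region where the bulk terms lose their sign. The neighbourhood $\mathcal{A}$ is the one from Proposition \ref{rstarHenergy}, namely $\mathcal{A}=\left\{M\leq r\leq r_{0}\right\}$ with $r_{0}<r_{1}<2M$, where $P^{r}=-\sqrt{D}$ on $\mathcal{A}$ and $P=T$ for $r\geq r_{1}$.

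For the first estimate I would apply Stokes' theorem to $J^{P}_{\mu}$ over $\tilde{\mathcal{R}}_{\tau_{1}}^{\tau_{2}}$, exactly as in the proof of Proposition \ref{pbound}. Since $P$ is future-directed causal, the fluxes through $\tilde{\Sigma}_{\tau_{2}}$, $\mathcal{H}^{+}$ and $\mathcal{I}^{+}$ are non-negative and may be discarded, leaving $\int_{\tilde{\mathcal{R}}}K^{P}\leq\int_{\tilde{\Sigma}_{\tau_{1}}}J^{P}_{\mu}n^{\mu}_{\tilde{\Sigma}_{\tau_{1}}}$. By construction $K^{P}$ is coercive ($\sim J^{T}_{\mu}n^{\mu}\geq 0$) on $\mathcal{A}$ and vanishes for $r\geq r_{1}$, but in the transition region $\left\{r_{0}\leq r\leq r_{1}\right\}$ it carries no definite sign; I would move that contribution to the right, writing $\int_{\mathcal{A}\cap\tilde{\mathcal{R}}}K^{P}\leq\int_{\tilde{\Sigma}_{\tau_{1}}}J^{P}_{\mu}n^{\mu}+\big|\int_{\{r_{0}\leq r\leq r_{1}\}\cap\tilde{\mathcal{R}}}K^{P}\big|$. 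Because $\left\{r_{0}\leq r\leq r_{1}\right\}$ is compact and bounded away from both $\mathcal{H}^{+}$ and the photon sphere $r=2M$, the degenerate Morawetz estimate of \cite{aretakis1} is non-degenerate there and controls the $1$-jet quadratic form $K^{P}$ by $\int_{\tilde{\Sigma}_{\tau_{1}}}J^{T}_{\mu}n^{\mu}\leq C\int_{\tilde{\Sigma}_{\tau_{1}}}J^{P}_{\mu}n^{\mu}$. Applying the coercivity $J^{T}_{\mu}n^{\mu}\leq CK^{P}$ of Proposition \ref{rstarHenergy} together with the coarea formula (the Jacobian of the $\varphi_{\tau}^{T}$-foliation being comparable to $1$ on the compact $\mathcal{A}$) then yields the first inequality.

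For the second estimate I would run the identical argument with $J^{N,\delta,-\frac{1}{2}}$ in place of $J^{P}$. By the second part of Proposition \ref{rstarHenergy} its bulk $K^{N,\delta,-\frac{1}{2}}$ is coercive on $\mathcal{A}$, where it dominates $J^{P}_{\mu}n^{\mu}$, while its flux $J^{N,\delta,-\frac{1}{2}}_{\mu}n^{\mu}$ is comparable to $J^{N}_{\mu}n^{\mu}$ up to lower-order terms absorbable by the Hardy inequalities of \cite{aretakis1}. The divergence identity over $\tilde{\mathcal{R}}_{\tau_{1}}^{\tau_{2}}$, the non-negativity of the boundary fluxes on $\mathcal{H}^{+}$ and $\mathcal{I}^{+}$, and the same transition-region absorption as above give $\int_{\mathcal{A}\cap\tilde{\mathcal{R}}}K^{N,\delta,-\frac{1}{2}}\leq C\int_{\tilde{\Sigma}_{\tau_{1}}}J^{N}_{\mu}n^{\mu}$. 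The coercivity $J^{P}_{\mu}n^{\mu}\leq CK^{N,\delta,-\frac{1}{2}}$ and the coarea formula then produce the second inequality.

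The main obstacle is the transition region $\left\{r_{0}\leq r\leq r_{1}\right\}$, where neither $K^{P}$ nor $K^{N,\delta,-\frac{1}{2}}$ is sign-definite; the argument would break if this region reached either the photon sphere or the horizon. What rescues it is that $\mathcal{A}$ and the transition region sit strictly inside $M<r<2M$, so the integrated local energy decay of \cite{aretakis1} loses no derivatives and suffers no degeneracy there, letting the indefinite bulk contributions be charged to the initial $T$-flux. A secondary technical point is to confirm that the zeroth-order modification defining $J^{N,\delta,-\frac{1}{2}}$ preserves both the non-negativity of the horizon flux and the comparability of the slice flux with $J^{N}_{\mu}n^{\mu}$, which follows from the Hardy inequalities of \cite{aretakis1}.
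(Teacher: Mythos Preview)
Your proposal is correct and follows essentially the same route as the paper's proof: the divergence identity for $J^{P}$ (respectively $J^{N,\delta,-\frac{1}{2}}$), non-negativity of the causal boundary fluxes, absorption of the transition-region bulk via the degenerate $X$/Morawetz estimate of \cite{aretakis1}, and the coercivity relations \eqref{p1}--\eqref{p2} together with the coarea formula. The paper simply packages the transition-region absorption by citing Proposition \ref{pbound} (whose proof is exactly the argument you spell out), so your version is a more explicit rendering of the same proof.
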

\begin{proof}
From the divergence identity for the current $J_{\mu}^{P}$ and the boundedness of $P$-energy we have
\begin{equation*}
\int_{\mathcal{A}}{K^{P}}\leq C\int_{\tilde{\Sigma}_{\tau_{1}}}{J_{\mu}^{P}[\psi]n^{\mu}_{\tilde{\Sigma}_{\tau_{1}}}}
\end{equation*}
for a uniform constant $C$. Thus the first estimate follows from \eqref{p1} and the coarea formula. Likewise, the second estimate follows from the divergence identity for the current $J_{\mu}^{N,\delta, -\frac{1}{2}}$, the boundedness of the non-degenerate $N$-energy and \eqref{p2}.
\end{proof}

\subsubsection{The Dyadic Sequence $\rho_{n}$}
\label{sec:TheDyadicSequenceTildeTauN}

In view of Propositions  \ref{nondegx} and \ref{propp1} we have
\begin{equation}
\begin{split}
\int_{\tau_{1}}^{\tau_{2}}{\left(\int_{\tilde{\Sigma}_{\tau}}{J^{T}_{\mu}[\psi]n^{\mu}_{\tilde{\Sigma}_{\tau}}}\right)d\tau}\, \leq\,  CI^{T}_{\tilde{\Sigma}_{\tau_{1}}}[\psi],
\label{tinte1}
\end{split}
\end{equation}
where 
\begin{equation*}
\begin{split}
I^{T}_{\tilde{\Sigma}_{\tau}}[\psi]=&\int_{\tilde{\Sigma}_{\tau}}{J^{P}_{\mu}[\psi]n^{\mu}_{\tilde{\Sigma}_{\tau}}}+
\int_{\tilde{\Sigma}_{\tau}}{J^{T}_{\mu}[T\psi]n^{\mu}_{\tilde{\Sigma}_{\tau}}}+\int_{\tilde{N}_{\tau}}{r^{-1}\left(\partial_{v}\phi\right)^{2}}.
\end{split}
\end{equation*}
Moreover, from Propositions \ref{propp1} and \ref{rwe1} we have
\begin{equation}
\int_{\tau_{1}}^{\tau_{2}}{I_{\tilde{\Sigma}_{\tau}}^{T}[\psi]d\tau}\leq CI_{\tilde{\Sigma}_{\tau_{1}}}^{T}[T\psi]+C\int_{\tilde{\Sigma}_{\tau_{1}}}{J_{\mu}^{N}[\psi]n^{\mu}}+C\int_{\tilde{N}_{\tau_{1}}}{(\partial_{v}\phi)^{2}},
\label{tinte2}
\end{equation}
for a  constant $C$ that  depends  on  $M$  and $\tilde{\Sigma}_{0}$. This implies that there exists a dyadic sequence\footnote{Dyadic sequence is an increasing sequence $\rho_{n}$ such that $\rho_{n}\sim\rho_{n+1}\sim(\rho_{n+1}-\rho_{n}).$} $\rho_{n}$ such that 
\begin{equation*}
I^{T}_{\tilde{\Sigma}_{\rho_{n}}}[\psi]\leq \frac{E_{1}}{\rho_{n}},
\end{equation*}
where $E_{1}$ is equal to the right hand side of \eqref{tinte2} (with $\tau_{1}=0$) and depends only on the initial data of $\psi$. We have now all the tools to derive decay for the degenerate energy.
\begin{proposition}
There exists a constant $C$ that depends on $M$   and $\tilde{\Sigma}_{0}$ such that for all solutions $\psi$ of the wave equation we have 
\begin{equation*}
\int_{\tilde{\Sigma}_{\tau}}J^{T}_{\mu}[\psi]n_{\tilde{\Sigma}_{\tau}}^{\mu}\leq CE_{1}\frac{1}{\tau^{2}},
\end{equation*}
where $E_{1}$ is as defined above.
\label{tdecay}
\end{proposition}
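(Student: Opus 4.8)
The plan is to run the standard end-of-hierarchy harvesting argument: convert the dyadic decay of $I^{T}_{\tilde{\Sigma}_{\tau}}[\psi]$ already extracted above into decay of the degenerate energy at \emph{every} $\tau$, using the integrated estimate \eqref{tinte1} together with the monotonicity of the $T$-flux. First I would establish that $\tau\mapsto\int_{\tilde{\Sigma}_{\tau}}J^{T}_{\mu}[\psi]n^{\mu}_{\tilde{\Sigma}_{\tau}}$ is non-increasing. Since $T$ is Killing we have $K^{T}[\psi]=0$, so the divergence identity for the current $J^{T}_{\mu}[\psi]$ on $\tilde{\mathcal{R}}_{\tau_{1}}^{\tau_{2}}$ reads
\begin{equation*}
\int_{\tilde{\Sigma}_{\tau_{2}}}{J^{T}_{\mu}[\psi]n^{\mu}_{\tilde{\Sigma}_{\tau_{2}}}}+\int_{\mathcal{H}^{+}}{J^{T}_{\mu}[\psi]n^{\mu}_{\mathcal{H}^{+}}}+\int_{\mathcal{I}^{+}}{J^{T}_{\mu}[\psi]n^{\mu}_{\mathcal{I}^{+}}}=\int_{\tilde{\Sigma}_{\tau_{1}}}{J^{T}_{\mu}[\psi]n^{\mu}_{\tilde{\Sigma}_{\tau_{1}}}},
\end{equation*}
the boundary integrals being taken over the portions of $\mathcal{H}^{+}$ and $\mathcal{I}^{+}$ between the two leaves. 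Because $T$ is future-directed and causal up to and including $\mathcal{H}^{+}$, both boundary integrands are non-negative, so the degenerate $T$-energy is monotonically non-increasing in $\tau$.

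Next I would apply \eqref{tinte1} on the dyadic interval $[\rho_{n},\rho_{n+1}]$ and use the already-established bound $I^{T}_{\tilde{\Sigma}_{\rho_{n}}}[\psi]\le E_{1}/\rho_{n}$ to obtain
\begin{equation*}
\int_{\rho_{n}}^{\rho_{n+1}}{\left(\int_{\tilde{\Sigma}_{\tau}}{J^{T}_{\mu}[\psi]n^{\mu}_{\tilde{\Sigma}_{\tau}}}\right)d\tau}\leq CI^{T}_{\tilde{\Sigma}_{\rho_{n}}}[\psi]\leq\frac{CE_{1}}{\rho_{n}}.
\end{equation*}
By the monotonicity just established, the left-hand side is at least $(\rho_{n+1}-\rho_{n})\int_{\tilde{\Sigma}_{\rho_{n+1}}}J^{T}_{\mu}[\psi]n^{\mu}_{\tilde{\Sigma}_{\rho_{n+1}}}$, and since $\rho_{n+1}-\rho_{n}\sim\rho_{n}$ by the dyadic property this yields $\int_{\tilde{\Sigma}_{\rho_{n+1}}}J^{T}_{\mu}[\psi]n^{\mu}_{\tilde{\Sigma}_{\rho_{n+1}}}\leq CE_{1}\rho_{n}^{-2}$, i.e.\ quadratic decay along the sequence. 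For an arbitrary $\tau\geq1$ I would then pick $n$ with $\rho_{n}\leq\tau<\rho_{n+1}$ and invoke monotonicity once more, $\int_{\tilde{\Sigma}_{\tau}}J^{T}_{\mu}[\psi]n^{\mu}\leq\int_{\tilde{\Sigma}_{\rho_{n}}}J^{T}_{\mu}[\psi]n^{\mu}\leq CE_{1}\rho_{n}^{-2}$; since $\rho_{n}\sim\tau$, the claimed $\tau^{-2}$ bound follows.

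The substantive content has in fact already been expended in assembling \eqref{tinte1} and \eqref{tinte2} out of Propositions \ref{nondegx}, \ref{propp1} and \ref{rwe1}: the first trades one power of $\tau$-averaging against the hierarchy quantity $I^{T}$, and the second trades a further power against $I^{T}[T\psi]$ together with the $N$-energy and the $r$-weighted flux of the data. The present statement is therefore the routine closing step, and the only point genuinely deserving care is the monotonicity, which hinges on the non-negativity of the $T$-flux through $\mathcal{I}^{+}$ and $\mathcal{H}^{+}$. This in turn relies on the $\tilde{\Sigma}_{\tau}$ foliation terminating at null infinity — so that no flux escapes through $i^{0}$ — and on $T$ remaining causal up to the horizon, a feature special to the domain of outer communications that one must keep in mind when transplanting the argument to spacetimes possessing an ergoregion.
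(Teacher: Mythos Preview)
Your proof is correct and follows essentially the same dyadic pigeonhole argument as the paper. The only cosmetic difference is that you establish strict monotonicity of the $T$-flux from $K^{T}=0$ and the sign of the boundary fluxes, whereas the paper simply invokes the energy estimate $\int_{\tilde{\Sigma}_{\tau}}J^{T}_{\mu}n^{\mu}\leq C\int_{\tilde{\Sigma}_{\tau'}}J^{T}_{\mu}n^{\mu}$ for $\tau\geq\tau'$; your version is slightly cleaner in the final interpolation step.
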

\begin{proof}
We apply \eqref{tinte1} for the dyadic interval $[\rho_{n},\rho_{n+1}]$ to obtain
\begin{equation*}
\int_{\rho_{n}}^{\rho_{n+1}}{\left(\int_{\tilde{\Sigma}_{\tau}}{J^{T}_{\mu}[\psi]n^{\mu}_{\tilde{\Sigma}_{\tau}}}\right)d\tau}\, \leq\,  CE_{1}\frac{1}{\rho_{n}}.
\end{equation*}
In view of the energy estimate
\begin{equation*}
\begin{split}
\int_{\tilde{\Sigma}_{\tau}}{J^{T}_{\mu}[\psi]n^{\mu}_{\tilde{\Sigma}_{\tau}}}\ \leq\, C\int_{\tilde{\Sigma}_{\tau'}}{J^{T}_{\mu}[\psi]n^{\mu}_{\tilde{\Sigma}_{\tau'}}},
\end{split}
\end{equation*}
which holds for all $\tau\geq \tau'$, we have
\begin{equation*}
\begin{split}
(\rho_{n+1}-\rho_{n})\int_{\tilde{\Sigma}_{\rho_{n+1}}}{J^{T}_{\mu}[\psi]n^{\mu}_{\tilde{\Sigma}_{\rho_{n+1}}}}\,\leq\, CE_{1}\frac{1}{\rho_{n+1}}.
\end{split}
\end{equation*}
Since there exists a uniform constant $b>0$ such that $b\tau_{n+1}\leq\tau_{n+1}-\tau_{n}$ we have 
\begin{equation*}
\begin{split}
\int_{\tilde{\Sigma}_{\rho_{n+1}}}{J^{T}_{\mu}[\psi]n^{\mu}_{\tilde{\Sigma}_{\rho_{n+1}}}}\,\leq\, CE_{1}\frac{1}{\rho_{n+1}^{2}}.
\end{split}
\end{equation*}
Now, for $\tau\geq\rho_{1}$ there exists $n\in\mathbb{N}$ such that $\rho_{n}\leq\tau\leq\rho_{n+1}$. Therefore,
\begin{equation*}
\begin{split}
\int_{\tilde{\Sigma}_{\tau}}{J^{T}_{\mu}[\psi]n^{\mu}_{\tilde{\Sigma}_{\tau}}}\,\leq\, C\int_{\tilde{\Sigma}_{\rho_{n+1}}}{J^{T}_{\mu}[\psi]n^{\mu}_{\tilde{\Sigma}_{\rho_{n+1}}}}\, \leq\frac{CE_{1}}{\rho_{n}^{2}}\sim\frac{CE_{1}}{\rho_{n+1}^{2}}\leq CE_{1}\frac{1}{\tau^{2}},
\end{split}
\end{equation*}
which is the required decay result for the $T$-energy. 

\end{proof}

\subsection{Decay of Non-Degenerate Energy}
\label{sec:DecayOfNonDegenerateEnergy}

We  now derive decay for the non-degenerate energy. Note that for obtaining such a result we must use Proposition \ref{inte1} which however holds for solutions to the wave equation supported on the frequencies $l\geq 1$. In this case, in view of the previous estimates we have
\begin{equation}
\begin{split}
\int_{\tau_{1}}^{\tau_{2}}{\left(\int_{\tilde{\Sigma}_{\tau}}{J^{N}_{\mu}[\psi]n^{\mu}_{\tilde{\Sigma}_{\tau}}}\right)d\tau}\, \leq\, & CI^{N}_{\tilde{\Sigma}_{\tau_{1}}}[\psi],
\end{split}
\label{ninte1}
\end{equation}
where 
\begin{equation*}
\begin{split}
I^{N}_{\tilde{\Sigma}_{\tau}}[\psi]=&\int_{\tilde{\Sigma}_{\tau}}{J^{N}_{\mu}[\psi]n^{\mu}_{\tilde{\Sigma}_{\tau}}}+
\int_{\tilde{\Sigma}_{\tau}}{J^{N}_{\mu}[T\psi]n^{\mu}_{\tilde{\Sigma}_{\tau}}}+\int_{\mathcal{A}\cap\tilde{\Sigma}_{\tau}}{J^{N}_{\mu}[\partial_{r}\psi]n^{\mu}_{\tilde{\Sigma}_{\tau}}}+\int_{\tilde{N}_{\tau}}{r^{-1}\left(\partial_{v}\phi\right)^{2}}.
\end{split}
\end{equation*}
\begin{proposition}
There exists a constant $C$ that depends  on $M$   and $\tilde{\Sigma}_{0}$ such that for all solutions $\psi$ to the wave equation which are supported on $l\geq 1$ we have
\begin{equation*}
\begin{split}
\int_{\tilde{\Sigma}_{\tau}}{J^{N}_{\mu}[\psi]n^{\mu}_{\tilde{\Sigma}_{\tau}}}\,\leq\, CE_{2}\frac{1}{\tau},
\end{split}
\end{equation*}
where $E_{2}$ depends only on the initial data of $\psi$ and is equal to the right hand side of \eqref{ninte1} (with $\tau_{1}=0$).
\label{l1decay}
\end{proposition}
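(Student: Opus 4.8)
The plan is to extract a single power of $\tau^{-1}$ from the integrated non-degenerate energy estimate \eqref{ninte1}, in direct parallel with the proof of Proposition \ref{tdecay} for the degenerate energy, but with one fewer tier in the hierarchy (which is why the rate will be $\tau^{-1}$ rather than $\tau^{-2}$). I would use exactly two inputs. The first is \eqref{ninte1} itself, which is available precisely because $\psi$ is supported on $l\geq 1$; this is where Proposition \ref{inte1} enters, the $l=0$ mode being obstructed by the conservation law of Theorem \ref{t3} and hence not decaying. The second is the uniform boundedness of the non-degenerate energy from Part I, in the sharpened form
\begin{equation*}
\int_{\tilde{\Sigma}_{\tau_{2}}}J^{N}_{\mu}[\psi]n^{\mu}_{\tilde{\Sigma}_{\tau_{2}}}\leq C\int_{\tilde{\Sigma}_{\tau_{1}}}J^{N}_{\mu}[\psi]n^{\mu}_{\tilde{\Sigma}_{\tau_{1}}}\qquad (\tau_{1}\leq\tau_{2}),
\end{equation*}
valid with a constant independent of $\tau_{1},\tau_{2}$ because the whole construction is $\varphi_{\tau}^{T}$-invariant, so the boundedness estimate may be anchored at any slice.

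First I would apply \eqref{ninte1} with $\tau_{1}=0$ and $\tau_{2}=\tau$, obtaining
\begin{equation*}
\int_{0}^{\tau}\left(\int_{\tilde{\Sigma}_{s}}J^{N}_{\mu}[\psi]n^{\mu}_{\tilde{\Sigma}_{s}}\right)ds\leq CI^{N}_{\tilde{\Sigma}_{0}}[\psi]=CE_{2}.
\end{equation*}
The crucial feature is that the right-hand side is now a fixed quantity depending only on the initial data, so no dyadic sequence (and in particular no analogue of the second integrated estimate \eqref{tinte2}) is needed at this level. Next I would insert the boundedness estimate $\int_{\tilde{\Sigma}_{\tau}}J^{N}\leq C\int_{\tilde{\Sigma}_{s}}J^{N}$, valid for every $s\in[0,\tau]$, underneath the time integral; since $\int_{\tilde{\Sigma}_{\tau}}J^{N}$ is independent of $s$ this gives
\begin{equation*}
\tau\int_{\tilde{\Sigma}_{\tau}}J^{N}_{\mu}[\psi]n^{\mu}_{\tilde{\Sigma}_{\tau}}=\int_{0}^{\tau}\left(\int_{\tilde{\Sigma}_{\tau}}J^{N}\right)ds\leq C\int_{0}^{\tau}\left(\int_{\tilde{\Sigma}_{s}}J^{N}\right)ds\leq CE_{2},
\end{equation*}
and dividing by $\tau$ yields the claimed bound $CE_{2}/\tau$.

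The conceptual weight, and the only genuinely hard part, lies entirely upstream in the derivation of \eqref{ninte1}, which already packaged Proposition \ref{inte1} (the $l\geq 1$ integrated local decay obtained via commutation with $\partial_{r}$ and Theorem \ref{theorem3}), the near-infinity $r$-weighted hierarchy of Propositions \ref{rweigestprop} and \ref{rwe1}, and the $P$-vector-field estimates of Proposition \ref{propp1}. Given those, the present proposition reduces to the short averaging argument above. The only point requiring mild care is that the Part I boundedness estimate, stated there for the $\Sigma_{\tau}$ foliation, transfers to the $\tilde{\Sigma}_{\tau}$ foliation; this is immediate because $\tilde{\Sigma}_{\tau}=\varphi_{\tau}^{T}(\tilde{\Sigma}_{0})$ and $N$ together with the relevant currents are $\varphi_{\tau}^{T}$-invariant, so the estimate is time-translation invariant and energy can only be radiated away through $\mathcal{I}^{+}$ as $\tau$ increases. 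I would emphasise, by contrast with Proposition \ref{tdecay}, that the absence of an integrated estimate controlling $\int I^{N}\,d\tau$ is exactly what limits the non-degenerate energy to the slower rate $\tau^{-1}$ at this order of regularity, the faster rate $\tau^{-2}$ for $l\geq 2$ being recovered only in the next proposition after losing further derivatives.
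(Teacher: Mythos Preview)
Your proof is correct and follows essentially the same approach as the paper: apply \eqref{ninte1} on $[0,\tau]$ to bound the time-integrated $N$-energy by $CE_{2}$, then insert the uniform boundedness estimate $\int_{\tilde{\Sigma}_{\tau}}J^{N}\leq C\int_{\tilde{\Sigma}_{s}}J^{N}$ for $s\leq\tau$ and divide by $\tau$. Your surrounding commentary on why no dyadic sequence is needed here and why the rate is only $\tau^{-1}$ is accurate and matches the paper's later contrast with the $l\geq 2$ case.
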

\begin{proof}
We apply \eqref{ninte1} for the interval $[0,\tau]$ and use the energy estimate
\begin{equation*}
\begin{split}
\int_{\tilde{\Sigma}_{\tau}}{J^{N}_{\mu}[\psi]n^{\mu}_{\tilde{\Sigma}_{\tau}}}\ \leq\, C\int_{\tilde{\Sigma}_{\tau'}}{J^{N}_{\mu}[\psi]n^{\mu}_{\tilde{\Sigma}_{\tau'}}},
\end{split}
\end{equation*}
which holds for all $\tau\geq \tau'$ and for a uniform constant $C$.
\end{proof}

\subsubsection{The Dyadic Sequence $\tau_{n}$}
\label{sec:APriviledgedDyadicSequence}

If we  consider solutions $\psi$ which are supported on $l\geq 2$ then from Propositions  \ref{rwe1}, \ref{inte2} and by commuting \eqref{ninte1} with $T$ we take
\begin{equation}
\begin{split}
\int_{\tau_{1}}^{\tau_{2}}{I^{N}_{\tilde{\Sigma}_{\tau}}[\psi]d\tau}\,\leq \, &CI^{N}_{\tilde{\Sigma}_{\tau_{1}}}[\psi]+CI^{N}_{\tilde{\Sigma}_{\tau_{1}}}[T\psi]+C\int_{\mathcal{A}\cap\tilde{\Sigma}_{\tau_{1}}}{J^{N}_{\mu}[\partial_{r}\partial_{r}\psi]n^{\mu}_{\tilde{\Sigma}_{\tau}}}+C\int_{\tilde{N}_{\tau_{1}}}{(\partial_{v}\phi)^{2}}
\end{split}
\label{ninte2}
\end{equation}
This implies that there exists a dyadic sequence $\tau_{n}$ such that
\begin{equation*}
\begin{split}
I^{N}_{\tilde{\Sigma}_{\tau_{n}}}[\psi] \, \leq\, \frac{E_{3}}{\tau_{n}},
\end{split}
\end{equation*}
where the constant $E_{3}$ is equal to the right hand side of \eqref{ninte2} (with $\tau_{1}=0)$. We can now derive decay for the non-degenerate energy.
\begin{proposition}
There exists a constant $C$ that depends  on $M$   and $\tilde{\Sigma}_{0}$ such that for all solutions $\psi$ to the wave equation which are supported on $l\geq 2$ we have
\begin{equation*}
\begin{split}
\int_{\tilde{\Sigma}_{\tau}}{J^{N}_{\mu}[\psi]n^{\mu}_{\tilde{\Sigma}_{\tau}}}\,\leq\, CE_{3}\frac{1}{\tau^{2}},
\end{split}
\end{equation*}
where $E_{3}$ is as defined above.
\label{energydecay}
\end{proposition}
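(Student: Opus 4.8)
The plan is to run the same dyadic pigeonhole argument that yielded Proposition \ref{tdecay}, but now driven by the non-degenerate hierarchy \eqref{ninte2} rather than the degenerate one \eqref{tinte2}. The two facts already in hand are: first, the integrated local decay estimate \eqref{ninte1}, which bounds the spacetime integral of the $N$-energy between two slices by the auxiliary quantity $I^N_{\tilde{\Sigma}_{\tau_1}}[\psi]$ on the earlier slice; and second, the dyadic sequence $\tau_n$ extracted from \eqref{ninte2}, along which $I^N_{\tilde{\Sigma}_{\tau_n}}[\psi]\leq E_3/\tau_n$. Since $I^N_{\tilde{\Sigma}_\tau}[\psi]$ dominates the bare energy $\int_{\tilde{\Sigma}_\tau}J^N_\mu[\psi]n^\mu_{\tilde{\Sigma}_\tau}$, these two ingredients should combine to give the claimed $\tau^{-2}$ rate.

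Concretely, I would first apply \eqref{ninte1} on the dyadic interval $[\tau_n,\tau_{n+1}]$ together with the bound along the sequence to obtain
\[
\int_{\tau_n}^{\tau_{n+1}}\left(\int_{\tilde{\Sigma}_\tau}J^N_\mu[\psi]n^\mu_{\tilde{\Sigma}_\tau}\right)d\tau\leq CI^N_{\tilde{\Sigma}_{\tau_n}}[\psi]\leq \frac{CE_3}{\tau_n}.
\]
Next I would invoke the uniform boundedness (monotonicity) of the non-degenerate energy --- result (1) of the summary of Part I --- in the form $\int_{\tilde{\Sigma}_\tau}J^N_\mu[\psi]n^\mu\leq C\int_{\tilde{\Sigma}_{\tau'}}J^N_\mu[\psi]n^\mu$ for $\tau\geq\tau'$, which lets me pull the integrand down to its value on the final slice $\tilde{\Sigma}_{\tau_{n+1}}$ and conclude
\[
(\tau_{n+1}-\tau_n)\int_{\tilde{\Sigma}_{\tau_{n+1}}}J^N_\mu[\psi]n^\mu_{\tilde{\Sigma}_{\tau_{n+1}}}\leq\frac{CE_3}{\tau_{n+1}}.
\]
The defining property of a dyadic sequence, $b\,\tau_{n+1}\leq\tau_{n+1}-\tau_n$ for a uniform $b>0$, then upgrades this to $\int_{\tilde{\Sigma}_{\tau_{n+1}}}J^N_\mu[\psi]n^\mu\leq CE_3/\tau_{n+1}^2$.

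Finally, for arbitrary $\tau\geq\tau_1$ I would choose $n$ with $\tau_n\leq\tau\leq\tau_{n+1}$, apply the monotonicity estimate once more, and use $\tau_n\sim\tau_{n+1}$ to interpolate between consecutive dyadic times, arriving at $\int_{\tilde{\Sigma}_\tau}J^N_\mu[\psi]n^\mu\leq CE_3/\tau^2$. I do not expect a genuine obstacle in this last step; the entire difficulty has been front-loaded into establishing \eqref{ninte2}, which is where the restriction to $l\geq 2$ enters --- through Propositions \ref{inte1} and \ref{inte2}, the latter consuming the second-order commuted estimate of Theorem \ref{theorem3} --- and where the extra $\partial_r\partial_r$ and $T$-commuted data encoded in $E_3$ are absorbed. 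The only point demanding a little care is checking that the bare flux $\int_{\tilde{\Sigma}_\tau}J^N_\mu[\psi]n^\mu$ is genuinely controlled by $I^N_{\tilde{\Sigma}_\tau}[\psi]$ uniformly in $\tau$, so that the monotonicity statement can be applied to the exact quantity being estimated rather than to a weaker proxy.
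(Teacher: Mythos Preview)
Your proposal is correct and follows essentially the same dyadic pigeonhole argument as the paper: apply \eqref{ninte1} on the dyadic intervals $[\tau_n,\tau_{n+1}]$ using the bound $I^N_{\tilde{\Sigma}_{\tau_n}}[\psi]\leq E_3/\tau_n$, invoke uniform boundedness of the $N$-energy to extract the flux on $\tilde{\Sigma}_{\tau_{n+1}}$, divide through by $\tau_{n+1}-\tau_n\sim\tau_{n+1}$, and then interpolate to general $\tau$ via monotonicity and $\tau_n\sim\tau_{n+1}$. Your closing caveat is harmless, since the bare $N$-flux appears explicitly as one of the summands in $I^N_{\tilde{\Sigma}_\tau}[\psi]$.
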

\begin{proof}
If we apply \eqref{ninte1} for the dyadic intervals $[\tau_{n},\tau_{n+1}]$ we obtain
\begin{equation*}
\begin{split}
\int_{\tau_{n}}^{\tau_{n+1}}{\left(\int_{\tilde{\Sigma}_{\tau}}{J^{N}_{\mu}[\psi]n^{\mu}_{\tilde{\Sigma}_{\tau}}}\right)d\tau}\, \leq\, \frac{CE_{3}}{\tau_{n}}.
\end{split}
\end{equation*}
In view of the boundedness of the $N$-energy  we have
\begin{equation*}
\begin{split}
(\tau_{n+1}-\tau_{n})\int_{\tilde{\Sigma}_{\tau_{n+1}}}{J^{N}_{\mu}[\psi]n^{\mu}_{\tilde{\Sigma}_{\tau_{n+1}}}}\,\leq\, \frac{CE_{3}}{\tau_{n+1}}.
\end{split}
\end{equation*}
Since there exists a uniform constant $b>0$ such that $b\tau_{n+1}\leq\tau_{n+1}-\tau_{n}$ we obtain
\begin{equation*}
\begin{split}
\int_{\tilde{\Sigma}_{\tau_{n+1}}}{J^{N}_{\mu}[\psi]n^{\mu}_{\tilde{\Sigma}_{\tau_{n+1}}}}\,\leq\, \frac{CE_{3}}{\tau_{n+1}^{2}}.
\end{split}
\end{equation*}
Now, for $\tau\geq\tau_{1}$ there exists $n\in\mathbb{N}$ such that $\tau_{n}\leq\tau\leq\tau_{n+1}$. Therefore,
\begin{equation*}
\begin{split}
\int_{\tilde{\Sigma}_{\tau}}{J^{N}_{\mu}[\psi]n^{\mu}_{\tilde{\Sigma}_{\tau}}}\,\leq\, \tilde{C}\int_{\tilde{\Sigma}_{\tau_{n+1}}}{J^{N}_{\mu}[\psi]n^{\mu}_{\tilde{\Sigma}_{\tau_{n+1}}}}\, \leq\frac{CE_{3}}{\tau_{n}^{2}}\sim\frac{CE_{3}}{\tau_{n+1}^{2}}\leq CE_{3}\frac{1}{\tau^{2}}
\end{split}
\end{equation*}
which is the required decay result for the energy. 

\end{proof}
The above propositions completes the proof of Theorem \ref{t4}.

\section{Pointwise Estimates}
\label{sec:PointwiseEstimates}

\subsection{Retrieving Pointwise Boundedness}
\label{sec:UniformPointwiseBoundedness}

In \cite{aretakis1} we proved that all the solutions to the wave equation $\psi$ remain uniformly bounded in $\mathcal{M}$. We show the same result here by exploiting the spherical symmetry. We work with the foliation $\Sigma_{\tau}$ (or $\tilde{\Sigma}_{\tau}$) and  the induced coordinate system $(\rho,\omega)$. For $r_{0}\geq M$   we have 
\begin{equation*}
\begin{split}
\psi^{2}\left(r_{0},\omega\right)&=\left(\int_{r_{0}}^{+\infty}{\left(\partial_{\rho}\psi\right)d\rho}\right)^{2}\leq\left(\int_{r_{0}}^{+\infty}{\left(\partial_{\rho}\psi\right)^{2}\rho^{2}d\rho}\right)\left(\int_{r_{0}}^{+\infty}{\frac{1}{\rho^{2}}d\rho}\right)
=\frac{1}{r_{0}}\left(\int_{r_{0}}^{+\infty}{\left(\partial_{\rho}\psi\right)^{2}\rho^{2}d\rho}\right).
\end{split}
\end{equation*}
Therefore,
\begin{equation}
\begin{split}
\int_{\mathbb{S}^{2}}{\psi^{2}(r_{0},\omega)d\omega}&\leq\frac{1}{r_{0}}\int_{\mathbb{S}^{2}}{\int_{r_{0}}^{+\infty}{\left(\partial_{\rho}\psi\right)^{2}\rho^{2}d\rho d\omega}}\leq \frac{C}{r_{0}}\int_{\Sigma_{\tau}\cap\left\{r\geq r_{0}\right\}}{J_{\mu}^{N}[\psi]n^{\mu}_{\Sigma_{\tau}}},
\end{split}
\label{1pointwise}
\end{equation}
where $C$ is a constant that depends only on $M$ and $\Sigma_{0}$. 
\begin{theorem}
There exists a constant $C$ which depends on $M$ and $\Sigma_{0}$ such that for all solutions $\psi$ of the wave equation we have
\begin{equation}
\begin{split}
\left|\psi\right|^{2}\leq C\cdot E_{4}\frac{1}{r},
\end{split}
\label{2pointwise}
\end{equation}
where $E_{4}=\sum_{\left|k\right|\leq 2}{\int_{\Sigma_{0}}{J_{\mu}^{N}[\Omega^{k}\psi]n^{\mu}_{\Sigma_{0}}}}.$
\label{unponpsi}
\end{theorem}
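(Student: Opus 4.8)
The plan is to combine the radial inequality \eqref{1pointwise} with a Sobolev embedding on the spheres of symmetry, after first commuting the wave equation with the generators of rotations. Since the metric is spherically symmetric, the angular momentum operators $\Omega_i$ are Killing and therefore commute with $\Box_g$; consequently $\Omega^k\psi$ is again a solution of the wave equation for every multi-index $k$ (and it satisfies the decay condition \eqref{condition} together with $\psi$). This means the derivation leading to \eqref{1pointwise} applies verbatim to each $\Omega^k\psi$, so that
\[
\int_{\mathbb{S}^{2}}(\Omega^{k}\psi)^{2}(r_{0},\omega)\,d\omega\leq\frac{C}{r_{0}}\int_{\Sigma_{\tau}\cap\{r\geq r_{0}\}}{J_{\mu}^{N}[\Omega^{k}\psi]n^{\mu}_{\Sigma_{\tau}}}.
\]

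Next I would remove the $\tau$-dependence of the right-hand side. Enlarging the region of integration to all of $\Sigma_{\tau}$ and invoking the uniform boundedness of the non-degenerate $N$-energy from Part I (result (1) of Section \ref{sec:SummaryOfResultsOfPartI}) applied to $\Omega^{k}\psi$, the right-hand side is controlled by $\tfrac{C}{r_{0}}\int_{\Sigma_{0}}{J_{\mu}^{N}[\Omega^{k}\psi]n^{\mu}_{\Sigma_{0}}}$, a quantity independent of $\tau$. Summing over $|k|\leq 2$ then gives
\[
\sum_{|k|\leq 2}\int_{\mathbb{S}^{2}}(\Omega^{k}\psi)^{2}(r_{0},\omega)\,d\omega\leq\frac{C}{r_{0}}\,E_{4}.
\]

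Finally I would invoke the Sobolev embedding $H^{2}(\mathbb{S}^{2})\hookrightarrow L^{\infty}(\mathbb{S}^{2})$ on the round sphere, phrased through the rotation generators: there is a universal constant such that $\sup_{\omega}|f(\omega)|^{2}\leq C\sum_{|k|\leq 2}\int_{\mathbb{S}^{2}}(\Omega^{k}f)^{2}\,d\omega$. Taking $f=\psi(r_{0},\cdot)$ and combining with the previous display yields $|\psi|^{2}(r_{0},\omega)\leq C E_{4}/r_{0}$ for every $\omega$ and every $r_{0}\geq M$, which is exactly \eqref{2pointwise}.

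There is no deep obstacle here: the argument is a standard commuting-with-symmetries plus Sobolev estimate, and the two analytic inputs (the radial inequality \eqref{1pointwise} and the uniform $N$-energy bound) are already in hand. The only points needing care are confirming that the $\Omega_{i}$ genuinely commute with $\Box_{g}$, which is immediate from spherical symmetry, and using the correct form of the Sobolev inequality on $\mathbb{S}^{2}$, where $s=2>1=\dim(\mathbb{S}^{2})/2$ derivatives suffice to embed into $L^{\infty}$ and the $\Omega^{k}$ with $|k|\leq 2$ control the full $H^{2}$ norm of the angular profile. The essential structural feature is that $E_{4}$ is assembled from the $N$-energies of $\Omega^{k}\psi$, which simultaneously supply the radial derivative required by \eqref{1pointwise} and the angular regularity required for the Sobolev step.
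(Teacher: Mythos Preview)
Your proposal is correct and follows essentially the same approach as the paper: apply the Sobolev inequality on $\mathbb{S}^{2}$ in the form $|\psi|^{2}\leq C\sum_{|k|\leq 2}\int_{\mathbb{S}^{2}}(\Omega^{k}\psi)^{2}$, then combine \eqref{1pointwise} (applied to each $\Omega^{k}\psi$) with the uniform boundedness of the non-degenerate $N$-energy. The paper's proof is simply a terse version of what you wrote out in detail.
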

\begin{proof}
From the Sobolev inequality on $\mathbb{S}^{2}$ we have $\left|\psi\right|^{2} \leq C\sum_{\left|k\right|\leq 2}{\int_{\mathbb{S}^{2}}{\left(\Omega^{k}\psi\right)^{2}}}$
and the theorem follows from \eqref{1pointwise} and the uniform boundedness of the non-degenerate energy.
\end{proof}

\subsection{Pointwise Decay}
\label{sec:PointwiseDecay}

\subsubsection{Decay away from $\mathcal{H}^{+}$}
\label{sec:DecayAwayMathcalH}

We consider the region $\left\{r\geq R_{1}\right\}$, where $R_{1}>M$. From now on, $C$ will be a constant depending only on $M$, $R_{1}$ and $\tilde{\Sigma}_{0}$.
\begin{figure}[H]
	\centering
		\includegraphics[scale=0.1]{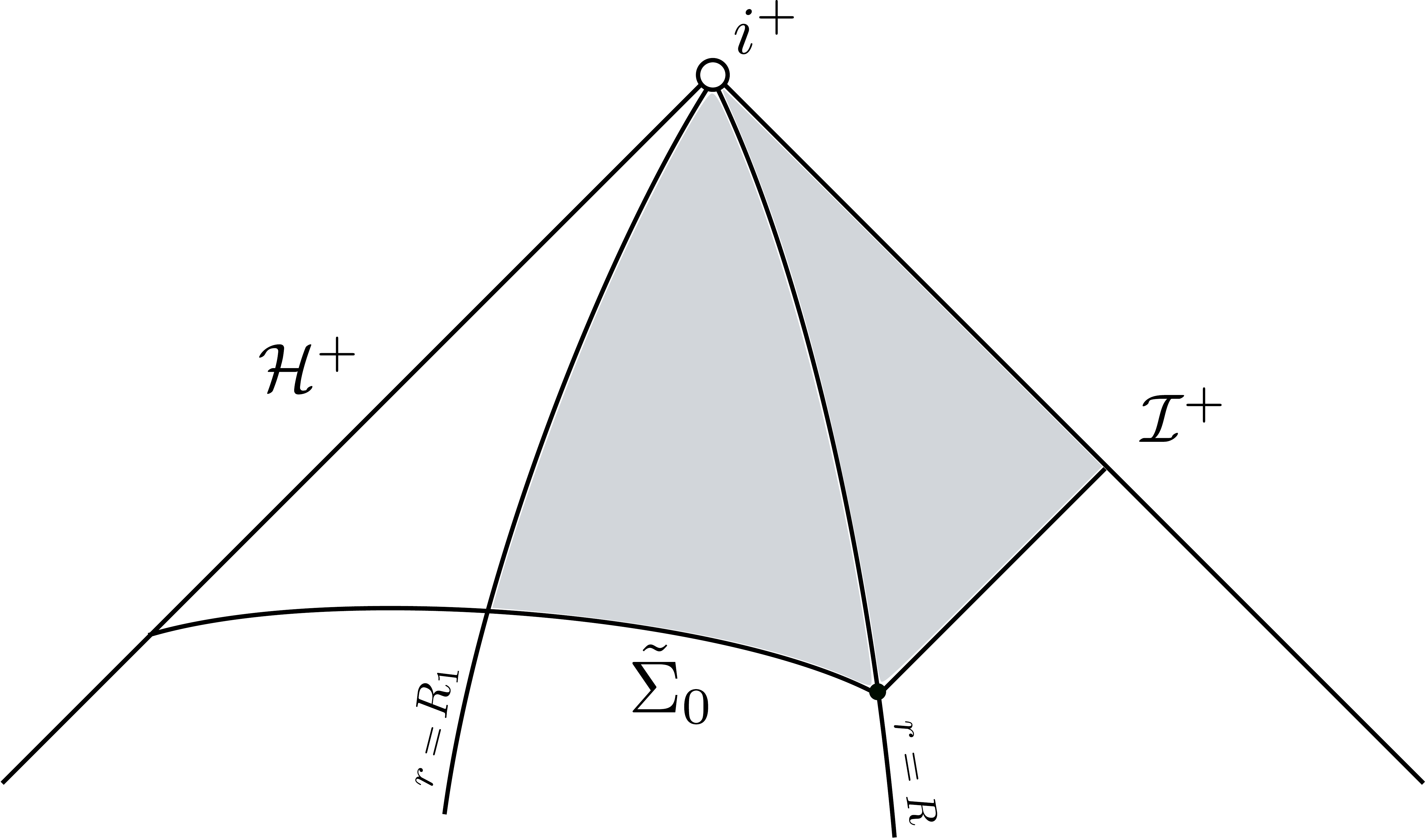}
	\label{fig:1pd}
\end{figure}
Clearly, in this region we have $J_{\mu}^{N}n^{\mu}_{\Sigma}\sim  J_{\mu}^{T}n_{\Sigma}^{\mu}$ and $\sim$ depends on $R_{1}$. Therefore, from \eqref{1pointwise} we have that for any $r\geq R_{1}$  
\begin{equation*}
\int_{\mathbb{S}^{2}}{\psi^{2}(r,\omega)d\omega}\leq \frac{C}{r}\int_{\tilde{\Sigma}_{\tau}}{J_{\mu}^{T}[\psi]n^{\mu}_{\tilde{\Sigma}_{\tau}}}\leq C\cdot E_{1}[\psi]\frac{1}{r\cdot \tau^{2}}.
\end{equation*}
Applying the above for $\psi$, $\Omega_{i}\psi$, $\Omega_{ij}\psi$ and using the  Sobolev inequality for $\mathbb{S}^{2}$ yields
\begin{equation*}
\psi^{2}\leq C E_{5}\frac{1}{r\cdot \tau^{2}},
\end{equation*}
where  $E_{5}=\sum_{\left|k\right|\leq 2}{E_{1}\left[\Omega^{k}\psi\right]}.$
Next we improve the decay with respect to $r$. Observe that for all $r\geq R_{1}$ we have
\begin{equation*}
\begin{split}
\int_{\mathbb{S}^{2}}{(r\psi)^{2}(r,\omega)d\omega}&=\int_{\mathbb{S}^{2}}{(R_{1}\psi)^{2}(R_{1},\omega)d\omega}+2\int_{\mathbb{S}^{2}}\int_{R_{1}}^{r}{\frac{\psi}{\rho}\partial_{\rho}(\rho\psi)\rho^{2}d\rho d\omega}\\
&\leq CE_{1}[\psi]\frac{1}{\tau^{2}}+C\sqrt{\int_{\tilde{\Sigma}_{\tau}\cap\left\{r\geq R_{1}\right\}}\frac{1}{\rho^{2}}\psi^{2}\int_{\tilde{\Sigma}_{\tau}\cap\left\{r\geq R_{1}\right\}}{\left(\partial_{\rho}(\rho \psi)\right)^{2}}}.
\end{split}
\end{equation*}
However, from  the first Hardy inequality (see Section 6 of \cite{aretakis1}) we have
\begin{equation*}
\int_{\tilde{\Sigma}_{\tau}}{\frac{1}{\rho^{2}}\psi^{2}}\leq C\int_{\tilde{\Sigma}_{\tau}}{J_{\mu}^{T}[\psi]n^{\mu}_{\tilde{\Sigma}_{\tau}}}\leq CE_{1}\frac{1}{\tau^{2}}.
\end{equation*}
Moreover, if $R_{0}$ is the constant defined in Section \ref{sec:EnergyDecay} and  recalling that $\rho\psi=\phi$ we have
\begin{equation*}
\begin{split}
\int_{\tilde{\Sigma}_{\tau}\cap\left\{r\geq R_{1}\right\}}{\left(\partial_{\rho}(\rho \psi)\right)^{2}}&=\int_{\tilde{\Sigma}_{\tau}\cap\left\{R_{0}\geq r\geq R_{1}\right\}}{\left(\partial_{\rho}(\rho \psi)\right)^{2}}+\int_{\tilde{N}_{\tau}}{(\partial_{v}\phi)^{2}}\\
&\leq C\int_{\tilde{\Sigma}_{0}}{J_{\mu}^{T}[\psi]n_{\tilde{\Sigma}_{0}}^{\mu}}+\int_{\tilde{N}_{\tau}}{(\partial_{v}\phi)^{2}}\leq  C\int_{\tilde{\Sigma}_{0}}{J_{\mu}^{T}[\psi]n_{\tilde{\Sigma}_{0}}^{\mu}}+\int_{\tilde{N}_{0}}{(\partial_{v}\phi)^{2}},
\end{split}
\end{equation*}
where for the second inequality we used Propositions \ref{rweigestprop} and \ref{nondegx}. Hence for $\tau\geq 1$ we have
\begin{equation*}
\begin{split}
r^{2}\int_{\mathbb{S}^{2}}{\psi^{2}(r,\omega)d\omega}&\leq CE_{1}\frac{1}{\tau^{2}}+C\sqrt{E_{1}}\sqrt{C\int_{\tilde{\Sigma}_{0}}{J_{\mu}^{T}[\psi]n_{\tilde{\Sigma}_{0}}^{\mu}}+\int_{\tilde{N}_{0}}{(\partial_{v}\phi)^{2}}}\frac{1}{\tau}\leq C E_{1}\frac{1}{\tau},
\end{split}
\end{equation*}
since the quantitiy in the square root is dominated by $E_{1}$. Therefore, by the Sobolev inequality on $\mathbb{S}^{2}$ we obtain
\begin{equation*}
\begin{split}
\psi^{2}\leq CE_{5}\frac{1}{r^{2}\cdot\tau}.
\end{split}
\end{equation*}

\subsubsection{Decay near $\mathcal{H}^{+}$}
\label{sec:DecayNearMathcalH}

We are now investigating the behaviour of $\psi$ in the region $\left\{M\leq r\leq R_{1}\right\}$. We first prove the following
\begin{lemma}
There exists a constant $C$ which depends only on $M$ such that for all $r_{1}$ with $M<r_{1}$ and all solutions $\psi$ of the wave equation we have
\begin{equation*}
\begin{split}
\int_{\mathbb{S}^{2}}{\psi^{2}(r_{1},\omega)d\omega}\leq\frac{C}{(r_{1}-M)^{2}}\frac{E_{1}}{\tau^{2}}.
\end{split}
\end{equation*}
\label{1lemmadecay}
\end{lemma}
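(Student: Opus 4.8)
The plan is to estimate $\psi$ at the fixed radius $r_1$ by integrating radially \emph{outward}, away from $\mathcal H^+$, and then feeding in the already-established decay of the \emph{degenerate} $T$-energy from Proposition \ref{tdecay}. Concretely, on each leaf $\tilde\Sigma_\tau$ I would use the fundamental theorem of calculus in the radial coordinate $\rho=r$: since the asymptotic condition \eqref{condition} forces $\psi\to 0$ towards infinity, one has $\psi(r_1,\omega)=-\int_{r_1}^{+\infty}\partial_\rho\psi\,d\rho$ for each $\omega\in\mathbb S^2$. The whole game is then to pick the correct weight in the Cauchy--Schwarz inequality so that the $\partial_\rho\psi$ factor is paired against exactly the quantity that the decayed $T$-flux controls.

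That weight is dictated by the near-horizon form $J^T_\mu[\psi]n^\mu_{\tilde\Sigma_\tau}\sim (T\psi)^2+D(\partial_r\psi)^2+|\nabb\psi|^2$ together with the volume element $\sim \rho^2\,d\rho\,d\omega$. Since $\rho^2 D=(\rho-M)^2$, Proposition \ref{tdecay} gives $\int_{\mathbb S^2}\int (\rho-M)^2(\partial_\rho\psi)^2\,d\rho\,d\omega\le C\int_{\tilde\Sigma_\tau}J^T_\mu[\psi]n^\mu_{\tilde\Sigma_\tau}\le CE_1\tau^{-2}$. I would therefore write $\partial_\rho\psi=(\rho-M)^{-1}\cdot(\rho-M)\partial_\rho\psi$ and apply Cauchy--Schwarz,
\begin{equation*}
\psi^2(r_1,\omega)\le\Big(\int_{r_1}^{+\infty}\frac{d\rho}{(\rho-M)^2}\Big)\Big(\int_{r_1}^{+\infty}(\rho-M)^2(\partial_\rho\psi)^2\,d\rho\Big)=\frac{1}{r_1-M}\int_{r_1}^{+\infty}(\rho-M)^2(\partial_\rho\psi)^2\,d\rho,
\end{equation*}
and then integrate over $\mathbb S^2$ and insert the energy bound to obtain $\int_{\mathbb S^2}\psi^2(r_1,\omega)\,d\omega\le \frac{C}{r_1-M}E_1\tau^{-2}$. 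This is in fact slightly stronger than, and in the near-horizon range in which the lemma is applied implies, the stated bound $\frac{C}{(r_1-M)^2}\frac{E_1}{\tau^2}$; the literal power $(r_1-M)^{-2}$ also drops out of the cruder variant in which one bounds $(\partial_\rho\psi)^2=D(\partial_\rho\psi)^2/D\le \frac{r_1^2}{(r_1-M)^2}D(\partial_\rho\psi)^2$ using the monotonicity $D(\rho)\ge D(r_1)$ and integrates to a fixed reference radius, controlling the boundary term by the far-region decay of the previous subsection.

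The main obstacle, and the reason the constant degenerates as $r_1\to M$, is precisely the degeneracy $D=(1-M/r)^2$ on $\mathcal H^+$: the reciprocal weight $(\rho-M)^{-2}$ is \emph{non-integrable} at $r=M$, so one can never integrate up to the horizon, and the $T$-flux controls $\partial_r\psi$ only through the vanishing weight $D$. This is what forces the outward integration and produces the inverse power of $(r_1-M)$, entirely consistent with the non-decay and blow-up of the transversal derivatives on $\mathcal H^+$. A second point one must respect is that \emph{only} the degenerate $T$-energy is available here: the non-degenerate $N$-energy does not decay for generic data, owing to the $l=0$ conservation law of Theorem \ref{t3}, so replacing $D(\partial_r\psi)^2$ by $(\partial_r\psi)^2$ would immediately destroy the $\tau^{-2}$ rate. (If one prefers to avoid invoking $\psi\to0$ pointwise towards $\mathcal I^+$, the same factor can be extracted by applying the first Hardy inequality of \cite{aretakis1} to $\int_{r_1}^{+\infty} f(\rho)\,d\rho$, with $f(\rho)=\int_{\mathbb S^2}\psi^2\,d\omega$, after the Cauchy--Schwarz step in the joint $(\rho,\omega)$ integral.)
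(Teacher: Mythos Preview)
Your proposal is correct and follows essentially the same approach as the paper: integrate $\partial_\rho\psi$ outward from $r_1$, apply Cauchy--Schwarz, and feed in the $\tau^{-2}$ decay of the degenerate $T$-flux from Proposition~\ref{tdecay}. In fact, what you call the ``cruder variant'' --- starting from \eqref{1pointwise}, inserting $D(\rho)/D(\rho)$, and using the monotonicity $D(\rho)\ge D(r_1)$ to pull out $D(r_1)^{-1}=r_1^2(r_1-M)^{-2}$ --- is \emph{exactly} the paper's argument; your primary version with the weight $(\rho-M)^{2}$ in the Cauchy--Schwarz step is a minor sharpening yielding $(r_1-M)^{-1}$ instead of $(r_1-M)^{-2}$, which indeed implies the stated bound in the region $\{M\le r\le R_1\}$ where the lemma is subsequently used.
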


\begin{proof}
Using \eqref{1pointwise} we obtain 
\begin{equation*}
\begin{split}
\int_{\mathbb{S}^{2}}{\psi^{2}(r_{1},\omega)d\omega}&\leq \frac{C}{r_{1}}\int_{\tilde{\Sigma}_{\tau}\cap\left\{r\geq r_{1}\right\}}{J_{\mu}^{N}[\psi]n^{\mu}_{\tilde{\Sigma}_{\tau}}}=\frac{C}{r_{1}}\int_{\tilde{\Sigma}_{\tau}\cap\left\{r\geq r_{1}\right\}}{
\frac{D(\rho)}{D(\rho)}J_{\mu}^{N}[\psi]n^{\mu}_{\tilde{\Sigma}_{\tau}}}\\ 
&\leq \frac{C}{r_{1}D(r_{1})}\int_{\tilde{\Sigma}_{\tau}\cap\left\{r\geq r_{1}\right\}}{D(\rho)J_{\mu}^{N}[\psi]n^{\mu}_{\tilde{\Sigma}_{\tau}}}\\ &\leq
\frac{C}{(r_{1}-M)^{2}}\int_{\tilde{\Sigma}_{\tau}}{J_{\mu}^{T}[\psi]n^{\mu}_{\tilde{\Sigma}_{\tau}}}\leq\frac{C}{(r_{1}-M)^{2}}\frac{E_{1}}{\tau^{2}}.
\end{split}
\end{equation*}
\end{proof}

\begin{lemma}
There exists a constant $C$ which depends only on $M, R_{1}$ such that for all $r_{0}\in[M, R_{1}]$, $\a >0$ and  solutions $\psi$ of the wave equation, we have
\begin{equation*}
\begin{split}
\int_{\mathbb{S}^{2}}{\psi^{2}(r_{0},\omega)d\omega}&\leq CE_{1}\frac{1}{\tau^{2-2\a}}+C\sqrt{E_{1}}\frac{1}{\tau}\sqrt{\int_{\tilde{\Sigma}_{\tau}\cap\left\{r_{0}\leq r\leq r_{0}+\tau^{-\a}\right\}}{\!\!(\partial_{\rho}\psi)^{2}}}.
\end{split}
\end{equation*}
\label{2lemmadecay}
\end{lemma}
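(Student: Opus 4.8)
The plan is to transport the estimate of Lemma \ref{1lemmadecay}, which is useless as $r_{1}\to M$ because of its $(r_{1}-M)^{-2}$ factor, from a slightly detached sphere down to the sphere $r_{0}$ by the fundamental theorem of calculus in the radial variable $\rho$. Concretely, for each fixed $\omega$ I would write, with $r_{1}=r_{0}+\tau^{-\a}$,
\begin{equation*}
\psi^{2}(r_{0},\omega)=\psi^{2}(r_{1},\omega)-2\int_{r_{0}}^{r_{1}}{\psi\,\partial_{\rho}\psi\,d\rho},
\end{equation*}
and then integrate over $\mathbb{S}^{2}$. For $\tau\geq 1$ we have $r_{1}\in[M,R_{1}+1]$, so the whole argument stays in a fixed compact $r$-range, where the induced volume element of $\tilde{\Sigma}_{\tau}$ is comparable, with constants depending only on $M$ and $R_{1}$, to the flat measure $d\rho\,d\omega$ (the Jacobian being $\sim\rho^{2}$ with $\rho$ bounded above and below).

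For the boundary term I would apply Lemma \ref{1lemmadecay} at $r_{1}$. Since $r_{0}\geq M$ we have $r_{1}-M\geq\tau^{-\a}$, hence $(r_{1}-M)^{-2}\leq\tau^{2\a}$ and
\begin{equation*}
\int_{\mathbb{S}^{2}}{\psi^{2}(r_{1},\omega)d\omega}\leq\frac{C}{(r_{1}-M)^{2}}\frac{E_{1}}{\tau^{2}}\leq CE_{1}\frac{1}{\tau^{2-2\a}},
\end{equation*}
which is precisely the first term of the asserted bound. This is where the choice $r_{1}=r_{0}+\tau^{-\a}$ enters: it balances the $\tau^{2\a}$ blow-up of Lemma \ref{1lemmadecay} against the shrinking width $\tau^{-\a}$ of the slab, and the free parameter $\a$ will be optimised later when this lemma is fed into the pointwise decay argument.

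For the cross term I would apply Cauchy--Schwarz in $d\rho\,d\omega$ over the slab $\{r_{0}\leq r\leq r_{1}\}$, separating $\psi$ from $\partial_{\rho}\psi$:
\begin{equation*}
\left|2\int_{\mathbb{S}^{2}}\int_{r_{0}}^{r_{1}}{\psi\,\partial_{\rho}\psi\,d\rho\,d\omega}\right|\leq 2\left(\int_{\mathbb{S}^{2}}\int_{r_{0}}^{r_{1}}{\psi^{2}\,d\rho\,d\omega}\right)^{1/2}\left(\int_{\mathbb{S}^{2}}\int_{r_{0}}^{r_{1}}{(\partial_{\rho}\psi)^{2}\,d\rho\,d\omega}\right)^{1/2}.
\end{equation*}
By the volume-element comparison above, the second factor is comparable to $\big(\int_{\tilde{\Sigma}_{\tau}\cap\{r_{0}\leq r\leq r_{0}+\tau^{-\a}\}}(\partial_{\rho}\psi)^{2}\big)^{1/2}$ and I would simply leave it as it stands. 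For the first factor I would exploit that $\rho$ is bounded on the slab, convert to the weighted hypersurface integral, and close with the first Hardy inequality of Section 6 of \cite{aretakis1} together with the $T$-energy decay of Proposition \ref{tdecay}:
\begin{equation*}
\int_{\mathbb{S}^{2}}\int_{r_{0}}^{r_{1}}{\psi^{2}\,d\rho\,d\omega}\leq C\int_{\tilde{\Sigma}_{\tau}}{\frac{\psi^{2}}{\rho^{2}}}\leq C\int_{\tilde{\Sigma}_{\tau}}{J^{T}_{\mu}[\psi]n^{\mu}_{\tilde{\Sigma}_{\tau}}}\leq CE_{1}\frac{1}{\tau^{2}},
\end{equation*}
so that the first factor is bounded by $C\sqrt{E_{1}}\,\tau^{-1}$. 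Adding the boundary and cross-term estimates yields the claim. The only point requiring genuine care, rather than a true obstacle, is the passage between the flat radial measure produced by the fundamental theorem of calculus and the induced volume element defining the spacetime integrals; on the fixed compact range $M\leq r\leq R_{1}+1$ these are uniformly comparable, which is exactly why the constant depends only on $M$ and $R_{1}$.
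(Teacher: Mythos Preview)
Your proof is correct and follows essentially the same route as the paper: set $r_{1}=r_{0}+\tau^{-\a}$, use the fundamental theorem of calculus in $\rho$ (what the paper calls ``Stokes' theorem''), bound the boundary term via Lemma~\ref{1lemmadecay} using $r_{1}-M\geq\tau^{-\a}$, and handle the cross term by Cauchy--Schwarz, the first Hardy inequality, and the $T$-energy decay of Theorem~\ref{t4}. Your version is in fact more explicit than the paper's about the volume-element comparison and the reason the constant depends only on $M,R_{1}$.
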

\begin{proof}
We consider the hypersurface $\gamma_{\a}=\left\{r=r_{0}+\tau^{-\a}\right\}$.
\begin{figure}[H]
	\centering
		\includegraphics[scale=0.11]{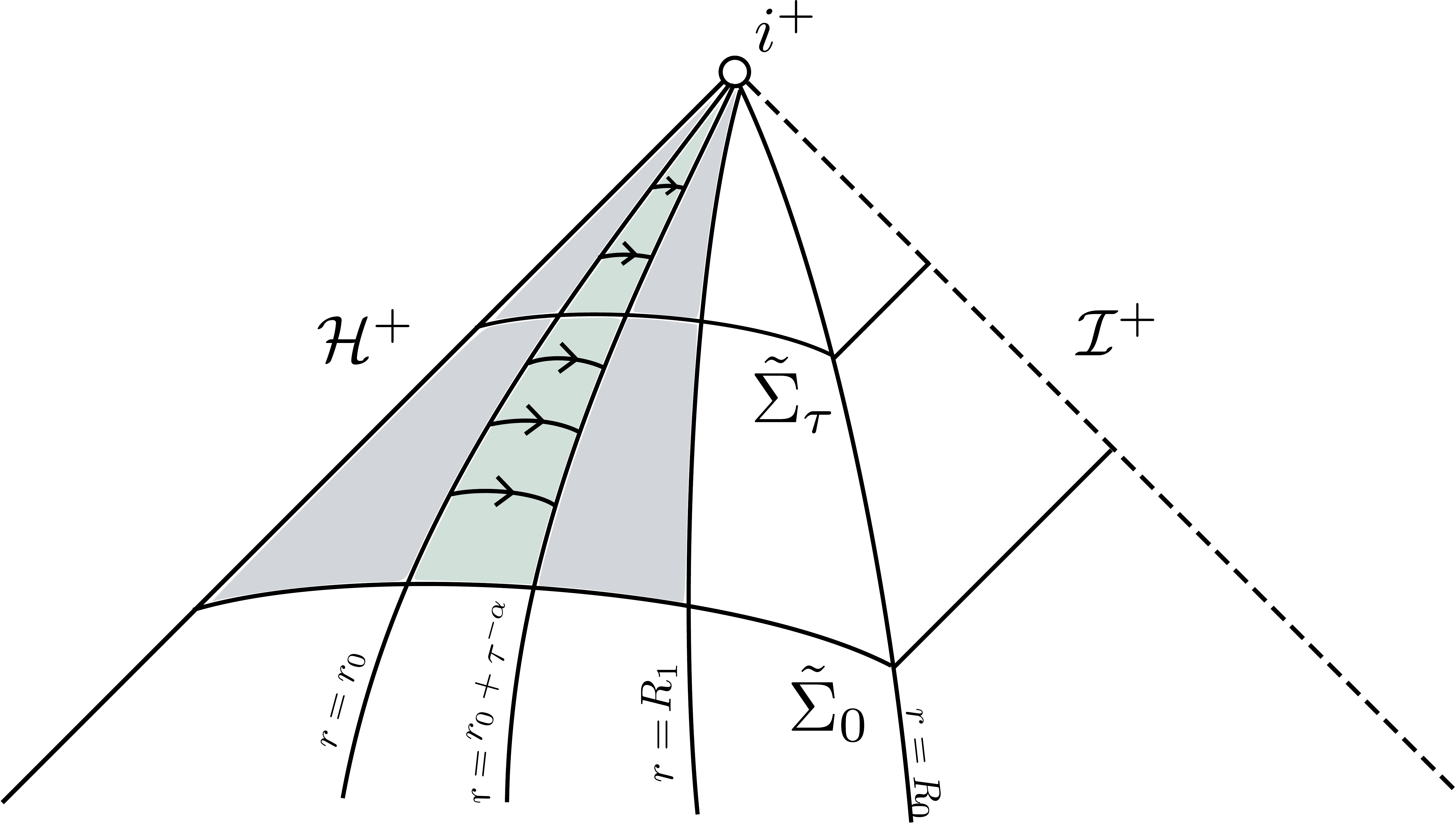}
		\label{fig:2decaypoint}
\end{figure}
Then by applying Stokes' theorem for the hypersurfaces shown in the figure above we obtain
\begin{equation*}
\begin{split}
\int_{\mathbb{S}^{2}}{\psi^{2}(r_{0},\omega)}\leq\int_{\mathbb{S}^{2}}{\psi^{2}(r_{0}+\tau^{-\a},\omega)}+C\int_{\tilde{\Sigma}_{\tau}\cap\left\{r_{0}\leq r\leq r_{0}+\tau^{-\a}\right\}}{\psi(\partial_{\rho}\psi)}.
\end{split}
\end{equation*}
For the first term on the right hand side we apply Lemma \ref{1lemmadecay} (note that $M<r_{0}+\tau^{-\a}$). The lemma now follows from Cauchy-Schwarz applied for the second term, the first Hardy inequality of \cite{aretakis1} and Theorem   \ref{t4}.
\end{proof}

\textbf{The case $l=0$}

We first assume that $\psi$ is spherically symmetric. Then we have the pointwise bound
\begin{equation*}
\begin{split}
\left|\partial_{\rho}\psi\right|\leq C\sqrt{\tilde{E}_{6}},
\end{split}
\end{equation*}
in $\left\{M\leq r\leq R_{1}\right\}$, where
\begin{equation*}
\begin{split}
\tilde{E}_{6}=\left\|\partial_{r}\psi\right\|_{L^{\infty}\left(\tilde{\Sigma}_{0}\right)}^{2}+E_{4}[\psi]+E_{4}[T\psi].
\end{split}
\end{equation*}
This can be easily proved by using the method of characteristics and  integrating along the characteristic $u=c$ the wave equation (expressed in null coordinates).
Hence  Lemma \ref{2lemmadecay} for $\a =\frac{2}{5}$ gives
\begin{equation*}
\begin{split}
\int_{\mathbb{S}^{2}}{\psi^{2}(r_{0},\omega)d\omega}&\leq CE_{1}\frac{1}{\tau^{\frac{6}{5}}}+C\sqrt{E_{1}}\sqrt{\tilde{E}_{6}}\frac{1}{\tau^{\frac{6}{5}}}\leq CE_{6}\frac{1}{\tau^{\frac{6}{5}}},
\end{split}
\end{equation*}
where 
$E_{6}=E_{1}+\tilde{E_{6}}$. Since $\psi$ is spherically symmetric we obtain
\begin{equation}
\begin{split}
\psi^{2}\leq CE_{6}\frac{1}{\tau^{\frac{6}{5}}}.
\end{split}
\label{l=0pointdecay}
\end{equation}

\textbf{The case $l=1$}

Suppose that $\psi$ is supported on $l=1$. Then from Lemma \ref{2lemmadecay} for $\a=\frac{1}{4}$ we obtain
\begin{equation*}
\begin{split}
\int_{\mathbb{S}^{2}}{\psi^{2}(r_{0},\omega)d\omega}&\leq CE_{1}\frac{1}{\tau^{\frac{3}{2}}}+C\sqrt{E_{1}}\frac{1}{\tau}\sqrt{\int_{\tilde{\Sigma}_{\tau}\cap\left\{r_{0}\leq r\leq r_{0}+\tau^{-\a}\right\}}{\!\!(\partial_{\rho}\psi)^{2}}}\\
&\leq CE_{1}\frac{1}{\tau^{\frac{3}{2}}}+C\sqrt{E_{1}}\sqrt{E_{2}}\frac{1}{\tau^{\frac{3}{2}}}\leq C\tilde{E}_{7}\frac{1}{\tau^{\frac{3}{2}}}
\end{split}
\end{equation*}
where we have used Proposition \ref{l1decay} and $\tilde{E}_{7}=E_{1}+E_{2}$. Therefore, by the Sobolev inequality on $\mathbb{S}^{2}$ we have 
\begin{equation*}
\begin{split}
\psi^{2}\leq CE_{7}\frac{1}{\tau^{\frac{3}{2}}}
\end{split}
\end{equation*}
in $\left\{M\leq r\leq R_{1}\right\}$, where $E_{7}=\sum_{\left|k\right|\leq 2}\tilde{E}_{7}{\left[\Omega^{k}\psi\right]}$.

\textbf{The case $l\geq 2$}

Suppose that $\psi$ is supported on $l\geq 2$. Then from \eqref{1pointwise} and Theorem \ref{energydecay} we have that there exists a constant $C$ which depends only on $M$ and $R_{1}$ such that 
\begin{equation*}
\begin{split}
\int_{\mathbb{S}^{2}}\psi^{2}\leq CE_{3}\frac{1}{\tau^{2}}
\end{split}
\end{equation*}
in $\left\{M\leq r\leq R_{1}\right\}$. By Sobolev we finally obtain
\begin{equation*}
\begin{split}
\psi^{2}\leq CE_{8}\frac{1}{\tau^{2}},
\end{split}
\end{equation*}
where $E_{8}=\sum_{\left|k\right|\leq 2}E_{3}{\left[\Omega^{k}\psi\right]}.$ This completes the proof of Theorem \ref{t6}.

\subsection{Applications}
\label{sec:Applications1}

We next derive improved decay for the waves $T\psi$, $TT\psi$.

\begin{proposition}

Fix $R_{1}$ such that $M<R_{1}$ and let $\tau\geq 1$. Let $E_{7},E_{8}$ be the quantities as defined in Section \ref{sec:PointwiseDecay}. Then there exists a constant $C$ that depends on $M, R_{1}$ and $\tilde{\Sigma}_{0}$ such that:
\begin{itemize}
	\item For all solutions $\psi$ to the wave equation we have 
	\begin{equation*}
	\left|T\psi\right|\leq C\sqrt{E_{7}[T\psi]}\frac{1}{\tau^{\frac{3}{4}}}, \ \ \left|TT\psi\right|\leq C\sqrt{E_{8}[TT\psi]}\frac{1}{\tau},
		\end{equation*}
	in $\left\{M\leq r\leq R_{1} \right\}$. 
	\item For all solutions $\psi$ to the wave equation which are supported on the frequencies $l\geq 1$ we have 
	\begin{equation*}
 \left|T\psi\right|\leq C\sqrt{E_{8}[T\psi]}\frac{1}{\tau},
 	\end{equation*}
	in $\left\{M\leq r\leq R_{1} \right\}$.

\end{itemize}

\label{ttpsi}
\end{proposition}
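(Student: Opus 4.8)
The plan is to exploit two facts already at our disposal: that $T=\partial_v$ is Killing, so $T\psi$ and $TT\psi$ again solve $\Box_g(\,\cdot\,)=0$ and remain supported on the same angular frequencies as $\psi$ (since $T$ is an endomorphism of the eigenspaces of $\lapp$), and that the conservation laws degenerate on $T$-derivatives, i.e.\ $H_l[T^m\psi]=0$ for every $l$ and every $m\geq 1$ (this is exactly the content behind Proposition \ref{tmpsi}). The second fact says that the low-frequency obstruction which caps the decay rate of $\psi$ is switched off for $T\psi$ and $TT\psi$. Combined with the improved higher-order $L^2$ estimates of part (2) of Theorem \ref{theorem3} — valid for $T^m\psi$ without the degeneracy factor $\chi_{\{k=l\}}$ and up to order $k\le l+m$ — this lets us treat $T^m\psi$ of angular frequency $l$, \emph{for decay purposes}, as though it were a solution supported on frequency $l+m$. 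Each application of $T$ shifts the effective frequency up by one, and the stated rates are precisely the pointwise rates of Theorem \ref{t6} read one (resp.\ two) level(s) higher.

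Concretely, for the second bullet I would take $\psi$ supported on $l\geq 1$ and decompose $T\psi=(T\psi)_1+(T\psi)_{\geq 2}$. The piece $(T\psi)_{\geq 2}$ is supported on $l\geq 2$, so Theorem \ref{t6} gives it directly the bound $C\sqrt{E_8[T\psi]}\,\tau^{-1}$. For the borderline piece $(T\psi)_1=T\psi_1$, which is only of frequency $l=1$, I would use $H_1[T\psi]=0$ together with part (2) of Theorem \ref{theorem3} to recover the integrated local energy decay of $\partial_r(T\psi_1)$ (Proposition \ref{inte2}), which is otherwise available only for $l\geq 2$; running the dyadic argument of Proposition \ref{energydecay} then yields $\int J^N_\mu[T\psi_1]n^\mu\lesssim \tau^{-2}$, and the pointwise bound $C\sqrt{E_8[T\psi]}\,\tau^{-1}$ follows from \eqref{1pointwise} and Sobolev embedding exactly as in the $l\geq 2$ case of Theorem \ref{t6}.

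For the first bullet the same mechanism is applied at the bottom frequency. Writing $T\psi=T\psi_0+(T\psi)_{\geq 1}$, the higher piece inherits the $l\geq 1$ rate $C\sqrt{E_7[T\psi]}\,\tau^{-3/4}$ from Theorem \ref{t6}, while for the spherically symmetric piece $T\psi_0$ the identity $H_0[T\psi]=0$ gives $\partial_r(T\psi_0)=-\frac{1}{M}T\psi_0$ on $\mathcal{H}^{+}$, so that the transversal derivative now decays on the horizon at the same rate as $T\psi_0$ itself. This removes the non-decaying input $\left\|\partial_r\psi\right\|_{L^\infty}$ that limited the $l=0$ analysis, and feeding the improved bound into Lemma \ref{2lemmadecay} (equivalently, running the $l\geq 1$ energy-decay argument for $T\psi_0$, now legitimate because the $l=0$ obstruction is absent) upgrades its rate to $\tau^{-3/4}$. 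For $TT\psi$ I would iterate once more: $(TT\psi)_{\geq 1}=T\big((T\psi)_{\geq 1}\big)$ is handled by the $l\geq 1$ bullet just proved, giving $\tau^{-1}$, and $TT\psi_0=T^2\psi_0$ is treated by the frequency shift with $m=2$ (part (2) of Theorem \ref{theorem3} with $l=0$, $m=2$ controls two transversal derivatives, matching an $l=2$ solution), so the $l\geq 2$ energy- and pointwise-decay arguments deliver the rate $\tau^{-1}$ with norm $E_8[TT\psi]$.

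The main obstacle is concentrated entirely in these borderline pieces — $l=1$ for $T\psi$ on $\{l\geq 1\}$, and $l=0$ for $T\psi$ and $TT\psi$ — where for $\psi$ itself the conservation law is an \emph{obstruction} to decay rather than a help. The real work is to verify that the vanishing $H_l[T^m\psi]=0$ genuinely upgrades the integrated local energy estimates of Propositions \ref{inte1} and \ref{inte2} (and hence the dyadic hierarchy feeding Propositions \ref{l1decay} and \ref{energydecay}) from frequency $l$ to frequency $l+m$; this is where part (2) of Theorem \ref{theorem3}, and the fact that one no longer needs the Poincar\'e inequality on $\mathcal{H}^{+}$ once the relevant $I_k$ integrals are bounded, are indispensable. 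Once the energy decay is upgraded, the conversion to the stated pointwise rates via \eqref{1pointwise}, Lemmas \ref{1lemmadecay}--\ref{2lemmadecay} and Sobolev embedding on $\mathbb{S}^{2}$ is routine.
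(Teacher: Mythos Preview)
Your proposal is correct and follows essentially the same strategy as the paper: the key mechanism is exactly that Theorem \ref{theorem3}(2) makes Propositions \ref{inte1} and \ref{inte2} (and hence the dyadic hierarchy of Propositions \ref{l1decay} and \ref{energydecay}) available without the usual angular-frequency restriction when applied to $T^m\psi$, so the pointwise arguments of Section \ref{sec:DecayNearMathcalH} for the cases $l\geq 1$ and $l\geq 2$ carry over verbatim to $T\psi$ and $TT\psi$ respectively. The paper's write-up is slightly more economical in that it does not decompose into frequency pieces at all: it simply asserts that Proposition \ref{inte1} holds for \emph{all} frequencies of $T\psi$ (invoking Theorem \ref{theorem3}(2) and the third Hardy inequality of \cite{aretakis1}), so the entire $l=1$ pointwise argument runs for $T\psi$ without splitting off $(T\psi)_{\geq 1}$; likewise for $TT\psi$ and the $l\geq 2$ argument. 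Your decomposition is harmless but unnecessary, and your alternative treatment of $T\psi_0$ via the pointwise identity $\partial_r(T\psi_0)=-\frac{1}{M}T\psi_0$ on $\mathcal{H}^+$, while correct, is superseded by the cleaner observation you yourself parenthetically note, namely that the $l\geq 1$ energy-decay argument is now directly applicable.
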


\begin{proof}
We first observe that Proposition \ref{inte1} holds for all frequencies when $\psi$ is replaced with $T\psi$. Indeed, we have shown that we can commute in this case with $\partial_{r}$ (see Theorem \ref{theorem3}, statement (2)). Therefore, using the third Hardy inequality of \cite{aretakis1} implies that \eqref{ninte1} and Proposition \ref{l1decay} hold for all frequencies when $\psi$ is replaced with $T\psi$. Hence, the  argument of Section \ref{sec:DecayNearMathcalH} for the case $l=1$ works for all frequencies when restricted for $T\psi$. 

Similarly, Proposition \ref{inte2} holds for all frequencies when $\psi$ is replaced with $TT\psi$ and so we can argue as in the case $l\geq 2$ above. The second part of the proposition follows in a similar way.

\end{proof}

\section{Higher Order  Estimates}
\label{sec:HigherOrderEstimates}

We finish this paper by obtaining energy and pointwise results for all the derivatives of $\psi$. We first derive decay for the local higher order (non-degenerate) energy of high frequencies and then pointwise decay, non-decay and blow-up results for generic solutions. We finally use a contradiction argument to obtain  blow-up results for the local higher order energy of low frequencies. The general form  of Theorems \ref{theo8}, \ref{theo9} of Section \ref{sec:TheMainTheorems} is proved at the end of this section.

\subsection{Decay of Higher Order Energy}
\label{sec:DecayOfHigherOrderEnergy}

\begin{theorem}
Fix $R_{1}$ such that $R_{1}>M$ and let $\tau\geq 1$. Let also $k,l\in\mathbb{N}$. Then there exists a constant $C$ which depend on $M,l, R_{1}$ and $\tilde{\Sigma}_{0}$ such that the following holds: For all solutions $\psi$ of the wave equation which are supported on the 
 angular frequencies greater or equal to $l$, there exist norms $\tilde{E}_{k,l}$ of the initial data of $\psi$ such that
\begin{itemize}
	\item $\displaystyle\int_{\tilde{\Sigma}_{\tau}\cap\left\{M\leq r\leq R_{1}\right\}}{J_{\mu}^{N}[\partial_{r}^{k}\psi]n_{\tilde{\Sigma}_{\tau}}^{\mu}}\leq C\tilde{E}_{k,l}^{2}\frac{1}{\tau^{2}}$ for all $k\leq l-2$,
	\item $\displaystyle\int_{\tilde{\Sigma}_{\tau}\cap\left\{M\leq r\leq R_{1}\right\}}{J_{\mu}^{N}[\partial_{r}^{l-1}\psi]n_{\tilde{\Sigma}_{\tau}}^{\mu}}\leq C\tilde{E}_{l-1,l}^{2}\frac{1}{\tau}$.
\end{itemize}
\label{hoe}
\end{theorem}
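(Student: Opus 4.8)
The plan is to prove the theorem by induction on the number $k$ of transversal derivatives, running at each level the same two-step argument (local integrated energy decay followed by a dyadic-sequence extraction) that was used for $\psi$ itself in Theorem \ref{t4}. The base case $k=0$ is precisely Theorem \ref{t4}: for $\psi$ supported on $l\geq 2$ (which is $k\leq l-2$ with $k=0$) the non-degenerate energy decays like $\tau^{-2}$, whereas for $l=1$ (the borderline $k=l-1$ with $k=0$) it decays only like $\tau^{-1}$. The entire point is that the dichotomy $k\leq l-2$ versus $k=l-1$ in the statement is the exact higher-order analogue of the dichotomy $l\geq 2$ versus $l=1$ in Theorem \ref{t4}, and the whole machinery of Section \ref{sec:EnergyDecay} is to be transplanted from $\psi$ to $\partial_{r}^{k}\psi$.

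First I would establish the higher-order local integrated energy decay estimate: for $\psi$ supported on frequencies $\geq l$ and every $k\leq l-1$,
\[
\int_{\tau_{1}}^{\tau_{2}}\left(\int_{\mathcal{A}\cap\tilde{\Sigma}_{\tau}}{J^{N}_{\mu}[\partial_{r}^{k}\psi]n^{\mu}_{\tilde{\Sigma}_{\tau}}}\right)d\tau\ \leq\ C\,\mathcal{I}_{k}[\psi](\tau_{1}),
\]
where $\mathcal{I}_{k}[\psi](\tau_{1})$ denotes a higher-order $N$-flux through $\tilde{\Sigma}_{\tau_{1}}$ assembled from the fluxes of $T^{i}\psi$ (for $i\leq k$) through $\tilde{\Sigma}_{\tau_{1}}$, the localized fluxes of $\partial_{r}^{j}\psi$ (for $j\leq k+1$) through $\mathcal{A}\cap\tilde{\Sigma}_{\tau_{1}}$, and the $r$-weighted term $\int_{\tilde{N}_{\tau_{1}}}r^{-1}(\partial_{v}\phi)^{2}$. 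This is the exact generalization of Propositions \ref{inte1} and \ref{inte2}, and it follows immediately from Theorem \ref{theorem3} and the coarea formula; the hypothesis $k\leq l-1$ is precisely what is needed for Theorem \ref{theorem3} to retrieve the non-degenerate $(\partial_{r}^{k+1}\psi)^{2}$ term present in $J^{N}_{\mu}[\partial_{r}^{k}\psi]n^{\mu}$. Alongside this I would record the uniform boundedness of the localized flux $\int_{\mathcal{A}\cap\tilde{\Sigma}_{\tau}}{J^{N}_{\mu}[\partial_{r}^{k}\psi]n^{\mu}_{\tilde{\Sigma}_{\tau}}}$, again a direct consequence of the $\Sigma_{\tau}$-flux term in Theorem \ref{theorem3}.

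For the borderline case $k=l-1$ (the $\tau^{-1}$ estimate) I would apply the displayed estimate over $[0,\tau]$ and combine it with boundedness of the flux exactly as in Proposition \ref{l1decay}: averaging over $[0,\tau]$ and using monotonicity yields the factor $\tau^{-1}$, the driving term $\mathcal{I}_{l-1}[\psi]$ being controlled uniformly by the initial data (here Theorem \ref{t4} supplies the global $T^{i}\psi$-pieces and the first statement of Theorem \ref{theorem3} the localized $\partial_{r}^{j}\psi$-pieces). For the case $k\leq l-2$ (the $\tau^{-2}$ estimate) I would run the full two-step dyadic argument of Proposition \ref{energydecay}: one first produces a dyadic sequence $\tau_{n}$ along which $\mathcal{I}_{k}[\psi](\tau_{n})\leq E\,\tau_{n}^{-1}$, by a \emph{second} local integrated decay estimate for $\mathcal{I}_{k}[\psi]$ itself (obtained by commuting the first estimate once more with $T$ and with $\partial_{r}$, and invoking the level-$(k+1)$ integrated decay of $\partial_{r}^{k+1}\psi$), and then feeds this into the first estimate over the dyadic intervals $[\tau_{n},\tau_{n+1}]$, together with $b\,\tau_{n+1}\leq\tau_{n+1}-\tau_{n}$ and boundedness. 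The norms $\tilde{E}_{k,l}$ merely collect the finitely many higher-order initial data quantities that appear as the ultimate right-hand sides.

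The main obstacle, and the reason for the loss at $k=l-1$, is exactly this second dyadic step. To upgrade $\tau^{-1}$ to $\tau^{-2}$ at level $k$ one needs integrated decay for the driving term $\mathcal{I}_{k}[\psi]$, and the single piece that cannot be controlled uniformly is the localized flux of $\partial_{r}^{k+1}\psi$: its integrated decay requires Theorem \ref{theorem3} to hold \emph{non-degenerately} at level $k+1$, i.e. for $\psi$ supported on frequencies $\geq k+2$, which forces $k\leq l-2$. At $k=l-1$ this fails, since the relevant estimate is Theorem \ref{theorem3} at level $l$, where the factor $\chi_{\{k=l\}}=0$ signals the loss of the angular flux $\left|\nabb\partial_{r}^{l}\psi\right|^{2}$ on $\mathcal{H}^{+}$; this degeneracy is precisely the manifestation of the conservation law of Theorem \ref{t3}, which obstructs any decay faster than $\tau^{-1}$ for $\partial_{r}^{l-1}\psi$. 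Hence the two regimes in the statement are sharp and cannot be merged.
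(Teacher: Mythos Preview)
Your proposal is correct and follows essentially the same route as the paper: use the spacetime bound of Theorem~\ref{theorem3} (valid for $k\leq l-1$) together with a dyadic extraction to get $\tau^{-1}$ decay for all $k\leq l-1$, then iterate once more (which now requires the $(k+1)$-level estimate, hence $k\leq l-2$) to upgrade to $\tau^{-2}$. Your identification of the obstruction at $k=l-1$ with the $\chi_{\{k=l\}}$ degeneracy is exactly right.

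Two small points where the paper differs from your write-up. First, the paper separates the compact region $\{M\leq r\leq R_1\}$ into a neighbourhood $\mathcal{A}$ of $\mathcal{H}^{+}$ and the intermediate strip $\{r_0\leq r\leq R_1\}$; on the latter it invokes local elliptic estimates and $T$-commutation to reduce to the already-established decay of $J^{N}[T^{i}\psi]$, and only runs the Theorem~\ref{theorem3}/dyadic argument on $\mathcal{A}$. You work only on $\mathcal{A}$ and never say how the strip is handled; this is a minor omission but should be mentioned. Second, the $r$-weighted null-infinity term $\int_{\tilde{N}_{\tau}}r^{-1}(\partial_{v}\phi)^{2}$ that you include in $\mathcal{I}_{k}$ is unnecessary here: the result is local, and the global $J^{N}[T^{i}\psi]$-fluxes that appear on the right-hand side of Theorem~\ref{theorem3} already decay by Theorem~\ref{t4}, so no further $r^{p}$-hierarchy is needed at level $k\geq 1$. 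The paper's version is therefore a bit leaner, but your approach would also go through.
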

\begin{proof}
By commuting with T and applying local elliptic estimates and previous decay results, the above  integrals decay  on $\tilde{\Sigma}_{\tau}\cap\left\{r_{0}\leq r\leq R_{1}\right\}$ where $r_{0}>M$. So it suffices to prove the above result for $\tilde{\Sigma}_{\tau}\cap\mathcal{A}$, where $\mathcal{A}$ is a $\varphi^{T}_{\tau}$-invariant neighbourhood of $\mathcal{H}^{+}$. For we use the spacetime bound given by  Theorem \ref{theorem3} which implies that there exists a dyadic sequence $\tau_{n}$ such that for all $k\leq l-1$ we have
\begin{equation}
\begin{split}
\int_{\tilde{\Sigma}_{\tau_{n}}\cap\mathcal{A}}{J_{\mu}^{N}[\partial_{r}^{k}\psi]n^{\mu}_{\tilde{\Sigma}_{\tau_{n}}}}\leq CK_{l}\frac{1}{\tau_{n}},
\end{split}
\label{hoed1}
\end{equation}
where
\begin{equation*}
\begin{split}
K_{l}=\sum_{i=0}^{l}\displaystyle\int_{\tilde{\Sigma}_{0}}{J_{\mu}^{N}\left[T^{i}\psi\right]n^{\mu}_{\tilde{\Sigma}_{0}}}+\sum_{i=1}^{l}\int_{\tilde{\Sigma}_{0}\cap\mathcal{A}}{J_{\mu}^{N}\left[\partial^{i}_{r}\psi\right]n^{\mu}_{\tilde{\Sigma}_{0}}}.
\end{split}
\end{equation*}
Then, by Theorem \ref{theorem3} again we have for any $\tau$ such that $\tau_{n}\leq \tau\leq \tau_{n+1}$ 
\begin{equation*}
\begin{split}
\int_{\tilde{\Sigma}_{\tau}\cap\mathcal{A}}{J_{\mu}^{N}[\partial_{r}^{k}\psi]n^{\mu}_{\tilde{\Sigma}_{\tau}}}&\leq C\sum_{i=0}^{k}\displaystyle\int_{\tilde{\Sigma}_{\tau_{n}}}{\!\!J_{\mu}^{N}\left[T^{i}\psi\right]n^{\mu}_{\tilde{\Sigma}_{\tau_{n}}}}\!\!+C\sum_{i=1}^{k}\int_{\tilde{\Sigma}_{\tau_{n}}\cap\mathcal{A}}{\!J_{\mu}^{N}\left[\partial^{i}_{r}\psi\right]n^{\mu}_{\tilde{\Sigma}_{\tau_{n}}}}\leq C E\frac{1}{\tau_{n}}\lesssim CE\frac{1}{\tau},
\end{split}
\end{equation*}
where $E$ depends only on the initial data.  Suppose now that $k\leq l-2$. We apply Theorem \ref{theorem3} for the dyadic intervals $[\tau_{n}, \tau_{n-1}]$ and we obtain
\begin{equation*}
\begin{split}
\int_{\mathcal{A}}{J_{\mu}^{N}[\partial_{r}^{k}\psi]n_{\Sigma}^{\mu}}\leq C\sum_{i=0}^{l-1}\displaystyle\int_{\tilde{\Sigma}_{\tau_{n-1}}}{\!\!\!J_{\mu}^{N}\left[T^{i}\psi\right]n^{\mu}_{\tilde{\Sigma}_{\tau_{n-1}}}}\!\!+\sum_{i=1}^{l-1}\int_{\tilde{\Sigma}_{\tau_{n-1}}\cap\mathcal{A}}{\!\!\!J_{\mu}^{N}\left[\partial^{i}_{r}\psi\right]n^{\mu}_{\tilde{\Sigma}_{\tau_{n-1}}}}.
\end{split}
\end{equation*}
However, the right hand side has been shown to decay like $\tau^{-1}$ and thus a similar argument as above gives us the improved decay for all $k\leq l-2$.
\end{proof}

\subsection{Higher Order Pointwise Estimates}
\label{sec:HigherOrderPointwiseEstimates}

The next theorem provides pointwise results for the derivatives transversal to $\hh$ of $\psi$.
\begin{theorem}
Fix $R_{1}$ such that $R_{1}>M$ and let $\tau\geq 1$. Let also $k,l,m\in\mathbb{N}$. Then, there exist constants $C$ which depend on $M,l, R_{1}$ and $\tilde{\Sigma}_{0}$ such that the following holds: For all solutions $\psi$ of the wave equation which are supported on  angular frequencies greater or equal to $l$, there exist norms $E_{k,l}$ of the initial data  $\psi$ such that
\begin{itemize}
	\item $\displaystyle\left|\partial_{r}^{k}\psi\right|\leq CE_{k,l}\displaystyle\frac{1}{\tau}$ in $\left\{M\leq r\leq R_{1}\right\}$ for all $k\leq l-2$,
	\item $\displaystyle\left|\partial_{r}^{l-1}\psi\right|\leq CE_{l-1,l}\displaystyle\frac{1}{\tau^{\frac{3}{4}}}$ in $\left\{M\leq r\leq R_{1}\right\}$,
		\item $\displaystyle\left|\partial_{r}^{l}\psi\right|\leq CE_{l,l}\displaystyle\frac{1}{\tau^{\frac{1}{4}}}$ in $\left\{M\leq r\leq R_{1}\right\}$.
		\end{itemize}
\label{hp3}
\end{theorem}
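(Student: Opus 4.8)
The plan is to upgrade the higher-order \emph{energy} decay of Theorem \ref{hoe} to pointwise decay for the transversal derivatives $\partial_r^k\psi$, using the same ``$L^2$ on the spheres plus Sobolev'' mechanism that produced Theorem \ref{t6} from the energy estimates (cf.\ \eqref{1pointwise}). I work throughout in the compact slab $\mathcal{A}=\{M\le r\le R_1\}$. Away from $\hh$, i.e.\ at the outer wall $r=R_1$, the same $T$-commutation and local elliptic argument used at the start of the proof of Theorem \ref{hoe}, together with the degenerate $T$-energy decay of Proposition \ref{tdecay}, gives $\int_{\mathbb{S}^2}(\partial_r^k\psi)^2(R_1,\omega)\,d\omega\le C\tau^{-2}$, which is subleading relative to all three target rates. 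Since each $\Omega_i$ commutes with $\partial_r$ and preserves the angular frequency support, the Sobolev inequality on $\mathbb{S}^2$ reduces the theorem to bounding $\int_{\mathbb{S}^2}(\partial_r^k\psi)^2(r_0,\omega)\,d\omega$ for $r_0\in[M,R_1]$, applied to $\partial_r^k\psi$ and to $\Omega^j\partial_r^k\psi$, $|j|\le 2$.

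The core step is a fundamental-theorem-of-calculus identity along $\tilde\Sigma_\tau$. Writing $\Phi=\partial_r^k\psi$ and integrating $\partial_\rho(\Phi^2)$ from $r_0$ to $R_1$,
\[
\int_{\mathbb{S}^2}\Phi^2(r_0,\omega)\,d\omega\;\le\;\int_{\mathbb{S}^2}\Phi^2(R_1,\omega)\,d\omega+C\int_{\tilde\Sigma_\tau\cap\mathcal{A}}|\Phi|\,|\partial_\rho\Phi|.
\]
Near $\hh$ the hypersurface radial field satisfies $\partial_\rho=\partial_r+fT$ for a bounded $f$, so $|\partial_\rho\Phi|^2\lesssim(\partial_r^{k+1}\psi)^2+(T\partial_r^k\psi)^2\lesssim J_\mu^N[\partial_r^k\psi]n^\mu$. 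Applying Cauchy--Schwarz to the last integral and using $(\partial_r^k\psi)^2\le J_\mu^N[\partial_r^{k-1}\psi]n^\mu$ (replaced by the first Hardy inequality of \cite{aretakis1} when $k=0$) together with $(\partial_r^{k+1}\psi)^2\le J_\mu^N[\partial_r^k\psi]n^\mu$, I bound the cross term by $\bigl(\int_\mathcal{A}J_\mu^N[\partial_r^{k-1}\psi]n^\mu\bigr)^{1/2}\bigl(\int_\mathcal{A}J_\mu^N[\partial_r^{k}\psi]n^\mu\bigr)^{1/2}$, at which point Theorem \ref{hoe} supplies the decay.

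Let $\tau^{-a_j}$ denote the decay of $\int_\mathcal{A}J^N_\mu[\partial_r^j\psi]n^\mu$, so that $a_j=2$ for $j\le l-2$, $a_{l-1}=1$, and $a_l=0$; the last (mere boundedness of the top energy) is precisely the $k=l$ case of Theorem \ref{theorem3}(1). The cross term is then of size $\tau^{-(a_{k-1}+a_k)/2}$, which combined with the $\tau^{-2}$ boundary term gives $\int_{\mathbb{S}^2}\Phi^2\lesssim\tau^{-2}$ for $k\le l-2$, $\lesssim\tau^{-3/2}$ for $k=l-1$, and $\lesssim\tau^{-1/2}$ for $k=l$; Sobolev converts these into the claimed rates $\tau^{-1}$, $\tau^{-3/4}$, $\tau^{-1/4}$. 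The only genuine obstacle is the endpoint $k=l$: there the top energy $J^N[\partial_r^l\psi]$ does \emph{not} decay, since $\partial_r^{l+1}\psi$ carries the conserved quantity $H_l[\psi]$ of Theorem \ref{t3} (the higher-order trapping obstruction), so one can only pair the bounded $\int_\mathcal{A}(\partial_r^{l+1}\psi)^2$ against the slowly decaying $\int_\mathcal{A}(\partial_r^l\psi)^2\sim\tau^{-1}$, and $\tau^{-1/4}$ is the sharp output of this interpolation. The same degeneration, shifted by one order, accounts for the loss between the $k\le l-2$ and $k=l-1$ rates.
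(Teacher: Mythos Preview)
Your argument is correct and essentially identical to the paper's: both reduce to the fundamental-theorem-of-calculus/Cauchy--Schwarz interpolation
\[
\int_{\mathbb{S}^{2}}(\partial_{r}^{k}\psi)^{2}(r_{0},\omega)\,d\omega
\;\lesssim\;
\Bigl(\int_{\tilde{\Sigma}_{\tau}\cap\mathcal{A}}(\partial_{r}^{k}\psi)^{2}\Bigr)^{1/2}
\Bigl(\int_{\tilde{\Sigma}_{\tau}\cap\mathcal{A}}(\partial_{\rho}\partial_{r}^{k}\psi)^{2}\Bigr)^{1/2},
\]
then feed in the rates $a_{j}$ from Theorem~\ref{hoe} and Theorem~\ref{theorem3}, and finish with $\Omega$-commutation plus Sobolev on $\mathbb{S}^{2}$. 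The only cosmetic difference is how the far boundary is disposed of: the paper multiplies by a cut-off $\delta$ supported in $\{r\le R_{1}+1\}$ so that the boundary term vanishes identically, whereas you keep the boundary term at $r=R_{1}$ and kill it by $T$-commutation, local elliptic estimates, and Proposition~\ref{tdecay}. Both devices yield the same $\tau^{-2}$ contribution, which is subleading in all three cases.
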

\begin{proof}
 Let $r_{0}$ such that $M\leq r_{0}\leq R_{1}$. We consider the cut-off $\delta:[M,R_{1}+1]\rightarrow [0,1]$ such that $\delta (r)=1,\text{ for } r\leq R_{1}+\frac{1}{4}$ and $\delta(r)=0, \text{ for } R_{1}+1/2\leq r \leq R_{1}+1$.
\begin{figure}[H]
	\centering
		\includegraphics[scale=0.15]{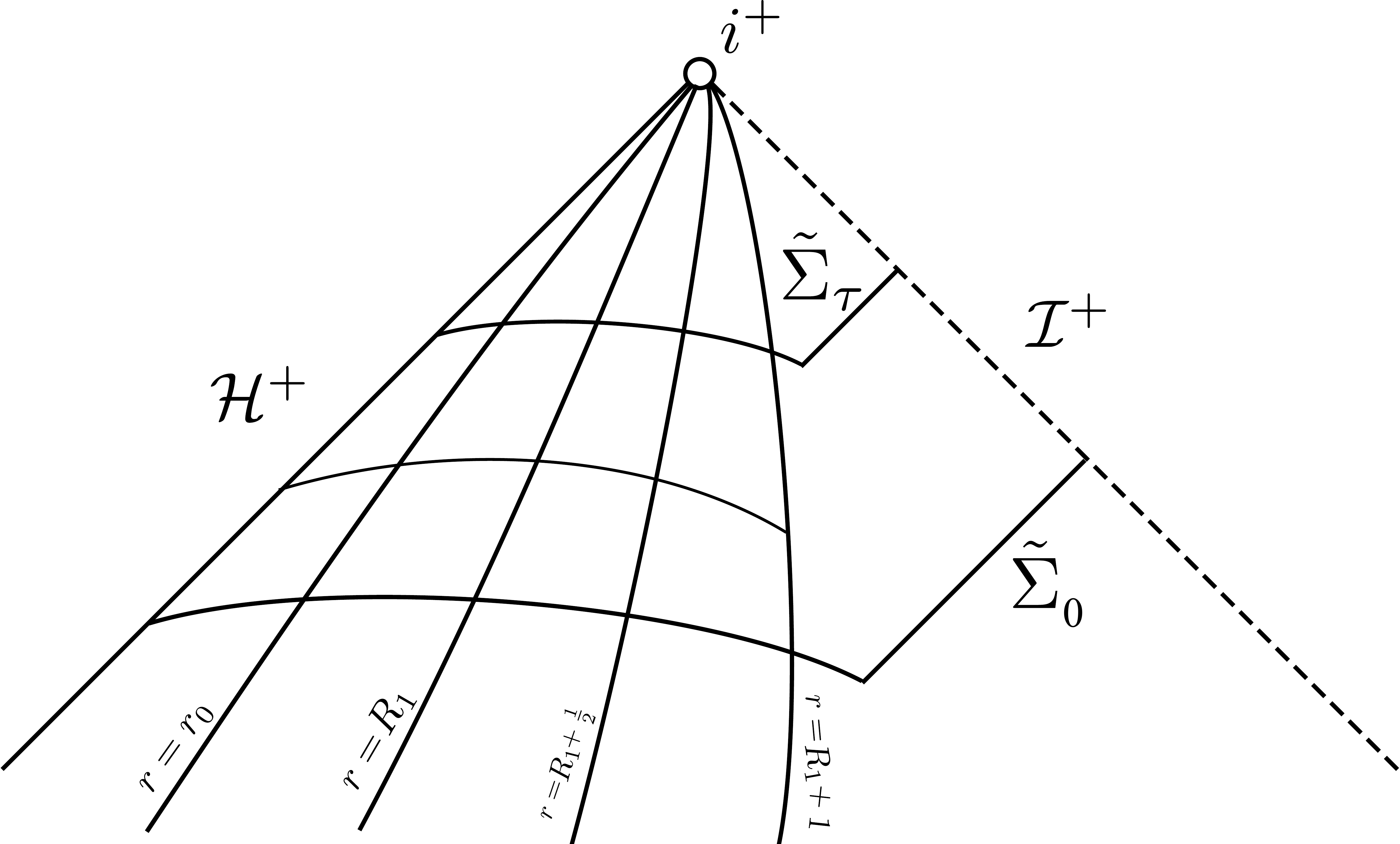}
	\label{fig:hp1}
\end{figure} 
Then,
\begin{equation*}
\begin{split}
\int_{\mathbb{S}^{2}}{\left(\partial_{r}^{k}\psi\right)^{2}(r_{0},\omega)d\omega}&=-2\int_{\tilde{\Sigma}_{\tau}\cap\left\{r_{0}\leq r\leq R_{1}+1\right\}}{\left(\partial_{r}^{k}(\delta\psi)\right)\left(\partial_{\rho}\partial_{r}^{k}(\delta\psi)\right)}\\
&\leq 2\!\left(\int_{\tilde{\Sigma}_{\tau}\cap\left\{r\leq R_{1}+1\right\}}{\!\!\!\left(\partial_{r}^{k}(\delta\psi)\right)^{2}}\right)^{\frac{1}{2}}\!\!\left(\int_{\tilde{\Sigma}_{\tau}\cap\left\{r\leq R_{1}+1\right\}}{\!\!\!\left(\partial_{\rho}\partial_{r}^{k}(\delta\psi)\right)^{2}}\right)^{\frac{1}{2}}\!\!\!.
\end{split}
\end{equation*}
In view of Theorem \ref{hoe} if $k\leq l-2$ then both integrals on the right hand side decay like $\tau^{-2}$. If $k=l-1$ then the first integral decays like $\tau^{-2}$ and the second like $\tau^{-1}$ and if $k=l$ the first integral decays like $\tau^{-1}$ and the second is bounded (Theorem \ref{theorem3}). Commuting with the angular momentum operators and using the Sobolev inequality yield the required pointwise estimates for $\partial_{r}^{k}\psi$ for $k\leq l$.

\end{proof}
One can in fact show that $\partial_{r}^{l}\psi$ decays like $\tau^{-\frac{1}{4}-\delta_{l}},$ where $\delta_{l}>0$ by using the argument of Section \ref{sec:PointwiseEstimates} for the case $l=0$, i.e. by proving that $\partial_{r}^{l+1}\psi$ is uniformly bounded (note that we can not obtain $\frac{3}{5}$ decay in view of the fact that we can not use the first Hardy inequality, which allowed us to obtain further decay for the zeroth order term $\psi$ in the previous section). We leave the details to the reader. 

Let now $H_{l}[\psi]$ be the function on $\mathcal{H}^{+}$ as defined in Theorem \ref{t3}. Since $H_{l}[\psi]$ is conserved along the null geodesics of $\hh$ whenever $\psi$ is supported on the angular frequency $l$, we can simply think of $H_{l}[\psi]$ as a function on $\mathbb{S}^{2}_{\scriptsize 0}=\tilde{\Sigma}_{0}\cap\hh$. We then have the following non-decay result.
\begin{proposition}
For all solutions $\psi$ supported on the angular frequency $l$ we  have  
\begin{equation*}
\partial_{r}^{l+1}\psi(\tau, \theta, \phi)\rightarrow H_{l}[\psi](\theta, \phi)
\end{equation*}
along $\hh$ and generically $H_{l}[\psi](\theta,\phi)\neq 0$ almost everywhere on $\mathbb{S}^{2}_{0}$.
\label{nondecay}
\end{proposition}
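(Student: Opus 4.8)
The plan is to separate the assertion into a convergence statement, $\partial_r^{l+1}\psi\to H_l[\psi]$ along $\mathcal{H}^{+}$, and a genericity statement, $H_l[\psi]\neq 0$ almost everywhere. The first I would derive immediately from the conservation law of Theorem \ref{t3} together with the transversal decay estimates of Theorem \ref{hp3}; the second rests on the observation that $H_l[\psi]$ is, at each $\tau$, an element of the eigenspace $E^l$ and hence real-analytic on the sphere.

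For the convergence, by Theorem \ref{t3} the quantity $H_l[\psi]=\partial_r^{l+1}\psi+\sum_{i=0}^l\beta_i\partial_r^i\psi$ is conserved along the null geodesics of $\mathcal{H}^{+}$, which are exactly the integral curves of $T$; thus for each fixed $(\theta,\phi)$ the map $\tau\mapsto H_l[\psi](\tau,\theta,\phi)$ is constant, equal to its value $H_l[\psi](\theta,\phi)$ on $\mathbb{S}^2_0$. Solving for the top derivative gives $\partial_r^{l+1}\psi(\tau,\theta,\phi)=H_l[\psi](\theta,\phi)-\sum_{i=0}^l\beta_i\partial_r^i\psi(\tau,\theta,\phi)$. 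Since $\psi$ supported on the fixed frequency $l$ is a fortiori supported on frequencies $\geq l$, and since $\{M\leq r\leq R_1\}$ contains $\mathcal{H}^{+}$, Theorem \ref{hp3} applies and shows that each lower-order term $\partial_r^i\psi$ (for $0\leq i\leq l$) tends to $0$ uniformly on $\mathbb{S}^2_0$ as $\tau\to+\infty$ (at rates $\tau^{-1}$, $\tau^{-3/4}$, $\tau^{-1/4}$ according to $i\leq l-2$, $i=l-1$, $i=l$). Letting $\tau\to+\infty$ kills the finite sum and yields $\partial_r^{l+1}\psi(\tau,\theta,\phi)\to H_l[\psi](\theta,\phi)$, which is the desired limit.

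For the genericity, note first that, because $\psi$ is supported on the fixed frequency $l$, the operator $\partial_r$ preserves the angular dependence, so each $\partial_r^i\psi$ and hence $H_l[\psi]$ restricts on $\mathbb{S}^2_0$ to an element of $E^l$, i.e.\ to the restriction of a harmonic homogeneous polynomial of degree $l$. Such a function is real-analytic, so if it is not identically zero its nodal set has measure zero; it therefore suffices to prove that $H_l[\psi]\not\equiv 0$ for generic data. I would phrase this as follows: the assignment of the initial data $(\psi_0,\psi_1)$ to the value $H_l[\psi]|_{\mathbb{S}^2_0}\in E^l$ is a continuous linear map, and the set of data with $H_l[\psi]\equiv 0$ is therefore a closed linear subspace; it remains to check that this subspace is proper, i.e.\ that the map is not identically zero, whence its complement is open and dense and constitutes the generic set.

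The main obstacle is precisely this last non-triviality check. The conservation law only asserts $T\,H_l[\psi]=0$, so it constrains the evolution of $H_l[\psi]$ but places no constraint on its value on $\mathbb{S}^2_0$; that value is a fixed linear combination of the transversal jet $\partial_r^i\psi|_{\mathbb{S}^2_0}$, $0\leq i\leq l+1$, in which the top derivative enters with coefficient $1$. Using the commutation identity \eqref{kcom} to express the admissible transversal jets at $\mathbb{S}^2_0$ through the freely prescribed Cauchy data, one sees that the top component is not annihilated by the lower-order relations that the wave equation forces on $\mathcal{H}^{+}$; hence there exist data realizing any prescribed non-zero element of $E^l$ as $H_l[\psi]|_{\mathbb{S}^2_0}$. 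This establishes properness of the subspace and completes the genericity claim, the convergence being immediate once Theorems \ref{t3} and \ref{hp3} are invoked.
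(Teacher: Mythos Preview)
Your convergence argument is identical to the paper's: invoke the conservation law (Theorem~\ref{t3}), solve for the top derivative, and use Theorem~\ref{hp3} to kill the lower-order terms in the limit.

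Your genericity argument takes a different route. The paper observes that on the spacelike Cauchy hypersurface $\tilde{\Sigma}_0$, the derivatives $\partial_r^i\psi|_{\mathbb{S}^2_0}$ for $i\geq 2$ are \emph{not} part of the freely prescribed initial data (they must be reconstructed through the wave equation), which obscures whether $H_l[\psi]$ can be made non-zero. To make the freedom manifest, the paper replaces $\tilde{\Sigma}_0$ by a Cauchy hypersurface $\tilde{\Sigma}_0^p$ which is ingoing null near $\mathcal{H}^+$; on such a piece $\partial_r$ is tangential, so the full transversal jet $\partial_r^i\psi|_{\mathbb{S}^2_0}$, $0\leq i\leq l+1$, is read off directly from the characteristic initial data, and $H_l[\psi]$ is then patently an arbitrary element of $E^l$. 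Your approach---reducing to non-triviality of a continuous linear map, and then invoking real-analyticity of spherical harmonics to pass from ``not identically zero'' to ``non-zero almost everywhere''---is sound and arguably more conceptual (and the real-analyticity step is a nice clarification of something the paper leaves implicit). However, your verification of non-triviality (``one sees that the top component is not annihilated'') stays at the level of an assertion: you would need to spell out that the map from the $(l+1)$-jet of the Cauchy data at $\mathbb{S}^2_0$ to the $(l+1)$-jet of $\psi$ in the $(v,r)$ frame is a linear isomorphism, so that $\partial_r^{l+1}\psi|_{\mathbb{S}^2_0}$ can indeed be prescribed freely while the lower $\partial_r^i\psi|_{\mathbb{S}^2_0}$ vanish. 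The paper's passage to the null hypersurface $\tilde{\Sigma}_0^p$ accomplishes exactly this without having to unwind the jet correspondence.
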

\begin{proof}
Since 
\begin{equation*}
\partial_{r}^{l+1}\psi(\tau, \theta, \phi) +\sum_{i=0}^{l}\beta_{i}\partial_{r}^{i}\psi(\tau, \theta, \phi)=H_{l}[\psi](\theta,\phi)
\end{equation*}
on $\hh$ and since all the terms in the sum on the left hand side decay (see Theorem \ref{hp3}) we take  $\partial_{r}^{l+1}\psi(\tau, \theta, \phi)\rightarrow H_{l}[\psi](\theta, \phi)$ on $\hh$. It suffices to show that generically $H_{l}[\psi](\theta,\phi)\neq 0$ almost everywhere on $\mathbb{S}^{2}_{0}$. We will in fact show that for generic solutions $\psi$ of the wave equation the function $H_{l}[\psi]$ is a generic eigenfunction of order $l$ of $\lapp$ on $\mathbb{S}^{2}_{0}$.

Note that the initial data prescribed on $\tilde{\Sigma}_{0}$ do not a priori  determine the function $H_{l}[\psi]$ on $\mathbb{S}^{2}_{0}$ unless $l=0$. Indeed,  $H_{l}[\psi]$ involves derivatives of order $k\leq l+1$ which are not tangential to $\tilde{\Sigma}_{0}$. For this reason we consider another Cauchy problem of the wave equation with initial data prescribed on $\tilde{\Sigma}_{0}^{p}$, where the hypersurface $\tilde{\Sigma}_{0}^{p}$ is as depicted below:
\begin{figure}[H]
	\centering
		\includegraphics[scale=0.11]{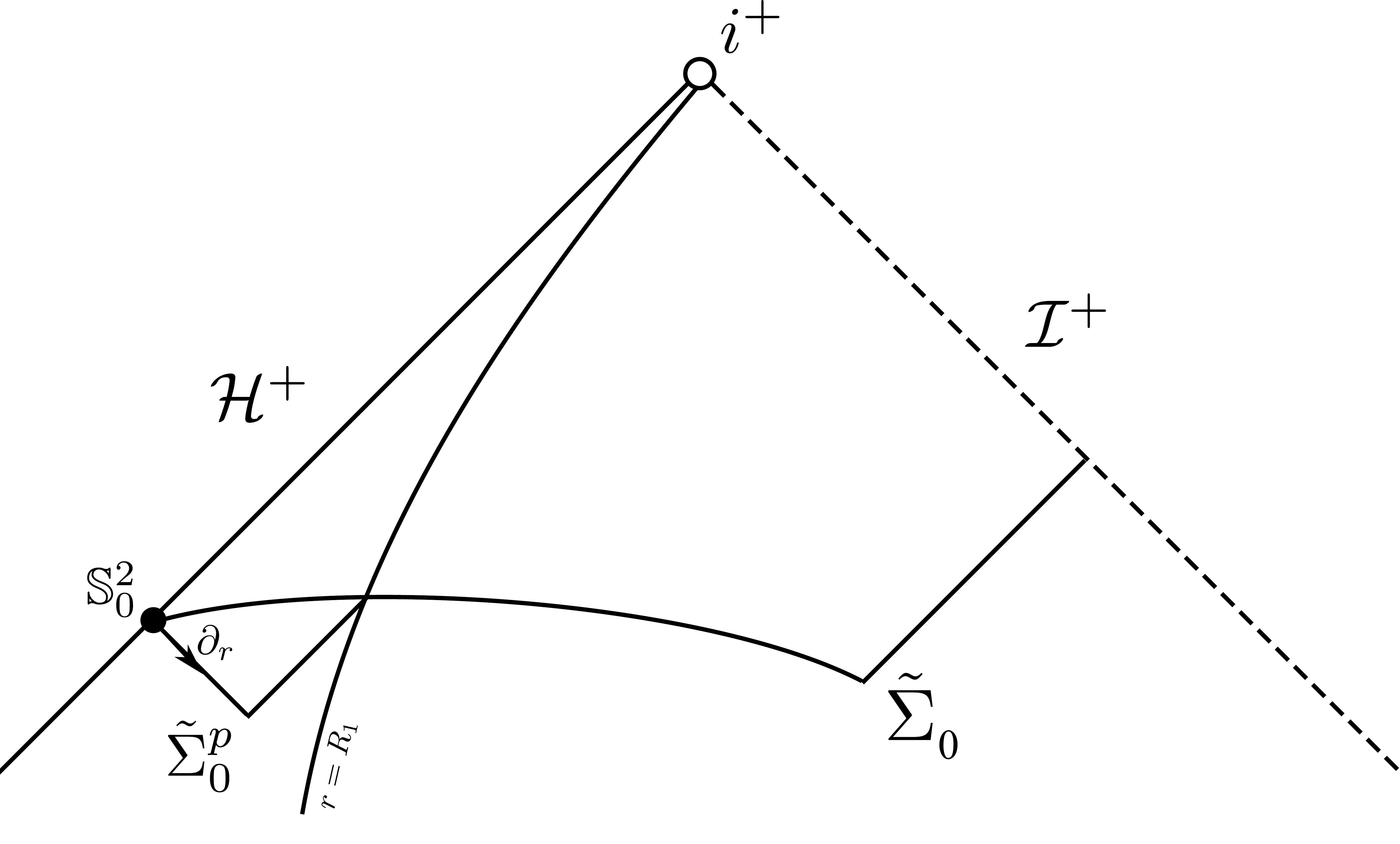}
	\label{fig:s0}
\end{figure} 
Note that the hypersurfaces $\tilde{\Sigma}_{0}$ and $\tilde{\Sigma}_{0}^{p}$ coincide for $r\geq R_{1}$. Any initial data set prescribed on $\tilde{\Sigma}_{0}$ gives rise to a unique initial data set of $\tilde{\Sigma}_{0}^{p}$ and vice versa. The Sobolev norms of the initial data on $\tilde{\Sigma}_{0}$ and $\tilde{\Sigma}_{0}^{p}$ can be compared using the pointwise and energy boundedness  of \cite{aretakis1}. Observe now that given initial data on $\tilde{\Sigma}_{0}^{p}$ the function $H_{l}[\psi]$ is completely determined on $\mathbb{S}^{2}_{0}$, since $H_{l}[\psi]$ involves only tangential to $\tilde{\Sigma}_{0}^{p}$ derivatives at $\mathbb{S}_{0}$. Therefore, generic initial data on $\tilde{\Sigma}_{0}^{p}$ give rise to generic eigenfunctions $H_{l}[\psi]$ of order $l$ of $\lapp$ on $\mathbb{S}_{0}$. Hence, for generic solutions $\psi$ of the wave equation the functions $H_{l}[\psi]$  do not vanish almost everywhere on $\mathbb{S}_{0}$.

\end{proof}

We next show that the above non-decay results imply that  higher order derivatives of generic solutions $\psi$ blow-up along $\mathcal{H}^{+}$. To make our argument clear we first consider the spherically symmetric case where $l=0$. 

\begin{proposition}
Let $k\in\mathbb{N}$ with $k\geq 2$. Then there exists a positive constant $c$ which depends only on $M$ such that for all spherically symmetric solutions $\psi$ to the wave equation  we have 
\begin{equation*}
\left|\partial_{r}^{k}\psi\right|\geq c \left|H_{0}[\psi]\right|\tau^{k-1}
\end{equation*}
asymptotically on $\mathcal{H}^{+}$.
\label{rrl=0}
\end{proposition}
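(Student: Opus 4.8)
The plan is to derive, by restricting the commuted wave equation to $\mathcal{H}^{+}$, a simple evolution equation in $\tau$ for $\partial_{r}^{k}\psi$ along the horizon, and then to integrate it inductively, gaining one power of $\tau$ at each step. Concretely, I would apply $\partial_{r}^{k-1}$ to $\Box_{g}\psi=0$ using the expansion \eqref{kcom} and restrict to $\mathcal{H}^{+}$. There $r=M$, $D(M)=D'(M)=0$, $D''(M)=2/M^{2}$, $R(M)=0$, $R'(M)=2/M^{2}$, and $\lapp\psi=0$ by spherical symmetry. Hence the coefficients of $\partial_{r}^{k+1}\psi$ and $\partial_{r}^{k}\psi$ vanish, while a short computation—combining the $i=2$ term $\binom{k-1}{2}D''(M)$ from the $D$-sum with the $i=1$ term $\binom{k-1}{1}R'(M)$ from the $R$-sum—shows that the coefficient of $\partial_{r}^{k-1}\psi$ equals $\frac{k(k-1)}{M^{2}}\neq 0$. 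This yields
\begin{equation*}
T\partial_{r}^{k}\psi+\frac{1}{M}T\partial_{r}^{k-1}\psi+\frac{k(k-1)}{2M^{2}}\partial_{r}^{k-1}\psi+\sum_{j=0}^{k-2}\left(\alpha_{j}T\partial_{r}^{j}\psi+\gamma_{j}\partial_{r}^{j}\psi\right)=0
\end{equation*}
on $\mathcal{H}^{+}$, for constants $\alpha_{j},\gamma_{j}$ depending only on $M$ and $k$. Writing the first two terms as $\partial_{\tau}\left(\partial_{r}^{k}\psi+\frac{1}{M}\partial_{r}^{k-1}\psi\right)$ isolates the driving term $-\frac{k(k-1)}{2M^{2}}\partial_{r}^{k-1}\psi$.

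Next I would prove by induction on $k$ that $\partial_{r}^{k}\psi(\tau,\cdot)=a_{k}\tau^{k-1}\left(1+o(1)\right)$ along $\mathcal{H}^{+}$, with $a_{k}=\left(-\frac{1}{2M^{2}}\right)^{k-1}k!\,H_{0}[\psi]$. The base case $k=1$ is precisely Proposition \ref{nondecay} for $l=0$, which gives $\partial_{r}\psi\to H_{0}[\psi]$, i.e.\ $a_{1}=H_{0}[\psi]$. For the inductive step, the inductive hypothesis gives $\partial_{r}^{j}\psi\sim a_{j}\tau^{j-1}$ for $1\leq j\leq k-1$, while $\psi=\partial_{r}^{0}\psi$ and $T\psi$ decay by Theorem \ref{t6}; hence each summand is $O(\tau^{k-3})$ and also $\frac{1}{M}T\partial_{r}^{k-1}\psi=O(\tau^{k-3})$, all strictly lower order. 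Integrating
\begin{equation*}
\partial_{\tau}\left(\partial_{r}^{k}\psi+\tfrac{1}{M}\partial_{r}^{k-1}\psi\right)=-\frac{k(k-1)}{2M^{2}}a_{k-1}\tau^{k-2}\left(1+o(1)\right)+O(\tau^{k-3})
\end{equation*}
from a fixed $\tau_{0}$ to $\tau$ produces $\partial_{r}^{k}\psi+\frac{1}{M}\partial_{r}^{k-1}\psi\sim-\frac{k}{2M^{2}}a_{k-1}\tau^{k-1}$; since $\frac{1}{M}\partial_{r}^{k-1}\psi\sim\tau^{k-2}$ is again lower order, this gives $\partial_{r}^{k}\psi\sim a_{k}\tau^{k-1}$ with $a_{k}=-\frac{k}{2M^{2}}a_{k-1}$, closing the induction and showing $a_{k}\neq 0$ whenever $H_{0}[\psi]\neq 0$.

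The desired lower bound $\left|\partial_{r}^{k}\psi\right|\geq c\,\left|H_{0}[\psi]\right|\,\tau^{k-1}$ then follows for $\tau$ large, with $c$ any constant strictly below $\left|a_{k}/H_{0}[\psi]\right|=\frac{k!}{(2M^{2})^{k-1}}$, and by Proposition \ref{nondecay} we have $H_{0}[\psi]\neq 0$ generically. The main obstacle I anticipate is twofold. First, one must verify carefully that the coefficient of $\partial_{r}^{k-1}\psi$ on $\mathcal{H}^{+}$ is exactly $\frac{k(k-1)}{M^{2}}$ and is nonvanishing: this is the precise analytic fingerprint of the absence of a redshift, since the would-be decay term $D\partial_{r}^{k+1}\psi$ is annihilated by $D(M)=0$, leaving the coupling to the lower transversal derivative unopposed. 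Second, one must keep strict bookkeeping that the omitted terms $\alpha_{j}T\partial_{r}^{j}\psi$ and $\gamma_{j}\partial_{r}^{j}\psi$ are genuinely of lower order in $\tau$, which rests on feeding in the decay of $\psi$ and $T\psi$ from Theorem \ref{t6} (and Theorem \ref{hp3}) together with the inductive hypothesis. Once these two points are secured, the integration is routine.
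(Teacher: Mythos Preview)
Your proposal is correct and follows essentially the same approach as the paper: restrict the commuted wave equation \eqref{kcom} to $\mathcal{H}^{+}$, identify the nonvanishing coefficient $\frac{k(k-1)}{M^{2}}$ of $\partial_{r}^{k-1}\psi$, and integrate inductively along $\mathcal{H}^{+}$ to gain one power of $\tau$ at each step. One small point of bookkeeping: the pointwise claim $T\partial_{r}^{j}\psi=O(\tau^{j-2})$ does not follow from the asymptotic $\partial_{r}^{j}\psi\sim a_{j}\tau^{j-1}$ alone, but is easily secured either by strengthening the induction hypothesis (the evolution equation at level $j$ expresses $T\partial_{r}^{j}\psi$ in terms of already-controlled quantities) or, as the paper does, by integrating first so that each $\alpha_{j}T\partial_{r}^{j}\psi$ contributes only the boundary term $\partial_{r}^{j}\psi(\tau)-\partial_{r}^{j}\psi(0)$, which is directly controlled by the inductive hypothesis.
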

\begin{proof}
We work inductively. Consider the case $k=2$. By differentiating the wave equation (see for instance \eqref{kcom}) we take
\begin{equation}
2T\partial_{r}\partial_{r}\psi+\frac{2}{M}T\partial_{r}\psi-\frac{2}{M^{2}}T\psi+\frac{2}{M^{2}}\partial_{r}\psi=0
\label{1edw}
\end{equation}
on $\mathcal{H}^{+}$. Note that  $T\partial_{r}^{2}\psi$ and $\partial_{r}\psi$ appear with the same sign. If $H_{0}[\psi]=0$ then there is nothing to prove.  Let's suppose that $H_{0}>0$. Then
\begin{equation*}
\begin{split}
\int_{0}^{\tau}\partial_{r}\psi=\int_{0}^{\tau}H_{0}[\psi]-\frac{1}{M}\psi=H_{0}[\psi]\tau-\frac{1}{M}\int_{0}^{\tau}\psi.
\end{split}
\end{equation*}
We observe 
\begin{equation*}
\begin{split}
\left|\int_{0}^{\tau}\psi\right|\leq \int_{0}^{\tau}\left|\psi\right|\leq CE_{6}\int_{0}^{\tau}\frac{1}{\tau^{\frac{3}{5}}}=CE_{6}\tau^{\frac{2}{5}}.
\end{split}
\end{equation*}
Therefore, 
\begin{equation*}
\begin{split}
\int_{0}^{\tau}\partial_{r}\psi\geq H_{0}[\psi]\tau-CE_{6}\tau^{\frac{2}{5}}\geq cH_{0}[\psi]\tau
\end{split}
\end{equation*}
asymptotically on $\hh$. By integrating \eqref{1edw} along $\mathcal{H}^{+}$ we obtain
\begin{equation*}
\begin{split}
\partial_{r}^{2}\psi(\tau)&=\partial_{r}^{2}\psi(0)+\frac{1}{M}\partial_{r}\psi(0)-\frac{1}{M}\partial_{r}\psi(\tau)-\frac{1}{M^{2}}\psi(0)+\frac{1}{M^{2}}\psi(\tau)-\frac{1}{2M^{2}}\int_{0}^{\tau}\partial_{r}\psi\\
&\leq \partial_{r}^{2}\psi(0)+\frac{1}{M}\partial_{r}\psi(0)-\frac{1}{M}\left(H_{0}[\psi]+\frac{1}{M}\psi\right)-\frac{1}{M^{2}}\psi(0)+\frac{1}{M^{2}}\psi(\tau)-\frac{1}{2M^{2}}\int_{0}^{\tau}\partial_{r}\psi\\
&\leq \partial_{r}^{2}\psi(0)+\frac{1}{M}\partial_{r}\psi(0)-\frac{1}{M}H_{0}[\psi]+CE_{6}\frac{1}{\tau^{\frac{3}{5}}}-cH_{0}[\psi]\tau\\
&\leq -cH_{0}[\psi]\tau
\end{split}
\end{equation*}
asymptotically on $\hh$. A similar argument works for any $k\geq 2$. Indeed, we integrate \eqref{kcom} for $k\geq 1$ (and $l=0$) along $\hh$ and note that  $T\partial_{r}^{k+1}\psi$ and $\partial_{r}^{k}\psi$ appear with the same sign. Therefore, by induction on $k$, the integral $\int_{0}^{\tau}\partial_{r}^{k}\psi$ dominates asymptotically all the remaining terms which yields the required blow-up rates on $\mathcal{H}^{+}$. Note that the sign of $\partial_{r}^{k}\psi$ depends on $k$ and $H_{0}[\psi]$.

\end{proof}

\begin{corollary}
Let $k\geq 2$. For generic initial data which give rise to solutions $\psi$ of the wave equation we have
\begin{equation*}
\begin{split}
\left|\partial_{r}^{k}\psi\right|\rightarrow +\infty
\end{split}
\end{equation*}
along $\mathcal{H}^{+}$.
\end{corollary}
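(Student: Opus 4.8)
The plan is to read the Corollary off directly from Proposition \ref{rrl=0}, the only remaining input being the genericity of the conserved quantity $H_{0}[\psi]$. First I would invoke Proposition \ref{rrl=0}: for every fixed $k\geq 2$ and every spherically symmetric solution it provides a constant $c>0$, depending only on $M$, with
\[
\left|\partial_{r}^{k}\psi\right|\geq c\left|H_{0}[\psi]\right|\tau^{k-1}
\]
asymptotically along $\mathcal{H}^{+}$. The right-hand side is the product of the $\tau$-independent factor $c\left|H_{0}[\psi]\right|$ with $\tau^{k-1}$, and since $k\geq 2$ gives $k-1\geq 1$, we have $\tau^{k-1}\to+\infty$ as $\tau\to+\infty$. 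Thus the entire statement reduces to showing that $H_{0}[\psi]\neq 0$ for generic initial data.

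Second I would establish this genericity. By Proposition \ref{ndl0}, for spherically symmetric $\psi$ the quantity is $H_{0}[\psi]=\partial_{r}\psi+\frac{1}{M}\psi$, which is conserved along $\mathcal{H}^{+}$; in particular, being spherically symmetric, it is a single constant, and its value is already fixed by the restriction of $(\psi,\partial_{r}\psi)$ to $\mathcal{H}^{+}\cap\Sigma_{0}$, hence determined entirely by the prescribed data (this is exactly the observation used in the proof of Proposition \ref{wald}). The condition $H_{0}[\psi]=0$ is therefore a single linear constraint on the initial data, cutting out a proper subspace whose complement is open and dense (and of full measure); this is precisely the asserted genericity. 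Combining this with the displayed lower bound yields $\left|\partial_{r}^{k}\psi\right|\geq c\left|H_{0}[\psi]\right|\tau^{k-1}\to+\infty$, which is the claim.

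I do not expect a genuine obstacle: the analytic substance is entirely contained in Proposition \ref{rrl=0}, and what remains is the elementary fact that the vanishing of $H_{0}[\psi]$ is a non-generic linear condition. The single point deserving care is that the constant $c$ and the $\tau$-threshold beyond which the asymptotic lower bound holds must be uniform in $\tau$, so that the divergence of $\tau^{k-1}$ is not silently absorbed into them; this uniformity is already built into the statement of Proposition \ref{rrl=0}. I would then remark that the general (non-spherically-symmetric) version of this blow-up, valid for $\psi$ supported on an arbitrary frequency $l$, is deferred to Theorem \ref{theo9}, the spherically symmetric case treated here serving to make the mechanism transparent.
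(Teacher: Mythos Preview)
Your argument is correct for spherically symmetric $\psi$, but you have misread the scope of the Corollary. The statement is for \emph{general} solutions $\psi$ of the wave equation, with no restriction on the angular frequency; it is not the spherically symmetric special case of Theorem~\ref{theo9}. Proposition~\ref{rrl=0} applies only to spherically symmetric waves, so invoking it directly for $\psi$ is not legitimate, and your closing remark that the non-spherically-symmetric case is ``deferred'' does not save the situation: the Corollary itself already asserts blow-up for arbitrary $\psi$.

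The paper's proof supplies exactly the missing step. One decomposes $\psi=\psi_{0}+\psi_{\geq 1}$ into its spherical mean and the orthogonal remainder; by $L^{2}(\mathbb{S}^{2})$-orthogonality,
\[
\int_{\mathbb{S}^{2}}\left|\partial_{r}^{k}\psi\right|^{2}(M,\omega)\,d\omega\;\geq\;4\pi\left|\partial_{r}^{k}\psi_{0}\right|^{2}.
\]
Now Proposition~\ref{rrl=0} applies to the spherically symmetric wave $\psi_{0}$ and forces the right-hand side to diverge whenever $H_{0}[\psi_{0}]\neq 0$, which is the generic situation (your genericity argument for $H_{0}$ is fine and is essentially what the paper uses). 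Commuting with $\Omega_{i}$ and the Sobolev inequality on $\mathbb{S}^{2}$ then upgrade the divergence of the spherical $L^{2}$ norm to pointwise blow-up. Add this decomposition step and your proof is complete.
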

\begin{proof}
Decompose $\psi=\psi_{0}+\psi_{\geq 1}$ and thus
\begin{equation*}
\begin{split}
\int_{\mathbb{S}^{2}}{\left|\partial_{r}^{k}\psi\right|^{2}(M,\omega)d\omega}\geq 4\pi\left|\partial_{r}^{k}\psi_{0}\right|^{2}(M,\omega).
\end{split}
\end{equation*}
Hence the result follows by commuting with $\Omega_{i}$, the Sobolev inequality and  the fact that the right hand side blows up as $\tau\rightarrow +\infty$ as $H_{0}[\psi]\neq 0$ generically. 
\end{proof}
Let us consider the case of a general angular frequency $l$. 
\begin{proposition}
Let $k,l\in\mathbb{N}$ with $k\geq 2$. Then there exists a positive constant $c$ which depends only on $M,l,k$ such that for all  solutions $\psi$ to the wave equation which are supported on the frequency $l$ we have 
\begin{equation*}
\left|\partial_{r}^{l+k}\psi\right|(\tau,\theta,\phi)\geq c \left|H_{l}[\psi](\theta,\phi)\right|\tau^{k-1}
\end{equation*}
asymptotically on $\mathcal{H}^{+}$.
\label{rrl}
\end{proposition}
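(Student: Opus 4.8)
The plan is to mirror the inductive argument of Proposition~\ref{rrl=0}, using the non-decay statement of Proposition~\ref{nondecay} as the seed and the commuted wave equation \eqref{kcom}, restricted to $\hh$, as the mechanism that converts non-decay into polynomial growth. First I would record the relevant horizon identity. Setting $k'=l+k-1$ and imposing $\partial_{r}^{k'}\Box_{g}\psi=0$, I restrict \eqref{kcom} to $\hh$: the terms $D\,\partial_{r}^{k'+2}\psi$ and $R\,\partial_{r}^{k'+1}\psi$ drop out, as does the $i=1$ term of the $\partial_{r}^{i}D$ sum, since $D(M)=D'(M)=0$, while collecting the contributions of $\partial_{r}^{i}D$ (for $i=2$) and $\partial_{r}^{i}R$ (for $i=1$) together with $\partial_{r}^{k'}\lapp\psi$ yields, exactly as in \eqref{idconservation}, the coefficient $\bigl(k'(k'+1)-l(l+1)\bigr)/M^{2}$ for $\partial_{r}^{k'}\psi$. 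Thus on $\hh$ one obtains
\[
2\,T\partial_{r}^{l+k}\psi+\frac{2}{M}\,T\partial_{r}^{l+k-1}\psi+\frac{k'(k'+1)-l(l+1)}{M^{2}}\,\partial_{r}^{l+k-1}\psi+\mathcal{L}=0,
\]
where $\mathcal{L}$ collects lower-order terms, namely $T\partial_{r}^{j}\psi$ and $\partial_{r}^{j}\psi$ with $j\le l+k-2$. The crucial structural fact, inherited from \eqref{idconservation}, is that for $k\ge 2$ we have $k'=l+k-1>l$, so the coefficient of $\partial_{r}^{l+k-1}\psi$ is \emph{strictly positive} and carries the same sign as the coefficient $2$ of the leading term $T\partial_{r}^{l+k}\psi$; this is precisely the sign alignment exploited in the case $l=0$.

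Next I would run the induction on $k\ge 2$ by integrating the displayed identity in $v$ along a null generator of $\hh$ from $0$ to $\tau$. For the base case $k=2$, Proposition~\ref{nondecay} gives $\partial_{r}^{l+1}\psi\to H_{l}[\psi]$, so $\int_{0}^{\tau}\partial_{r}^{l+1}\psi\sim H_{l}[\psi]\,\tau$, whereas every term of $\mathcal{L}$ together with $T\partial_{r}^{l+1}\psi$ integrates to $o(\tau)$: indeed $\partial_{r}^{j}\psi$ with $j\le l$ decays by Theorem~\ref{hp3} (the slowest, $\partial_{r}^{l}\psi$, like $\tau^{-1/4}$, so its integral is $O(\tau^{3/4})$), and $\int_{0}^{\tau}T\partial_{r}^{l+1}\psi=\partial_{r}^{l+1}\psi(\tau)-\partial_{r}^{l+1}\psi(0)=O(1)$. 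Hence $\partial_{r}^{l+2}\psi\sim c\,H_{l}[\psi]\,\tau$ with a definite sign. For the inductive step I assume $\partial_{r}^{l+k-1}\psi$ has a definite asymptotic sign with $|\partial_{r}^{l+k-1}\psi|\ge c'|H_{l}[\psi]|\,\tau^{k-2}$; then $\int_{0}^{\tau}\partial_{r}^{l+k-1}\psi$ grows like $\tau^{k-1}$ without cancellation, while all remaining contributions are subordinate after integration. In particular the intermediate growing derivatives $\partial_{r}^{l+j}\psi$ with $1\le j\le k-2$, which by the inductive hypothesis grow like $\tau^{j-1}$, enter at transversal order at most $l+k-2$ and contribute integrals of size $O(\tau^{j})=O(\tau^{k-2})$, and the boundary term from $T\partial_{r}^{l+k-1}\psi$ is likewise $O(\tau^{k-2})$. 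This yields $|\partial_{r}^{l+k}\psi|\ge c\,|H_{l}[\psi]|\,\tau^{k-1}$, closing the induction.

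The main obstacle is the bookkeeping of $\mathcal{L}$: one must verify that, after integration in $v$, the single quantity $\int_{0}^{\tau}\partial_{r}^{l+k-1}\psi$ strictly dominates \emph{all} other contributions and that no cancellation occurs. The domination is quantitative and relies on three inputs stacked together: the sharp decay rates of $\partial_{r}^{j}\psi$ for $j\le l$ from Theorem~\ref{hp3}, the inductively established growth rates $\tau^{j-1}$ for the intermediate derivatives $\partial_{r}^{l+j}\psi$ with $1\le j\le k-2$, and the fact that these intermediate derivatives enter the identity at transversal order at most $l+k-2$, hence integrate to a strictly lower power of $\tau$ than the leading term. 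The absence of cancellation is guaranteed by propagating a definite sign through the induction, exactly as in Proposition~\ref{rrl=0}; the sign of $\partial_{r}^{l+k}\psi$ is then determined by $k$ and by the sign of $H_{l}[\psi]$. The passage from this pointwise lower bound to the almost-everywhere blow-up along $\hh$ then follows verbatim as in the spherically symmetric case, after decomposing onto $E^{l}$, commuting with the generators $\Omega_{i}$ and using the Sobolev inequality on $\mathbb{S}^{2}$.
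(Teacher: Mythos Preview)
Your proposal is correct and follows essentially the same route as the paper's proof: integrate the restriction of \eqref{kcom} (with $k$ replaced by $l+k-1$) along the null generators of $\hh$, use Proposition~\ref{nondecay} and Theorem~\ref{hp3} to show that $\int_{0}^{\tau}\partial_{r}^{l+k-1}\psi$ eventually dominates every other contribution, and close by induction on $k$ exactly as in Proposition~\ref{rrl=0}. Your write-up is in fact more explicit than the paper's, which merely sketches the base case $k=2$ and then says ``the proposition follows inductively by integrating \eqref{kcom} as in Proposition~\ref{rrl=0}''; in particular you make the key sign observation---that the coefficient $\bigl(k'(k'+1)-l(l+1)\bigr)/M^{2}$ of $\partial_{r}^{l+k-1}\psi$ is strictly positive for $k\ge 2$---completely explicit, whereas the paper leaves this implicit. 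Your final sentence about commuting with $\Omega_{i}$ and Sobolev is not needed for the proposition as stated (the bound is already pointwise in $(\theta,\phi)$), but it does no harm.
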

\begin{proof}
We first consider $k=2$. If $H_{l}[\psi](\theta,\phi)=0$ then there is nothing to prove. Suppose that $H_{l}[\psi](\theta,\phi)>0$. Note 
 \begin{equation*}
 \begin{split}
 \int_{0}^{\tau}\partial_{r}^{l+1}\psi&=H_{l}[\psi]\tau-\int_{0}^{\tau}\sum_{i=0}^{l}\beta_{i}\partial_{r}^{i}\psi\\
 &\geq cH_{l}[\psi]\tau
 \end{split}
 \end{equation*}
 asymptotically on $\hh$, since the integral on the right hand side is eventually dominated by $H_{l}[\psi]\tau$ in view of Theorem \ref{hp3}. If we integrate \eqref{kcom} (applied for $k=l+1)$ along the null geodesic of $\mathcal{H}^{+}$ whose projection on the sphere is $(\theta,\phi)$ we will obtain
  \begin{equation*}
 \begin{split}
 \partial_{r}^{l+2}\psi(\tau,\theta,\phi)\leq -cH_{l}[\psi](\theta,\phi)\tau,
  \end{split}
 \end{equation*}
 since the integral $ \int_{0}^{\tau}\partial_{r}^{l+1}\psi$ eventually dominates all the remaining terms (again in view of the previous decay results). The proposition follows inductively by integrating \eqref{kcom} as in Proposition \ref{rrl=0}. Recall finally that for generic solutions $\psi$ we have $H_{l}[\psi]\neq 0$ almost everywhere on $\mathbb{S}^{2}_{0}$.

\end{proof}

\subsection{Blow-up of Higher Order Energy}
\label{sec:BlowUpOfHigherOrderEnergy}

The next theorem provides  blow-up results for the higher order non-degenerate energy. It also shows that our estimates in Section \ref{sec:HigherOrderEstimates} are in fact sharp (regarding at least the restriction on the angular frequencies). 
\begin{theorem}
Fix $R_{1}$ such that $R_{1}>M$. Let also $k,l\in\mathbb{N}$. Then  for generic solutions $\psi$ of the wave equation which are supported on the (fixed) angular frequency $l$ we have
\begin{equation*}
\displaystyle\int_{\tilde{\Sigma}_{\tau}\cap\left\{M\leq r\leq R_{1}\right\}}{J_{\mu}^{N}[\partial_{r}^{k}\psi]n_{\tilde{\Sigma}_{\tau}}^{\mu}}\longrightarrow +\infty 
\end{equation*}
as $\tau\rightarrow +\infty$ for all $k\geq l+1$.
\label{hoeblowup}
\end{theorem}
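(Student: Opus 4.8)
The plan is to argue by contradiction, exploiting the pointwise blow-up results of Propositions~\ref{rrl=0} and \ref{rrl} together with the conversion of energy bounds into pointwise bounds developed in Section~\ref{sec:PointwiseEstimates}. Suppose the assertion fails for some $k\ge l+1$; then there exist a sequence $\tau_n\to+\infty$ and a constant $B$ with
\[
\int_{\tilde{\Sigma}_{\tau_n}\cap\{M\le r\le R_1\}}{J_{\mu}^{N}[\partial_r^{k}\psi]n_{\tilde{\Sigma}_{\tau_n}}^{\mu}}\le B .
\]
Since $N$ is timelike up to and including $\hh$, the integrand is $\sim (T\partial_r^k\psi)^2+(\partial_r^{k+1}\psi)^2+|\nabb\partial_r^k\psi|^2$ in $\mathcal{A}$, so in particular the full first jet of $\partial_r^k\psi$ is bounded in $L^2(\tilde{\Sigma}_{\tau_n}\cap\{M\le r\le R_1\})$ along the sequence.

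First I would dispose of the generic range $k\ge l+2$. The operators $\Omega_i$ are Killing, commute with $\partial_r$ and preserve the eigenspace $E^l$, so $\Omega^a\partial_r^k\psi=\partial_r^k(\Omega^a\psi)$ with $\Omega^a\psi$ again a solution supported on the frequency $l$; moreover, away from $\hh$ the energy of $\partial_r^k\psi$ decays by commuting with $T$ and using local elliptic estimates (as in the proof of Theorem~\ref{hoe}), so the contribution of $\{r\ge R_1\}$ is controlled. Feeding the resulting bound into the estimate \eqref{1pointwise} applied to $\partial_r^k\psi$ and using the Sobolev inequality on $\mathbb{S}^2$ would then yield a uniform pointwise bound $|\partial_r^k\psi|\le C$ on $\{M\le r\le R_1\}$ along $\tau_n$. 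This contradicts Proposition~\ref{rrl}, which gives $|\partial_r^k\psi|(\tau,\theta,\phi)\ge c|H_l[\psi](\theta,\phi)|\,\tau^{\,k-l-1}$ asymptotically on $\hh$ with exponent $k-l-1\ge1$, since for generic data $H_l[\psi]\ne0$ almost everywhere on $\mathbb{S}^2_0$.

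The delicate point is the threshold $k=l+1$. Here the same contradiction does not close, because the energy of $\partial_r^{l+1}\psi$ controls only $\partial_r^{l+1}\psi$ pointwise, and $\partial_r^{l+1}\psi\to H_l[\psi]$ merely fails to decay (Proposition~\ref{nondecay}) rather than blowing up; one checks moreover from the transport relation \eqref{kcom} restricted to $\hh$ (where the coefficient of $\partial_r^{l}\psi$ vanishes) that $T\partial_r^{l+1}\psi$ and hence the horizon flux of $J_{\mu}^{N}[\partial_r^{l+1}\psi]$ stay bounded, so the entire blow-up resides in the bulk term $(\partial_r^{l+2}\psi)^2$ off the horizon. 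To capture it I would combine the bound $\int_{\tilde{\Sigma}_{\tau_n}\cap\mathcal{A}}(\partial_r^{l+2}\psi)^2\le CB$ with the pointwise blow-up $|\partial_r^{l+2}\psi|\ge c|H_l[\psi]|\tau$ on $\hh$ (Proposition~\ref{rrl} with $k=2$) and the decay of $\int_{\mathbb{S}^2}(\partial_r^{l+2}\psi)^2(r_0,\omega)\,d\omega$ at each fixed $r_0>M$ (again by $T$-commutation and elliptic estimates), integrating the fundamental theorem of calculus from $\hh$ outward so as to force the bulk $L^2$ mass of $\partial_r^{l+2}\psi$ to be unbounded, contradicting the assumed bound.

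The main obstacle will be exactly this last step: transferring the blow-up of $\partial_r^{l+2}\psi$, which is known only pointwise on the measure-zero horizon, into a genuine lower bound for the bulk integral over $\tilde{\Sigma}_{\tau}\cap\mathcal{A}$. A pure $L^2$ trace inequality cannot achieve this, since bounding the horizon trace of $\partial_r^{l+2}\psi$ by a bulk norm would require control of the further derivative $\partial_r^{l+3}\psi$, which is unavailable at this order; the away-from-horizon decay is precisely what must be exploited to confine the blow-up to a shrinking collar of $\hh$ while keeping its contribution to the integral unbounded.
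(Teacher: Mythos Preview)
Your strategy for $k\ge l+2$ is basically sound (though one must be careful with the asymptotic region; the paper handles this by a cut-off supported in $\{M\le r\le R_1+1\}$ rather than integrating to infinity). The genuine gap is at the threshold $k=l+1$, and you have correctly diagnosed it: you are trying to pull the pointwise blow-up of $\partial_r^{l+2}\psi$ on $\hh$ back into a bulk lower bound, and as you say this would cost one more derivative than the assumed energy controls.

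The resolution is to work one order \emph{lower}: trace $\partial_r^{k}\psi$ itself (not $\partial_r^{k+1}\psi$) on $\hh$, and do so in $L^{1}(\mathbb{S}^{2})$ rather than $L^{2}$. By the fundamental theorem of calculus and Cauchy--Schwarz on the collar $\{M\le r\le r_{0}\}$,
\[
\int_{\mathbb{S}^{2}}\!\bigl|\partial_r^{k}\psi(\tau_n,M)\bigr|
\;\le\;\int_{\mathbb{S}^{2}}\!\bigl|\partial_r^{k}\psi(\tau_n,r_{0})\bigr|
\;+\;C\,r_{0}(r_{0}-M)^{1/2}
\left(\int_{\tilde{\Sigma}_{\tau_n}\cap\{M\le r\le r_{0}\}}(\partial_\rho\partial_r^{k}\psi)^{2}\right)^{\!1/2}.
\]
The last factor is exactly the $(\partial_r^{k+1}\psi)^{2}$ contribution to $J^{N}_{\mu}[\partial_r^{k}\psi]n^{\mu}$, hence $\le B^{1/2}$ by the contradiction hypothesis; choosing $r_{0}-M$ small depending on $\epsilon$ makes this term $<\epsilon/2$. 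The first term decays for any fixed $r_{0}>M$ by your $T$-commutation/elliptic argument. Hence $\int_{\mathbb{S}^{2}}|\partial_r^{k}\psi(\tau_n,M)|\to 0$, which contradicts Proposition~\ref{nondecay} (for $k=l+1$, where the limit is $\int_{\mathbb{S}^{2}}|H_l[\psi]|>0$) or Proposition~\ref{rrl} (for $k\ge l+2$). Note that this single argument handles all $k\ge l+1$ uniformly; the case split you introduced is unnecessary. The two ingredients you were missing are (i) dropping to the $L^{1}$ trace, which costs only one $r$-derivative via Cauchy--Schwarz, and (ii) applying it to $\partial_r^{k}\psi$ rather than $\partial_r^{k+1}\psi$, so that the available bulk bound on $(\partial_r^{k+1}\psi)^{2}$ is exactly what is consumed.
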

\begin{proof}

 Consider $M<r_{0}<R_{1}$ and let $\delta$ be the cut-off introduced in the proof of Theorem \ref{hp3}.  Then,
\begin{equation*}
\begin{split}
\int_{\mathbb{S}^{2}}{\left(\partial_{r}^{k}\psi\right)^{2}(r_{0},\omega)d\omega}&=-2\int_{\mathbb{S}^{2}}{\int_{r_{0}}^{R_{1}+1}{(\partial_{r}^{k}(\delta\psi))(\partial_{\rho}\partial_{r}^{k}(\delta\psi))d\rho}d\omega}\\
&\leq C\int_{\tilde{\Sigma}_{\tau}\cap\left\{r_{0}\leq r\leq R_{1}+1\right\}}{\sum_{i=0}^{k}(T\partial_{r}^{i}\psi)^{2}+\sum_{i=0}^{k+1}(\partial_{r}^{i}\psi)^{2}},
\end{split}
\end{equation*}
where $C$ depends on $M$,  $R_{1}$ and $\tilde{\Sigma}_{0}$. Then,
\begin{equation*}
\begin{split}
\int_{\mathbb{S}^{2}}{\left(\partial_{r}^{k}\psi\right)^{2}(r_{0},\omega)d\omega}&\leq \frac{C}{D^{m_{k}}(r_{0})}\int_{\tilde{\Sigma}_{\tau}\cap\left\{r_{0}\leq r\leq R_{1}+1\right\}}{\sum_{i=0}^{k}J_{\mu}^{T}[T^{i}\psi]n^{\mu}_{\tilde{\Sigma}_{\tau}}}\\
&\leq \frac{C}{(r_{0}-M)^{2m_{k}}}\left(\sum_{i=0}^{k}E_{1}(T^{i}\psi)\right)\frac{1}{\tau^{2}},
\end{split}
\end{equation*}
where $m_{k}\in\mathbb{N}$. Note that for the above inequality we used local elliptic estimates (or a more pedestrian way is to use the wave equation and solve with respect to $\partial_{r}^{k}\psi$; this is something we can do since $D(r_{0})>0$). Then using \eqref{kcom} we can inductively replace the $\partial_{r}$ derivatives with the $T$ derivatives.  Therefore, commuting with $\Omega_{i}$ and applying the Sobolev inequality imply that  for any $r_{0}>M$ we have $\left|\partial_{r}^{k}\psi\right|\rightarrow 0$ as $\tau\rightarrow +\infty$ along $r=r_{0}$. Let us assume now that the energy of $\partial_{r}^{k}\psi$ on $\tilde{\Sigma}_{\tau_{j}}\cap\left\{M\leq r\leq R_{1}\right\}$ is uniformly bounded by $B$ (as $\tau_{j}\rightarrow +\infty$). Given $\epsilon >0$ take $r_{0}$ such that $r_{0}-M=\frac{\epsilon^{2}}{4Br_{0}^{2}}$ and let $\tau_{\epsilon}$ be such that for all $\tau\geq \tau_{\epsilon}$ we have $\left|\partial_{r}^{k}\psi(\tau,r_{0})\right|\leq \frac{\epsilon}{8\pi}$. Then,
\begin{equation*}
\begin{split}
\int_{\mathbb{S}^{2}}\left|\partial_{r}^{k}\psi(\tau_{j},M)\right|&\leq \int_{\mathbb{S}^{2}}\left|\partial_{r}^{k}\psi(\tau_{j}, r_{0})\right|+\int_{\tilde{\Sigma}_{\tau_{j}}\cap\left\{M\leq r\leq r_{0}\right\}}\left|\partial_{\rho}\partial_{r}^{k}\psi\right|\\
&\leq \frac{\epsilon}{2}+r_{0}(r_{0}-M)^{\frac{1}{2}}\left(\int_{\tilde{\Sigma}_{\tau_{j}}\cap\left\{M\leq r\leq R_{1}\right\}}{J_{\mu}^{N}[\partial_{r}^{k}\psi]n_{\tilde{\Sigma}_{\tau_{j}}}^{\mu}}\right)^{\frac{1}{2}}\leq \epsilon,
\end{split}
\end{equation*}
for all $\tau\geq \tau_{\epsilon}$.  This proves that $\int_{\mathbb{S}^{2}}\left|\partial_{r}^{k}\psi(\tau_{j},M)\right|\rightarrow 0$ as $t_{j}\rightarrow +\infty$ along $\hh$. However, in view of Propositions \ref{nondecay} and \ref{rrl} we have
\begin{equation*}
\int_{\mathbb{S}^{2}(M)}\left|\partial_{r}^{k}\psi\right|(\tau_{j})\geq c\tau_{j}^{k-1}\int_{\mathbb{S}^{2}(M)}\left|H_{l}[\psi]\right|.
\end{equation*}
We have seen that for generic $\psi$ the function $H_{l}[\psi]$ is non-zero almost everywhere and since it is smooth we have $\int_{\mathbb{S}^{2}(M)}\left|H_{l}[\psi]\right|>0$. This shows that the integral $\int_{\mathbb{S}^{2}}\left|\partial_{r}^{k}\psi(\tau_{j},M)\right|$ can not decay, contradiction.
\end{proof}
\subsection{Applications}
\label{sec:Applications2}
We conclude this paper by proving Theorem \ref{theo8} and \ref{theo9} of Section \ref{sec:TheMainTheorems} which provide the complete picture for the derivatives of $\psi$.
\begin{proof}[Proof of Theorem \ref{theo8}]
The first two statements can be proved as the Proposition \ref{hoe} by observing that in view of statement (2) of Theorem \ref{theorem3} we can find a dyadic sequence $\tau_{n}$ such that \eqref{hoed1} holds for all $k\leq l+m-1$. Having proved these two statements, the remaining ones can be proved by repeating the argument of Theorem \ref{hp3}. 
\end{proof}

\begin{proof}[Proof of Theorem \ref{theo9}]
The first statement follows from Proposition \ref{tmpsi} and the previous decay results. For the second statement we integrate \eqref{kcom} for $k=l+m+1$ and $\psi$ replaced with $T^{m-1}\psi$ and observe that $\int_{0}^{\tau}\partial_{r}^{l+m+1}T^{m}\psi$ dominates eventually all the remaining terms. However, in view of statement (1) we have 
$\int_{0}^{\tau}\partial_{r}^{l+m+1}T^{m}\psi\sim cH_{j}[\psi]\tau$ which completes the proof for $k=l+m+2$.  The general case can be proved by induction on $k$. Finally, the above two statements imply that generically the integral $\int_{\mathbb{S}^{2}}\left|\partial_{r}^{k}T^{m}\psi(\tau,\theta,\phi)\right|$ can not decay as $\tau\rightarrow +\infty$ along $\hh$. Therefore, the last statement can be proved by repeating the argument of Theorem \ref{hoeblowup}.

\end{proof}

\section{Acknowledgements}
\label{sec:Acknowledgements}

I would like to thank Mihalis Dafermos for introducing to me  the problem and for his teaching and advice. I also thank Igor Rodnianski for sharing useful insights. I am very grateful to an anonymous referee who carefully read  this work and suggested many improvements and corrections. I am supported by a Bodossaki Grant.

\appendix

\section{Elliptic Estimates on Lorentzian Manifolds}
\label{sec:EllipticEstimates}

Let us suppose that $\left(\mathcal{M},g\right)$ is a globally hyperbolic time-orientable Lorentzian manifold which admits a Killing vector field $T$. We also suppose that $\mathcal{M}$ is   foliated by spacelike hypersurfaces $\Sigma_{\tau}$, where $\Sigma_{\tau}=\phi_{\tau}\left(\Sigma_{0}\right)$. Here, $\Sigma_{0}$ is a Cauchy hypersurface and $\phi_{\tau}$ is the flow of $T$.

Let $N$ be a $\phi_{\tau}$-invariant \textbf{timelike} vector field and constants $B_{1},B_{2}$ such that
\begin{equation*}
0<B_{1}<-g(N,N)<B_{2}.
\end{equation*}
 
We will first derive the required estimate in $\Sigma_{0}$ which for simplicity we denote by $\Sigma$. For each point $p\in\Sigma$ the orthogonal complement in $T_{p}\mathcal{M}$ of the line that contains $N$ is 3-dimensional and contains a 2-dimensional subspace of the tangent space $T_{p}\Sigma$. Let $X_{2},X_{3}$ be an orthonormal basis of this subspace. Let now $X_{1}$ be a vector tangent to $\Sigma$ which is perpendicular to the plane that is spanned by $X_{2},X_{3}$. Note that the line that passes through $X_{1}$ is uniquely determined by $N$ and $\Sigma$. Then, the metric $g$ can be written as 
\begin{equation*}
\begin{split}
g=\begin{pmatrix}
g_{NN}&g_{NX_{1}}&0&0\\
g_{NX_{1}}&g_{X_{1}X_{1}}&0&0\\
0&0&1&0\\
0&0&0&1\\
\end{pmatrix},\ \ \  g^{-1}=\begin{pmatrix}
\frac{1}{\left|g\right|}g_{X_{1}X_{1}}&-\frac{1}{\left|g\right|}g_{NX_{1}}&0&0\\
-\frac{1}{\left|g\right|}g_{NX_{1}}&\frac{1}{\left|g\right|}g_{NN}&0&0\\
0&0&1&0\\
0&0&0&1\\
\end{pmatrix}.
\end{split}
\end{equation*}
with respect to the frame $\left(N,X_{1},X_{2},X_{3}\right)$ and $\left|g\right|=g_{NN}\cdot g_{X_{1}X_{1}}-g_{NX_{1}}^{2}$. Let $h_{\Sigma}$ be the induced Riemannian metric on the spacelike hypersurface $\Sigma$. Clearly, in general we do \textbf{not} have $h^{ij}_{\Sigma}= g^{ij}$. Indeed
\begin{equation*}
\begin{split}
h_{\Sigma}=\begin{pmatrix}
g_{X_{1}X_{1}}&0&0\\
0&1&0\\
0&0&1\\
\end{pmatrix}, \ \ \ h_{\Sigma}^{-1}=\begin{pmatrix}
\frac{1}{g_{X_{1}X_{1}}}&0&0\\
0&1&0\\
0&0&1\\
\end{pmatrix}.
\end{split}
\end{equation*}
Let $\psi :\mathcal{M}\rightarrow\mathbb{R}$ satisfy the wave equation. Then,
\begin{equation*}
\begin{split}
\Box_{g}\psi &=\text{tr}_{g}\left(\text{Hess}\,\psi\right)=g^{\a\b}\left(\nabla^{2}\psi\right)_{\a\b}=g^{0\b}\left(\left(\nabla^{2}\psi\right)_{0\b}+\left(\nabla^{2}\psi\right)_{\b 0}\right)+g^{ij}\left(\nabla^{2}\psi\right)_{ij}.
\end{split}
\end{equation*}
We will prove that the operator 
\begin{equation*}
P\psi=g^{ij}\left(\nabla^{2}\psi\right)_{ij}
\end{equation*}
is strictly elliptic. Indeed, in view of the formula $\left(\nabla^{2}\psi\right)_{ij}=X_{i}X_{j}\psi-\left(\nabla_{X_{i}}X_{j}\right)\psi,$
the principal part $\sigma$ of $P$ is 
\begin{equation*}
\begin{split}
\sigma\psi=g^{ij}X_{i}X_{j}\psi.
\end{split}
\end{equation*}
If $\xi\in T^{*}\Sigma$, then 
\begin{equation*}
\begin{split}
\sigma\xi&=g^{ij}\xi_{i}\xi_{j}=\frac{1}{\left|g\right|}g_{NN}\xi_{1}^{2}+\xi_{2}^{2}+\xi_{3}^{3}>b\left(\frac{1}{g_{X_{1}X_{1}}}\xi_{1}^{2}+\xi_{2}^{2}+\xi_{3}^{3}\right)=b\left\|\xi\right\|,
\end{split}
\end{equation*}
where the ellipticity constant $b>0$  depends only on $\Sigma$. Moreover, if $\psi$ satisfies $\Box_{g}\psi=0$ then
\begin{equation*}
\begin{split}
\left\|P\psi\right\|^{2}_{L^{2}(\Sigma)}= &\left\|g^{0\b}\left(\left(\nabla^{2}\psi\right)_{0\b}+\left(\nabla^{2}\psi\right)_{\b 0}\right)\right\|_{L^{2}(\Sigma)}^{2}\\
\leq &C\int_{\Sigma}{ \left(\left\|NN\psi\right\|_{L^{2}\left(\Sigma\right)}^{2}+\sum_{i=1}^{3}{\left\|X_{i}N\psi\right\|_{L^{2}\left(\Sigma\right)}^{2}}+\sum_{i=1}^{3}{\left\|X_{i}\psi\right\|_{L^{2}\left(\Sigma\right)}^{2}}+\left\|N\psi\right\|_{L^{2}\left(\Sigma\right)}^{2}\right)}\\
\leq & C\int_{\Sigma}{J_{\mu}^{N}[\psi]n^{\mu}_{\Sigma}+J_{\mu}^{N}[N\psi]n^{\mu}_{\Sigma}},
\end{split}
\end{equation*}
where $C$ is a uniform constant that depends only on the geometry of $\Sigma$ and the precise choice of $N$. Therefore, if $\psi$ can be  shown to appropriately decay at infinity then by a global elliptic estimate on $\Sigma$ we obtain
\begin{equation*}
\begin{split}
\left\|\psi\right\|_{\overset{\!\!.}{H^{1}}\left(\Sigma\right)}^{2}+\left\|\psi\right\|_{\overset{\!\!.}{H^{2}}\left(\Sigma\right)}^{2}&\leq C\cdot\left\|P\psi\right\|_{L^{2}\left(\Sigma\right)}^{2}\leq\int_{\Sigma}{ CJ_{\mu}^{N}[\psi]n^{\mu}_{\Sigma}+CJ_{\mu}^{N}[N\psi]n^{\mu}_{\Sigma}}.
\end{split}
\end{equation*}
for some uniform positive constant $C$ (here ${\overset{\!\!.}{H^{k}}\left(\Sigma\right)}$ denotes the homogeneous Sobolev space where the zeroth order term is omitted).

In case our analysis is local and thus we want to confine ourselves in a compact submanifold $\overline{\Sigma}$ of $\Sigma$ then by a local elliptic estimate on $\overline{\Sigma}$ we have
\begin{equation*}
\begin{split}
\left\|\psi\right\|_{{H}^{2}\left(\overline{\Sigma}\right)}^{2}&\leq C\cdot\left\|P\psi\right\|_{L^{2}\left(\overline{\Sigma}\right)}^{2}+\left\|\psi\right\|_{{H}^{1}\left(\overline{\Sigma}\right)}^{2}\leq \int_{\overline{\Sigma}}{\left(CJ_{\mu}^{N}[\psi]n^{\mu}_{\overline{\Sigma}}+CJ_{\mu}^{N}[N\psi]n^{\mu}_{\overline{\Sigma}}+\psi^{2}\right)}.
\end{split}
\end{equation*}

One can also estimate spacetime integrals by using elliptic estimates. Indeed, if $\overline{\mathcal{R}}\left(0,\tau\right)$ is the spacetime region as defined before, then  
\begin{equation*}
\int_{\overline{\mathcal{R}}\left(0,\tau\right)}{f\left|\nabla u\right|dg_{\overline{\mathcal{R}}}}=\int_{0}^{\tau}{\left(\int_{\overline{\Sigma}_{\tau}}{f}dg_{\overline{\Sigma}_{\tau}}\right)dt},
\end{equation*}
where the integrals are with respect to the induced volume form and $u:\mathcal{M}\rightarrow\mathbb{R}$ is such that $u\left(p\right)=\tau$ iff $p\in\overline{\Sigma}_{\tau}$. Then $\nabla u$ is proportional to $n_{\overline{\Sigma}_{\tau}}$ and since $T(u)=1$, $\nabla u$ is $\phi_{\tau}$-invariant. Therefore, $\left|\nabla u\right|$ is uniformly bounded. If now $f$ is quadratic on the 2-jet of $\psi$ then
\begin{equation*}
\begin{split}
\left|\int_{\overline{\mathcal{R}}\left(0,\tau\right)}{fdg_{\overline{\mathcal{R}}}}\right|&\leq C\int_{0}^{\tau}{\left\|\psi\right\|_{{H}^{2}\left(\overline{\Sigma}_{\tilde{\tau}}\right)}^{2}d\tilde{\tau}}\leq C \int_{0}^{\tau}{\left(\int_{\overline{\Sigma}_{\tilde{\tau}}}{ J_{\mu}^{N}[\psi]n^{\mu}_{\overline{\Sigma}_{\tilde{\tau}}}+J_{\mu}^{N}[N\psi]n^{\mu}_{\overline{\Sigma}_{\tilde{\tau}}}}+\psi^{2}\right)d\tilde{\tau}}\\
&\leq C\int_{\overline{\mathcal{R}}\left(0,\tau\right)}{J_{\mu}^{N}[\psi]n^{\mu}_{\overline{\Sigma}}+CJ_{\mu}^{N}[N\psi]n^{\mu}_{\overline{\Sigma}}+\psi^{2}}.
\end{split}
\end{equation*}
In applications  we usually use these results away from $\mathcal{H}^{+}$ where we commute with $T$ and we use the degenerate $X$ and Morawetz estimates of \cite{aretakis1}. We can also use this estimate even if $\Sigma$ (and $\mathcal{R}$) crosses $\mathcal{H}^{+}$, provided we have commuted the wave equation with $N$ and $NN$ (recall that we  need commutation with $NN$ only for degenerate black holes).

\end{document}